\DeclareMathAlphabet\mathbfcal{OMS}{cmsy}{b}{n}
\definecolor{ao}{rgb}{0.0, 0.4, 0.0}
\numberwithin{equation}{section}
\renewcommand{\d}{{\mathrm d}}
\newcommand{\im}{\mathrm{i}}
\newcommand{\e}{\mathrm{e}}
\def\tr{\mathop{\mathrm{tr}}\limits}
\newtheorem{theo}{Theorem}[section]
\newtheorem{lem}[theo]{Lemma}
\newtheorem{rem}[theo]{Remark}
\newtheorem{problem}[theo]{Riemann-Hilbert Problem}
\newtheorem{remark}[theo]{Remark}
\newtheorem{prop}[theo]{Proposition} 
\newtheorem{cor}[theo]{Corollary} 
\newtheorem{definition}[theo]{Definition}
\def\H{{\bf H}}
\def\X{{\bf X}}
\def\A{{\bf A}}
\def\B{{\bf B}}
\begin{document}

\title[Edge level spacings in the eGinUE]{The complex elliptic Ginibre ensemble at weak non-Hermiticity: edge spacing distributions}

\author{Thomas Bothner}
\address{School of Mathematics, University of Bristol, Fry Building, Woodland Road, Bristol, BS8 1UG, United Kingdom}
\email{thomas.bothner@bristol.ac.uk}

\author{Alex Little}
\address{School of Mathematics, University of Bristol, Fry Building, Woodland Road, Bristol, BS8 1UG, United Kingdom}
\email{al17344@bristol.ac.uk}
\date{\today}

\keywords{Complex elliptic Ginibre ensemble, Fredholm determinants, extreme value statistics, integro-differential Painlev\'e functions, Tracy-Widom and Gumbel distributions, Riemann-Hilbert problem, nonlinear steepest descent method.}

\subjclass[2020]{Primary 45B05; Secondary 47B35, 35Q15, 30E25, 37J35, 70H06, 34E05}
\thanks{The authors are grateful to G. Akemann for stimulating discussions. This work is supported by the Engineering and Physical Sciences Research Council through grant EP/T013893/2.}

\begin{abstract}
The focus of this paper is on the distribution function of the rightmost eigenvalue for the complex elliptic Ginibre ensemble in the limit of weak non-Hermiticity. We show how the limiting distribution function can be expressed in terms of an integro-differential Painlev\'e-II function and how the same captures the non-trivial transition between Poisson and Airy point process extreme value statistics as the degree of non-Hermiticity decreases. Our most explicit new results concern the tail asymptotics of the limiting distribution function. For the right tail we compute the leading order asymptotics uniformly in the degree of non-Hermiticity, for the left tail we compute it close to Hermiticity.
\end{abstract}

\maketitle


\section{Introduction and statement of results}

Large complex systems connected at random tend to undergo a sharp transition in stability as the systems' connectance increases, cf. \cite{GA,M}: such systems are stable below a critical threshold and suddenly become unstable above it.  For instance, consider a system with $n$ variables (populations of $n$ interacting species, say) which may obey a rather complicated set of nonlinear, albeit autonomous, first-order differential equations as time $t$ evolves. To determine the stability of any of its equilibria, one Taylor-expands in the vicinity of such an equilibrium point and thus one must analyze a linear equation of the form
\begin{equation}\label{fi1}
	\dot{\bf x}=\A{\bf x},\ \ \ \ (\,\dot{}\,)=\frac{\d}{\d t}.
\end{equation}
In an ecological model, ${\bf x}$ would be the $n\times 1$ column vector of populations $x_j$ and the $n\times n$ matrix $\A$ has entries $A_{jk}$ that encode the impact of species $k$ on species $j$ near the equilibrium. However, given that interactions are seldom known precisely, it is reasonable to assume that $\A$ will be a random matrix without symmetries. Indeed, following \cite{GA}, one supposes that each of the $n$ species, once perturbed from equilibrium, will return to it after some time, i.e. one chooses $A_{jj}=-1$, and thus the damping time is set to unity. Next it is assumed that the mutual interactions are equally likely to be positive or negative, with absolute magnitude chosen independently and identically from some distribution. In short, ${\bf A}$ in \eqref{fi1} takes the form
\begin{equation*}
	\A=-\,\mathbb{I}+\B,
\end{equation*}
where $\B\in\mathbb{C}^{n\times n}$ is a non-Hermitian random matrix with i.i.d. entries and $\mathbb{I}\in\mathbb{C}^{n\times n}$ the identity matrix. Clearly, system \eqref{fi1} is stable if, and only if, all eigenvalues of $\A$ have negative real parts. Moreover, the growth rate of any solution to \eqref{fi1} as $t\rightarrow+\infty$ is determined by the rightmost eigenvalue of $\A$, i.e. by the eigenvalue of $\A$ with largest real part. Thus, motivated by the simple yet widely applicable model \eqref{fi1}, cf. \cite{ASS,AGBTAM,AT,RA,SCS}, we readily appreciate the necessity to accurately understand, and describe, the statistical behavior of the rightmost eigenvalue for a given family of random matrices. The present paper contributes to this field from the viewpoint of integrable systems theory while analyzing the rightmost eigenvalue of a distinguished \textit{interpolating} non-Hermitian random matrix model. Before we define the interpolating ensemble in question, we shall briefly focus on the two classical ensembles which are captured by the interpolation, namely the Gaussian Unitary Ensemble (GUE) and the complex Ginibre Ensemble (GinUE).
\subsection{Two sides of a coin} The limiting behavior of the rightmost eigenvalue in the following two ensembles is a rather classical topic, at least for the GUE, cf. \cite{F1}.
\begin{definition}[Porter \cite{Po}, 1965] We say that a random complex Hermitian matrix $\X\in\mathbb{C}^{n\times n}$ belongs to the \textnormal{GUE} if its diagonal elements and upper triangular elements are independently chosen with pdfs
\begin{equation*}
	\frac{1}{\sqrt{\pi}}\e^{-x_{jj}^2}\ \ \ \ \textnormal{and}\ \ \ \ \frac{2}{\pi}\e^{-2|x_{jk}|^2},
\end{equation*}
respectively.
\end{definition}

\begin{definition}[Ginibre \cite{Gi}, 1965] We say that a random complex matrix $\X\in\mathbb{C}^{n\times n}$ belongs to the \textnormal{GinUE} if its entries are independently chosen with pdfs
\begin{equation*}
	\frac{1}{\pi}\e^{-|x_{jk}|^2}.
\end{equation*}
\end{definition}

Indeed, for any $\X\in\textnormal{GUE}$ (where rightmost $=$ largest) with spectrum $\{\lambda_j(\X)\}_{j=1}^n\subset\mathbb{R}$, the rightmost eigenvalue is Tracy-Widom distributed as $n\rightarrow\infty$. Namely, see \cite{F2,TW},
\begin{equation}\label{fi2}
	\lim_{n\rightarrow\infty}\mathbb{P}\left(\max_{j=1,\ldots,n}\Re\lambda_j(\X)\leq\sqrt{2n}+\frac{t}{\sqrt{2}n^{\frac{1}{6}}}\right)=\exp\left[-\int_t^{\infty}(s-t)\big(q(s)\big)^2\,\d s\right],\ \ \ t\in\mathbb{R},
\end{equation}
with $q=q(t)$ uniquely determined by the following two constraints,
\begin{equation}\label{fi3}
	\frac{\d^2q}{\d t^2}=tq+2q^3,\ \ \ \ \ \ \ q(t)\sim\textnormal{Ai}(t)\ \ \textnormal{as}\ t\rightarrow+\infty.
\end{equation}
Equivalently, \eqref{fi2} equals the Fredholm determinant of the Airy integral operator with kernel
\begin{equation}\label{fi4}
	K_{\textnormal{Ai}}(x,y):=\int_0^{\infty}\textnormal{Ai}(x+z)\textnormal{Ai}(z+y)\,\d z,
\end{equation}
acting on $L^2(t,\infty)$, defined in terms of the Airy function $w=\textnormal{Ai}(z)$, cf. \cite[$9.2.2$]{NIST}. On the other hand, for any $\X\in\textnormal{GinUE}$ with spectrum $\{\lambda_j(\X)\}_{j=1}^n\subset\mathbb{C}$, the rightmost eigenvalue is Gumbel distributed as $n\rightarrow\infty$. Namely, see \cite{CESX},
\begin{equation}\label{fi5}
	\lim_{n\rightarrow\infty}\mathbb{P}\left(\max_{j=1,\ldots,n}\Re\lambda_j(\X)\leq\sqrt{n}+\sqrt{\frac{\gamma_n}{4}}+\frac{t}{\sqrt{4\gamma_n}}\right)=\e^{-\e^{-t}},\ \ \ t\in\mathbb{R},
\end{equation}
with $\gamma_n=\frac{1}{2}(\ln n-5\ln\ln n-\ln(2\pi^4))$. Equivalently, \eqref{fi5} equals the Fredholm determinant of the integral operator with kernel
\begin{equation}\label{fi6}
	K_{\textnormal{ext}}(x,y):=\e^{-x}\begin{cases}1,&x=y\\ 0,&x\neq y\end{cases},
\end{equation}
acting on $L^2(t,\infty)$.
\begin{rem} The limit \eqref{fi5} for the \textnormal{GinUE} is a very recent result, as indicated. A more classical law deals with the spectral radius 
\begin{equation*}
	\max_{j=1,\ldots,n}|\lambda_j(\X)|
\end{equation*}
in the same ensemble, see the work by Rider \cite{Ri}: the spectral radius follows the same limiting law \eqref{fi5}, after correctly modifying $\gamma_n$. Furthermore, probabilistic universality is expected to underwrite both, \eqref{fi2} and \eqref{fi5}: the fluctuations of the rightmost eigenvalue are known to obey Tracy-Widom statistics not only in the \textnormal{GUE}, but in a larger class of complex Wigner matrices, cf.  \cite{Sos}, and the Gumbel fluctuations \eqref{fi5} are expected to hold for more general non-Hermitian complex random matrices with independent, identically distributed entries, see \cite{CESX2} for recent progress in this direction.
\end{rem}
In this paper we study a matrix model that interpolates on the level of its rightmost eigenvalue between the edge laws \eqref{fi2},\eqref{fi5}. One of our first objectives will be to determine the underlying integrable system, i.e. we will first answer the following question: which integrable system can interpolate between the Painlev\'e-II dynamical system \eqref{fi3}, that underpins \eqref{fi2}, and the much more elementary dynamical system that holds up the Gumbel distribution function \eqref{fi5}? Equivalently, which integrable system can interpolate between Airy and Poisson point process extreme value statistics?
\subsection{On the edge of a coin} Admittedly, there are several matrix models that transition from Hermitian to non-Hermitian random matrix theory statistics, see for instance \cite{FS1,ACV}, but we shall only focus on one of them: the complex elliptic Ginibre Ensemble (eGinUE).
\begin{definition}[Girko \cite{Gir}, 1985] We say that a random complex matrix $\X\in\mathbb{C}^{n\times n}$ belongs to the \textnormal{eGinUE} if it is of the form
\begin{equation}\label{fi7}
	{\bf X}=\sqrt{\frac{1+\tau}{2}}\H_1+\im\sqrt{\frac{1-\tau}{2}}\H_2,\ \ \ \tau\in[0,1],
\end{equation}
with two independent $\H_1,\H_2\in\textnormal{GUE}$.
\end{definition}
Evidently, varying $\tau$ from $\tau=1$ to $\tau=0$ in \eqref{fi7} allows us to move from an element in the $\textnormal{GUE}$ to an element in the $\textnormal{GinUE}$. But what does this mean for the eigenvalue scaling limits? On one hand, the global one to be precise, it is known \cite{SCSS} that the empirical spectral distribution
\begin{equation*}
	\mu_{\X}(s,t):=\frac{1}{n}\#\big\{1\leq j\leq n:\ \ \Re\lambda_j(\X)\leq s,\ \ \Im\lambda_j(\X)\leq t\big\},\ \ \ (s,t)\in\mathbb{R}^2,
\end{equation*}
of any properly normalized $\X\in\textnormal{eGinUE}$ with spectrum $\{\lambda_j(\X)\}_{j=1}^n\subset\mathbb{C}$ converges almost surely to the uniform distribution on the ellipse
\begin{equation*}
	\big\{z\in\mathbb{C}:\ \left(\frac{\Re z}{1+\tau}\right)^2+\left(\frac{\Im z}{1-\tau}\right)^2<1\big\},
\end{equation*}
see Figure \ref{figure1} below for a visualization. We note that the same almost sure convergence for the empirical spectral distribution holds true in a larger class of elliptic models with non-Gaussian entries, cf. \cite{Gir,Gir2}. 
\begin{center}
\begin{figure}[tbh]
\resizebox{0.3\textwidth}{!}{\includegraphics{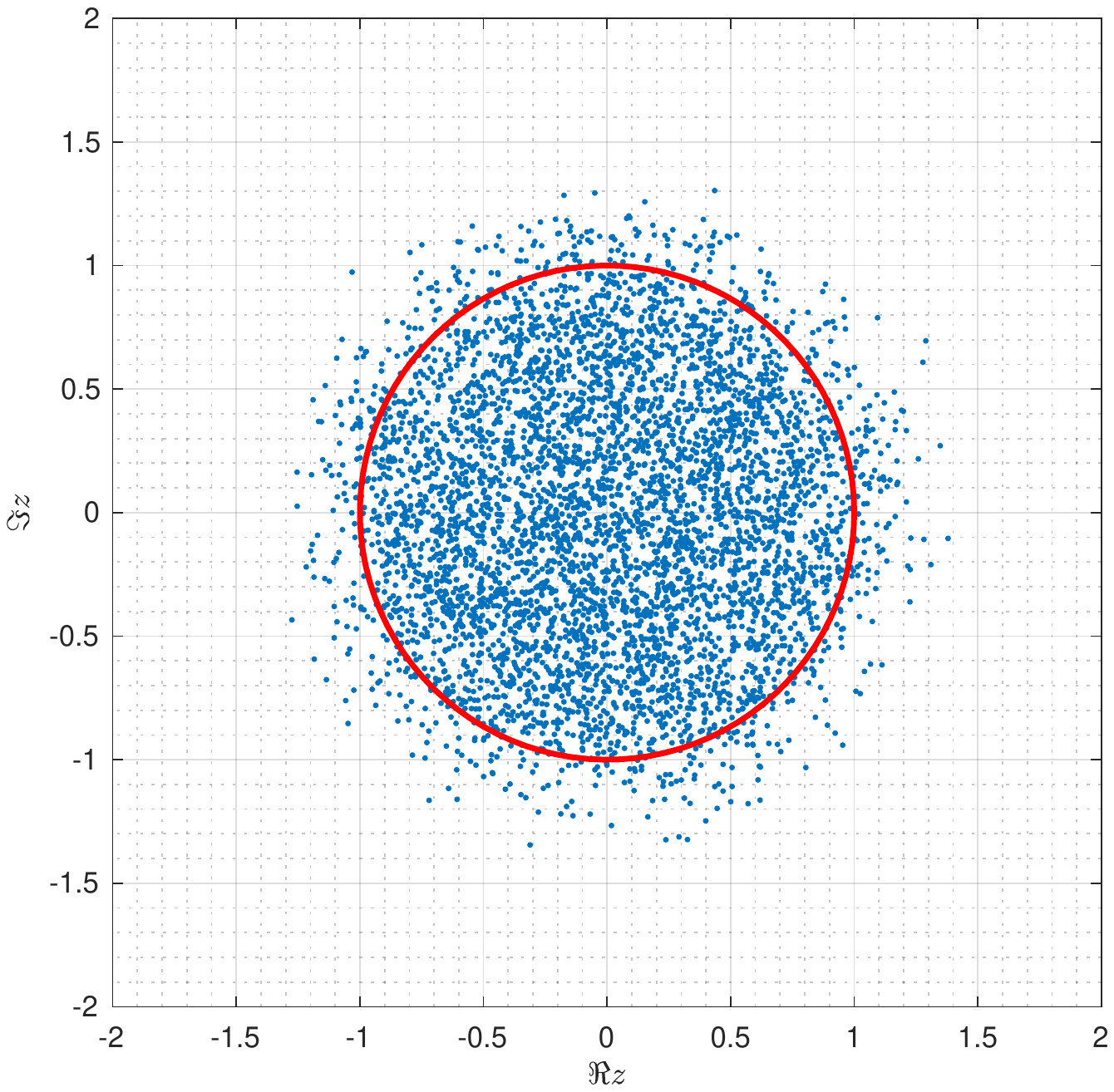}}\ \ \ \ \ \resizebox{0.3\textwidth}{!}{\includegraphics{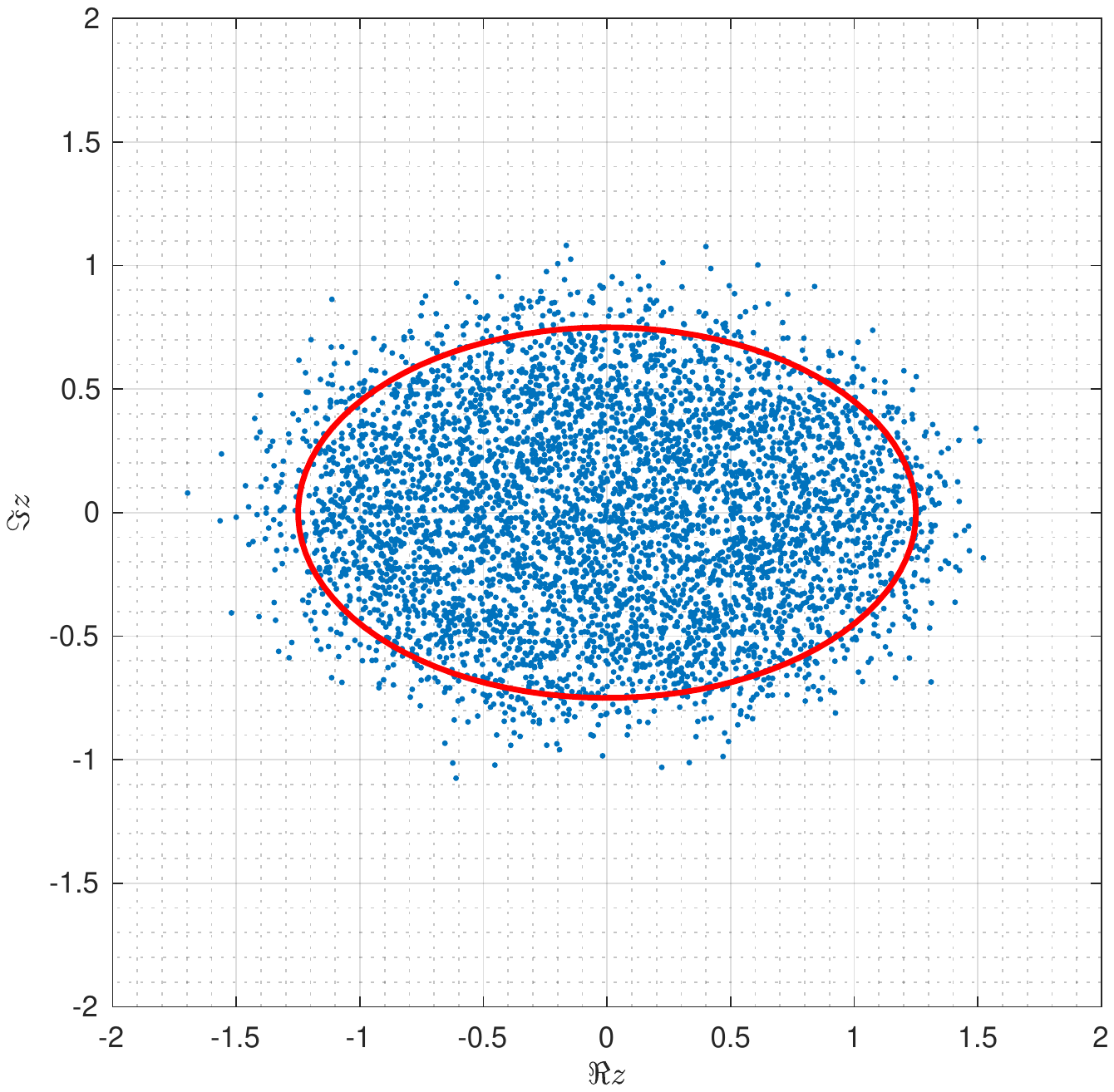}}\ \ \ \ \ \resizebox{0.3\textwidth}{!}{\includegraphics{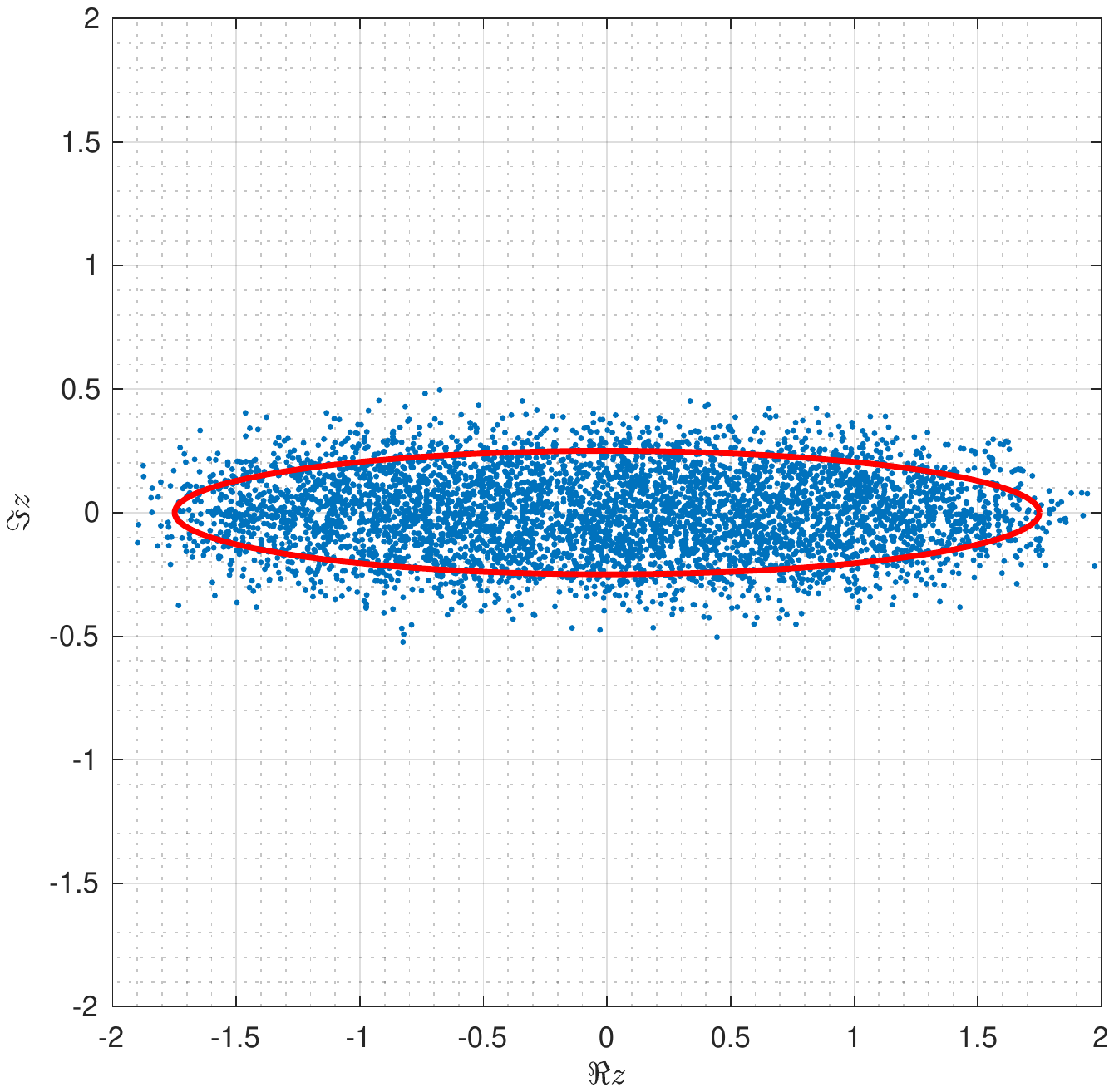}}
\caption{The elliptic law for $500$ complex (normalized) elliptic Ginibre matrices of size $10\times 10$ in comparison with the ellipse boundary. We plot $\tau=0,0.25,0.75$ from left to right.}
\label{figure1}
\end{figure}
\end{center}

On the local hand, we emphasize that the eigenvalues of any $\X\in\textnormal{eGinUE}$ form a determinantal point process in the plane $\mathbb{C}\simeq\mathbb{R}^2$, see \cite{DGIL}. Thus the model's correlation functions can be computed as finite-sized determinants evaluated on a scalar-valued kernel, here a kernel constructed in terms of planar Hermite polynomials. Using said integrable structure of the model, one can then compute the limiting local eigenvalue correlations in a very explicit fashion. Namely, as $n\rightarrow\infty$ and $1-\tau>0$ uniformly in $n$, these coincide with those in the GinUE, compare \cite{FSK}. Since we aim to interpolate between \eqref{fi2} and \eqref{fi5}, we will not consider the same scaling limit, coined the limit of \textit{strong non-Hermiticity}. Instead we will focus on the limit of \textit{weak non-Hermiticity} as pioneered by Fyodorov, Khoruzhenko and Sommers \cite{FKS}. In this limit, one lets $n\rightarrow\infty$ and simultaneously $\tau\uparrow 1$ in a meaningful way. More precisely, on the level of the rightmost eigenvalue, the limit of weak non-Hermiticity was rigorously studied by Bender \cite{Ben}. Setting
\begin{equation*}
	\sigma_n:=n^{\frac{1}{6}}\sqrt{1-\tau_n}>0,\ \ \ \ \ \ \ \ \ (\tau_n)_{n=1}^{\infty}\subset[0,1):\ \tau_n\uparrow 1,
\end{equation*}
he proved two main results, cf. \cite[Theorem $2.3$,$2.5$]{Ben}: first, the existence of centering and scaling constants $a_n,b_n,c_n\in\mathbb{R}$ such that the (rightmost edge) rescaled point process $\{(x_j(\X),y_j(\X))\}_{j=1}^n\subset\mathbb{R}^2$, where
\begin{equation}\label{fi7a}
	\Re\lambda_j(\X)\mapsto x_j(\X)=\frac{\Re\lambda_j(\X)-c_n}{a_n},\ \ \ \ \ \ \ \Im\lambda_j(\X)\mapsto y_j(\X)=\frac{\Im\lambda_j(\X)}{b_n},
\end{equation}
are constructed in terms of the eigenvalue point process $\{\lambda_j(\X)\}_{j=1}^n\equiv\{(\Re\lambda_j(\X),\Im\lambda_j(\X)\}_{j=1}^n\subset\mathbb{R}^2$ of any $\X\in\textnormal{eGinUE}$, converges weakly
\begin{enumerate}
	\item[(1a)] to a Poisson point process on $\mathbb{R}^2$ provided $\sigma_n\rightarrow\infty$ as $n\rightarrow\infty$;
	\item[(1b)] to an interpolating Airy point process on $\mathbb{R}^2$ provided $\sigma_n\rightarrow\sigma\in[0,\infty)$ as $n\rightarrow\infty$.\footnote{See also the recent \cite[Theorem I.$11$]{ADM} for this part.}
\end{enumerate}
Both limiting point processes are determinantal point processes on $\mathbb{R}^2$, and thus characterizable in terms of a scalar-valued kernel. For the Poisson point process in (1a) this kernel equals
\begin{equation}\label{fi8}
	K_{\textnormal{P}}(z_1,z_2):=\frac{1}{\sqrt{\pi}}\e^{-x_1-y_1^2}\begin{cases}1,&z_1=z_2\\ 0,&z_1\neq z_2\end{cases},\ \ \ \ \ \ z_k=(x_k,y_k)\in\mathbb{R}^2,
\end{equation}
and it generalizes \eqref{fi6}. For the interpolating Airy point process in (1b) the kernel equals, see \cite[$(2.7)$]{Ben}, \cite[$(14)$]{AB} and \eqref{k1} below,
\begin{align}\label{fi9}
	K_{\textnormal{Ai}}^{\sigma}&(z_1,z_2)=\frac{1}{\sqrt{\pi}}\exp\left[-\frac{1}{2}(y_1^2+y_2^2)+\frac{1}{2}\sigma^2(x_1+\im \sigma y_1+x_2-\im \sigma y_2)+\frac{1}{6}\sigma^6\right]\nonumber\\
	&\times\int_0^{\infty}\e^{s\sigma^2}\textnormal{Ai}\left(x_1+\im \sigma y_1+\frac{1}{4}\sigma^4+s\right)\textnormal{Ai}\left(x_2-\im \sigma y_2+\frac{1}{4}\sigma^4+s\right)\d s,\ \ \ \ \ z_k=(x_k,y_k)\in\mathbb{R}^2,
\end{align}
which generalizes \eqref{fi4}. Moreover, the determinantal point process on $\mathbb{R}^2$ with correlation kernel \eqref{fi9} interpolates, after centering and scaling, with varying $\sigma\in[0,\infty]$ between the $\mathbb{R}^2$-modified Airy point process with correlation kernel
\begin{equation*}
	M_{\textnormal{Ai}}(z_1,z_2):=\frac{1}{\sqrt{\pi}}\e^{-\frac{1}{2}(y_1^2+y_2^2)}K_{\textnormal{Ai}}(x_1,x_2),\ \ \ z_k=(x_k,y_k)\in\mathbb{R}^2,
\end{equation*}
when $\sigma=0$ and the Poisson point process \eqref{fi8} when $\sigma=\infty$. Second, Bender proved that the rescaled point process $\{(x_j(\X),y_j(\X))\}_{j=1}^n$, see \eqref{fi7a} with appropriate $a_n,b_n,c_n$, has a rightmost particle almost surely and its distribution function converges pointwise 
\begin{enumerate}
	\item[(2a)] to the Fredholm determinant of the integral operator $K_{\textnormal{P}}$ with kernel \eqref{fi8} acting on $L^2((t,\infty)\times\mathbb{R})$ (and so to the Gumbel distribution function \eqref{fi5}) if $\sigma_n\rightarrow\infty$ as $n\rightarrow\infty$;
	\item[(2b)] to the Fredholm determinant of the integral operator $K_{\textnormal{Ai}}^{\sigma}$ with kernel \eqref{fi9} acting on $L^2((t,\infty)\times\mathbb{R})$ if $\sigma_n\rightarrow\sigma\in[0,\infty)$ as $n\rightarrow\infty$.
\end{enumerate}
Consequently the Fredholm determinant of $K_{\textnormal{Ai}}^{\sigma}$ on $L^2((t,\infty)\times\mathbb{R})$ interpolates with varying $\sigma$ between the right hand sides in \eqref{fi2} and \eqref{fi5}. Our first result identifies the integrable system that underwrites the same interpolation. 
\subsection{Main results} Let $\mathbb{R}\ni t\mapsto F_{\sigma}(t)$ denote the Fredholm determinant of $K_{\textnormal{Ai}}^{\sigma}$ on $L^2((t,\infty)\times\mathbb{R})$, equivalently the limiting distribution function of the rightmost eigenvalue in the eGinUE, in the limit of weak non-Hermiticity. By Lemma \ref{lem1}, the same function is well-defined for all $\sigma\geq 0$ and it relates to Painlev\'e special function theory in the following fashion:
\begin{theo}\label{fimain1} For all $(t,\sigma)\in\mathbb{R}\times(0,\infty)$,
\begin{equation}\label{fi10}
	F_{\sigma}(t)=\exp\left[-\int_t^{\infty}(s-t)\left\{\int_{-\infty}^{\infty}\big(p_{\sigma}(s,y)\big)^2\,\d\nu_{\sigma}(y)\right\}\d s\right],\ \ \ \ \ \frac{\d\nu_{\sigma}}{\d\lambda}(\lambda)=\frac{1}{\sigma\sqrt{\pi}}\exp\left[-\Big(\frac{\lambda}{\sigma}\Big)^2\right]
\end{equation}
where $p_{\sigma}=p_{\sigma}(t,y)$ solves the integro-differential Painlev\'e-II equation
\begin{equation}\label{fi11}
	\frac{\partial^2}{\partial t^2}p_{\sigma}(t,y)=\left[t+y+2\int_{-\infty}^{\infty}\big(p_{\sigma}(t,\lambda)\big)^2\d\nu_{\sigma}(\lambda)\right]p_{\sigma}(t,y),\ \ \ (t,y)\in\mathbb{R}^2,
\end{equation}
with boundary constraint
\begin{equation}\label{fi11a}
	p_{\sigma}(t,y)\sim\textnormal{Ai}(t+y)\ \ \textnormal{as}\ \ t\rightarrow+\infty\ \ \textnormal{pointwise in}\ (y,\sigma)\in\mathbb{R}\times(0,\infty).
\end{equation}
\end{theo}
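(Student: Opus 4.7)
The strategy is to recognize $K_{\textnormal{Ai}}^\sigma$ as an integrable operator of Its-Izergin-Korepin-Slavnov type and extract \eqref{fi11} from the associated Riemann-Hilbert problem, following the template developed by Cafasso-Claeys-Girotti for the finite-temperature Airy kernel. The first step is to simplify \eqref{fi9} by introducing the ``shifted Airy function'' $\mathcal{A}_\sigma(z):=\exp[\tfrac{1}{2}\sigma^2 z+\tfrac{1}{12}\sigma^6]\,\textnormal{Ai}(z+\tfrac{1}{4}\sigma^4)$, which satisfies $\mathcal{A}_\sigma''(z)=\sigma^2\mathcal{A}_\sigma'(z)+z\mathcal{A}_\sigma(z)$ and absorbs the prefactor in \eqref{fi9}, so that
\[
K_{\textnormal{Ai}}^{\sigma}\bigl((x_1,y_1),(x_2,y_2)\bigr)=\frac{1}{\sqrt{\pi}}\,\e^{-\frac{1}{2}(y_1^2+y_2^2)}\int_0^{\infty}\mathcal{A}_\sigma(x_1+\im\sigma y_1+s)\,\mathcal{A}_\sigma(x_2-\im\sigma y_2+s)\,\d s.
\]
The Fredholm identity $\det(I-\mathcal{A}\mathcal{B})=\det(I-\mathcal{B}\mathcal{A})$ together with the substitution $\lambda=\sigma y$ then collapses $F_\sigma(t)$ to the Fredholm determinant on $L^2(0,\infty)$ of $M_t(s,s')=\int_t^\infty\int_{\mathbb{R}}\mathcal{A}_\sigma(x-\im\lambda+s)\mathcal{A}_\sigma(x+\im\lambda+s')\,\d\nu_\sigma(\lambda)\,\d x$, exposing $\d\nu_\sigma$ in its natural role as a spectral-parameter measure.

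Next, the kernel $M_t$ is of mean-field integrable type, and I would associate to it a $2\times 2$ Riemann-Hilbert problem $Y(\zeta;t)$ with jump supported on $\mathbb{R}$ whose off-diagonal entries are Cauchy transforms of $\mathcal{A}_\sigma$ against $\d\nu_\sigma(\lambda)$ and whose normalization is $Y(\zeta;t)\to I$ as $\zeta\to\infty$. The resolvent of $M_t$ (and hence of $K_{\textnormal{Ai}}^\sigma$) reads off from $Y$, and a distinguished combination of its matrix entries at spectral parameters $\pm\im\lambda$ furnishes $p_\sigma(t,y)$. Differentiating $Y$ in $t$ and in $\zeta$ yields a Lax pair $\partial_t\Psi=U\Psi$, $\partial_\zeta\Psi=V\Psi$ of Airy shape, except that the off-diagonal entries of $U,V$ are now moments against $\d\nu_\sigma$. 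The compatibility condition $\partial_\zeta U-\partial_t V+[U,V]=0$ reduces to \eqref{fi11}, the nonlinear coupling $2\int p_\sigma^2\,\d\nu_\sigma$ emerging as the Cauchy moment of the off-diagonal data against $\d\nu_\sigma$.

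The Fredholm formula \eqref{fi10} and the boundary condition \eqref{fi11a} follow by soft arguments. A small-norm analysis of the RHP as $t\to+\infty$ gives $Y\to I$ uniformly, hence $p_\sigma(t,y)\sim\textnormal{Ai}(t+y)$ pointwise in $(y,\sigma)\in\mathbb{R}\times(0,\infty)$. For \eqref{fi10}, a standard computation yields $\tfrac{\d}{\d t}\log F_\sigma(t)=-\int_{\mathbb{R}}R_t((t,y),(t,y))\,\d y$; substituting the RHP representation of the diagonal resolvent and differentiating once more produces $\tfrac{\d^2}{\d t^2}\log F_\sigma(t)=-\int_{\mathbb{R}}p_\sigma(t,y)^2\,\d\nu_\sigma(y)$, and two integrations with decay at $+\infty$ deliver \eqref{fi10}.

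The principal obstacle is the Riemann-Hilbert step: engineering the jump data so that the compatibility equation outputs exactly \eqref{fi11} with the Gaussian measure $\d\nu_\sigma$ and the $y$-shifted Airy boundary condition. Tracking the imaginary shifts $\pm\im\sigma y$ through the integrable factorization, identifying the correct spectral parameter, and keeping the measure-valued off-diagonal entries in the right Cauchy-transform form are the delicate technical points; by comparison, the reduction and the boundary/Fredholm steps are routine once the RHP is in place.
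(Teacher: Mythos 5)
Your reduction of $K_{\textnormal{Ai}}^\sigma$ to a determinant on $L^2(0,\infty)$ is correct in outline: defining $\mathcal{A}_\sigma(z)=\e^{\frac{1}{2}\sigma^2 z+\frac{1}{12}\sigma^6}\textnormal{Ai}(z+\frac{1}{4}\sigma^4)$ does satisfy $\mathcal{A}_\sigma''=\sigma^2\mathcal{A}_\sigma'+z\mathcal{A}_\sigma$ and absorbs the prefactors in \eqref{fi9}, and commuting the two factor operators (Sylvester's identity, cf.\ \cite[Chapter IV, (5.9)]{GGK}) gives a trace-class operator on $L^2(0,\infty)$ with the kernel you call $M_t$. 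The soft bookends — extracting the Tracy--Widom formula \eqref{fi10} from $\frac{\d^2}{\d t^2}\ln F_\sigma$ and the boundary data \eqref{fi11a} from small-norm decay — are also standard once the dynamical system is in hand.

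The gap is in the middle, and it is not merely ``a delicate technical point'' as you suggest. Your $M_t(s,s')$ involves $\mathcal{A}_\sigma$ evaluated at the \emph{complex} arguments $x\pm\im\lambda+s$, which is exactly the double-contour-integral structure that Bender \cite{Ben} worked with and that the paper is at pains to avoid. In that form the kernel is \emph{not} a finite-temperature Airy kernel in the sense of \cite[(5.1)]{ACQ}, nor is it in mean-field integrable form with real Airy arguments and a real spectral-parameter measure; consequently the Lax-pair compatibility you invoke does not transparently produce \eqref{fi11}, whose right-hand side has the \emph{real} shift $t+y$ and whose boundary data $\textnormal{Ai}(t+y)$ involves a real argument. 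What is missing is the mechanism that trades the imaginary shifts $\pm\im\sigma y$ for real shifts $\sigma y$: in the paper this is Proposition \ref{prop1} (the trace identities \eqref{e5}), whose proof hinges on the Gaussian functional identity $\e^{-z^2}=\frac{1}{\sqrt{\pi}}\int_{-\infty}^{\infty}\e^{-y^2+2\im zy}\,\d y$ applied in \eqref{e8} to rewrite $\e^{-\frac{1}{4}\sigma^2(\lambda_{k+1}+\mu_k)^2}$ as a real Gaussian integral, reassembling the trace of $(K_{\textnormal{Ai}}^\sigma)^n$ as that of the operator $K_\sigma$ with kernel \eqref{e3a} built from $K_{\textnormal{Ai}}(x_1+\sigma y_1,x_2+\sigma y_2)$ — real shifts only. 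After that reduction the operator really is of the form treated in \cite[Proposition 5.2]{ACQ} (or, in the paper's second proof, it factors through $N_{t,\sigma}$ on $L^2(0,\infty)$ whose kernel \eqref{e31} is exactly a finite-temperature Airy kernel). Your route skips this step; the claim that ``the compatibility condition $\partial_\zeta U-\partial_t V+[U,V]=0$ reduces to \eqref{fi11}'' is asserted without derivation and is the very thing that would fail, or at least require the missing Gaussian reduction, before it could be checked.

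A secondary remark: the paper itself never derives \eqref{fi11} from a Riemann--Hilbert problem. The RHP machinery (Section \ref{sec5}) is used only for the tail asymptotics. The first proof of Theorem \ref{fimain1} is a direct Hankel-composition computation (Propositions \ref{prop2}--\ref{prop4} and the closure relation), and the second proof is an operator factorization plus an appeal to \cite{ACQ}. Your Lax-pair plan is therefore a genuinely different route, which might be made to work after the complex-to-real-shift reduction is inserted, but at present the key step is unproven.
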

Evidently, \eqref{fi10} constitutes a generalization of the Tracy-Widom representation formula \eqref{fi2}, and \eqref{fi11} a generalization of its underlying Hastings-McLeod Painlev\'e-II dynamical system, cf. \cite{HM}. Similar formul\ae\, which involve integro-differential Painlev\'e-II functions have appeared over the past decade in the context of the narrow wedge solution of the KPZ equation \cite{ACQ} and in the context of finite-temperature free fermionic models \cite{DDMS,Kra,BCT}, albeit with different weights $\frac{\d}{\d\lambda}\nu_{\sigma}$ in both instances. We will discuss some of these weight choices later on. To the best of our knowledge, identity \eqref{fi10} marks the first appearance of an integro-differential Painlev\'e-II function in a distribution function of random matrix theory, although the analysis of the grand-canonical version of the Moshe-Neuberger-Shapiro matrix model in \cite{Joh,LW} could have produced an integro-differential Painlev\'e-II connection as well, but Painlev\'e was not mentioned in loc. cit.\footnote{In a different context that focuses on multiplicative statistics of a class of unitarily invariant random matrix models, the recent work \cite{GS} has also uncovered an integro-differential Painlev\'e-II connection.}
\begin{rem} If one focuses solely on combinations of delta point masses for the weight $\frac{\d}{\d\lambda}\nu_{\sigma}$, noting that $\sigma\downarrow 0$ produces a delta mass at zero in \eqref{fi10}, then \eqref{fi10} and \eqref{fi11} have appeared in several other papers which also include higher-order generalizations of \eqref{fi11}. We refer the interested reader to \cite{CCG,BBW,LMS,KZ}, although those studies won't play a role in the following.
\end{rem}
\begin{rem} Next to the integro-differential Painlev\'e-II connection in \eqref{fi10},\eqref{fi11}, the limiting distribution function $F_{\sigma}(t)$ also relates to KdV. Indeed, combining our Theorem \ref{fimain1} with \cite[Theorem $1.3$]{CCR}, we see that the function
\begin{equation*}
	Q(X,T):=F_{\sigma}(t)\Big|_{\sigma=T^{\frac{2}{3}},\ t=-XT^{-\frac{1}{3}}},\ \ \ \ T>0,\ X\in\mathbb{R},
\end{equation*}
is such that $U(X,T):=\frac{X}{2T}+\frac{\partial^2}{\partial X^2}\ln Q(X,T)$ solves the KdV equation
\begin{equation*}
	\frac{\partial U}{\partial T}+2U\frac{\partial U}{\partial X}+\frac{1}{6}\frac{\partial^3U}{\partial X^3}=0.
\end{equation*}
\end{rem}
\begin{rem} Introducing the potential function
\begin{equation*}
	q_{\sigma}(t):=2\int_{-\infty}^{\infty}\big(p_{\sigma}(t,\lambda)\big)^2\d\nu_{\sigma}(\lambda),
\end{equation*}
in the style of the Deift-Trubowitz trace formula \cite[page $124$]{DT}, one finds from \eqref{fi11} the Stark equation
\begin{equation*}
	\left[-\frac{\partial^2}{\partial t^2}+t+q_{\sigma}(t)\right]p_{\sigma}(t,y)=-yp_{\sigma}(t,y),\ \ \ p_{\sigma}(t,y)\sim\textnormal{Ai}(t+y),\ \ \ t\rightarrow+\infty.
\end{equation*}
\end{rem}
One of the first advantages of the representation formula \eqref{fi10},\eqref{fi11} is that it allows us to identify the Fredholm determinant of $K_{\textnormal{Ai}}^{\sigma}$ on $L^2((t,\infty)\times\mathbb{R})$ with a simpler Fredholm determinant, using the framework of \cite[Section $5$]{ACQ}. Namely, if $N_{\sigma}$ denotes the integral operator on $L^2(t,\infty)$ with kernel
\begin{equation}\label{fi12}
	N_{\sigma}(x,y):=\int_{-\infty}^{\infty}\Phi\Big(\frac{z}{\sigma}\Big)\textnormal{Ai}(x+z)\textnormal{Ai}(z+y)\,\d z,\ \ \ \ \ \Phi(z):=\frac{1}{\sqrt{\pi}}\int_{-\infty}^z\e^{-x^2}\,\d x=1-\frac{1}{2}\textnormal{erfc}(z),\ \ z\in\mathbb{C},
\end{equation}
using the complementary error function $w=\textnormal{erfc}(z)$, cf. \cite[$7.2.2$]{NIST}, then \eqref{fi10},\eqref{fi11} show that $F_{\sigma}(t)$ is simply the Fredholm determinant of the operator $N_{\sigma}$ with \textit{finite-temperature} Airy kernel \eqref{fi12} on $L^2(t,\infty)$. In short, Theorem \ref{fimain1} encodes a correspondence between the operator $K_{\textnormal{Ai}}^{\sigma}$ on $L^2((t,\infty)\times\mathbb{R})$ and the operator $N_{\sigma}$ on $L^2(t,\infty)$. The same correspondence is very useful in the scaling analysis of $\max_{j=1,\ldots,n}\Re\lambda_j(\X)$ for $\X\in\textnormal{eGinUE}$! Namely, it first yields a streamlined derivation of the aforementioned degenerations of the point process on $\mathbb{R}^2$ determined by \eqref{fi9} when $\sigma\downarrow 0$ and $\sigma\rightarrow+\infty$, on the level of the rightmost particle distribution function. The details are as follows:
\begin{cor}[{\cite[Theorem $2.3$]{Ben}}]\label{corfi1} Let $\sigma>1$ and set
\begin{equation*}
	a_{\sigma}:=\frac{\sigma}{\sqrt{6\ln\sigma}},\ \ \ \ \ \ \  c_{\sigma}:=a_{\sigma}\left(3\ln\sigma-\frac{5}{4}\ln(6\ln\sigma)-\ln(2\pi)\right).
\end{equation*}
Then for any fixed $t\in\mathbb{R}$,
\begin{equation}\label{fi12a}
	\lim_{\sigma\rightarrow\infty}F_{\sigma}(c_{\sigma}+a_{\sigma}t)=\sum_{n=0}^{\infty}\frac{(-1)^n}{n!}\int_{(t,\infty)^n}\det\big(K_{\textnormal{ext}}(x_i,x_j)\big)_{i,j=1}^n\,\d x_1\cdots\d x_n
	\stackrel{\eqref{fi6}}{=}\e^{-\e^{-t}},
\end{equation}
in terms of the integral operator $K_{\textnormal{ext}}$ with kernel \eqref{fi6}.
\end{cor}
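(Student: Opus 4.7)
The plan is to exploit the reduction following Theorem \ref{fimain1}: the paragraph after \eqref{fi12} identifies $F_\sigma(t)$ with $\det(I - N_\sigma)_{L^2(t,\infty)}$, replacing the two-dimensional problem over $L^2((t,\infty)\times\mathbb{R})$ by a one-dimensional one amenable to saddle-point asymptotics. A change of variable in the Fredholm series then gives $F_\sigma(c_\sigma + a_\sigma t) = \det(I - \widetilde{N}_\sigma)_{L^2(t,\infty)}$ with rescaled kernel
\begin{equation*}
\widetilde{N}_\sigma(u,v) := a_\sigma\int_{-\infty}^\infty \Phi(z/\sigma)\,\textnormal{Ai}(c_\sigma + a_\sigma u + z)\,\textnormal{Ai}(c_\sigma + a_\sigma v + z)\,\d z,
\end{equation*}
and it suffices to show $\det(I - \widetilde{N}_\sigma)_{L^2(t,\infty)} \to \e^{-\e^{-t}}$.

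The decisive step is the pointwise asymptotic $\widetilde{N}_\sigma(u,u) \to \e^{-u}$ for fixed $u\in\mathbb{R}$, extracted by Laplace's method. Substituting $z = \xi - c_\sigma$ and using the large-argument asymptotics $\Phi(-A) \sim \e^{-A^2}/(2A\sqrt{\pi})$ as $A\to+\infty$ together with $[\textnormal{Ai}(X)]^2 \sim \e^{-\frac{4}{3}X^{3/2}}/(4\pi X^{1/2})$ as $X\to+\infty$, the integrand has a saddle whose position and Hessian are determined by $c_\sigma$ and $\sigma$. The specific choice $a_\sigma = \sigma/\sqrt{6\ln\sigma}$ and $c_\sigma = a_\sigma(3\ln\sigma - \frac{5}{4}\ln(6\ln\sigma) - \ln(2\pi))$ is calibrated precisely so that all pre-exponential factors (the $\sigma$ from $\Phi$, the $c_\sigma^{-1/2}$ from $\textnormal{Ai}^2$, and the Gaussian width from Laplace) combine with the contribution of the shift $a_\sigma u$ in the exponent to produce the Gumbel factor $\e^{-u}$. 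For $u\neq v$ the analogous analysis, supplemented by oscillatory cancellation in the Airy factors when $\xi + a_\sigma\min(u,v) < 0$, yields $\widetilde{N}_\sigma(u,v) \to 0$.

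Since $\widetilde{N}_\sigma$ is positive semidefinite (the weight $\Phi(z/\sigma)$ is nonnegative, so the kernel has Gram form), Hadamard's inequality gives $|\det(\widetilde{N}_\sigma(u_i, u_j))_{i,j=1}^n| \leq \prod_{i=1}^n \widetilde{N}_\sigma(u_i, u_i)$. A uniform (in $\sigma$ large) bound of the form $\widetilde{N}_\sigma(u,u) \leq C\e^{-u}$ for $u\geq t$, which follows from the same Laplace analysis with explicit control of the remainders, then permits dominated convergence in the Fredholm series
\begin{equation*}
\det(I - \widetilde{N}_\sigma)_{L^2(t,\infty)} = \sum_{n=0}^\infty \frac{(-1)^n}{n!} \int_{(t,\infty)^n} \det\bigl(\widetilde{N}_\sigma(u_i, u_j)\bigr)_{i,j=1}^n\,\d u_1\cdots\d u_n.
\end{equation*}
The off-diagonal limits force the limiting $n\times n$ determinant to be diagonal almost everywhere, equal to $\prod_{i=1}^n \e^{-u_i}$, and termwise integration yields $\sum_{n\geq 0}(-1)^n(\e^{-t})^n/n! = \e^{-\e^{-t}}$; this is exactly the chain of equalities in \eqref{fi12a}, the first being the definition of the Fredholm expansion of the operator $K_{\textnormal{ext}}$ and the second the direct evaluation just described.

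The main technical obstacle is the Laplace asymptotic of $\widetilde{N}_\sigma(u,u)$ itself: the width scales of $\Phi$ (of order $\sigma$) and $\textnormal{Ai}^2$ (of order $1$) differ dramatically, and two exponentials must be balanced via a non-trivial saddle. The logarithmic structure hidden inside $c_\sigma$ must be unpacked precisely to absorb the algebraic prefactors, and the domain of validity of the Airy expansion must be monitored, with separate control of the contribution from the oscillatory Airy region and from the region where the expansion of $\Phi$ breaks down.
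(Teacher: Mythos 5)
Your macro-strategy matches the paper's exactly: reduce to the one-dimensional operator $N_{t,\sigma}$ on $L^2$ of the half-line via the dimensional reduction of Corollary \ref{theo2}, extract the pointwise kernel limit, and pass to the limit in the Fredholm series through Hadamard's inequality and dominated convergence. Where you diverge is the single most important technical step, the kernel asymptotics itself. You propose to do a real-variable Laplace analysis of $a_\sigma\int\Phi(z/\sigma)\,\textnormal{Ai}^2(c_\sigma+a_\sigma u+z)\,\d z$ directly, feeding in the tail expansions of $\Phi$ and $\textnormal{Ai}^2$. The paper instead first establishes the single contour integral formula \eqref{f2} for $N_{t,\sigma}(a,b)$, obtained by combining a Gaussian representation of $\Phi$ with Okounkov's Laplace transform of the Airy kernel, and then does complex steepest descent on \eqref{f2}. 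The paper explicitly flags \eqref{f2} as the new device that streamlines Bender's original argument; with it, the off-diagonal decay and the $(t,\sigma)$-uniform kernel bound both come from a single elementary contour estimate \eqref{f2b} by choosing $\Gamma=\mathbb{R}+\im\sqrt{6\ln\sigma}$, and the integrand has no Airy functions at all, so there is no oscillatory region to contend with.

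Your diagonal saddle calculation does produce the correct constants — the choice of $a_\sigma,c_\sigma$ does collapse the algebraic prefactors to $1$ — so that part of your plan is sound, if delicate (the saddle sits at $z_*+A\sim\ln\sigma$, so the $\textnormal{Ai}^2$ expansion is only marginally valid, and the Laplace width and prefactors all carry $(\ln\sigma)$-powers that must cancel exactly). However, two steps you state without proof are precisely those that the contour formula makes easy and that are genuinely nontrivial on the real line. First, the off-diagonal vanishing $\widetilde{N}_\sigma(u,v)\to 0$, $u\neq v$: in the region $-A_v<z<-A_u$ one Airy factor oscillates while the other decays, $\Phi(z/\sigma)$ is only algebraically small, and ``oscillatory cancellation'' is not itself an argument — you would need either a stationary-phase estimate in that band or a decomposition showing the surviving exponential decay from $\textnormal{Ai}(z+A_v)$ dominates, all carried out uniformly in the $(\ln\sigma)$-scales. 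Second, the uniform bound $\widetilde{N}_\sigma(u,u)\leq C\,\e^{-u}$ needed for dominated convergence is asserted to ``follow from the same Laplace analysis with explicit control of the remainders'', but a pointwise saddle limit does not automatically yield a uniform-in-$\sigma$ domination. The paper gets the uniform bound \eqref{f2b} before, not after, the pointwise limit, directly from the contour representation. So your outline is a viable alternative route, but in its present form the off-diagonal and uniformity claims are the missing content; closing them by hand is where the bulk of the work would go, and it is exactly the work that the paper's formula \eqref{f2} is designed to avoid.
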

\begin{cor}[{\cite[Theorem $2.5$(ii)]{Ben}}]\label{corfi2} For any fixed $t\in\mathbb{R}$,
\begin{equation}\label{fi13}
	\lim_{\sigma\downarrow 0}F_{\sigma}(t)=\sum_{n=0}^{\infty}\frac{(-1)^n}{n!}\int_{(t,\infty)^n}\det\big(K_{\textnormal{Ai}}(x_i,x_j)\big)_{i,j=1}^n\,\d x_1\cdots\d x_n
	\stackrel[\eqref{fi2}]{\eqref{fi4}}{=}\exp\left[-\int_t^{\infty}(s-t)\big(q(s)\big)^2\d s\right]
\end{equation}
in terms of the Airy integral operator $K_{\textnormal{Ai}}$ with kernel \eqref{fi4} and the Hastings-McLeod Painlev\'e-II transcendent $q=q(t)$ in \eqref{fi3}.
\end{cor}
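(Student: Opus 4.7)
The strategy is to leverage the operator reduction indicated in the paragraph following Theorem \ref{fimain1}, which yields
\begin{equation*}
F_\sigma(t) = \det\bigl(I - N_\sigma\bigr)_{L^2(t,\infty)}
\end{equation*}
with $N_\sigma$ the finite-temperature Airy operator of kernel \eqref{fi12}. The plan is to show $N_\sigma \to K_{\textnormal{Ai}}$ in trace norm on $L^2(t,\infty)$ as $\sigma \downarrow 0$, invoke continuity of the Fredholm determinant in trace norm to pass the limit inside, and then recognize the resulting limit via the classical Tracy-Widom identity \eqref{fi2}-\eqref{fi4}.

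For the trace-norm convergence, the key input is the bound
\begin{equation*}
|\phi_\sigma(z)| := \bigl|\Phi(z/\sigma) - \chi_{[0,\infty)}(z)\bigr| = \Phi(-|z|/\sigma),
\end{equation*}
valid by the elementary symmetry $\Phi(u)+\Phi(-u)=1$; in particular $|\phi_\sigma|\leq\tfrac12$ and $\|\phi_\sigma\|_{L^1(\mathbb{R})} = \sigma/\sqrt{\pi}$, with Gaussian concentration near the origin. Factor $N_\sigma - K_{\textnormal{Ai}} = P_\sigma S_\sigma Q_\sigma$ where $P_\sigma: L^2(\mathbb{R}) \to L^2(t,\infty)$ has kernel $|\phi_\sigma(z)|^{1/2}\textnormal{Ai}(x+z)$, $Q_\sigma: L^2(t,\infty) \to L^2(\mathbb{R})$ has kernel $|\phi_\sigma(z)|^{1/2}\textnormal{Ai}(z+y)$, and $S_\sigma$ is multiplication by $\textnormal{sgn}\,\phi_\sigma(z)$ on $L^2(\mathbb{R})$, of operator norm at most one. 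By Fubini,
\begin{equation*}
\|P_\sigma\|_{\mathrm{HS}}^2 = \|Q_\sigma\|_{\mathrm{HS}}^2 = \int_{-\infty}^\infty |\phi_\sigma(z)|\int_{t+z}^\infty \textnormal{Ai}(u)^2\,\d u\,\d z.
\end{equation*}
The inner integral decays super-exponentially as $z \to +\infty$ and grows only as $O(\sqrt{|z|})$ as $z \to -\infty$ (because $\textnormal{Ai}(u)^2 = O(|u|^{-1/2})$ as $u \to -\infty$); combined with the Gaussian tail of $|\phi_\sigma|$, this gives $\|P_\sigma\|_{\mathrm{HS}}^2 = O(\sigma)$, and H\"older's inequality for Schatten norms then yields
\begin{equation*}
\bigl\|N_\sigma - K_{\textnormal{Ai}}\bigr\|_1 \leq \|P_\sigma\|_{\mathrm{HS}}\,\|S_\sigma\|_\infty\,\|Q_\sigma\|_{\mathrm{HS}} = O(\sigma) \longrightarrow 0.
\end{equation*}

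Since $K_{\textnormal{Ai}}$ is trace class on $L^2(t,\infty)$ for every $t \in \mathbb{R}$ (its standard factorization $K_{\textnormal{Ai}}=AA^*$ with $A(x,z) = \textnormal{Ai}(x+z)$, $z \geq 0$, satisfies $\|A\|_{\mathrm{HS}}^2 = \int_t^\infty (w-t)\textnormal{Ai}(w)^2\d w < \infty$), continuity of $\det(I - \cdot)$ in trace norm yields
\begin{equation*}
\lim_{\sigma\downarrow 0} F_\sigma(t) = \det\bigl(I - K_{\textnormal{Ai}}\bigr)_{L^2(t,\infty)}.
\end{equation*}
Expanding as a Fredholm series delivers the first equality in \eqref{fi13}, while the second equality is precisely the Tracy-Widom identity \eqref{fi2} in conjunction with \eqref{fi3}-\eqref{fi4}, completing the proof.

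The principal obstacle is the quantitative trace-norm convergence: pointwise convergence $N_\sigma(x,y) \to K_{\textnormal{Ai}}(x,y)$ is immediate from $\Phi(z/\sigma) \to \chi_{[0,\infty)}(z)$, but upgrading it to trace-norm convergence is delicate because $\textnormal{Ai}^2$ fails to be $L^1$ at $-\infty$. One must exploit the fast Gaussian concentration of $\phi_\sigma$ near $z=0$ to absorb the algebraic growth of the Airy marginals, and the same mechanism explains why the leading error is of size $\sigma$.
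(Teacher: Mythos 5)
Your argument is correct, and it takes a genuinely different route from the paper's. The paper also reduces to trace-norm convergence of the finite-temperature Airy operator, but to establish that convergence it appeals to the positivity structure: both $N_{t,\sigma}$ and the shifted Airy operator are non-negative self-adjoint trace-class operators, so by Gr\"umm's theorem (\cite[Theorem~$2.20$]{Sim}) weak operator convergence together with convergence of the trace norms $\|N_{t,\sigma}\|_1 \to \|K^t_{\textnormal{Ai}}\|_1$ already upgrades to trace-norm convergence; both of those ingredients are then verified by dominated convergence and Fubini, without any explicit rate. You instead factor the difference through the sign decomposition of $\phi_\sigma = \Phi(\cdot/\sigma)-\chi_{[0,\infty)}$ into a product of two Hilbert--Schmidt factors and a bounded multiplier, and then use the Schatten--H\"older inequality to obtain a quantitative bound $\|N_\sigma - K_{\textnormal{Ai}}\|_1 = O(\sigma)$. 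Your route is more self-contained (no appeal to Gr\"umm's theorem or to the non-negativity of the operators) and yields an explicit rate; the paper's route is shorter and exploits structure that is already in place (positivity from Lemma~\ref{lem2}), but it is non-quantitative. Both are valid; the verification of your Hilbert--Schmidt bound (inner Airy integral growing like $O(\sqrt{|z|})$ at $-\infty$ absorbed by the Gaussian tail $|\phi_\sigma(z)|\leq\tfrac12 \e^{-(z/\sigma)^2}$ from Lemma~\ref{lem4}) is sound.
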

Observe that Corollaries \ref{corfi1} and \ref{corfi2} capture the transition from Poisson to Airy point process extreme value statistics for the rightmost eigenvalue as the degree of non-Hermiticity decreases from $\sigma=\infty$ to $\sigma=0$. The limit \eqref{fi13} is easy to derive even when working with $K_{\textnormal{Ai}}^{\sigma}$ on $L^2((t,\infty)\times\mathbb{R})$. However, \eqref{fi12a} is much harder and its derivation made up a significant part of the workings in \cite{Ben}. This was because of Bender's use of double contour integral formul\ae\,which we can bypass all together when working with $N_{\sigma}$ on $L^2(t,\infty)$ instead of $K_{\textnormal{Ai}}^{\sigma}$ on $L^2((t,\infty)\times\mathbb{R})$. 
\begin{rem} Other probabilistic models that encode transitions from Poisson to Airy point process extreme values statistics can be found in \cite{Joh,LW}, on the Moshe-Neuberger-Shapiro model, in \cite{DDMS}, on non-interacting fermions at finite temperature, and in \cite{BB} on periodic Schur processes.
\end{rem}
Still, \eqref{fi12a} and \eqref{fi13} had been found in \cite{Ben} without knowing Theorem \ref{fimain1}. This brings us to the second, and more important, advantage of the representation formula \eqref{fi10},\eqref{fi11}: seeing that the integral operator $N_{\sigma}$ is related to an \textit{integrable operator} in the sense of \cite{IIKS}, we can use the correspondence between $K_{\textnormal{Ai}}^{\sigma}$ and $N_{\sigma}$ to derive detailed tail expansions for the random variable $\max_{j=1,\ldots,n}\Re\lambda_j(\X)$ with $\X\in\textnormal{eGinUE}$. Such expansions, even for the right $t\rightarrow+\infty$ tail, are by no means trivial due to the presence of $\sigma>0$. Indeed, the trace norm of $N_{\sigma}$ on $L^2(t,\infty)$ equals
\begin{equation*}
	\|N_{\sigma}\|_1=\int_t^{\infty}\int_{-\infty}^{\infty}\Phi\Big(\frac{z}{\sigma}\Big)\big(\textnormal{Ai}(x+z)\big)^2\,\d z\,\d x=\int_{-\infty}^{\infty}\Phi\Big(\frac{z-t}{\sigma}\Big)K_{\textnormal{Ai}}(z,z)\,\d z,
\end{equation*}
and so with \cite[$7.8.3$]{NIST}, for all $t\in\mathbb{R}$ and $\sigma\geq 0$,
\begin{equation*}
	\|N_{\sigma}\|_1\geq\frac{1}{2}\|K_{\textnormal{Ai}}\|_1+\frac{\sigma}{2\sqrt{\pi}}\int_0^{\infty}\frac{\e^{-u^2}}{1+u}\left[\int_{t-u\sigma}^{\infty}\big(\textnormal{Ai}(y)\big)^2\d y\right]\d u.
\end{equation*}
Thus $\|N_{\sigma}\|_1$ becomes unbounded as $t\rightarrow-\infty$ for all $\sigma\geq 0$ and as $t\rightarrow+\infty$ once $\sigma\geq ct$ with $c>0$, indicating that simple norm estimates for $N_{\sigma}$ and its powers cannot yield sufficient control over $F_{\sigma}(t)$ in the same asymptotic regimes. To circumvent this issue we shall exploit the integrable structure of $N_{\sigma}$ and employ a Riemann-Hilbert nonlinear steepest descent approach \cite{DZ} in the derivation of tail expansions for $F_{\sigma}(t)$ as $t\rightarrow\pm\infty$. The same analysis leads to the following results, first dealing with the $t\rightarrow+\infty$ tail:
\begin{theo}\label{fimain2} For any $\epsilon\in(0,1)$, there exist $c=c(\epsilon)>0$ and $t_0=t_0(\epsilon)>0$ so that
\begin{equation}\label{fi14}
	F_{\sigma}(t)=1-A(t,\sigma)\e^{-B(t,\sigma)}\big(1+r(t,\sigma)\big),
\end{equation}
for $t\geq t_0$ and $0\leq\sigma\leq t^{\epsilon}$. The functions $A,B$ equal
\begin{equation}\label{fi15}
	\begin{cases}\displaystyle B(t,\sigma):=\frac{4}{3}t^{\frac{3}{2}}\bigg(1+\frac{\sigma^4}{4t}\bigg)^{\frac{3}{2}}-t\sigma^2-\frac{\sigma^6}{6}\bigskip\\
	\displaystyle A(t,\sigma):=\frac{1}{2\pi t^{\frac{3}{2}}}\bigg(\sqrt{4+\frac{\sigma^4}{t}}-\frac{\sigma^2}{\sqrt{t}}\bigg)^{-\frac{5}{2}}\bigg(4+\frac{\sigma^4}{t}\bigg)^{-\frac{1}{4}}\\
	\end{cases},\ \ \ \ t>0,\ \sigma\geq 0,
\end{equation}
the error term $r(t,\sigma)$ is differentiable with respect to $t$ and satisfies
\begin{equation*}
	\big|r(t,\sigma)\big|\leq c\begin{cases}t^{-\frac{3}{2}},&\epsilon\in(0,\frac{1}{4}]\smallskip\\ t^{-1+\epsilon},&\epsilon\in(\frac{1}{4},1)
	\end{cases}\ \ \ \ \ \ \ \forall\,t\geq t_0,\ \ 0\leq\sigma\leq t^{\epsilon}.
\end{equation*}
\end{theo}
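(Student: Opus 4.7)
My plan is to exploit the key reduction that follows Theorem \ref{fimain1}: namely, $F_\sigma(t) = \det(I - N_\sigma)\vert_{L^2(t,\infty)}$ with $N_\sigma$ the integrable, finite-temperature Airy operator of \eqref{fi12}. A formal leading-order guess comes from $1 - F_\sigma(t) \approx \operatorname{tr} N_\sigma$ combined with a Laplace evaluation of $\int_t^\infty \int \Phi(z/\sigma) \mathrm{Ai}(x+z)^2\, dz\, dx$. The saddle equation $z_\ast^2 = \sigma^4(x+z_\ast)$ produces, upon substitution, exactly the exponent $B$ of \eqref{fi15}, and Gaussian evaluation around the saddle yields the prefactor $A$. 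However, as the paper already remarks, trace-norm estimates for $N_\sigma$ fail once $\sigma \gtrsim t$, so controlling subleading terms and Neumann iterates uniformly in $\sigma \in [0, t^\epsilon]$ requires a genuine Riemann-Hilbert treatment.

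For the rigorous argument I would put $N_\sigma$ in IIKS integrable form through Stokes contour representations of the Airy functions together with a suitable contour (or Fourier) representation of the weight $\Phi(\cdot/\sigma)$. This yields a $2\times 2$ matrix Riemann-Hilbert problem $Y=Y(z;t,\sigma)$ on a contour $\Sigma \subset \mathbb{C}$, whose jump matrix carries the Airy cubic phase together with a Gaussian factor inherited from $\Phi$. A Jimbo-Miwa-Ueno-style differential identity then expresses $\partial_t \log F_\sigma(t)$ in terms of the large-$z$ expansion of $Y$, reducing the tail of $F_\sigma$ to large-$t$ asymptotics of this RHP.

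For the nonlinear steepest descent step I would shift $z \mapsto z + \sigma^2/2$ and then rescale $z \mapsto \omega z$ with $\omega := \sqrt{t + \sigma^4/4}$. This transforms the effective phase into the classical cubic $\frac{1}{3}z^3 - z$ multiplied by the large parameter $\omega^3$, plus a $(t,\sigma)$-dependent additive constant; the two saddles $z = \pm 1$ combine to produce the exponent
\[
B(t,\sigma) = \frac{4}{3}\omega^3 - \omega^2\sigma^2 + \frac{\sigma^6}{12} = \frac{4}{3}t^{\frac{3}{2}}\bigg(1+\frac{\sigma^4}{4t}\bigg)^{\frac{3}{2}} - t\sigma^2 - \frac{\sigma^6}{6},
\]
matching \eqref{fi15}. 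Opening a lens and deforming the contour through the saddles along steepest descent paths, the resulting jump of the RHP is close to the identity uniformly in $\sigma \in [0, t^\epsilon]$ for $\epsilon < 1$; the small-norm theory of Deift-Zhou \cite{DZ} then gives $Y = I + O(\omega^{-3})$, with Gaussian integration around the saddle producing the explicit prefactor $A(t,\sigma)$. Integrating the differential identity back up from $t = +\infty$, where $F_\sigma = 1$, yields \eqref{fi14}.

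The main obstacle is achieving uniformity in $\sigma$ over the entire range $[0, t^\epsilon]$. The shifted-rescaled steepest descent geometry is essentially a small perturbation of the classical Airy situation so long as $\sigma^4 \lesssim t$, i.e.\ $\epsilon \leq 1/4$; in that regime the dominant correction is the standard Tracy-Widom $O(t^{-3/2})$ arising from the local Airy parametrix. Once $\epsilon > 1/4$ the Gaussian factor is no longer a mild perturbation on the descent contour, subleading $\sigma$-dependent contributions of relative size $\sigma^2/\omega^2 \sim \sigma^2 t^{-1}$ swell to $O(t^{-1+\epsilon})$ and overtake the Airy correction. Unifying the two regimes—by carefully constructing local parametrices that absorb the Gaussian factor near each saddle and by tracking the next-order terms of the Neumann series across the threshold $\epsilon = 1/4$—is the technical crux of the argument.
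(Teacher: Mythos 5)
Your high-level strategy coincides with the paper's: reduce to the scalar finite-temperature Airy operator (Theorem \ref{fimain1}), place it in IIKS form, set up the $2\times 2$ master Riemann--Hilbert problem, derive a Jimbo--Miwa--Ueno-type differential identity for $\partial_t\log F_\sigma(t)$, and run a small-norm analysis followed by a saddle-point evaluation that produces the exponent $B(t,\sigma)$ and prefactor $A(t,\sigma)$. Your formal computation of $B$ via the saddle equation is correct, and your diagnosis that the regime splits at $\epsilon=1/4$ (standard Tracy--Widom $\mathcal{O}(t^{-3/2})$ correction vs.\ a swelling $\mathcal{O}(t^{-1+\epsilon})$ correction) matches the paper precisely. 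Where you diverge is in the steepest-descent implementation, and the paper's route is considerably lighter than what you envisage. You propose a shift $z\mapsto z+\sigma^2/2$ and rescaling by $\omega=\sqrt{t+\sigma^4/4}$, followed by lens opening and local parametrices that absorb the Gaussian factor near each saddle. The paper instead rescales by the elementary substitution $z\mapsto zt/\sigma$ (yielding RHP \ref{trafo1}) and observes (Proposition \ref{prop7}) that, in the regime $t\to+\infty$ with $t/\sigma\to+\infty$, the jump matrix is \emph{already} exponentially close to the identity on the whole real line — no lens opening and no local parametrix are needed. The small-norm machinery (Theorem \ref{theo3}) then reduces $\partial_t\log F_\sigma(t)$ to its first Neumann iterate plus a quadratically small error, and the first iterate is the explicit scalar integral $N_{t,\sigma}(0,0)=\int\Phi(\lambda/\sigma)\textnormal{Ai}^2(\lambda+t)\,\d\lambda$. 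That scalar integral is then handled by classical steepest descent (Olver), using the single-contour-integral formula \eqref{f2} — once with the substitution $y=\sigma\sqrt{t}\,\mu$ and large parameter $t^{3/2}$ for $\epsilon\in(0,\tfrac14]$, and once without rescaling with large parameter $t/\sigma$ for $\epsilon\in(\tfrac14,1)$. So the paper never has to build a local parametrix that absorbs the Gaussian: the whole non-trivial structure is pushed into a one-dimensional Laplace/saddle calculation, while the RHP only supplies the error control. Your lens-and-parametrix blueprint is the machinery the paper reserves for Theorem \ref{fimain3} (where $\sigma\to\infty$ and a genuine $g$-function transformation becomes necessary); applying it here would work but would be unnecessarily heavy, and it obscures what is really driving the estimate in the present regime.
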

For any fixed $\sigma\geq 0$, Taylor-expansion at $t=\infty$ in \eqref{fi15} yields
\begin{equation}\label{fi16}
	B(t,\sigma)=\frac{4}{3}t^{\frac{3}{2}}+\frac{1}{2}\sigma^4\sqrt{t}-t\sigma^2-\frac{\sigma^6}{6}+\mathcal{O}\big(t^{-\frac{1}{2}}\big),\ \ \ \ \ A(t,\sigma)=\frac{1}{16\pi t^{\frac{3}{2}}}\Big(1+\mathcal{O}\big(t^{-\frac{1}{2}}\big)\Big),
\end{equation}
and thus, by taking $\sigma\downarrow 0$, \eqref{fi14} reproduces the known right tail behavior of the Tracy-Widom distribution \eqref{fi2},
\begin{equation}\label{fi17}
	\exp\left[-\int_t^{\infty}(s-t)\big(q(s)\big)^2\d s\right]=1-\frac{1}{16\pi t^{\frac{3}{2}}}\exp\left[-\frac{4}{3}t^{\frac{3}{2}}\right]\big(1+o(1)\big),\ \ \ \ t\rightarrow+\infty,
\end{equation}
see for instance \cite[$(1.17)$]{BDT}. Still, \eqref{fi16} shows that one immediately observes the influence of $\sigma>0$ in the leading order $t\rightarrow+\infty$ asymptotics of $1-F_{\sigma}(t)$. Also, \eqref{fi15} indicates another qualitative change once $\sigma^4\propto t$, for then we cannot expand as in \eqref{fi16} but must use the below equivalent representations for $A,B$,
\begin{equation}\label{fi17a}
	B(t,\sigma)=\frac{\sigma^6}{6}\bigg(1+\frac{4t}{\sigma^4}\bigg)^{\frac{3}{2}}-t\sigma^2-\frac{\sigma^6}{6},\ \ \ \ \ A(t,\sigma)=\frac{\sigma^4}{64\pi t^{\frac{5}{2}}}\bigg(1+\sqrt{1+\frac{4t}{\sigma^4}}\bigg)^{\frac{5}{2}}\bigg(1+\frac{4t}{\sigma^4}\bigg)^{-\frac{1}{4}}.
\end{equation}
These show, in particular, that as $t\rightarrow+\infty$ and $t^{\frac{1}{4}}<\sigma\leq t^{\epsilon}$ with $\epsilon\in(\frac{1}{4},1)$,
\begin{equation*}
	1-F_{\sigma}(t)=\exp\left[-\frac{t-c_{\sigma}}{a_{\sigma}}+\textnormal{remainder}\right],
\end{equation*}
using $a_{\sigma},c_{\sigma}$ as in Corollary \ref{corfi1} and with a non-negative remainder term that is growing in the same regime. In other words, we see the leading order right tail of the Gumbel distribution
\begin{equation}\label{fi18}
	\e^{-\e^{-t}}=1-\e^{-t}\big(1+o(1)\big),\ \ \ t\rightarrow+\infty,
\end{equation}
emerging in \eqref{fi14}, as predicted by \eqref{fi12a}, but Theorem \ref{fimain2} is not sufficient to capture the full crossover between \eqref{fi17} and \eqref{fi18}, simply because $\sigma$ is constrained by $t^{\epsilon},\epsilon\in(0,1)$ from above in \eqref{fi14}. Instead, the complete crossover between \eqref{fi17} and \eqref{fi18} will be achieved by combining Theorem \ref{fimain2} with the following result:
\begin{theo}\label{fimain3} There exist $t_0,\sigma_0\geq 1$ and $c>0$ so that
\begin{equation}\label{fi19}
	F_{\sigma}(t)=\exp\left[\sigma^{\frac{3}{2}}C\Big(\frac{t}{\sigma}\Big)+\frac{1}{4}\int_{\frac{t}{\sigma}}^{\infty}\left\{\frac{\d}{\d u}D(u)\right\}^2\d u\right]\big(1+r(t,\sigma)\big),
\end{equation}
for $t\geq t_0$ and $\sigma\geq\sigma_0$. Here,
\begin{equation}\label{rej3}
	C(x):=\frac{1}{\pi}\int_0^{\infty}\sqrt{y}\,\ln\Phi(x+y)\,\d y,\ \ \ \ \ \ D(x):=\frac{1}{\pi}\int_0^{\infty}\frac{1}{\sqrt{y}}\ln\Phi(x+y)\,\d y,\ \ \ \ \ x\in\mathbb{R},
\end{equation}
are defined in terms of  $\Phi=\Phi(z)$, see \eqref{fi12}, and the error term $r(t,\sigma)$, which is differentiable with respect to $(t,\sigma)$, satisfies the estimate
\begin{equation*}
	\big|r(t,\sigma)\big|\leq\frac{c}{\min\{t,\sigma\}}\ \ \ \ \ \ \ \ \forall\,t\geq t_0,\ \sigma\geq\sigma_0.
\end{equation*}
\end{theo}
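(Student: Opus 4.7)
The plan is to leverage the identification $F_\sigma(t) = \det(I - N_\sigma)|_{L^2(t,\infty)}$, which is built into the integrable-operator correspondence of Theorem \ref{fimain1}. Because $N_\sigma$ has the IIKS integrable kernel \eqref{fi12}, I would encode $\det(I - N_\sigma)$ through a matrix Riemann-Hilbert problem $Y = Y(z;t,\sigma)$ on an Airy-type contour whose jump matrix carries the finite-temperature weight $\Phi(\cdot/\sigma)$. Recovery of the Fredholm determinant then proceeds through a Jimbo-Miwa-Ueno-type differential identity expressing $\partial_t\log\det(I - N_\sigma)$ as a residue of $Y$ at the endpoint $t$.

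To access the combined regime $t,\sigma\to\infty$, I would set $t = \sigma x$ and rescale the spectral variable $w\mapsto\sqrt\sigma\,w$ so that the cubic Airy phase $\tfrac{w^{3}}{3}-xw$ is amplified by the factor $\sigma^{3/2}$, which is the effective large parameter visible in \eqref{fi19}. I would then carry out a Deift-Zhou nonlinear steepest descent $Y\mapsto T\mapsto S\mapsto R$ using a $g$-function adapted to an equilibrium problem on a half-line with square-root edge at $x$, as is typical for soft-edge Airy problems, in which the external field is modulated by $\ln\Phi$. Sokhotski-Plemelj computations on $[0,\infty)$ with resolvent kernels involving $\sqrt{y}$ and $1/\sqrt{y}$ should then reproduce the two transforms in \eqref{rej3}: the equilibrium energy yielding $\sigma^{3/2}C(x)$ and the first dispersive correction yielding $\tfrac{1}{4}\int_{x}^{\infty}(D'(u))^{2}\,\d u$.

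The global (outer) parametrix built from the $g$-function must be matched to a standard Airy local parametrix near the turning point $x$; the mismatch across the Airy disk generates the $O(t^{-1})$ portion of the error, while the small-norm RH problem for $R$ contributes $O(\sigma^{-3/2})$ on the remaining contour arcs. Together these give $\|R - I\|_{L^\infty} = O(1/\min\{t,\sigma\})$, and integration of the differential identity from $t = +\infty$ (where $F_\sigma = 1$) down to the target $t = \sigma x$ then delivers \eqref{fi19}, the constant of integration being pinned down by consistency with Theorem \ref{fimain2} throughout the overlap region $\sigma \ll t$.

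The principal obstacle is the construction and verification of the $g$-function: the equilibrium problem must simultaneously reproduce the cubic Airy phase, couple correctly to the smooth $\ln\Phi$ external field, and yield both a leading energy and a first subleading correction expressible purely through the elementary transforms \eqref{rej3}. Guessing the correct density, checking the variational inequalities uniformly in $x\in\mathbb{R}$, and ensuring that the subleading (dispersive) term in the outer parametrix collapses to $\tfrac{1}{4}\int_x^\infty(D')^2\d u$ rather than some more intricate expression is where the bulk of the analytic work lies. Once this $g$-function is in place, the remaining steps of the steepest descent are essentially standard, modulo careful propagation of uniformity in $(t,\sigma)$ through the small-norm estimate and the final integration in $t$.
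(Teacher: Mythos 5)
Your overall framework matches the paper's: both proceed via the integrable-operator Riemann--Hilbert problem for $N_\sigma$ (equivalently $M_{t,\sigma}$, cf.~Lemma~\ref{lem3}), a Deift--Zhou steepest-descent chain, a $g$-function, a local Airy parametrix, a small-norm RH problem, and integration of a Jimbo--Miwa--Ueno-type differential identity. The identification of $\sigma^{3/2}$ as the effective large parameter and the heuristic that $C$ is an ``energy'' and $\tfrac14\int(D')^2$ a ``dispersive correction'' are both sound. However, two substantive features of your plan do not match what actually makes the argument go through, and they are not cosmetic.

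First, the location of the soft edge. You posit an equilibrium problem on a half-line ``with square-root edge at $x=t/\sigma$.'' In the paper the square-root branch point of the $g$-function is not at $\omega=t/\sigma$ but at a \emph{moving} endpoint $z_0=z_0(t,\sigma)$ which lies on the negative axis ($z_0<0<\omega$), approaches $0$ as $\sigma\to\infty$, and is determined by a transcendental endpoint equation, namely
\[
z_0=-\frac{1}{(\pi\sigma)^{3/2}}\int_{-\infty}^{z_0}\frac{\e^{-(\lambda-\omega)^2}}{1-\Phi(\lambda-\omega)}\,\frac{\d\lambda}{\sqrt{z_0-\lambda}}.
\]
The $g$-function is then an explicit Cauchy-type integral (see \eqref{e62}--\eqref{e63}) supported on $(-\infty,z_0]$, and $\omega$ enters only through the weight $\Phi(\cdot-\omega)$ inside that integral, not as a branch point. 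Without this endpoint condition the conjugated jump on $\Gamma_1$ and $\Gamma_2\cup\Gamma_4$ would not be exponentially small near $z_0$, so your variational-inequality check ``uniformly in $x\in\mathbb{R}$'' would fail --- the decay must be measured relative to $z_0$, not to $\omega$. You correctly flag the $g$-function construction as the bottleneck, but the ansatz you propose is set up at the wrong point.

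Second, the integration scheme. You propose integrating the $t$-differential identity from $t=+\infty$ down to $t=\sigma x$ and fixing the constant by matching Theorem~\ref{fimain2}. The paper instead uses \emph{three} localized differential identities (in $t$, in $\sigma$, and along the diagonal $t=\sigma=\alpha$; see \eqref{e52}--\eqref{e52aa}) and splits the integration: it anchors at the diagonal $F_\sigma(\sigma)$, then integrates in $t$ over $[t,\sigma]$ when $t\le\sigma$, or in $\sigma$ over $[\sigma,t]$ when $\sigma\le t$. The reason is that the asymptotic expansion of $R_1^{12}$ (and of $g_1$, $g_2$) as Laurent coefficients of ${\bf R}(z)$ has genuinely different structure in the two regimes $t\lessgtr\sigma$; integrating naively from $t=+\infty$ with $\sigma$ fixed passes through both regimes and requires uniformity estimates (on integrals of the error terms $\int_t^\infty\|{\bf L}_{mn}\|\d s$, etc.) that the paper establishes regime by regime. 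The constant of integration is indeed $\eta=0$ and the match to Theorem~\ref{fimain2} is used, but only \emph{after} the two-sided result \eqref{e100} is assembled, as a final normalization in the super-exponential tail --- not as a substitute for the split integration.

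In short: the skeleton is right, but the $g$-function must be built around a moving endpoint $z_0(t,\sigma)$ solving a nonlinear scalar equation (not a fixed soft edge at $\omega$), and the integration of the differential identity must be done in three pieces matched at the diagonal; you would need both of these to turn the outline into a proof.
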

Given that $t,\sigma$ are uncoupled in Theorem \ref{fimain3}, expansion \eqref{fi14} and \eqref{fi19} together capture the $t\rightarrow+\infty$ asymptotics of $F_{\sigma}(t)$ uniformly in $\sigma\geq 0$, i.e. uniformly in the degree of non-Hermiticity.
\begin{rem}\label{KPZrem} The asymptotic analysis of families of finite-temperature Airy-kernel Fredholm determinants has recently attracted interest in mathematics as evidenced by  \cite{CC,CCR,ChCR}, foremost due to its connection to the narrow wedge KPZ solution. In a nutshell, the asymptotic problem consists in studying the Fredholm determinant on $L^2(t,\infty)$ of the integral operator with kernel
\begin{equation}\label{fi19a}
	F_{\alpha}(x,y)=\int_{-\infty}^{\infty}\textnormal{Ai}(x+z)\textnormal{Ai}(z+y)w(z,\sigma)\,\d z,
\end{equation}
 as $t\rightarrow\pm\infty$, noting that the weight $w$ depends on an auxiliary parameter $\sigma>0$, which has a significant impact on the asymptotics. The abovementioned three papers have achieved the same as $t\rightarrow-\infty$ for some weight families, see our discussion right before Theorem \ref{fimain4} below, and some values of $\sigma$. Our Theorems \ref{fimain2} and \ref{fimain3} mark the first rigorous study on the $t\rightarrow+\infty$ asymptotics of finite-temperature Airy-kernel Fredholm determinants and we achieve it for all allowed values of the auxiliary parameter, albeit only for one weight.\footnote{In theoretical physics, right tail asymptotics for finite-temperature Airy kernel determinants with Fermi weights have been computed in \cite{DMRS,D,KD}. We thank Tom Claeys for drawing our attention to the same literature.}
\end{rem}

 By the asymptotic properties of the complementary error function, see \cite[$7.12.1$]{NIST}, and by Laplace's method \cite[Chapter $3$, Theorem $8.1$]{Olv},
 \begin{equation*}
 	C(\omega)=\frac{1}{\pi}\int_0^{\infty}\sqrt{y}\,\ln\Phi(\omega+y)\,\d y=-\frac{\omega^{-\frac{5}{2}}}{8\pi\sqrt{2}}\,\e^{-\omega^2}\Big(1+\mathcal{O}\big(\omega^{-2}\big)\Big),\ \ \ \ \omega\rightarrow+\infty,
\end{equation*}
followed by
\begin{equation*}
	\frac{\d}{\d\omega}D(\omega)=\frac{\omega^{-\frac{1}{2}}}{\pi\sqrt{2}}\,\e^{-\omega^2}\Big(1+\mathcal{O}\big(\omega^{-2}\big)\Big),\ \ \ \ \ \omega\rightarrow+\infty,
\end{equation*}
so that back in \eqref{fi19},
\begin{equation}\label{fi20}
	F_{\sigma}(t)=\exp\left[-\frac{\sigma^{\frac{3}{2}}\omega^{-\frac{5}{2}}}{8\pi\sqrt{2}}\,\e^{-\omega^2}\Big(1+\mathcal{O}\big(\omega^{-2}\big)\Big)\right]\big(1+r(t,\sigma)\big)\ \ \ \ \forall\,\sigma\geq \sigma_0\ \ \ \textnormal{with}\ \ \omega:=\frac{t}{\sigma}\rightarrow+\infty.
\end{equation}
Note that \eqref{fi20} holds true in particular as $t\rightarrow+\infty$ and $t^{\frac{1}{4}+\delta}\leq\sigma\leq t^{\epsilon}$ with $\epsilon\in(\frac{1}{4}+\delta,1)$ and $\delta\in(0,\frac{3}{4})$ and so from \eqref{fi19} via \eqref{fi20}, in the same regime,
\begin{equation}\label{fi21}
	1-F_{\sigma}(t)\stackrel{\eqref{fi19}}{=}\frac{\sigma^4t^{-\frac{5}{2}}}{8\pi\sqrt{2}}\exp\left[-\Big(\frac{t}{\sigma}\Big)^2\right]\big(1+o(1)\big).
\end{equation}
Observe that \eqref{fi21} matches onto the expansion of $1-F_{\sigma}(t)$ coming from \eqref{fi14} when using the aforementioned equivalent representations for $A,B$. Namely, when $t\rightarrow+\infty$ and $t^{\frac{1}{4}+\delta}\leq\sigma\leq t^{\epsilon}$ with $\epsilon\in(\frac{1}{4}+\delta,1)$ and $\delta\in(0,\frac{3}{4})$, then
\begin{equation*}
	B(t,\sigma)=\Big(\frac{t}{\sigma}\Big)^2-\frac{2t^3}{3\sigma^6}+\mathcal{O}\big(t^4\sigma^{-10}\big),\ \ \ \ \ \ \ \ A(t,\sigma)=\frac{\sigma^4t^{-\frac{5}{2}}}{8\pi\sqrt{2}}\big(1+o(1)\big),
\end{equation*}
and thus, in the same regime with $\delta\in(\frac{1}{4},\frac{3}{4})$,
\begin{equation}\label{fi22}
	1-F_{\sigma}(t)\stackrel{\eqref{fi14}}{=}\frac{\sigma^4t^{-\frac{5}{2}}}{8\pi\sqrt{2}}\exp\left[-\Big(\frac{t}{\sigma}\Big)^2\right]\big(1+o(1)\big),
\end{equation}
which matches onto \eqref{fi21}. In short, Theorem \ref{fimain2} and \ref{fimain3} are consistent as evidenced by \eqref{fi21} and \eqref{fi22}. Furthermore, \eqref{fi19} matches onto the Gumbel right tail behavior \eqref{fi18}, for with \eqref{fi20}, when $t,\sigma\rightarrow+\infty$ using $a_{\sigma},c_{\sigma}$ as in Corollary \ref{corfi1},
\begin{equation*}
	F_{\sigma}(c_{\sigma}+a_{\sigma}t)=\exp\left[-\e^{-t}\Big(1+\mathcal{O}\Big(\frac{t^2}{\ln\sigma}\Big)+\mathcal{O}\Big(\frac{t\ln\ln\sigma}{\ln\sigma}\Big)+o(1)\Big)\right]\big(1+o(1)\big),
\end{equation*}
and so we find indeed the Gumbel right tail behavior \eqref{fi18} as soon $\frac{t^2}{\ln\sigma}\rightarrow 0$ and $\frac{t\ln\ln\sigma}{\ln\sigma}\rightarrow 0$, which are admissible constraints for $t,\sigma\rightarrow+\infty$ in Theorem \ref{fimain3}. In summary, Theorem \ref{fimain2} and \ref{fimain3} together capture the full crossover between \eqref{fi17} and \eqref{fi18}. This completes our discussion of the same two results.\bigskip

The $t\rightarrow-\infty$ asymptotics of $F_{\sigma}(t)$ concern the left tail behavior of the random variable $\max_{j=1,\ldots,n}\Re\lambda_j(\X)$ for $\X\in\textnormal{eGinUE}$. Once more, the same asymptotics are non-trivial and very sensitive to $\sigma>0$ for, by Corollary \ref{corfi1} and \ref{corfi2}, we now interpolate between the left tail behavior of the Tracy-Widom distribution
\begin{equation}\label{fi23}
	\exp\left[-\int_t^{\infty}(s-t)\big(q(s)\big)^2\d s\right]=\exp\left[\frac{t^3}{12}-\frac{1}{8}\ln|t|+\frac{1}{24}\ln 2+\zeta'(-1)\right]\big(1+o(1)\big),\ \ \ \ t\rightarrow-\infty,
\end{equation}
see for instance \cite{TW,BBD,DIK} with the Riemann zeta function $w=\zeta(z)$, and the left tail of the Gumbel distribution
\begin{equation}\label{fi24}
	\e^{-\e^{-t}}=\e^{-\e^{-t}}\big(1+o(1)\big),\ \ \ t\rightarrow-\infty.
\end{equation}
In itself, the same crossover constitutes one particular problem in the asymptotic analysis of finite-temperature Airy-kernel Fredholm determinants, see Remark \ref{KPZrem}. More concretely, the $t\rightarrow-\infty$ studies in \cite{CC,CCR,ChCR} of integral operators on $L^2(t,\infty)$ with kernel \eqref{fi19a} have focused on weights of the form $w(z,\sigma)=v(\frac{z}{\sigma})$ where\smallskip 
\begin{enumerate}
	\item[(i)] $v:\mathbb{R}\rightarrow[0,1]$ in \cite{CCR} is assumed to be smooth almost everywhere on $\mathbb{R}$, with at most finitely many jump discontinuities and $v$ approaches zero, resp. one, at $-\infty$, resp. $+\infty$ sufficiently fast.
	\item[(ii)] $v:\mathbb{R}\rightarrow[0,1)$ in \cite{CC,ChCR} is assumed to be such that $1/(1-v)$ extends to an entire function, is non-decreasing and log-convex on $\mathbb{R}$, and $1/(1-v)$ approaches one, resp. $+\infty$, at $-\infty$, resp. $+\infty$ exponentially fast.\smallskip
\end{enumerate}
Class (i) is the least understood one when it comes to the $t\rightarrow-\infty$ asymptotics of the underlying Fredholm determinant, still we note that at the time of this writing a complete $t\rightarrow-\infty$ analysis, uniformly in $\sigma>0$, is outstanding for both classes. The reader will notice that \eqref{fi12} fits into class (i) since 
\begin{equation*}
	\frac{1}{1-v(x)}=\frac{2}{\textnormal{erfc}(x)},\ \ \ x\in\mathbb{R},
\end{equation*}
does \textit{not} extend to an entire function because of the complex zeros of the complementary error function, cf. \cite[$7.13$]{NIST}. Also, $\Phi(x)$ approaches zero at $-\infty$ and tends to unity at $+\infty$, in both cases super-exponentially fast, which creates further obstructions for the workings in \cite{CC,ChCR}. In turn, we do not attempt to describe the full crossover between \eqref{fi23} and \eqref{fi24} in this paper, we only deal with the $t\rightarrow-\infty$ asymptotics of $F_{\sigma}(t)$ for values of $\sigma$ very close to $\sigma=0$, i.e. close to Hermiticity, where the absence of analytic extensions of $v$ does not play a significant role in our analysis and where we can borrow techniques from \cite{CCR}. The more challenging parts, where $\sigma$ is bounded away from zero, are left for a future publication.
\begin{cor}[{\cite[Theorem $1.14$(iii)]{CCR}}]\label{fimain4} Let $c>0$. There exist $t_0,d>0$ so that
\begin{equation}\label{fi25}
	F_{\sigma}(t)=\exp\left[\frac{t^3}{12}-\frac{1}{8}\ln|t|-\int_0^{-t\sqrt{\sigma}}(t\sqrt{\sigma}+x)\left(u(x)-\frac{1}{8x^2}\right)\,\d x+\frac{1}{24}\ln 2+\zeta'(-1)\right]\big(1+r(t,\sigma)\big)
\end{equation}
for all $-t\geq t_0$ and $0<\sigma\leq ct^{-2}$. Here, $\zeta=\zeta(z)$ is the Riemann zeta function \cite[$25.2.1$]{NIST}, the function $u=u(x),x>0$ is uniquely determined through the solution of RHP \ref{genBesselRHP}, see \eqref{app4a},\eqref{app4bb},\eqref{rej1}, it has the boundary behavior
\begin{equation}\label{fi26}
	u(x)=\frac{1}{8x^2}+\frac{1}{2}\int_{-\infty}^{\infty}\big(\chi_{[0,\infty)}(y)-\Phi(y)\big)\,\d y+\mathcal{O}\big(x^2\big),\ \ x\downarrow 0,
\end{equation}
and the error term $r(t,\sigma)$ in \eqref{fi25}, which is differentiable with respect to $(t,\sigma)$, satisfies the estimate
\begin{equation*}
	\big|r(t,\sigma)\big|\leq\frac{d}{|t|}.
\end{equation*}
\end{cor}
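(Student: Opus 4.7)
The plan is to leverage the correspondence, highlighted after Theorem \ref{fimain1}, that identifies $F_\sigma(t)$ with $\det(I - N_\sigma)|_{L^2(t,\infty)}$ for the finite-temperature Airy operator $N_\sigma$ with kernel \eqref{fi12}. Once in this one-dimensional setting, the weight $v = \Phi$ from \eqref{fi12} is readily seen to fit into class (i) recalled just before the statement: $\Phi$ is entire, takes values in $[0,1]$, is strictly increasing, and approaches $0$ at $-\infty$ and $1$ at $+\infty$ super-exponentially fast. These are exactly the hypotheses under which Theorem $1.14$(iii) of \cite{CCR} yields an asymptotic formula of the shape \eqref{fi25}, so much of the work is to verify that their framework applies and to identify their abstract constants with the concrete ones here.

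First I would write $N_\sigma$ as an integrable operator in the sense of \cite{IIKS} and encode $\det(I - N_\sigma)$ through a $2\times 2$ RHP on $\mathbb{R}$ whose jump carries the weight $\Phi(\cdot/\sigma)$. The regime $0 < \sigma \leq c t^{-2}$ with $t\to -\infty$ means $|t|\sqrt{\sigma}$ is bounded, which is precisely the regime in which a Deift--Zhou nonlinear steepest descent with a Bessel-type local parametrix at the soft edge is valid. After rescaling, the local model reduces to the generalized Bessel RHP \ref{genBesselRHP} evaluated at parameter $x = -t\sqrt{\sigma}$, whose solution defines the function $u(x)$ appearing in \eqref{fi25} via the expansion at infinity of the RHP solution.

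Second I would derive a differential identity for $\partial_t \ln F_\sigma(t)$, extract its leading behavior from the Bessel parametrix contribution plus the outer/global parametrix contribution (which reproduces the $\frac{t^3}{12} - \frac{1}{8}\ln|t|$ piece, exactly as for the Tracy--Widom left tail), and then integrate in $t$. The additive constants $\frac{1}{24}\ln 2 + \zeta'(-1)$ are then fixed by matching with the $\sigma \downarrow 0$ Hermitian limit \eqref{fi23}, which is consistent because the integrand $u(x) - \frac{1}{8x^2}$ in \eqref{fi25} is integrable at the origin by \eqref{fi26} and the integration interval $[0,-t\sqrt{\sigma}]$ collapses as $\sigma\downarrow 0$.

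The main obstacle, and the reason this is a genuine corollary rather than a direct quotation of \cite{CCR}, is the derivation of the small-$x$ boundary behavior \eqref{fi26} of $u$. The singular leading term $\frac{1}{8x^2}$ is universal: it originates from the classical Bessel RHP with parameter zero which arises when the weight in the jump trivializes. The subleading constant is weight-sensitive; to extract it one expands the solution of RHP \ref{genBesselRHP} around $x = 0$ and, using the factorization of the jump coming from $\Phi$, identifies the first correction with the weighted moment $\frac{1}{2}\int_{-\infty}^{\infty}(\chi_{[0,\infty)}(y) - \Phi(y))\,\d y$. This integral is absolutely convergent thanks to the super-exponential decay of $\Phi$ at $-\infty$ and of $1-\Phi$ at $+\infty$, and the $\mathcal{O}(x^2)$ remainder follows from analyticity of the jump in $x$ combined with standard small-norm RHP perturbation theory. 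Piecing these ingredients together yields \eqref{fi25} with the asserted error bound $|r(t,\sigma)| \leq d/|t|$.
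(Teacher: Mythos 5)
Your proposal follows essentially the same path as the paper's proof: dimensional reduction to the finite-temperature Airy determinant, the IIKS integrable-operator RHP, nonlinear steepest descent with the $g$-function $\tfrac23 z^{1/2}(z+\tfrac{3t}{2})$ and the generalized Bessel parametrix at the origin, differential identities for $\ln F_\sigma(t)$, and determination of constants via the $\sigma\downarrow 0$ Tracy--Widom limit and the total integral of \cite{BBD,DIK}. The identification of the small-$x$ behaviour of $u$ with the weighted moment $\frac12\int(\chi_{[0,\infty)}-\Phi)$ via a perturbative comparison of the generalized Bessel solution with the classical Bessel parametrix also matches Lemma~\ref{genBessellem}(ii). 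One small point worth flagging: you only invoke $\partial_t\ln F_\sigma(t)$; the paper additionally uses the $\partial_\sigma$-identity \eqref{e121}, and must separately treat the sub-regimes $-t\sqrt{\sigma}\downarrow 0$ (giving \eqref{e122}) and $-t\sqrt{\sigma}$ bounded away from zero (giving \eqref{e124}), precisely so that the integration constant can be shown to be independent of $\sigma$ \emph{before} taking $\sigma\downarrow 0$; integrating the $t$-derivative alone does not immediately justify that the constant is $\sigma$-free.
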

Expansion \eqref{fi25} is the analogue \cite[$(1.35)$]{CCR} and we choose to derive it here from first principles in order to keep our paper self-contained. Still, all our steps in the derivation of \eqref{fi25} can be traced back to \cite{CCR}. Note that \eqref{fi25} with \eqref{fi26} reproduces \eqref{fi23} for $\sigma=0$ as expected from \eqref{fi13}, however \eqref{fi25} is far off from capturing the full crossover between \eqref{fi23} and the Gumbel left tail \eqref{fi24}.
\begin{rem}\label{rejrem} The special function $u=u(x)$ in \eqref{fi25} relates to a certain integro-differential generalization of the Painlev\'e-V equation, see the workings in \cite[Section $7$]{CCR} that lead to \cite[Theorem $1.12$]{CCR}.
\end{rem}
\subsection{Methodology and outline of paper}
The first obstruction in deriving an integrable system for the limiting distribution function $F_{\sigma}(t)$ originates from the higher-dimensional integration domain that underlies the operator $K_{\textnormal{Ai}}^{\sigma}$ with kernel \eqref{fi9}. Secondly, the kernel \eqref{fi9} seemingly does not display any of the usual integrable structures common in invariant random matrix models where Painlev\'e connections have been discovered in the past. Thus the techniques of \cite{JMMS,IIKS,TW2,BoDei} are not readily available for the analysis of $F_{\sigma}(t)$. To get around this issue we propose and execute two strategies: first, in Section \ref{sec2}, we set out to simplify the operator traces of $K_{\textnormal{Ai}}^{\sigma}$ via integral identities and complex analytic techniques. These steps change the operator, but leave its Fredholm determinant invariant. What results is the much more convenient representation \eqref{e10} for $F_{\sigma}(t)$ that involves a Hankel composition operator with kernel written in \eqref{e9a}. Consequently, with the Hankel composition structure flushed out, we can then draw inspiration from the recent works \cite{Bo, Kra} on Fredholm determinants of Hankel composition integral operators acting on $L^2(J\subset\mathbb{R})$. The methods of \cite{Bo,Kra} need to be adjusted because of the higher-dimensional integration domain $J\subset\mathbb{R}^2$ that we face in the eGinUE and we carry out all necessary steps in the remainder of Section \ref{sec2}. In particular, we first derive the Tracy-Widom type representation formula \eqref{e16} for $F_{\sigma}(t)$ and afterwards identify the underlying integro-differential dynamical system using a coupled integro-differential system in \eqref{e20}, conserved quantities thereof and lastly a closure relation. The outcome of this explicit and constructive, albeit somewhat lengthy, approach is summarized in Proposition \ref{theo1}. We are then left to simplify the same result and Theorem \ref{fimain1} follows. Afterwards, in Section \ref{sec3}, we offer an alternative and much shorter proof for Theorem \ref{fimain1}. However, the same proof is non-constructive as it relies on a curious and somewhat hard-to-guess operator factorization, compare the proof of Proposition \ref{Sylvprop}, and since it rests on the workings in \cite{ACQ}. Still, since our second proof does not use Hankel composition structures, we expect the same method to be applicable to other Fredholm determinants in the eGinUE where Hankel structures are absent, cf. \cite{AP,AP2}. Moving ahead, in Section \ref{sec4} we state our proofs of Corollary \ref{corfi1} and \ref{corfi2}, using the aforementioned correspondence between $F_{\sigma}(t)$ and the operator $N_{\sigma}$ on $L^2(t,\infty)$ with kernel \eqref{fi12}, a correspondence which we make precise in Corollary \ref{theo2}. In particular, our proof of Corollary \ref{corfi1} relies on a novel single contour integral formula \eqref{f2} that simplifies the proof of \eqref{fi12a} considerably when compared to the workings in \cite{Ben}. Next, we address Theorem \ref{fimain2} and \ref{fimain3} in Section \ref{sec5}. Both results are proven within the framework of the Deift-Zhou nonlinear steepest descent method \cite{DZ} when applied to the master Riemann-Hilbert problem \ref{master}. While the nonlinear steepest descent proof of Theorem \ref{fimain2} is straightforward, the workings leading to Theorem \ref{fimain3} are much more technical and comparable to the degree of difficulty encountered in \cite{CC,CCR,ChCR}. In particular, the proof of Theorem \ref{fimain3} relies on localized differential identities, see \eqref{e52} and \eqref{e52aa}, on contour deformations, a $g$-function transformation in \eqref{e62} and on local Riemann-Hilbert model and error analysis. Although they are technical in nature, we expect our proof methods for Theorem \ref{fimain2} and \ref{fimain3} to be applicable to a larger class of weights beyond $w=\Phi(z)$ in \eqref{fi12} and this can be of interest when dealing, for instance, with the exact short-time height distribution of the one-dimensional KPZ equation, cf. \cite{DMRS,D,KD}. Once the right tail analysis is completed we then briefly address the left tail $t\rightarrow-\infty$ of $F_{\sigma}(t)$ for $\sigma$ close to zero. In this case our workings are heavily inspired by the analysis in \cite{CCR} and we provide a proof of Corollary \ref{fimain4} in Section \ref{sec6} only to keep our paper self-contained. The paper closes out with Appendix \ref{appA} and \ref{appB} which summarize several auxiliary results used in the main body of the text.

\section{First proof of Theorem \ref{fimain1}}\label{sec2}
In this section we state our first proof of Theorem \ref{fimain1}, a proof which is constructive and which relies on trace identities and a refinement of the Hankel composition operator techniques from \cite{Kra,Bo}. These refinements are necessary because of the higher-dimensional integration domain that underpins $K_{\textnormal{Ai}}^{\sigma}$. We begin with two preliminary results, asserting that the distribution function $\mathbb{R}\ni t\mapsto F_{\sigma}(t)$ is well-defined for all $\sigma\geq 0$.
\begin{lem}\label{lem0} Set $J_t:=(t,\infty)\times\mathbb{R}\subset\mathbb{R}^2$ with $t\in\mathbb{R}$ and let $K_{\textnormal{Ai}}^{\sigma}:L^2(J_t)\rightarrow L^2(J_t)$ denote the integral operator with kernel \eqref{fi9}. Then, for any $z_k=(x_k,y_k)\in\mathbb{R}^2$ and any $\sigma\geq 0$,
\begin{equation}\label{k1}
	K_{\textnormal{Ai}}^{\sigma}(z_1,z_2)=\frac{\im}{4\pi^{\frac{5}{2}}}\int_{\gamma_1}\int_{\gamma_2}\frac{\e^{\im(\frac{1}{3}\lambda^3+x_1\lambda)}\e^{\im(\frac{1}{3}\mu^3+x_2\mu)}}{\lambda+\mu}\e^{-\frac{1}{2}(\sigma\lambda+y_1)^2}\e^{-\frac{1}{2}(\sigma\mu-y_2)^2}\,\d\lambda\,\d\mu,
\end{equation}
where $\gamma_k=\mathbb{R}+\im\delta_k$ with $\delta_k>0$ for $k\in\{1,2\}$.
\end{lem}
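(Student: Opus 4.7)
\textbf{Proof proposal for Lemma \ref{lem0}.} The plan is to substitute the contour integral representation of the Airy function into the explicit formula \eqref{fi9}, evaluate the resulting $s$-integral, and then shift contours to absorb the spurious imaginary shift of the denominator into the Gaussian factors via completing the square.

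First, recall the standard representation $\textnormal{Ai}(x)=\frac{1}{2\pi}\int_{\mathbb{R}+\im\delta}\e^{\im(\lambda^3/3+x\lambda)}\d\lambda$, valid for any $\delta>0$ (the factor $\e^{-u^2\delta}$ from $\lambda=u+\im\delta$ ensures absolute convergence and entirety of the integrand permits the shift). Applying this to each of the two Airy functions in \eqref{fi9} with arguments $a_1=x_1+\im\sigma y_1+\sigma^4/4+s$ and $a_2=x_2-\im\sigma y_2+\sigma^4/4+s$ along contours $\gamma_k'=\mathbb{R}+\im\delta_k'$, the $s$-dependence becomes $\e^{s[\sigma^2+\im(\lambda+\mu)]}$. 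Choosing $\delta_1'+\delta_2'>\sigma^2$ makes the $s$-integral over $[0,\infty)$ absolutely convergent (and Fubini's theorem applicable) and yields
\begin{equation*}
	\int_0^\infty \e^{s\sigma^2}\textnormal{Ai}(a_1+s)\textnormal{Ai}(a_2+s)\,\d s=\frac{\im}{4\pi^2}\int_{\gamma_1'}\int_{\gamma_2'}\frac{\e^{\im(\lambda^3/3+a_1\lambda)}\e^{\im(\mu^3/3+a_2\mu)}}{\lambda+\mu-\im\sigma^2}\,\d\lambda\,\d\mu,
\end{equation*}
using $\sigma^2+\im(\lambda+\mu)=\im(\lambda+\mu-\im\sigma^2)$.

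Next, shift both contours downward by $\im\sigma^2/2$, i.e. substitute $\lambda\mapsto\lambda+\im\sigma^2/2$ and $\mu\mapsto\mu+\im\sigma^2/2$. This deformation is legitimate because the only singularity of the integrand, at $\lambda+\mu=\im\sigma^2$, is not crossed (the new contours lie at heights $\delta_k'-\sigma^2/2$, whose sum still exceeds $0$, while the pole is at height $\sigma^2$), and the integrand decays like a Gaussian on every horizontal line. After the shift the denominator becomes $\lambda+\mu$, and expanding $(\lambda+\im\sigma^2/2)^3$ together with the linear terms from $\im a_k(\lambda+\im\sigma^2/2)$ produces the combinations $-\frac{\sigma^2}{2}\lambda^2-\sigma y_1\lambda=-\frac{1}{2}(\sigma\lambda+y_1)^2+\frac{y_1^2}{2}$ and symmetrically $-\frac{\sigma^2}{2}\mu^2+\sigma y_2\mu=-\frac{1}{2}(\sigma\mu-y_2)^2+\frac{y_2^2}{2}$ after completing the square. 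A direct bookkeeping shows that the residual constants from the shift, namely $\frac{y_1^2+y_2^2}{2}-\frac{\sigma^6}{6}-\frac{\sigma^2(x_1+x_2)}{2}-\frac{\im\sigma^3(y_1-y_2)}{2}$, cancel exactly against the prefactor exponent $E$ in \eqref{fi9}.

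Combining these pieces delivers the claimed formula \eqref{k1} with the $\frac{1}{\sqrt\pi}\cdot\frac{\im}{4\pi^2}=\frac{\im}{4\pi^{5/2}}$ normalization, and a final contour deformation (again permitted by Gaussian decay and entirety off the line $\lambda+\mu=0$) lets $\gamma_k$ be any $\mathbb{R}+\im\delta_k$ with $\delta_k>0$, since $\delta_1+\delta_2>0$ keeps the pole outside the strip of deformation. The only genuinely delicate step is the justification of the Fubini/contour manipulations at the intermediate stage when $\delta_1'+\delta_2'>\sigma^2$ is imposed and the subsequent deformation reduces this to $\delta_k>0$; everything else reduces to completing the square and a careful cancellation check.
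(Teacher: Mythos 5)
Your argument is correct and uses the same ingredients as the paper's proof: the contour representation \eqref{k3} of the Airy function, the integral identity linking $(\lambda+\mu)^{-1}$ to the $s$-integral, the shift by $\frac{\im}{2}\sigma^2$, and the completion of squares cancelling against the prefactor exponent of \eqref{fi9}. The only difference is direction — the paper starts from the right-hand side of \eqref{k1} and derives \eqref{fi9}, whereas you run the identical computation in reverse, which necessitates your extra final contour deformation from the constrained heights $\delta_k'-\sigma^2/2$ back to arbitrary $\delta_k>0$.
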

\begin{proof} Identity \eqref{k1} is part of \cite[Section VI]{AB}: one begins with the right hand side of \eqref{k1} and uses algebra combined with the integral identity $\frac{\im}{\lambda+\mu}=\int_0^{\infty}\e^{\im(\lambda+\mu)s}\,\d s$, valid for $(\lambda,\mu)\in\gamma_1\times\gamma_2$. What results is
\begin{align*}
	\textnormal{RHS}\ &\eqref{k1}=\frac{1}{4\pi^{\frac{5}{2}}}\e^{-\frac{1}{2}(y_1^2+y_2^2)+\frac{1}{2}\sigma^2(w_1+w_2)-\frac{1}{12}\sigma^6}\\
	&\times\int_0^{\infty}\int_{\gamma_1}\int_{\gamma_2}\e^{\frac{\im}{3}(\lambda+\frac{\im}{2}\sigma^2)^3+\im w_1(\lambda+\frac{\im}{2}\sigma^2)+\frac{\im}{4}\lambda\sigma^4+\im s\lambda}\,\e^{\frac{\im}{3}(\mu+\frac{\im}{2}\sigma^2)^3+\im w_2(\mu+\frac{\im}{2}\sigma^2)+\frac{\im}{4}\mu\sigma^4+\im s\mu}\,\d\lambda\,\d\mu\,\d s,
\end{align*}
with the shorthand $w_1:=x_1+\im\sigma y_1$ and $w_2:=x_2-\im\sigma y_2$. Then, with $\lambda+\frac{\im}{2}\sigma^2\mapsto\lambda$ and $\mu+\frac{\im}{2}\sigma^2\mapsto\mu$,
\begin{equation}\label{k2}
	\textnormal{RHS}\ \eqref{k1}=\frac{1}{\sqrt{\pi}}\,\e^{-\frac{1}{2}(y_1^2+y_2^2)+\frac{1}{2}\sigma^2(w_1+w_2)+\frac{1}{6}\sigma^6}
	\int_0^{\infty}\e^{s\sigma^2}\textnormal{Ai}\left(w_1+\frac{\sigma^4}{4}+s\right)\textnormal{Ai}\left(w_2+\frac{\sigma^4}{4}+s\right)\d s,
\end{equation}
having used a well-known contour integral formula for the Airy function along the way, namely \cite[$9.5.4$]{NIST},
\begin{equation}\label{k3}
	\textnormal{Ai}(z)=\frac{1}{2\pi}\int_{\mathbb{R}+\im\delta}\e^{\im(\frac{1}{3}\lambda^3+z\lambda)}\,\d\lambda,\ \ \ \ \ \delta>0,\ \ \ z\in\mathbb{C}.
\end{equation}
The proof of \eqref{k1} is now complete as the right hand side of \eqref{k2} matches \eqref{fi9}.
\end{proof}
The kernel representation \eqref{k1} is useful when answering questions about the operator $K_{\textnormal{Ai}}^{\sigma}$, such as the following one.
\begin{lem}\label{lem1} Let $K_{\textnormal{Ai}}^{\sigma}:L^2(J_t)\rightarrow L^2(J_t)$ denote the integral operator with kernel \eqref{k1}. Then $K_{\textnormal{Ai}}^{\sigma}$ is trace class for any $(t,\sigma)\in\mathbb{R}\times[0,\infty)$.
\end{lem}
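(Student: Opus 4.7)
The plan is to exhibit an explicit Hilbert--Schmidt factorization $K_{\textnormal{Ai}}^{\sigma}=FF^{\ast}$, where $F:L^2(0,\infty)\rightarrow L^2(J_t)$ is the integral operator with kernel read directly off of \eqref{fi9},
\begin{equation*}
	f_{\sigma}(z_1,s):=\pi^{-\frac{1}{4}}\exp\!\left[-\tfrac{1}{2}y_1^2+\tfrac{1}{2}\sigma^2(x_1+\im\sigma y_1)+\tfrac{1}{12}\sigma^6+\tfrac{s\sigma^2}{2}\right]\textnormal{Ai}\!\left(x_1+\im\sigma y_1+\tfrac{\sigma^4}{4}+s\right).
\end{equation*}
Because $\textnormal{Ai}$ is entire and real on $\mathbb{R}$, Schwarz reflection yields $\overline{\textnormal{Ai}(z)}=\textnormal{Ai}(\bar z)$, and a short calculation shows that $\overline{f_{\sigma}(z_2,s)}$ is exactly the second factor appearing in the $s$-integrand of \eqref{fi9}. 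Thus $K_{\textnormal{Ai}}^{\sigma}$ is a positive operator written as $FF^{\ast}$; it is trace class iff $F$ is Hilbert--Schmidt, i.e., iff $\int_{J_t}\int_0^{\infty}|f_{\sigma}(z_1,s)|^2\,\d s\,\d z_1<\infty$.

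The central estimate comes from the Airy contour representation \eqref{k3}. Shifting the contour to $\mathbb{R}+\im\delta$ with $\delta>0$ and writing the argument of $\textnormal{Ai}$ in $f_{\sigma}$ as $X+\im Y$ with $X=x_1+\tfrac{\sigma^4}{4}+s$ and $Y=\sigma y_1$, the triangle inequality together with a Gaussian integration over the real part of $\lambda$ gives
\begin{equation*}
	|\textnormal{Ai}(X+\im Y)|^2\leq \frac{1}{4\pi\delta}\exp\!\left[\tfrac{2}{3}\delta^3-2\delta X+\tfrac{Y^2}{2\delta}\right].
\end{equation*}
Substituting this into $|f_{\sigma}(z_1,s)|^2$, the resulting upper bound reads $C_{\sigma,\delta}\exp[(-1+\tfrac{\sigma^2}{2\delta})y_1^2+(\sigma^2-2\delta)x_1+(\sigma^2-2\delta)s]$ times a constant independent of $z_1,s$. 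For any choice $\delta>\sigma^2/2$ (for instance $\delta=\sigma^2+1$), all three coefficients are strictly negative, so the triple integral over $(t,\infty)\times\mathbb{R}\times(0,\infty)$ converges. Consequently $F$ is Hilbert--Schmidt and $K_{\textnormal{Ai}}^{\sigma}$ is trace class for every $(t,\sigma)\in\mathbb{R}\times(0,\infty)$.

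The boundary case $\sigma=0$ is handled directly: \eqref{fi9} degenerates to
\begin{equation*}
	K_{\textnormal{Ai}}^{0}(z_1,z_2)=\tfrac{1}{\sqrt{\pi}}\e^{-\frac{1}{2}(y_1^2+y_2^2)}K_{\textnormal{Ai}}(x_1,x_2),
\end{equation*}
which represents the tensor product on $L^2(J_t)\cong L^2(t,\infty)\otimes L^2(\mathbb{R})$ of the Airy operator (trace class on $L^2(t,\infty)$ by the $\sigma=0$ specialization of the same factorization) and the rank-one projector onto the Gaussian $\tfrac{1}{\pi^{1/4}}\e^{-y^2/2}$. As a tensor product of two trace class operators, $K_{\textnormal{Ai}}^{0}$ is again trace class. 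The only delicate point in the plan is achieving simultaneous decay in $y_1,x_1,s$; the free contour-shift parameter $\delta$ resolves this cleanly under $\delta>\sigma^2/2$, and this is where I expect the analysis to be most sensitive if one were to track $\sigma$-uniform bounds (which are not required here but are useful in Section \ref{sec5}).
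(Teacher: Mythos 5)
Your proof is correct and takes essentially the same route as the paper's: both factor $K_{\textnormal{Ai}}^{\sigma}$ through a positive composition kernel and derive a Gaussian-decay bound by shifting the Airy contour to $\mathbb{R}+\im\delta$ and integrating the resulting Gaussian in $\lambda$, then conclude trace class (the paper writes the factor as the contour integral $G_{\sigma}(a,b)$ and cites \cite[Theorem 2.12]{Sim} for the continuous-positive-kernel-with-integrable-diagonal criterion, while you write the factor as $f_{\sigma}(z_1,s)$ directly in terms of $\textnormal{Ai}$ of complex argument and invoke the Hilbert--Schmidt criterion for $FF^{\ast}$ — but these two criteria compute the identical integral $\int_{J_t}K_{\textnormal{Ai}}^{\sigma}(z,z)\,\d^2z$). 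The only stylistic remark: your choice $\delta=\sigma^2+1>\sigma^2/2$ already handles $\sigma=0$, so the separate tensor-product argument for the boundary case is correct but redundant.
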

\begin{proof} Write again $\frac{\im}{\lambda+\mu}=\int_0^{\infty}\e^{\im(\lambda+\mu)s}\,\d s$ when $(\lambda,\mu)\in\gamma_1\times\gamma_2$ and view \eqref{k1} as composition kernel,
\begin{equation*}
	K_{\textnormal{Ai}}^{\sigma}(z_1,z_2)=\int_0^{\infty}G_{\sigma}(x_1+s,y_1)G_{\sigma}(s+x_2,-y_2)\,\d s,\ \ \ G_{\sigma}(a,b):=\frac{1}{2\pi^{\frac{5}{4}}}\int_{\gamma}\e^{\im(\frac{1}{3}\lambda^3+a\lambda)-\frac{1}{2}(\sigma\lambda+b)^2}\,\d\lambda,
\end{equation*}
with contour $\gamma=\mathbb{R}+\im\delta,\delta>0$. Since $\mathbb{R}^2\ni(a,b)\mapsto G_{\sigma}(a,b)$ is continuous and since it satisfies the global bound
\begin{equation}\label{e2}
	\big|G_{\sigma}(a,b)\big|\leq c_1\e^{-a\delta-c_2b^2},\ \ c_k=c_k(\delta,\sigma)>0,\ \ \ (a,b)\in\mathbb{R}^2,
\end{equation}
we deduce continuity of $\mathbb{R}^4\ni(z_1,z_2)\mapsto K_{\textnormal{Ai}}^{\sigma}(z_1,z_2)$. Moreover, for any $z_1,\ldots,z_n\in J_t$, any $\alpha\in\mathbb{C}^n$ and all $\sigma\geq 0$,
\begin{equation*}
	\sum_{j,k=1}^n\alpha_j\overline{\alpha_k}K_{\textnormal{Ai}}^{\sigma}(z_j,z_k)=\int_0^{\infty}\left|\sum_{j=1}^n\alpha_jG_{\sigma}(x_j+s,y_j)\right|^2\d s\geq 0,
\end{equation*}
having used the conjugation symmetry $\overline{G_{\sigma}(a,b)}=G_{\sigma}(a,-b),(a,b)\in\mathbb{R}^2$. But also, with $\d^2z\equiv\d x\,\d y$,
\begin{equation*}
	\int_{J_t}K_{\textnormal{Ai}}^{\sigma}(z,z)\,\d^2z=\int_t^{\infty}\int_{-\infty}^{\infty}\int_0^{\infty}\big|G_{\sigma}(x+s,y)\big|^2\,\d s\,\d y\,\d x\stackrel{\eqref{e2}}{<}\infty,\ \ \ \forall\,(t,\sigma)\in\mathbb{R}\times[0,\infty),
\end{equation*}
so the operator $K_{\textnormal{Ai}}^{\sigma}$ with kernel \eqref{k1} is trace class on $L^2(J_t)$ by \cite[Theorem $2.12$]{Sim}.
\end{proof}
By the last result, the Fredholm determinant of $K_{\textnormal{Ai}}^{\sigma}$ is well-defined for all $(t,\sigma)\in\mathbb{R}\times[0,\infty)$ and it equals
\begin{equation}\label{e3}
	F_{\sigma}(t):=\prod_{k=1}^{\infty}\big(1-\lambda_k(t,\sigma)\big),
	\ \ \ (t,\sigma)\in\mathbb{R}\times[0,\infty),
\end{equation}
with the non-zero eigenvalues $\lambda_k(t,\sigma)$ of $K_{\textnormal{Ai}}^{\sigma}:L^2(J_t)\rightarrow L^2(J_t)$, multiplicities taken into account. We now start working towards our first proof of Theorem \ref{fimain1} by assembling the trace identities \eqref{e5} summarized in the below result.
\begin{prop}[Trace identities]\label{prop1} Let $\sigma\geq 0$ and consider with $z_k=(x_k,y_k)\in\mathbb{R}^2$ the kernel
\begin{equation}\label{e3a}
	K_{\sigma}(z_1,z_2):=\frac{1}{\sqrt{\pi}}\e^{-\frac{1}{2}y_1^2}K_{\textnormal{Ai}}(x_1+\sigma y_1,x_2+\sigma y_2)\e^{-\frac{1}{2}y_2^2},\ \ \ K_{\textnormal{Ai}}(a,b)\stackrel{\eqref{fi4}}{=}\int_0^{\infty}\textnormal{Ai}(a+s)\textnormal{Ai}(s+b)\,\d s.
\end{equation}
Then the self-adjoint integral operator $K_{\sigma}:L^2(J_t)\rightarrow L^2(J_t)$, with $J_t=(t,\infty)\times\mathbb{R}\subset\mathbb{R}^2$, given by
\begin{equation}\label{e4}
	(K_{\sigma}f)(z):=\int_{J_t}K_{\sigma}(z,w)f(w)\,\d^2w,
\end{equation}
is trace class for any $(t,\sigma)\in\mathbb{R}\times[0,\infty)$ and we have for all $n\in\mathbb{Z}_{\geq 0}$ and $(t,\sigma)\in\mathbb{R}\times[0,\infty)$,
\begin{equation}\label{e5}
	\tr_{L^2(J_t)}K_{\sigma}^n=\tr_{L^2(J_t)}(K_{\textnormal{Ai}}^{\sigma})^n.
\end{equation}
Furthermore, $0\leq K_{\sigma}\leq 1$, $\|K_{\sigma}\|\leq 1$ in operator norm on $L^2(J_t)$ and $I-K_{\sigma}$ is invertible on $L^2(J_t)$ for all $(t,\sigma)\in\mathbb{R}\times[0,\infty)$.
\end{prop}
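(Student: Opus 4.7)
The plan is to write both $K_\sigma = AA^*$ and $K_{\textnormal{Ai}}^\sigma = \widetilde{A}\widetilde{A}^*$ as products with factors $A,\widetilde{A}:L^2((0,\infty))\to L^2(J_t)$, and then identify the ``adjoint-swapped'' operators $A^*A$ and $\widetilde{A}^*\widetilde{A}$ on $L^2((0,\infty))$ with the shift $N_\sigma(\cdot+t,\cdot+t)$ of the finite-temperature Airy operator from \eqref{fi12}. Since trace of powers is invariant under $MM^*\leftrightarrow M^*M$, the identity \eqref{e5} then reduces to the kernel identity $A^*A=\widetilde{A}^*\widetilde{A}$. Non-negativity and trace class of $K_\sigma$ come for free from the $AA^*$ representation, and the operator bounds $0\leq K_\sigma\leq I$ together with invertibility of $I-K_\sigma$ transfer by spectral equivalence to the corresponding statements for $N_\sigma$, which admit a clean probabilistic proof via a convex decomposition.

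Concretely, \eqref{e3a} with \eqref{fi4} gives $K_\sigma(z_1,z_2)=\int_0^\infty H_\sigma(z_1,s)H_\sigma(z_2,s)\,\d s$ with $H_\sigma(z,s):=\pi^{-1/4}\e^{-y^2/2}\textnormal{Ai}(x+\sigma y+s)$, while \eqref{fi9} factors as $K_{\textnormal{Ai}}^\sigma(z_1,z_2)=\int_0^\infty \widetilde H_\sigma(z_1,s)\overline{\widetilde H_\sigma(z_2,s)}\,\d s$ with
\[
\widetilde H_\sigma(z,s):=\pi^{-1/4}\exp\left[-\tfrac{y^2}{2}+\tfrac{\sigma^2}{2}(x+\im\sigma y)+\tfrac{\sigma^6}{12}+\tfrac{s\sigma^2}{2}\right]\textnormal{Ai}\left(x+\im\sigma y+\tfrac{\sigma^4}{4}+s\right),
\]
noting $\overline{\widetilde H_\sigma(x,y,s)}=\widetilde H_\sigma(x,-y,s)$. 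Trace class for $K_\sigma=AA^*$ reduces to $\|A\|_{HS}^2=\int_{J_t}K_\sigma(z,z)\,\d^2 z<\infty$, which follows from the super-exponential decay of $K_{\textnormal{Ai}}(a,a)$ as $a\to+\infty$, its square-root growth as $a\to-\infty$, and the Gaussian factor $\e^{-y^2}$. The substitution $u=x+\sigma y$ in the kernel of $A^*A$ followed by a one-line Gaussian $y$-integration yields $(A^*A)(s_1,s_2)=N_\sigma(s_1+t,s_2+t)$ directly from \eqref{fi12}. The analogous computation for $\widetilde{A}^*\widetilde{A}$ is more delicate: I would replace each complex-shifted Airy factor by the contour representation \eqref{k3}, execute first the Gaussian $y$-integral (giving $\e^{-\sigma^2(\lambda+\mu)^2/4}$) and then the $x$-integral on $(t,\infty)$ (convergent once the contour parameter satisfies $\delta>\sigma^2/2$, producing $\im\,\e^{\im t(\lambda+\mu)}/(\lambda+\mu)$), and so obtain a double contour integral; the same integral is then reproduced for $N_\sigma(s_1+t,s_2+t)$ by inserting \eqref{k3} into \eqref{fi12} and computing $\int_{\mathbb{R}}\Phi(w/\sigma)\e^{\im w(\lambda+\mu)}\,\d w$ via one integration by parts using $\Phi'(z)=\pi^{-1/2}\e^{-z^2}$. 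This confirms $A^*A=\widetilde{A}^*\widetilde{A}$ and hence \eqref{e5}.

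For the spectral bounds, I would rewrite $\Phi(w/\sigma)=\int_{-\infty}^w\pi^{-1/2}\sigma^{-1}\e^{-z^2/\sigma^2}\,\d z$ and swap integrations to recast $N_\sigma$ as the probability-weighted convex average
\[
N_\sigma=\int_{-\infty}^\infty \widetilde{M}_z\,\d\mu(z),\quad \widetilde{M}_z(a,b):=\int_z^\infty\textnormal{Ai}(a+w)\textnormal{Ai}(w+b)\,\d w,\quad \d\mu(z):=\tfrac{1}{\sigma\sqrt{\pi}}\e^{-z^2/\sigma^2}\,\d z.
\]
By Airy orthogonality, each $\widetilde{M}_z$ on $L^2(\mathbb{R})$ is an orthogonal projection (compression of the self-adjoint Airy involution by $\mathbf{1}_{(z,\infty)}$), so its restriction to $L^2((t,\infty))$ satisfies $0\leq \widetilde{M}_z\leq I$; averaging against $\d\mu$ gives $0\leq N_\sigma\leq I$, and hence $0\leq K_\sigma\leq I$ by the spectral equivalence established above. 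For invertibility of $I-K_\sigma$ it suffices to rule out $1$ in the spectrum of $N_\sigma$: any eigenvector $N_\sigma f=f$ forces $\widetilde{M}_z f=f$ for $\mu$-almost every $z$ (equality case in Jensen), which is incompatible with $\|\widetilde{M}_z f\|\to 0$ as $z\to+\infty$. I expect the main technical hurdle to lie in the double-contour computation above, namely justifying the contour shifts and the $x$- and $y$-integration swaps, and matching the two independently derived representations of $N_\sigma(s_1+t,s_2+t)$; once this identification is in hand, trace class, positivity, operator bounds, and invertibility all flow from elementary spectral and probabilistic considerations.
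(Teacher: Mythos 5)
Your proposal is correct, and it takes a genuinely different route from the paper's proof of Proposition~\ref{prop1}. The paper proves the trace identity \eqref{e5} by a direct combinatorial computation: it writes $\tr(K_{\textnormal{Ai}}^\sigma)^n$ as an $n$-fold integral via the contour representation \eqref{k1}, integrates out all $y$-variables with Fubini, and then re-introduces Gaussian $y$-variables (via $\e^{-z^2}=\pi^{-1/2}\int\e^{-y^2+2\im zy}\d y$) to reassemble the result as $\tr K_\sigma^n$. The spectral bounds $\|K_\sigma\|\leq 1$ and $K_\sigma\geq 0$ are obtained by factoring $K_\sigma$ (as an operator on $L^2(J_t)$) as a composition of norm-one factors $S_\sigma G L_{-\infty}P_+L_{-\infty}S_\sigma^{-1}$; invertibility of $I-K_\sigma$ is proved by analyticity: a putative eigenvector with eigenvalue $1$ would extend to a function that vanishes identically on $\mathbb{R}^2$. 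You instead factor \emph{both} operators as $MM^*$ with $M$ a Hilbert--Schmidt transformation from $L^2(0,\infty)$ to $L^2(J_t)$, swap to $M^*M$ on $L^2(0,\infty)$ by the invariance of spectra under $MM^*\leftrightarrow M^*M$, and verify that both adjoint-swapped operators coincide with the shifted finite-temperature Airy kernel $N_\sigma(\cdot+t,\cdot+t)$. For $K_\sigma$ this is a one-line change of variables; for $K_{\textnormal{Ai}}^\sigma$ it requires the double-contour computation you flag, and I have checked that it closes (the constant terms in the exponent cancel exactly after the shift $\lambda\mapsto\lambda+\tfrac{\im\sigma^2}{2}$, $\mu\mapsto\mu+\tfrac{\im\sigma^2}{2}$, giving precisely the contour form of $N_{t,\sigma}$). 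Your spectral-bound argument for $0\leq N_\sigma\leq I$ via the convex decomposition $N_\sigma=\int\widetilde{M}_z\,\d\mu(z)$ with $\widetilde{M}_z=\mathcal{A}P_z\mathcal{A}$ an orthogonal projection (using that the Airy transform $\mathcal{A}$ with kernel $\textnormal{Ai}(a+b)$ is a self-adjoint involution, a fact the paper proves in Appendix~\ref{appB} via \eqref{appC5}), and your invertibility argument via the equality case in the projection-averaging (forcing $\widetilde{M}_z\tilde f=\tilde f$ for $\mu$-a.e.\ $z$, contradicting $\widetilde M_z\to 0$ strongly) are both clean and correct.

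The two approaches buy different things. The paper's Proposition~\ref{prop1} deliberately stays on $L^2(J_t)$ so that the subsequent Hankel-composition machinery (Propositions~\ref{prop2}--\ref{prop4}) can exploit the two-dimensional $K_\sigma$; the reduction to the one-dimensional $N_{t,\sigma}$ only appears later (Proposition~\ref{prop5}, Corollary~\ref{theo2}). Your factorization $A^*A=N_{t,\sigma}$ is precisely the one the paper uses in its \emph{second} proof of Theorem~\ref{fimain1} (the $P_{t,\sigma}^*P_{t,\sigma}$ Sylvester argument of Proposition~\ref{Sylvprop}), but you apply the same idea bilaterally---also factoring $K_{\textnormal{Ai}}^\sigma$ through the same $N_{t,\sigma}$---which makes the trace identity \eqref{e5} fall out of the singular-value principle rather than from an $n$-fold Fubini/Gaussian computation. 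The price is a somewhat heavier contour computation to verify $\widetilde A^*\widetilde A=N_{t,\sigma}$ (which however only needs to be done once, versus once per $n$ in principle). The Fubini interchanges used to compute $\widetilde A^*\widetilde A$ still need to be justified---the Gaussian factor $\e^{-y^2}$ dominates the $\e^{c(\sigma|y|)^{3/2}}$ growth of the complex-argument Airy factors and the $\e^{\sigma^2 x}$ is beaten by the $\e^{-\frac{4}{3}x^{3/2}}$ decay---but these estimates are of the same nature as those the paper uses in Lemma~\ref{lem1} and your sketch can absorb them.
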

\begin{proof} The Airy kernel function $(a,b)\mapsto K_{\textnormal{Ai}}(a,b)$ is continuous on $\mathbb{R}^2$, hence $\mathbb{R}^4\ni(z_1,z_2)\mapsto K_{\sigma}(z_1,z_2)$ is continuous. Also, for any $z_1,\ldots,z_n\in J_t$, any $\alpha\in\mathbb{C}^n$ and all $\sigma\geq 0$
\begin{equation*}	
	\sum_{j,k=1}^n\alpha_j\overline{\alpha_k}K_{\sigma}(z_j,z_k)=\int_0^{\infty}\left|\sum_{j=1}^n\frac{\alpha_j}{\pi^{\frac{1}{4}}}\e^{-\frac{1}{2}y_j^2}\textnormal{Ai}(x_j+\sigma y_j+s)\right|^2\d s\geq 0.
\end{equation*}
Since also for all $(t,\sigma)\in\mathbb{R}\times[0,\infty)$,
\begin{equation*}
	\int_{J_t}K_{\sigma}(z,z)\,\d^2z=\frac{1}{\sqrt{\pi}}\int_t^{\infty}\int_{-\infty}^{\infty}e^{-y^2}\left\{\int_0^{\infty}\textnormal{Ai}^2(x+\sigma y+s)\,\d s\right\}\d y\,\d x<\infty
\end{equation*}
by the bound $|\textnormal{Ai}(x)|\leq c(\chi_{(-\infty,1]}(x)+\e^{-\frac{1}{2}x}\chi_{(1,\infty)}(x)),x\in\mathbb{R}$ with $c>0$ and the characteristic function $\chi_A$ of a set $A\subset\mathbb{R}$, cf. \cite[$\S 9.7$(ii)]{NIST}, we deduce that \eqref{e4} is trace class on $L^2(J_t)$ by \cite[Theorem $2.12$]{Sim}. Moving ahead we identify $f(z)\equiv f(x,y)\in L^2(\mathbb{R}^2)$ with $z=(x,y)\in\mathbb{R}^2$. Then, for any $f\in L^2(J_t)$,
\begin{equation*}
	0\leq\langle f,K_{\sigma}f\rangle_{L^2(J_t)}=\frac{1}{\sqrt{\pi}}\int_0^{\infty}\left|\int_t^{\infty}\int_{-\infty}^{\infty}\e^{-\frac{1}{2}y^2}\textnormal{Ai}(x+\sigma y+s)f(x,y)\,\d y\,\d x\right|^2\d s,
\end{equation*}
so $K_{\sigma}\geq 0$. Moreover, with $S_{\sigma}:L^2(\mathbb{R}^2)\rightarrow L^2(\mathbb{R}^2)$ given by $(S_{\sigma}f)(x,y):=f(x+\sigma y,y)$ and with $L_t:L^2(J_t)\rightarrow L^2(J_t),t\in\mathbb{R}$ given by
\begin{equation*}
	(L_tf)(x,y):=\int_t^{\infty}\textnormal{Ai}(x+\lambda)f(\lambda,y)\,\d\lambda,\ \ \ \ (x,y)\in J_t,
\end{equation*}
we have $\|S_{\sigma}\|=1$ in operator norm on $L^2(\mathbb{R}^2)$ and $\|L_t\|\leq 1$ in operator norm on $L^2(J_t)$, see \cite[Lemma $8.3$]{HM}. In fact, seeing the function $L_tf$ as defined on $\mathbb{R}^2$, we have $\lim_{t\rightarrow-\infty}\|L_tf\|=\|f\|$ in norms on $L^2(\mathbb{R}^2)$ by \cite[Lemma $8.2$]{HM}. Hence, denoting with $K_{\sigma}$ the operator on $\mathbb{R}^2$ with kernel $K_{\sigma}(z_1,z_2)$, we have for any $f\in L^2(\mathbb{R}^2)$,
\begin{equation}\label{e6}
	\big(S_{\sigma}GL_{-\infty}P_+ L_{-\infty}S_{\sigma}^{-1}f\big)(x_1,y_1)=\int_{-\infty}^{\infty}\int_{-\infty}^{\infty}K_{\sigma}(z_1,z_2)f(x_2,y_2)\,\d y_2\,\d x_2,
\end{equation}
where $P_+:L^2(\mathbb{R}^2)\rightarrow L^2(J_0)$ is the projection from $L^2(\mathbb{R}^2)$ to $L^2(J_0)$ and $G:L^2(\mathbb{R}^2)\rightarrow L^2(\mathbb{R}^2)$ denotes the operator
\begin{equation*}
	(Gf)(x,y):=\frac{1}{\sqrt{\pi}}\e^{-\frac{1}{2}y^2}\int_{-\infty}^{\infty}\e^{-\frac{1}{2}\mu^2}f(x,\mu)\,\d\mu.
\end{equation*}
Note that $\|G\|\leq 1$ in operator norm on $L^2(\mathbb{R}^2)$, so in the same norm $\|S_{\sigma}GL_{-\infty}P_+L_{-\infty}S_{\sigma}^{-1}\|\leq 1$. Hence, by self-adjointness of $K_{\sigma}$ and \eqref{e6} we have for the operator norm on $L^2(J_t)$,
\begin{equation*}
	\|K_{\sigma}\|=\sup_{\|f\|_{L^2(J_t)}=1}\big|\langle f,K_{\sigma}f\rangle_{L^2(J_t)}\big|\leq 1,
\end{equation*}
and thus also $K_{\sigma}\leq 1$. Next, if $K_{\sigma}f=f$ for some $f\in L^2(J_t)$, not identically zero, then the workings in \eqref{e6} show that $K_{\sigma}f=0$ almost everywhere on $(-\infty,t)\times\mathbb{R}$. The analytic properties of the Airy function and the exponential imply that $(x,y)\mapsto (K_{\sigma}f)(x,y)$ is an entire function of $x$ and $y$ separately. We thus have $K_{\sigma}f=0$ everywhere on $\mathbb{R}^2$ and so $f\equiv 0$, a contradiction. Thus $I-K_{\sigma}$ is invertible on $L^2(J_t)$ for all $(t,\sigma)\in\mathbb{R}\times[0,\infty)$ by the Fredholm Alternative. It now remains to establish \eqref{e5} and we begin with \cite[Theorem $3.9$]{Sim}, compare also Lemma \ref{lem1},
\begin{align}
	&\hspace{1cm}\tr_{L^2(J_t)}(K_{\textnormal{Ai}}^{\sigma})^n=\int_{J_t^n}\left[\prod_{k=1}^nK_{\textnormal{Ai}}^{\sigma}(z_k,z_{k+1})\right]\d^2z_1\cdots\d^2 z_n\label{e7}\\
	\stackrel{\eqref{k1}}{=}&\,\left(\frac{\im}{4\pi^{\frac{5}{2}}}\right)^n\int\limits_{J_t^n}\left[\prod_{k=1}^n\int_{\gamma_1}\int_{\gamma_2}\frac{\e^{\im(\frac{1}{3}\lambda_k^3+x_k\lambda_k)}\e^{\im(\frac{1}{3}\mu_k^3+x_{k+1}\mu_k)}}{\lambda_k+\mu_k}\e^{-\frac{1}{2}(\sigma\lambda_k+y_k)^2}\e^{-\frac{1}{2}(\sigma\mu_k-y_{k+1})^2}\d\lambda_k\,\d\mu_k\right]\d^2z_1\cdots\d^2z_n\nonumber
\end{align}
using throughout the cyclic convention $(x_{n+1},y_{n+1})=z_{n+1}\equiv z_1=(x_1,y_1)$. We proceed by integrating out all $y$-variables in \eqref{e7} via Fubini's theorem while identifying $\lambda_{n+1}\equiv\lambda_1$,
\begin{align}
	\int_{\mathbb{R}^n}&\,\left[\prod_{k=1}^n\e^{-\frac{1}{2}(\sigma\lambda_k+y_k)^2}\e^{-\frac{1}{2}(\sigma\mu_k-y_{k+1})^2}\right]\d y_1\cdots\d y_n=\prod_{k=1}^n\int_{-\infty}^{\infty}\e^{-\frac{1}{2}(\sigma\lambda_{k+1}+y)^2}\e^{-\frac{1}{2}(\sigma\mu_k-y)^2}\d y\nonumber\\
	&\hspace{1cm}=\pi^{\frac{n}{2}}\prod_{k=1}^n\e^{-\frac{1}{4}\sigma^2(\lambda_{k+1}+\mu_k)^2},\ \ \ (\lambda_k,\mu_k)\in\gamma_1\times\gamma_2.\label{e8}
\end{align}
Afterwards we recall the Gaussian integral
\begin{equation*}
	\e^{-z^2}=\frac{1}{\sqrt{\pi}}\int_{-\infty}^{\infty}\e^{-y^2+2\im zy}\,\d y,\ \ \ \ z\in\mathbb{C},
\end{equation*}
and transform \eqref{e8} to, with $y_{n+1}\equiv y_1$,
\begin{align*}
	\int_{\mathbb{R}^n}&\,\left[\prod_{k=1}^n\e^{-\frac{1}{2}(\sigma\lambda_k+y_k)^2}\e^{-\frac{1}{2}(\sigma\mu_k-y_{k+1})^2}\right]\d y_1\cdots\d y_n=\prod_{k=1}^n\int_{-\infty}^{\infty}\e^{-y^2+\im\sigma(\lambda_{k+1}+\mu_k)y}\,\d y\\
	&\hspace{1cm}=\int_{\mathbb{R}^n}\left[\prod_{k=1}^n\e^{-\frac{1}{2}y_k^2-\frac{1}{2}y_{k+1}^2+\im\sigma(\lambda_ky_k+\mu_ky_{k+1})}\right]\d y_1\cdots\d y_n,\ \ \ \ \ \ (\lambda_k,\mu_k)\in\gamma_1\times\gamma_2.
\end{align*}
Consequently, back in \eqref{e7}, by Fubini's theorem,
\begin{align*}
	\tr_{L^2(J_t)}(K_{\textnormal{Ai}}^{\sigma})^n=&\,\left(\frac{\im}{4\pi^{\frac{5}{2}}}\right)^n\int\limits_{J_t^n}\bigg[\prod_{k=1}^n\int_{\gamma_1}\int_{\gamma_2}\e^{-\frac{1}{2}y_k^2}\e^{\im(\frac{1}{3}\lambda_k^3+[x_k+\sigma y_k]\lambda_k)}\e^{\im(\frac{1}{3}\mu_k^3+[x_{k+1}+\sigma y_{k+1}]\mu_k)}\e^{-\frac{1}{2}y_{k+1}^2}\frac{\d\lambda_k\,\d\mu_k}{\lambda_k+\mu_k}\bigg]\\
	&\hspace{1cm}\,\times\d^2z_1\cdots\d^2z_n\\
	=&\,\int_{J_t^n}\left[\prod_{k=1}^n\frac{1}{\sqrt{\pi}}\e^{-\frac{1}{2}y_k^2}\int_0^{\infty}\textnormal{Ai}(x_k+\sigma y_k+s)\textnormal{Ai}(s+x_{k+1}+\sigma y_{k+1})\,\d s\,\e^{-\frac{1}{2}y_{k+1}^2}\right]\d^2z_1\cdots\d^2 z_n\\
	=&\,\int_{J_t^n}\left[\prod_{k=1}^nK_{\sigma}(z_k,z_{k+1})\right]\d^2z_1\cdots\d^2z_n=\tr_{L^2(J_t)}K_{\sigma}^n,
\end{align*}
where we have used the contour integral formula \eqref{k3} for the Airy function in the second equality and the fact that $K_{\sigma}$ is trace class on $L^2(J_t)$ with continuous kernel in the last. The proof is complete.
\end{proof}

In deriving the integro-differential system for $F_{\sigma}(t)$, the representation \eqref{e3}, but now with \eqref{e5} in place, uncovers the following fact: abbreviating
\begin{equation}\label{e9}
	\phi_{\sigma}(z)\equiv\phi_{\sigma}(x,y):=\frac{1}{\pi^{\frac{1}{4}}}\e^{-\frac{1}{2}y^2}\textnormal{Ai}(x+\sigma y),\ \ \ z\equiv(x,y)\in\mathbb{R}^2,\ \sigma\geq 0,
\end{equation}
the kernel of $K_{\sigma}$ in \eqref{e3a} is simply
\begin{equation*}
	K_{\sigma}(z_1,z_2)=\int_0^{\infty}\phi_{\sigma}(x_1+s,y_1)\phi_{\sigma}(s+x_2,y_2)\,\d s,
\end{equation*}
and thus constitutes the kernel of a Hankel composition operator in the first variable. This feature is the key ingredient of our first proof of Theorem \ref{fimain1} as we are now able to draw inspiration from the recent works \cite{Kra,Bo} on Fredholm determinants of Hankel composition operators acting on $L^2(J\subset\mathbb{R})$. The necessary adjustments that account for the higher-dimensional integration domain $J_t\subset\mathbb{R}^2$ are worked out below. 
\subsection{The Hankel method} First we view $K_{\sigma}$ not acting on $L^2(J_t)$ but rather on $L^2(J_0)$ however then with kernel
\begin{equation}\label{e9a}
	K_{t,\sigma}(z_1,z_2):=\frac{1}{\sqrt{\pi}}\e^{-\frac{1}{2}y_1^2}K_{\textnormal{Ai}}(x_1+\sigma y_1+t,x_2+\sigma y_2+t)\e^{-\frac{1}{2}y_2^2},\ \ \ z_k=(x_k,y_k)\in\mathbb{R}^2,
\end{equation}
so that by \eqref{e3}, Proposition \ref{prop1} and \cite[Chapter IV, $(5.11),(6.2)$]{GGK},
\begin{equation}\label{e10}
	F_{\sigma}(t)=\prod_{k=1}^{\infty}\big(1-\mu_k(t,\sigma)\big),\ \ \ (t,\sigma)\in\mathbb{R}\times[0,\infty),
\end{equation}
in terms of the non-zero eigenvalues $\mu_k(t,\sigma)$ of $K_{t,\sigma}:L^2(J_0)\rightarrow L^2(J_0)$, counting multiplicities. In turn we record the below result.
\begin{prop}\label{prop2} For any $(t,\sigma)\in\mathbb{R}\times[0,\infty)$, viewing $\sigma$ as parameter,
\begin{equation}\label{e11}
	\frac{\d^2}{\d t^2}\ln F_{\sigma}(t)=-\int_{-\infty}^{\infty}\big(q_{\sigma}(t,y)\big)^2\,\d y;\ \ \ \ \ \ q_{\sigma}(t,y):=((I-K_{t,\sigma})^{-1}\tau_t\phi_{\sigma})(0,y),\ \ y\in\mathbb{R},
\end{equation}
using \eqref{e9} and where $\tau_t:L^2(\mathbb{R}^2)\rightarrow L^2(\mathbb{R}^2)$ denotes the shift in the first variable, i.e. $(\tau_tf)(x,y):=f(x+t,y)$.
\end{prop}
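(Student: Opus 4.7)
The argument has three ingredients: a rank-one identity for $\partial_tK_{t,\sigma}$, a ``Hankel divergence'' identity for the composition kernel \eqref{e9a}, and the resolvent equation $(I-K_{t,\sigma})Q_t=\psi_t$ with $\psi_t:=\tau_t\phi_\sigma$.

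The substitution $u=s+t$ in \eqref{e9a} gives $K_{t,\sigma}(z_1,z_2)=\int_t^\infty\phi_\sigma(x_1+u,y_1)\phi_\sigma(u+x_2,y_2)\,du$, so $\partial_tK_{t,\sigma}=-|\psi_t\rangle\!\langle\psi_t|$ on $L^2(J_0)$. Since $I-K_{t,\sigma}$ is invertible by Proposition~\ref{prop1}, the Fredholm log-derivative combined with the rank-one trace identity yields $\frac{d}{dt}\ln F_\sigma(t)=\langle\psi_t,Q_t\rangle_{L^2(J_0)}=:c$, where $Q_t:=(I-K_{t,\sigma})^{-1}\psi_t$ and $q_\sigma(t,y)=Q_t(0,y)$. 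Differentiating again, using $\partial_tQ_t=-cQ_t+(I-K_{t,\sigma})^{-1}\partial_t\psi_t$, self-adjointness of $K_{t,\sigma}$, and $\partial_t\psi_t=\partial_x\psi_t$, reduces the claim to
\[
2\langle\partial_x\psi_t,Q_t\rangle_{L^2(J_0)}-c^2=-\int_{-\infty}^\infty q_\sigma(t,y)^2\,dy.
\]

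To evaluate $\langle\partial_x\psi_t,Q_t\rangle$ I extract two identities. An integration by parts in $x\in(0,\infty)$ produces the \emph{symmetric} identity
\[
\langle\partial_x\psi_t,Q_t\rangle+\langle\psi_t,\partial_xQ_t\rangle=-\int_{-\infty}^\infty\phi_\sigma(t,y)q_\sigma(t,y)\,dy.
\]
For the \emph{antisymmetric} identity I use the Hankel divergence identity $(\partial_{x_1}+\partial_{x_2})K_{t,\sigma}(z_1,z_2)=-\psi_t(z_1)\psi_t(z_2)$, which follows from a single integration by parts in the composition variable $s$. Differentiating $(I-K_{t,\sigma})Q_t=\psi_t$ in $x$ and commuting $\partial_x$ through $K_{t,\sigma}$ via this identity (the residual $\partial_{x_2}K_{t,\sigma}$ is transferred onto $Q_t$ by IBP in the second slot, producing a boundary contribution at $x_2=0$) yields
\[
(I-K_{t,\sigma})\partial_xQ_t=\partial_x\psi_t-c\,\psi_t+\chi_t,\quad\chi_t(x,y):=\int_{-\infty}^\infty K_{t,\sigma}\bigl((x,y),(0,y')\bigr)q_\sigma(t,y')\,dy'.
\]
Pairing with $\psi_t$, using self-adjointness and the identification $(K_{t,\sigma}Q_t)(0,y')=q_\sigma(t,y')-\phi_\sigma(t,y')$ (from $(I-K_{t,\sigma})Q_t=\psi_t$), gives
\[
\langle\partial_x\psi_t,Q_t\rangle-\langle\psi_t,\partial_xQ_t\rangle=c^2+\int_{-\infty}^\infty\phi_\sigma(t,y)q_\sigma(t,y)\,dy-\int_{-\infty}^\infty q_\sigma(t,y)^2\,dy.
\]

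Adding the symmetric and antisymmetric identities cancels the $\int\phi_\sigma q_\sigma\,dy$ cross-term and delivers $2\langle\partial_x\psi_t,Q_t\rangle=c^2-\int_\mathbb{R}q_\sigma(t,y)^2\,dy$, which closes the calculation. The main obstacle is the derivation of the antisymmetric identity: the Hankel divergence identity is what allows $\partial_x$ to be transferred across $K_{t,\sigma}$, and the boundary kernel $\chi_t$ conveniently reintroduces $q_\sigma$ so that the system closes in favor of the boundary trace. All integrations by parts are legitimized by the Airy decay of $\psi_t$ as $x,s\to\infty$ and the $L^2$-membership of $\psi_t,Q_t$ ensured by Proposition~\ref{prop1}.
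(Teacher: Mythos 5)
Your proof is correct and follows essentially the same route as the paper: you differentiate the log-determinant twice, reduce to $2\langle\partial_x\psi_t,Q_t\rangle-c^2$, and then evaluate $\langle\partial_x\psi_t,Q_t\rangle$ by combining integration by parts at $x=0$ with the Hankel divergence identity $(\partial_{x_1}+\partial_{x_2})K_{t,\sigma}=-\psi_t\otimes\psi_t$ and the boundary kernel $\chi_t$ at $x_2=0$, together with self-adjointness and $K_{t,\sigma}Q_t=Q_t-\psi_t$. The paper performs the same manipulations in trace notation (its commutator identity $D_1(I-K)^{-1}-(I-K)^{-1}D_1=(I-K)^{-1}[D_1,K](I-K)^{-1}$ applied to $\psi_t$ is exactly your antisymmetric identity), so the difference is purely presentational.
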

\begin{proof} By the analytic and asymptotic properties of the Airy function and the exponential, we have $\tau_t\phi_{\sigma}\in L^2(J_0)$ for all $(t,\sigma)\in\mathbb{R}\times[0,\infty)$ and $K_{t,\sigma}(L^2(J_0))\subset W^{1,2}(J_0):=\{f\in L^2(J_0):\,D_1f\in L^2(J_0),\ D_2f\in L^2(J_0)\}$ with the partial derivatives $(D_1f)(x,y):=\frac{\partial}{\partial x}f(x,y)$ and $(D_2f)(x,y):=\frac{\partial}{\partial y}f(x,y)$. Hence, $J_0\ni (t,y)\mapsto q_{\sigma}(t,y)$ is well-defined for all $\sigma\geq 0$, compare Proposition \ref{prop1}, and the map $y\mapsto q_{\sigma}(t,y)$ is in $L^2(\mathbb{R})$ for all $(t,\sigma)\in\mathbb{R}\times[0,\infty)$. In order to establish the equality in \eqref{e11} we begin with the following observation: integrating by parts, $t\mapsto K_{t,\sigma}$ is differentiable on $\mathbb{R}$ with trace class derivative on $L^2(J_0)$ given by
\begin{equation*}
	\frac{\d}{\d t}K_{t,\sigma}=-\tau_t\phi_{\sigma}\otimes\tau_t\phi_{\sigma}.
\end{equation*}
Here we write $\alpha\otimes\beta$ for a general rank one integral operator on $L^2(J_0)$,
\begin{equation*}
	\big((\alpha\otimes\beta)f\big)(z):=\alpha(z)\int_{J_0}\beta(w)f(w)\d^2 w.
\end{equation*} 
Thus, by Jacobi's formula \cite[$(3.12)$]{W},
\begin{equation}\label{e12}
	\frac{\d}{\d t}\ln F_{\sigma}(t)=-\tr_{L^2(J_0)}\left((I-K_{t,\sigma})^{-1}\frac{\d}{\d t}K_{t,\sigma}\right)=\tr_{L^2(J_0)}\big((I-K_{t,\sigma})^{-1}\tau_t\phi_{\sigma}\otimes\tau_t\phi_{\sigma}\big).
\end{equation}
Next we note that $t\mapsto (I-K_{t,\sigma})^{-1}\tau_t\phi_{\sigma}\otimes\tau_t\phi_{\sigma}$ is also differentiable on $\mathbb{R}$ with trace class derivative on $L^2(J_0)$ equal to
\begin{align}
	\frac{\d}{\d t}\big((I-K_{t,\sigma})^{-1}&\,\tau_t\phi_{\sigma}\otimes\tau_t\phi_{\sigma}\big)=(I-K_{t,\sigma})^{-1}\tau_t\phi_{\sigma}\otimes D_1\tau_t\phi_{\sigma}\label{e13}
\\
	&\hspace{1cm}+\Big[(I-K_{t,\sigma})^{-1}D_1\tau_t\phi-(I-K_{t,\sigma})^{-1}(\tau_t\phi_{\sigma}\otimes\tau_t\phi_{\sigma})(I-K_{t,\sigma})^{-1}\tau_t\phi_{\sigma}\Big]\otimes\tau_t\phi_{\sigma}.\nonumber
\end{align}
In computing \eqref{e13} we have used the operator identity
\begin{equation*}
	\frac{\d}{\d t}(I-K_{t,\sigma})^{-1}=(I-K_{t,\sigma})^{-1}\frac{\d K_{t,\sigma}}{\d t}(I-K_{t,\sigma})^{-1},
\end{equation*}
which holds as soon as $I-K_{t,\sigma}$ is invertible (recall Proposition \ref{prop1}) and $t\mapsto K_{t,\sigma}$ differentiable. All together, \eqref{e12} and \eqref{e13} confirm that $\mathbb{R}\ni t\mapsto\ln F_{\sigma}(t)$ is twice differentiable with second derivative equal to 
\begin{align}
	\frac{\d^2}{\d t^2}\ln F_{\sigma}&\,(t)\stackrel{\eqref{e12}}{=}\tr_{L^2(J_0)}\left(\frac{\d}{\d t}\big((I-K_{t,\sigma})^{-1}\tau_t\phi_{\sigma}\otimes\tau_t\phi_{\sigma}\big)\right)\stackrel{\eqref{e13}}{=}\tr_{L^2(J_0)}\big((I-K_{t,\sigma})^{-1}\tau_t\phi_{\sigma}\otimes D_1\tau_t\phi_{\sigma}\big)\label{e14}\\
	&\hspace{1cm}+\tr_{L^2(J_0)}\Big(\Big[(I-K_{t,\sigma})^{-1}D_1\tau_t\phi_{\sigma}-(I-K_{t,\sigma})^{-1}(\tau_t\phi_{\sigma}\otimes\tau_t\phi_{\sigma})(I-K_{t,\sigma})^{-1}\tau_t\phi_{\sigma}\Big]\otimes\tau_t\phi_{\sigma}\Big),\nonumber
\end{align}
where we rely on the analytic and asymptotic properties of the Airy function and exponential in the first equality while interchanging differentiation with trace evaluation. Note that the first term in \eqref{e14} can be simplified by \cite[Theorem $3.9$]{Sim} and integration by parts in the first variable,
\begin{align*}
	\tr_{L^2(J_0)}\big((I-K_{t,\sigma})^{-1}&\,\tau_t\phi_{\sigma}\otimes D_1\tau_t\phi_{\sigma}\big)=\int_0^{\infty}\int_{-\infty}^{\infty}\big((I-K_{t,\sigma})^{-1}\tau_t\phi_{\sigma}\big)(x,y)\frac{1}{\pi^{\frac{1}{4}}}\e^{-\frac{1}{2}y^2}\frac{\partial}{\partial x}\textnormal{Ai}(x+\sigma y+t)\,\d y\,\d x\\
	\stackrel{\eqref{e11}}{=}&\,-\int_{-\infty}^{\infty}q_{\sigma}(t,y)(\tau_t\phi_{\sigma})(0,y)\,\d y-\tr_{L^2(J_0)}\big(D_1(I-K_{t,\sigma})^{-1}\tau_t\phi_{\sigma}\otimes\tau_t\phi_{\sigma}\big).
\end{align*}
Here we have used that $D_1(I-K_{t,\sigma})^{-1}\tau_t\phi_{\sigma}\otimes\tau_t\phi_{\sigma}$ is trace class on $L^2(J_0)$ by the aforementioned mapping property $K_{t,\sigma}(L^2(J_0))\subset W^{1,2}(J_0)$ and combining the last equality with \eqref{e14}, the same simplifies to
\begin{align}
	\frac{\d^2}{\d t^2}\ln F_{\sigma}(t)=-\int_{-\infty}^{\infty}&\,q_{\sigma}(t,y)(\tau_t\phi_{\sigma})(0,y)\,\d y-\tr_{L^2(J_0)}\Big(\Big[D_1(I-K_{t,\sigma})^{-1}\tau_t\phi_{\sigma}-(I-K_{t,\sigma})^{-1}D_1\tau_t\phi_{\sigma}\Big]\otimes\tau_t\phi_{\sigma}\Big)\nonumber\\
	&\,-\tr_{L^2(J_0)}\Big(\Big[(I-K_{t,\sigma})^{-1}(\tau_t\phi_{\sigma}\otimes\tau_t\phi_{\sigma})(I-K_{t,\sigma})^{-1}\tau_t\phi_{\sigma}\Big]\otimes\tau_t\phi_{\sigma}\Big).\label{e15}
\end{align}
Moving ahead, by algebra,
\begin{equation*}
	D_1(I-K_{t,\sigma})^{-1}\tau_t\phi_{\sigma}-(I-K_{t,\sigma})^{-1}D_1\tau_t\phi_{\sigma}=(I-K_{t,\sigma})^{-1}(D_1K_{t,\sigma}-K_{t,\sigma}D_1)(I-K_{t,\sigma})^{-1}\tau_t\phi_{\sigma}\in L^2(J_0),
\end{equation*}
which we can simplify with the help of the identities
\begin{equation*}
	\begin{cases}\displaystyle\frac{\partial}{\partial x_1}K_{t,\sigma}(z_1,z_2)+\frac{\partial}{\partial x_2}K_{t,\sigma}(z_1,z_2)=-(\tau_t\phi_{\sigma}\otimes\tau_t\phi_{\sigma})(z_1,z_2)&\smallskip\\
	\displaystyle \lim_{x_2\rightarrow+\infty}K_{t,\sigma}\big(z_1,(x_2,y_2)\big)=0,\ \ \ \ \ \ z_1\in J_0,\ \ y_2\in\mathbb{R}&
	\end{cases}.
\end{equation*}
What results is,
\begin{align*}
	\big((D_1K_{t,\sigma}-K_{t,\sigma}D_1)&\,(I-K_{t,\sigma})^{-1}\tau_t\phi_{\sigma}\big)(z)=-(\tau_t\phi_{\sigma})(z)\int_{J_0}(\tau_t\phi_{\sigma})(w)\big((I-K_{t,\sigma})^{-1}\tau_t\phi_{\sigma}\big)(w)\,\d^2w\\
	&\,+\int_{-\infty}^{\infty}K_{t,\sigma}\big(z,(0,y)\big)q_{\sigma}(t,y)\,\d y,\ \ \ \ \ z\in J_0,
\end{align*}
with the shorthand $K_{t,\sigma}(z,(0,y))=\lim_{x_2\downarrow 0}K_{t,\sigma}(z,(x_2,y))$, and therefore
\begin{align*}
	\tr_{L^2(J_0)}&\,\Big(\Big[D_1(I-K_{t,\sigma})^{-1}\tau_t\phi_{\sigma}-(I-K_{t,\sigma})^{-1}D_1\tau_t\phi_{\sigma}\Big]\otimes\tau_t\phi_{\sigma}\Big)\\
	&\,=-\tr_{L^2(J_0)}\Big(\Big[(I-K_{t,\sigma})^{-1}(\tau_t\phi_{\sigma}\otimes\tau_t\phi_{\sigma})(I-K_{t,\sigma})^{-1}\tau_t\phi_{\sigma}\Big]\otimes\tau_t\phi_{\sigma}\Big)\\
	&\hspace{1cm}+\int_{J_0}\left[\int_{-\infty}^{\infty}\big((I-K_{t,\sigma})^{-1}K_{t,\sigma}\big)\big(z,(0,y)\big)q_{\sigma}(t,y)\,\d y\right](\tau_t\phi_{\sigma})(z)\,\d^2 z,\ \ \ \ (t,\sigma)\in\mathbb{R}\times[0,\infty).
\end{align*}
All together, \eqref{e15} simplifies to
\begin{equation*}
	\frac{\d^2}{\d t^2}\ln F_{\sigma}(t)=-\int_{-\infty}^{\infty}q_{\sigma}(t,y)\left[(\tau_t\phi_{\sigma})(0,y)+\int_{J_0}\big((I-K_{t,\sigma})^{-1}K_{t,\sigma}\big)\big(z,(0,y)\big)(\tau_t\phi_{\sigma})(z)\,\d^2 z\right]\d y,
\end{equation*}
and this is precisely \eqref{e11} given that $(I-K_{t,\sigma})^{-1}=I+(I-K_{t,\sigma})^{-1}K_{t,\sigma}$ on $L^2(J_0)$ and since $K_{t,\sigma}$ is self-adjoint. Our proof is complete.
\end{proof}
\begin{rem}\label{rem2} The workings in Proposition \ref{prop2} are valid for a wider class of Hankel composition operators on $L^2(J_0)$. In fact, \eqref{e11} generalizes to kernels of the type
\begin{equation*}
	K_t(z_1,z_2)=\int_0^{\infty}\phi(x_1+s+t,y_1)\psi(s+x_2+t,y_2)\,\d s,\ \ \ (z_1,z_2)\in J_0\times J_0
\end{equation*}
with $\phi,\psi$ of the form $\phi(x,y)=\omega_1(y)f(x+\sigma_1(y))$ and $\psi(x,y)=\omega_2(y)g(x+\sigma_2(y))$, subject to suitable regularity and integrability assumptions placed on the one-variable functions $\omega_k,\sigma_k,f,g$ and in turn on the properties of the operator $K_t$. We shall see later that our workings lead to the dynamical system in Proposition \ref{prop3} below, which can be generalized in a similar fashion.
\end{rem}
Observe that \eqref{e11} allows us to derive the following compact expression for the distribution function $F_{\sigma}(t)$. This expression bears a striking resemblance to \cite[$(5.4)$]{ACQ}.
\begin{cor}\label{cor1} For all $(t,\sigma)\in\mathbb{R}\times[0,\infty)$,
\begin{equation}\label{e16}
	F_{\sigma}(t)=\exp\left[-\int_t^{\infty}(s-t)\left\{\int_{-\infty}^{\infty}\big(q_{\sigma}(s,y)\big)^2\,\d y\right\}\d s\right]
\end{equation}
in terms of $q_{\sigma}(t,y)=((I-K_{t,\sigma})^{-1}\tau_t\phi_{\sigma})(0,y)$.
\end{cor}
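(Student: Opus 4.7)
The plan is to obtain \eqref{e16} by integrating the ODE from Proposition~\ref{prop2} twice against the boundary conditions at $t=+\infty$. Setting
\begin{equation*}
	R_{\sigma}(t):=\int_{-\infty}^{\infty}\big(q_{\sigma}(t,y)\big)^2\,\d y\geq 0,
\end{equation*}
Proposition~\ref{prop2} reads $(\ln F_{\sigma})''(t)=-R_{\sigma}(t)$, and differentiating the claimed right-hand side of \eqref{e16} twice (once using Leibniz on the $(s-t)$ factor) yields the same ODE. So the content is really the matching of the constants of integration at $t=+\infty$.

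First I would establish the two boundary values $\ln F_{\sigma}(t)\to 0$ and $(\ln F_{\sigma})'(t)\to 0$ as $t\to+\infty$. The first follows from $\|K_{t,\sigma}\|_1\to 0$ as $t\to+\infty$: the integral $\int_{J_0}K_{t,\sigma}(z,z)\,\d^2 z$ factors into a Gaussian integral in $y$ and $\int_t^{\infty}K_{\textnormal{Ai}}(x+\sigma y,x+\sigma y)\,\d x$, which decays super-exponentially by the Airy asymptotics from \cite[\S9.7(ii)]{NIST}, uniformly in $y$ against the Gaussian weight. Hence by continuity of the Fredholm determinant in trace norm, $F_{\sigma}(t)\to 1$. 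For the derivative I would use \eqref{e12}, namely
\begin{equation*}
	\frac{\d}{\d t}\ln F_{\sigma}(t)=\tr_{L^2(J_0)}\big((I-K_{t,\sigma})^{-1}\tau_t\phi_{\sigma}\otimes\tau_t\phi_{\sigma}\big)=\big\langle(I-K_{t,\sigma})^{-1}\tau_t\phi_{\sigma},\tau_t\phi_{\sigma}\big\rangle_{L^2(J_0)},
\end{equation*}
combined with $\|(I-K_{t,\sigma})^{-1}\|\leq C$ uniformly for $t$ bounded below (from Proposition~\ref{prop1}) and $\|\tau_t\phi_{\sigma}\|_{L^2(J_0)}\to 0$ as $t\to+\infty$, again by the Airy decay. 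This yields $(\ln F_{\sigma})'(t)\to 0$ as $t\to+\infty$.

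Next I would check integrability: the same estimate shows $|(\ln F_{\sigma})'(t)|$ decays super-exponentially in $t$ at $+\infty$, so in particular $R_{\sigma}(s)$ and $sR_{\sigma}(s)$ are integrable on any half-line $[T,\infty)$. Integrating $(\ln F_{\sigma})''(s)=-R_{\sigma}(s)$ from $t$ to $+\infty$ and using the boundary value for the derivative gives
\begin{equation*}
	\frac{\d}{\d t}\ln F_{\sigma}(t)=\int_t^{\infty}R_{\sigma}(s)\,\d s.
\end{equation*}
Integrating once more from $t$ to $+\infty$ with $\ln F_{\sigma}(+\infty)=0$, and applying Fubini to the resulting iterated integral,
\begin{equation*}
	-\ln F_{\sigma}(t)=\int_t^{\infty}\int_u^{\infty}R_{\sigma}(s)\,\d s\,\d u=\int_t^{\infty}(s-t)R_{\sigma}(s)\,\d s,
\end{equation*}
which is exactly \eqref{e16}. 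The only genuine work in the argument is the verification of the two boundary conditions at $+\infty$; I expect the Fubini interchange and ODE integration to be mechanical once those are in place. The main obstacle is therefore a clean, uniform-in-$y$ control of $K_{t,\sigma}$ and $q_{\sigma}(t,\cdot)$ at $t\to+\infty$, which amounts to a careful use of the Airy decay estimates together with the Gaussian factor $\e^{-y^2/2}$ in \eqref{e9}.
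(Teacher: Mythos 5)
Your proposal is correct and follows essentially the same route as the paper: treat Proposition~\ref{prop2} as a second-order ODE $(\ln F_{\sigma})''=-R_{\sigma}$, verify the two boundary conditions $F_{\sigma}(t)\to1$ and $(\ln F_{\sigma})'(t)\to0$ as $t\to+\infty$, and integrate twice (with Fubini) to land on \eqref{e16}. The paper does precisely this, with the technical effort concentrated in the explicit kernel estimates \eqref{e16a}--\eqref{e19} that yield super-exponential decay of $R_{\sigma}(s)$ at $+\infty$. One small caution: your appeal to Proposition~\ref{prop1} for the uniform bound $\|(I-K_{t,\sigma})^{-1}\|\leq C$ on $t$ bounded below is slightly imprecise --- that proposition gives invertibility and $\|K_{t,\sigma}\|\leq1$ at each fixed $(t,\sigma)$, but the uniformity in $t$ still requires knowing that $\|K_{t,\sigma}\|$ stays bounded away from $1$, which one obtains from the explicit kernel bound (as in \eqref{e16a}) showing $\|K_{t,\sigma}\|\to 0$ as $t\to+\infty$ together with continuity on compacts. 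Once that is supplied, your argument goes through.
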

\begin{proof} Write $q_{\sigma}(t,y)=(\tau_t\phi_{\sigma})(0,y)+((I-K_{t,\sigma})^{-1}K_{t,\sigma}\tau_t\phi_{\sigma})(0,y)$ and note that for all $t>0$ and $\sigma\geq 0$,
\begin{equation}\label{e16a}
	\big|K_{t,\sigma}(z_1,z_2)\big|\leq \frac{t^{-1}}{8\pi^{\frac{3}{2}}}\e^{-\frac{4}{3}t^{\frac{3}{2}}}\e^{-\frac{1}{2}(y_1^2+y_2^2)-\sqrt{t}(x_1+\sigma y_1+x_2+\sigma y_2)},
\end{equation}
uniformly in $(z_1,z_2)\in J_0\times J_0$. Hence, for all $n\in\mathbb{Z}_{\geq 1}$ and all $t>0$ and $\sigma\geq 0$,
\begin{align}
	\big|K_{t,\sigma}^n&\,(z_1,z_{n+1})\big|\leq \left(\frac{t^{-1}}{8\pi^{\frac{3}{2}}}\e^{-\frac{4}{3}t^{\frac{3}{2}}}\right)^n\int_{J_0^{n-1}}\left[\prod_{j=1}^n\e^{-\frac{1}{2}(y_j^2+y_{j+1}^2)-\sqrt{t}(x_j+\sigma y_j+x_{j+1}+\sigma y_{j+1})}\right]\d^2z_2\cdots\d^2z_n\nonumber\\
	=&\,\left(\frac{t^{-1}}{8\pi^{\frac{3}{2}}}\e^{-\frac{4}{3}t^{\frac{3}{2}}}\right)^n\e^{-\frac{1}{2}(y_1^2+y_{n+1}^2)-\sqrt{t}(x_1+\sigma y_1+x_{n+1}+\sigma y_{n+1})}\int_{J_0^{n-1}}\left[\prod_{j=2}^n\e^{-y_j^2-2\sqrt{t}(x_j+\sigma y_j)}\right]\d^2 z_2\cdots\d^2 z_n\nonumber\\
	=&\,\frac{t^{-1}}{8\pi^{\frac{3}{2}}}\e^{-\frac{4}{3}t^{\frac{3}{2}}}\left(\frac{t^{-\frac{3}{2}}}{16\pi}\e^{-\frac{4}{3}t^{\frac{3}{2}}+t\sigma^2}\right)^{n-1}\e^{-\frac{1}{2}(y_1^2+y_{n+1}^2)-\sqrt{t}(x_1+\sigma y_1+x_{n+1}+\sigma y_{n+1})},\label{e17}
\end{align}
uniformly in $(z_1,z_{n+1})\in J_0\times J_0$. In turn, for all $t\geq t_0$ sufficiently large positive and $\sigma\geq 0$ such that $0\leq\sigma^2\leq\frac{4}{3}\sqrt{t}$ by the Neumann series,
\begin{equation*}
	\big|\big((I-K_{t,\sigma})^{-1}K_{t,\sigma}\big)(z_1,z_2)\big|\stackrel{\eqref{e17}}{\leq}\frac{t^{-1}}{8\pi^{\frac{3}{2}}}\e^{-\frac{4}{3}t^{\frac{3}{2}}}\e^{-\frac{1}{2}(y_1^2+y_2^2)-\sqrt{t}(x_1+\sigma y_1+x_2+\sigma y_2)}\left[1+\frac{t^{-\frac{3}{2}}}{16\pi}\e^{-\frac{4}{3}t^{\frac{3}{2}}+t\sigma^2}\right],
\end{equation*}
and thus there exist $c,t_0>0$ so that 
\begin{equation}\label{e18}
	\big|\big((I-K_{t,\sigma})^{-1}K_{t,\sigma}\tau_t\phi_{\sigma}\big)(0,y)\big|\leq c\,t^{-\frac{3}{2}}\e^{-\frac{4}{3}t^{\frac{3}{2}}+\frac{1}{4}t\sigma^2}\left[1+\frac{t^{-\frac{3}{2}}}{16\pi}\e^{-\frac{4}{3}t^{\frac{3}{2}}+t\sigma^2}\right]\e^{-\frac{1}{2}y^2-\sqrt{t}\sigma y},
\end{equation}
for all $t\geq t_0$, any $\sigma\geq 0$ such that $0\leq\sigma^2\leq\frac{4}{3}\sqrt{t}$ and all $y\in\mathbb{R}$. In deriving \eqref{e18} we have used that the Airy function is bounded on $\mathbb{R}$. On the other hand, using the slightly refined bounds, cf. \cite[$\S 9.7$]{NIST},
\begin{equation*}
	\big|\textnormal{Ai}(x)\big|\leq d\,\e^{-\frac{2}{3}x^{\frac{3}{2}}},\ \ x\geq 0;\ \ \ \ \ \ \ \ \ \ \ \ \ \big|\textnormal{Ai}(x)\big|\leq d,\ \ x<0,
\end{equation*}
with universal $d>0$, we find by Laplace's method that there exist $c,t_0>0$ such that
\begin{equation}\label{e19}
	\|(\tau_t\phi_{\sigma})(0,\cdot)\|_{L^2(\mathbb{R})}\leq c\sqrt{\frac{\sigma}{t}}\,\e^{-\frac{t^2}{2\sigma^2}},\ \ \ \ \ \ \forall\,t\geq t_0,\ \ \sigma\geq 0:\ 0\leq\sigma^2\leq\frac{4}{3}\sqrt{t}.
\end{equation}
Hence, by Minkowski's inequality, combining \eqref{e18} and \eqref{e19},
\begin{align*}
	\int_{-\infty}^{\infty}\big(q_{\sigma}(t,y)\big)^2\,\d y=\|q_{\sigma}(t,\cdot)\|_{L^2(\mathbb{R})}^2\leq&\,\big(\|(\tau_t\phi_{\sigma})(0,\cdot)\|_{L^2(\mathbb{R})}+\|((I-K_{t,\sigma})^{-1}K_{t,\sigma}\tau_t\phi_{\sigma})(0,\cdot)\|_{L^2(\mathbb{R})}\big)^2\\
	\leq&\,c\,t^{-3}\e^{-\frac{2}{3}t^{\frac{3}{2}}},\ \ \ \ \ \ \forall\,t\geq t_0,\ \ \sigma\geq 0:\ 0\leq\sigma^2\leq\frac{4}{3}\sqrt{t}.
\end{align*}
The last estimate shows that the right hand side in \eqref{e11} decays super exponentially fast to zero as $t\rightarrow+\infty$ for any fixed $\sigma\geq 0$, i.e. we can integrate the same expression twice, for any $(t,\sigma)\in\mathbb{R}\times[0,\infty)$,\footnote{Note that $F_{\sigma}(\infty)=1$ for all $\sigma\geq 0$ by Hadamard's inequality and the bound for $K_{t,\sigma}(z_1,z_2)$ stated in \eqref{e16a}.}
\begin{equation*}
	\ln F_{\sigma}(t)=-\int_t^{\infty}\left(\int_{\lambda}^{\infty}\left\{\int_{-\infty}^{\infty}\big(q_{\sigma}(s,y)\big)^2\,\d y\right\}\d s\right)\d\lambda=-\int_t^{\infty}(s-t)\left\{\int_{-\infty}^{\infty}\big(q_{\sigma}(s,y)\big)^2\,\d y\right\}\d s,
\end{equation*}
and this completes the proof of \eqref{e16}.
\end{proof}
We now proceed to characterize $q_{\sigma}(t,y)$ through a dynamical system. This will be achieved in two steps and we begin with the integro-differential ODE system written in \eqref{e20}.
\begin{prop}\label{prop3}
Let $n\in\mathbb{Z}_{\geq 0}$ and introduce for any $(t,y,\sigma)\in\mathbb{R}^2\times[0,\infty)$,
\begin{equation*}
	p_{\sigma}^{[n]}(t):=\tr_{L^2(J_0)}\big((I-K_{t,\sigma})^{-1}D_1^n\tau_t\phi_{\sigma}\otimes\tau_t\phi_{\sigma}\big),\ \ \ \ \ \ \ 
	q_{\sigma}^{[n]}(t,y):=\big((I-K_{t,\sigma})^{-1}D_1^n\tau_t\phi_{\sigma}\big)(0,y),
\end{equation*}
where $(D_1^nf)(x,y):=\frac{\partial^n}{\partial x^n}f(x,y)$. Then $\big\{p_{\sigma}^{[n]},q_{\sigma}^{[n]}\big\}_{n=0}^{\infty}$ satisfy the coupled integro-differential system
\begin{equation}\label{e20}
	\frac{\d}{\d t}p_{\sigma}^{[n]}(t)=-\int_{-\infty}^{\infty}q_{\sigma}^{[n]}(t,y)q_{\sigma}^{[0]}(t,y)\,\d y,\ \ \ \ \ \ \ \frac{\partial}{\partial t}q_{\sigma}^{[n]}(t,y)=q_{\sigma}^{[n+1]}(t,y)-q_{\sigma}^{[0]}(t,y)p_{\sigma}^{[n]}(t),\ \ n\in\mathbb{Z}_{\geq 0},
\end{equation}
and 
\begin{equation*}
	I^{[n]}_{\sigma}(t):=p_{\sigma}^{[n+1]}(t)+(-1)^np_{\sigma}^{[n+1]}(t)+\sum_{k=0}^n(-1)^k\left[\int_{-\infty}^{\infty}q_{\sigma}^{[k]}(t,y)q_{\sigma}^{[n-k]}(t,y)\,\d y-p_{\sigma}^{[k]}(t)p_{\sigma}^{[n-k]}(t)\right],\,n\in\mathbb{Z}_{\geq 0}
\end{equation*}
constitutes a conserved quantity for the same system, i.e. subject to \eqref{e20} we have $\frac{\d}{\d t}I^{[n]}_{\sigma}(t)=0$ for all $(t,\sigma)\in\mathbb{R}\times[0,\infty)$ and $n\in\mathbb{Z}_{\geq 0}$.
\end{prop}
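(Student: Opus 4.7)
The plan is to derive both equations in \eqref{e20} from the Hankel composition structure of $K_{t,\sigma}$ together with a resolvent differentiation identity, and then verify conservation of $I^{[n]}_\sigma$ by a direct algebraic computation using the two evolution equations.

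First I would record two structural identities for $K_{t,\sigma}$ that follow from its composition representation
\begin{equation*}
K_{t,\sigma}(z_1,z_2)=\int_0^\infty(\tau_t\phi_\sigma)(x_1+s,y_1)(\tau_t\phi_\sigma)(s+x_2,y_2)\,\d s.
\end{equation*}
Integration by parts in $s$, already exploited inside the proof of Proposition \ref{prop2}, yields $(\partial_{x_1}+\partial_{x_2})K_{t,\sigma}(z_1,z_2)=-(\tau_t\phi_\sigma\otimes\tau_t\phi_\sigma)(z_1,z_2)$, which in turn gives $\frac{\d}{\d t}K_{t,\sigma}=-\tau_t\phi_\sigma\otimes\tau_t\phi_\sigma$ as a trace class operator and hence
\begin{equation*}
\frac{\d}{\d t}(I-K_{t,\sigma})^{-1}=-(I-K_{t,\sigma})^{-1}(\tau_t\phi_\sigma\otimes\tau_t\phi_\sigma)(I-K_{t,\sigma})^{-1}.
\end{equation*}
A separate integration by parts in the second spatial variable produces the commutator formula
\begin{equation*}
\bigl((D_1K_{t,\sigma}-K_{t,\sigma}D_1)f\bigr)(z)=-(\tau_t\phi_\sigma)(z)\int_{J_0}(\tau_t\phi_\sigma)(w)f(w)\,\d^2w+\int_{-\infty}^{\infty}K_{t,\sigma}\bigl(z,(0,y)\bigr)f(0,y)\,\d y
\end{equation*}
on sufficiently regular $f$, paralleling the calculation carried out inside the proof of Proposition \ref{prop2}.

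The second equation in \eqref{e20} now follows almost immediately. Setting $G_t^{[n]}:=(I-K_{t,\sigma})^{-1}D_1^n\tau_t\phi_\sigma$, differentiation of $(I-K_{t,\sigma})G_t^{[n]}=D_1^n\tau_t\phi_\sigma$ in $t$, combined with the identity $(\tau_t\phi_\sigma\otimes\tau_t\phi_\sigma)G_t^{[n]}=p_\sigma^{[n]}(t)\tau_t\phi_\sigma$, gives $\partial_tG_t^{[n]}=G_t^{[n+1]}-p_\sigma^{[n]}(t)G_t^{[0]}$, and evaluation at $(x,y)=(0,y)$ yields the equation for $\partial_tq_\sigma^{[n]}$. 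The first equation of \eqref{e20} is the main obstacle. Using self-adjointness of $(I-K_{t,\sigma})^{-1}$ I would rewrite $p_\sigma^{[n]}(t)=\int_{J_0}G_t^{[n]}(z)(\tau_t\phi_\sigma)(z)\,\d^2z$, differentiate in $t$ and substitute for $\partial_tG_t^{[n]}$ to arrive at
\begin{equation*}
\frac{\d}{\d t}p_\sigma^{[n]}(t)=p_\sigma^{[n+1]}(t)-p_\sigma^{[0]}(t)p_\sigma^{[n]}(t)+\int_{J_0}G_t^{[n]}(z)(D_1\tau_t\phi_\sigma)(z)\,\d^2z.
\end{equation*}
To get rid of the last two terms and cancel the $p_\sigma^{[n+1]}$ contribution, I would integrate by parts in $x$ in the remaining integral, picking up the boundary term $-\int q_\sigma^{[n]}(t,y)(\tau_t\phi_\sigma)(0,y)\,\d y$ together with $-\int(D_1G_t^{[n]})\tau_t\phi_\sigma\,\d^2z$, and then expand $D_1G_t^{[n]}$ by commuting $D_1$ past the resolvent via $D_1(I-K_{t,\sigma})^{-1}=(I-K_{t,\sigma})^{-1}D_1+(I-K_{t,\sigma})^{-1}[D_1,K_{t,\sigma}](I-K_{t,\sigma})^{-1}$ with the commutator formula above. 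Self-adjointness converts the boundary piece of the commutator into $\int q_\sigma^{[n]}(t,y)(K_{t,\sigma}G_t^{[0]})(0,y)\,\d y$, and the relation $K_{t,\sigma}G_t^{[0]}=G_t^{[0]}-\tau_t\phi_\sigma$ rewrites this as $\int q_\sigma^{[n]}(q_\sigma^{[0]}-\tau_t\phi_\sigma(0,\cdot))\,\d y$. Collecting everything, the $p_\sigma^{[n+1]}$, the $\tau_t\phi_\sigma(0,\cdot)$ boundary, and the $p_\sigma^{[0]}p_\sigma^{[n]}$ terms all cancel, leaving exactly $-\int q_\sigma^{[n]}q_\sigma^{[0]}\,\d y$.

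For conservation of $I^{[n]}_\sigma$, I would differentiate in $t$ using the two equations of \eqref{e20} just established. The cross terms involving $p_\sigma^{[k]}q_\sigma^{[0]}q_\sigma^{[n-k]}$ arising from $\partial_t\int q_\sigma^{[k]}q_\sigma^{[n-k]}\,\d y$ cancel exactly against the contributions from $\frac{\d}{\d t}(p_\sigma^{[k]}p_\sigma^{[n-k]})$, leaving
\begin{equation*}
\frac{\d}{\d t}I^{[n]}_\sigma=-(1+(-1)^n)\int q_\sigma^{[n+1]}q_\sigma^{[0]}\,\d y+\sum_{k=0}^n(-1)^k\int\bigl(q_\sigma^{[k+1]}q_\sigma^{[n-k]}+q_\sigma^{[k]}q_\sigma^{[n-k+1]}\bigr)\,\d y.
\end{equation*}
A shift of summation index $k\mapsto k-1$ in one of the two inner sums turns the remaining sum into a telescope whose only surviving boundary values at $k=0$ and $k=n+1$ reproduce precisely $(1+(-1)^n)\int q_\sigma^{[n+1]}q_\sigma^{[0]}\,\d y$, yielding $\frac{\d}{\d t}I^{[n]}_\sigma=0$. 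This last step is a purely algebraic verification once the dynamical system \eqref{e20} is in place.
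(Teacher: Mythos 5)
Your proof is correct and follows essentially the same route as the paper: the resolvent differentiation identity, integration by parts in the first variable, the $[D_1,K_{t,\sigma}]$ commutator decomposition into a rank-one piece plus a boundary piece, and finally self-adjointness together with $(I-K_{t,\sigma})^{-1}=I+(I-K_{t,\sigma})^{-1}K_{t,\sigma}$; the only cosmetic difference is that you phrase the $p_\sigma^{[n]}$ computation via inner products with $G_t^{[n]}$ rather than traces of rank-one operators. The paper dismisses the verification that $\frac{\d}{\d t}I^{[n]}_\sigma=0$ as a ``simple application'' of \eqref{e20}; your explicit telescoping calculation supplies those omitted details and is correct.
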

\begin{proof}
Note that $\mathbb{R}\ni t\mapsto p_{\sigma}^{[n]}(t)$ and $\mathbb{R}^2\ni (t,y)\mapsto q_{\sigma}^{[n]}(t,y)$ are well-defined for all $\sigma\geq 0$ since $D_1^n\tau_t\phi_{\sigma}\in L^2(J_0)$ by the properties of the Airy function and the exponential. More is true, viewed as functions of $t\in\mathbb{R}$, they are differentiable: indeed, the second equation in \eqref{e20} is straightforward,
\begin{align*}
	&\hspace{1cm}\frac{\partial}{\partial t}q_{\sigma}^{[n]}(t,y)=\Big(\left\{\frac{\d}{\d t}(I-K_{t,\sigma})^{-1}\right\}D_1^n\tau_t\phi_{\sigma}\Big)(0,y)+\left((I-K_{t,\sigma})^{-1}\frac{\partial}{\partial t}D_1^n\tau_t\phi_{\sigma}\right)(0,y)\\
	=&\,-\big((I-K_{t,\sigma})^{-1}(\tau_t\phi_{\sigma}\otimes\tau_t\phi_{\sigma})(I-K_{t,\sigma})^{-1}D_1^n\tau_t\phi_{\sigma}\big)(0,y)+q_{\sigma}^{[n+1]}(t,y)
	=-q_{\sigma}^{[0]}(t,y)p_{\sigma}^{[n]}(t)+q_{\sigma}^{[n+1]}(t,y),
\end{align*}
where we relied again on $\frac{\d}{\d t}(I-K_{t,\sigma})^{-1}=-(I-K_{t,\sigma})^{-1}(\tau_t\phi_{\sigma}\otimes\tau_t\phi_{\sigma})(I-K_{t,\sigma})^{-1}$. The first equation on the other hand follows from a generalization of the workings in Proposition \ref{prop2} below \eqref{e12}. Indeed,
\begin{align}
	\frac{\d}{\d t}\big((I-K_{t,\sigma})^{-1}D_1^n&\,\tau_t\phi_{\sigma}\otimes\tau_t\phi_{\sigma}\big)=(I-K_{t,\sigma})^{-1}D_1^n\tau_t\phi_{\sigma}\otimes D_1\tau_t\phi_{\sigma}\label{e21}\\
	&\,+\Big[(I-K_{t,\sigma})^{-1}D_1^{n+1}\tau_t\phi-(I-K_{t,\sigma})^{-1}(\tau_t\phi_{\sigma}\otimes\tau_t\phi_{\sigma})(I-K_{t,\sigma})^{-1}D_1^n\tau_t\phi_{\sigma}\Big]\otimes\tau_t\phi_{\sigma},\nonumber
\end{align}
and therefore, based on the analytic and asymptotic properties of the Airy function and exponential while interchanging differentiation and trace evaluation,
\begin{align*}
	\frac{\d}{\d t}p_{\sigma}^{[n]}(t)=&\,\tr_{L^2(J_0)}\left(\frac{\d}{\d t}\big((I-K_{t,\sigma})^{-1}D_1^n\tau_t\phi_{\sigma}\otimes\tau_t\phi_{\sigma}\big)\right)\stackrel{\eqref{e21}}{=}\tr_{L^2(J_0)}\big((I-K_{t,\sigma})^{-1}D_1^n\tau_t\phi_{\sigma}\otimes D_1\tau_t\phi_{\sigma}\big)\\
	&\,+\tr_{L^2(J_0)}\Big(\Big[(I-K_{t,\sigma})^{-1}D_1^{n+1}\tau_t\phi-(I-K_{t,\sigma})^{-1}(\tau_t\phi_{\sigma}\otimes\tau_t\phi_{\sigma})(I-K_{t,\sigma})^{-1}D_1^n\tau_t\phi_{\sigma}\Big]\otimes\tau_t\phi_{\sigma}\Big),
\end{align*}
which generalizes \eqref{e14}. Now use \cite[Theorem $3.9$]{Sim} and integrate by parts in the first summand,
\begin{align*}
	\tr_{L^2(J_0)}\big((I-K_{t,\sigma})^{-1}D_1^n\tau_t\phi_{\sigma}\otimes D_1\tau_t\phi_{\sigma}\big)=-\int_{-\infty}^{\infty}&\,q_{\sigma}^{[n]}(t,y)(\tau_t\phi_{\sigma})(0,y)\,\d y\\
	&\,-\tr_{L^2(J_0)}\big(D_1(I-K_{t,\sigma})^{-1}D_1^n\tau_t\phi_{\sigma}\otimes\tau_t\phi_{\sigma}\big).
\end{align*}
Note that $D_1(I-K_{t,\sigma})^{-1}D_1^n\tau_t\phi_{\sigma}\otimes\tau_t\phi_{\sigma}$ is trace class on $L^2(J_0)$ given $D_1^n\tau_t\phi_{\sigma}\in L^2(J_0)$. Consequently,
\begin{align}
	\frac{\d}{\d t}&\,p_{\sigma}^{[n]}(t)=-\int_{-\infty}^{\infty}q_{\sigma}^{[n]}(t,y)(\tau_t\phi_{\sigma})(0,y)\,\d y-\tr_{L^2(J_0)}\Big(\Big[(I-K_{t,\sigma})^{-1}(\tau_t\phi_{\sigma}\otimes\tau_t\phi_{\sigma})(I-K_{t,\sigma})^{-1}D_1^n\tau_t\phi_{\sigma}\Big]\nonumber\\
	&\,\otimes\tau_t\phi_{\sigma}\Big)-\tr_{L^2(J_0)}\Big(\Big[D_1(I-K_{t,\sigma})^{-1}D_1^n\tau_t\phi_{\sigma}-(I-K_{t,\sigma})^{-1}D_1^{n+1}\tau_t\phi_{\sigma}\Big]-\otimes\tau_t\phi_{\sigma}\Big),\label{e22}
\end{align}
where, by algebra,
\begin{align*}
	D_1(I-K_{t,\sigma})^{-1}D_1^n\tau_t\phi_{\sigma}\,-&\,(I-K_{t,\sigma})^{-1}D_1^{n+1}\tau_t\phi_{\sigma}\\
	&\,=(I-K_{t,\sigma})^{-1}(D_1K_{t,\sigma}-K_{t,\sigma}D_1)(I-K_{t,\sigma})^{-1}D_1^n\tau_t\phi_{\sigma}\in L^2(J_0).
\end{align*}
Simplifying the same combination as in our proof of Proposition \ref{prop2} we then obtain
\begin{align*}
	\tr_{L^2(J_0)}&\,\Big(\Big[D_1(I-K_{t,\sigma})^{-1}D_1^n\tau_t\phi_{\sigma}-(I-K_{t,\sigma})^{-1}D_1^{n+1}\tau_t\phi_{\sigma}\Big]\otimes\tau_t\phi_{\sigma}\Big)\\
	=&\,-\tr_{L^2(J_0)}\Big(\Big[(I-K_{t,\sigma})^{-1}(\tau_t\phi_{\sigma}\otimes\tau_t\phi_{\sigma})(I-K_{t,\sigma})^{-1}D_1^n\tau_t\phi_{\sigma}\Big]\otimes\tau_t\phi_{\sigma}\Big)\\
	&\hspace{0.85cm}+\int_{J_0}\left[\int_{-\infty}^{\infty}\big((I-K_{t,\sigma})^{-1}K_{t,\sigma}\big)\big(z,(0,y)\big)q_{\sigma}^{[n]}(t,y)\,\d y\right](\tau_t\phi_{\sigma})(z)\,\d^2 z,\ \ \ \ \ \ (t,\sigma)\in\mathbb{R}\times[0,\infty),
\end{align*}
and hence all together for \eqref{e22},
\begin{align*}
	\frac{\d}{\d t}p_{\sigma}^{[n]}(t)=&\,-\int_{-\infty}^{\infty}q_{\sigma}^{[n]}(t,y)\left[(\tau_t\phi_{\sigma})(0,y)+\int_{J_0}\big((I-K_{t,\sigma})^{-1}K_{t,\sigma}\big)\big(z,(0,y)\big)(\tau_t\phi_{\sigma})(z)\,\d^2 z\right]\d y\\
	=&\,-\int_{-\infty}^{\infty}q_{\sigma}^{[n]}(t,y)q_{\sigma}^{[0]}(t,y)\,\d y
\end{align*}
by self-adjointness of $K_{t,\sigma}$ and by the identity $(I-K_{t,\sigma})^{-1}=I+(I-K_{t,\sigma})^{-1}K_{t,\sigma}$. This completes our proof of \eqref{e20}. The outstanding check of $\frac{\d}{\d t}I_{\sigma}^{[n]}=0,t\in\mathbb{R}$ is a simple application of \eqref{e20}.
\end{proof}
In order to close up the open system \eqref{e20} we will make use of the Airy differential equation, i.e. unlike the general workings in Proposition \ref{prop2} and \ref{prop3}, compare Remark \ref{rem2}, the next step in our derivation of a dynamical system for $F_{\sigma}(t)$ crucially depends on the specific Airy function choice in the kernel of $K_{t,\sigma}$, in particular Proposition \ref{prop4} after the next Corollary is only valid for our $K_{t,\sigma}$.
\begin{cor}[Constant of motion]\label{cor2} For any $(t,\sigma)\in\mathbb{R}\times[0,\infty)$,
\begin{equation}\label{e23}
	2p_{\sigma}^{[1]}(t)+\int_{-\infty}^{\infty}\big(q_{\sigma}^{[0]}(t,y)\big)^2\d y-\big(p_{\sigma}^{[0]}(t)\big)^2=0.
\end{equation}
\end{cor}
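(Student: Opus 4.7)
The plan is to exploit the conservation law established in Proposition \ref{prop3}: the left-hand side of \eqref{e23} is exactly $I_\sigma^{[0]}(t)$, and Proposition \ref{prop3} asserts $\frac{\d}{\d t}I_\sigma^{[0]}(t)=0$ on $\mathbb{R}\times[0,\infty)$. Hence it suffices to evaluate the conserved quantity at one convenient value of $t$ and show it vanishes there; the natural choice is $t\to+\infty$, for each fixed $\sigma\geq 0$.

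For $t$ large enough (depending on $\sigma$), the kernel bound \eqref{e16a} together with a Neumann series argument, exactly as in the proof of Corollary \ref{cor1}, furnishes a uniform operator bound $\|(I-K_{t,\sigma})^{-1}\|_{\textnormal{op}}\leq 2$ on $L^2(J_0)$. Using the standard rank-one trace identity we then write, for $n\in\{0,1\}$,
\begin{equation*}
	\big|p_\sigma^{[n]}(t)\big|=\big|\langle\tau_t\phi_\sigma,(I-K_{t,\sigma})^{-1}D_1^n\tau_t\phi_\sigma\rangle_{L^2(J_0)}\big|\leq 2\,\|\tau_t\phi_\sigma\|_{L^2(J_0)}\|D_1^n\tau_t\phi_\sigma\|_{L^2(J_0)},
\end{equation*}
and both $L^2$-norms on the right are controlled by integrating $\e^{-y^2}|\textnormal{Ai}^{(n)}(x+\sigma y+t)|^2$ against the standard asymptotics $|\textnormal{Ai}^{(n)}(z)|\leq c\,\e^{-\frac{2}{3}z^{3/2}}$ for $z\geq 0$. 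A routine Laplace-method estimate then produces super-exponential decay like $\e^{-ct^{3/2}}$ as $t\to+\infty$, so $p_\sigma^{[0]}(t),p_\sigma^{[1]}(t)\to 0$. For the remaining term, the argument leading to \eqref{e19} yields $\|q_\sigma^{[0]}(t,\cdot)\|_{L^2(\mathbb{R})}\to 0$ at the same super-exponential rate, so that $\int_{-\infty}^\infty(q_\sigma^{[0]}(t,y))^2\,\d y\to 0$ as well.

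Combining these three limits shows $I_\sigma^{[0]}(t)\to 0$ as $t\to+\infty$, and the conservation law then forces $I_\sigma^{[0]}(t)\equiv 0$ on $\mathbb{R}$, which is precisely \eqref{e23}. I do not anticipate any serious obstacle here: all of the analytic input consists of routine Laplace-method bounds on Airy integrals, and the conceptual heavy lifting — verifying that $I_\sigma^{[0]}$ is a constant of motion — has already been carried out in Proposition \ref{prop3}. The one mild point to watch is that the bound \eqref{e19} is stated only for $0\leq\sigma^2\leq\frac{4}{3}\sqrt{t}$, but this is harmless since for every fixed $\sigma\geq 0$ the constraint is satisfied for all sufficiently large $t$, which is all one needs to take $t\to+\infty$.
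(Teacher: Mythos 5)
Your proposal is correct and follows essentially the same route as the paper: identify the left-hand side of \eqref{e23} with the conserved quantity $I_\sigma^{[0]}(t)$ from Proposition \ref{prop3}, then show all three terms vanish as $t\to+\infty$ for each fixed $\sigma\geq 0$ using the kernel bounds and Laplace-method estimates already assembled in the proof of Corollary \ref{cor1}. The only cosmetic difference is in how you bound $p_\sigma^{[1]}(t)$ — you rewrite the trace as an $L^2(J_0)$ inner product and use the operator norm bound $\|(I-K_{t,\sigma})^{-1}\|\leq 2$ together with Airy decay on $[0,\infty)$, whereas the paper quotes the coarse global bound $|\textnormal{Ai}'(x)|\leq c|x|$ and lets the Gaussian weight absorb it — but both versions yield the required super-exponential decay.
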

\begin{proof} The left hand side in \eqref{e23} equals $I_0$ in the notation of Proposition \ref{prop3} and which is known to be $t$-independent. However, by the workings in Corollary \ref{cor1} with $q_{\sigma}^{[0]}(t,y)=q_{\sigma}(t,y)$,
\begin{equation*}
	\lim_{t\rightarrow+\infty}\int_{-\infty}^{\infty}\big(q_{\sigma}^{[0]}(t,y)\big)^2\d y=0,\ \ \ \ \ \ \lim_{t\rightarrow+\infty}p_{\sigma}^{[0]}(t)=0
\end{equation*}
super exponentially fast for any $\sigma\in[0,\infty)$. Likewise, by the coarse bound $|\textnormal{Ai}'(x)|\leq c|x|$ with $c>0$ for any $x\in\mathbb{R}$ and Laplace's method,
\begin{equation*}
	\lim_{t\rightarrow+\infty}p_{\sigma}^{[1]}(t)=0,
\end{equation*}
also super exponentially fast for any fixed $\sigma\in[0,\infty)$, i.e. the claim \eqref{e23} follows.
\end{proof}
\begin{prop}[Closure relation]\label{prop4} For any $(t,y,\sigma)\in\mathbb{R}^2\times[0,\infty)$,
\begin{equation}\label{e24}
	q_{\sigma}^{[2]}(t,y)=(t+\sigma y)q_{\sigma}^{[0]}(t,y)-q_{\sigma}^{[0]}(t,y)p_{\sigma}^{[1]}(t)+q_{\sigma}^{[1]}(t,y)p_{\sigma}^{[0]}(t).
\end{equation}
\end{prop}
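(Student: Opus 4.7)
The plan is to exploit the Airy equation together with a Christoffel--Darboux-type symmetry of the kernel $K_{t,\sigma}$, and then commute the resulting multiplication operator through the resolvent $(I-K_{t,\sigma})^{-1}$. Introduce the multiplication operator $M$ on $L^2(J_0)$ by $(Mf)(x,y):=(x+\sigma y)f(x,y)$. Since $\mathrm{Ai}''(z)=z\,\mathrm{Ai}(z)$, definition \eqref{e9} gives $D_1^2\tau_t\phi_\sigma=(M+t)\tau_t\phi_\sigma$, so that \eqref{e24} reduces, after evaluation at $(x,y)=(0,y)$, to the operator-level identity
\begin{equation*}
(I-K_{t,\sigma})^{-1}M\tau_t\phi_\sigma = M(I-K_{t,\sigma})^{-1}\tau_t\phi_\sigma - p_\sigma^{[1]}(t)(I-K_{t,\sigma})^{-1}\tau_t\phi_\sigma + p_\sigma^{[0]}(t)(I-K_{t,\sigma})^{-1}D_1\tau_t\phi_\sigma,
\end{equation*}
together with the observation that $M$ acts as multiplication by $\sigma y$ on the fibre $\{x=0\}$.

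The main technical input is a Christoffel--Darboux identity for the composition kernel \eqref{e9a}. Using $\phi_\sigma''(a,b)=(a+\sigma b)\phi_\sigma(a,b)$, a direct computation yields
\begin{align*}
\partial_s\bigl[&\phi_\sigma(x_1{+}s{+}t,y_1)\phi_\sigma'(s{+}x_2{+}t,y_2)-\phi_\sigma'(x_1{+}s{+}t,y_1)\phi_\sigma(s{+}x_2{+}t,y_2)\bigr]\\
&=\bigl[(x_2+\sigma y_2)-(x_1+\sigma y_1)\bigr]\phi_\sigma(x_1+s+t,y_1)\phi_\sigma(s+x_2+t,y_2).
\end{align*}
Integrating over $s\in[0,\infty)$ and invoking Airy decay at infinity (in line with the bounds used throughout Section \ref{sec2}) gives
\begin{equation*}
(x_1+\sigma y_1-x_2-\sigma y_2)K_{t,\sigma}(z_1,z_2) = \tau_t\phi_\sigma(z_1)D_1\tau_t\phi_\sigma(z_2) - D_1\tau_t\phi_\sigma(z_1)\tau_t\phi_\sigma(z_2),
\end{equation*}
equivalently the operator commutator identity
\begin{equation*}
[M,K_{t,\sigma}] = \tau_t\phi_\sigma\otimes D_1\tau_t\phi_\sigma - D_1\tau_t\phi_\sigma\otimes\tau_t\phi_\sigma.
\end{equation*}

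Rewriting this as $(I-K_{t,\sigma})M-M(I-K_{t,\sigma})=[M,K_{t,\sigma}]$ and conjugating by the resolvent yields
\begin{equation*}
(I-K_{t,\sigma})^{-1}M = M(I-K_{t,\sigma})^{-1} - (I-K_{t,\sigma})^{-1}[M,K_{t,\sigma}](I-K_{t,\sigma})^{-1}.
\end{equation*}
Applying both sides to $\tau_t\phi_\sigma$, using self-adjointness of $K_{t,\sigma}$ on $L^2(J_0)$ together with \cite[Theorem $3.9$]{Sim} to recognise
\begin{equation*}
\bigl\langle\tau_t\phi_\sigma,(I-K_{t,\sigma})^{-1}\tau_t\phi_\sigma\bigr\rangle = p_\sigma^{[0]}(t),\qquad \bigl\langle D_1\tau_t\phi_\sigma,(I-K_{t,\sigma})^{-1}\tau_t\phi_\sigma\bigr\rangle = p_\sigma^{[1]}(t),
\end{equation*}
and finally evaluating at $x=0$, so that $M$ collapses to multiplication by $\sigma y$ and the residual resolvent actions become $q_\sigma^{[0]}(t,y)$ and $q_\sigma^{[1]}(t,y)$, produces exactly \eqref{e24}. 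The main obstacle is spotting and rigorously establishing the Christoffel--Darboux identity via the $s$-integration by parts; once it is in hand the passage from the commutator to the closure relation \eqref{e24} is purely algebraic bookkeeping.
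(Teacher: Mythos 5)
Your proof is correct and follows essentially the same route as the paper: express $q_\sigma^{[2]}$ via the Airy ODE as the resolvent acting on a multiplication operator, commute $M$ through $(I-K_{t,\sigma})^{-1}$, and close the commutator with a Christoffel--Darboux factorization. The only cosmetic differences are that you bundle $M_1+\sigma M_2$ into a single $M$ from the outset and rederive the Christoffel--Darboux identity for $[M,K_{t,\sigma}]$ by the $s$-integration-by-parts, whereas the paper quotes the classical Airy-kernel Christoffel--Darboux formula directly (which is itself established by exactly that computation).
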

\begin{proof} Begin by using the Airy differential equation $w''=zw$ for $w=\textnormal{Ai}(z),z\in\mathbb{C}$ and derive
\begin{align}
	q_{\sigma}^{[2]}(t,y)=&\,\int_{J_0}(I-K_{t,\sigma})^{-1}\big((0,y),z_2\big)(D_1^2\tau_t\phi_{\sigma})(z_2)\,\d^2 z_2\nonumber\\
	=&\,\int_0^{\infty}\int_{-\infty}^{\infty}(I-K_{t,\sigma})^{-1}\big((0,y),(x_2,y_2)\big)(x_2+\sigma y_2+t)(\tau_t\phi_{\sigma})(x_2,y_2)\,\d y_2\,\d x_2\nonumber\\
	=&\,\,tq_{\sigma}^{[0]}(t,y)+\big((I-K_{t,\sigma})^{-1}M_1\tau_t\phi_{\sigma}\big)(0,y)+\sigma\big((I-K_{t,\sigma})^{-1}M_2\tau_t\phi_{\sigma}\big)(0,y)\label{e25}
\end{align}
with the partial multiplications $(M_1f)(x,y):=xf(x,y)$ and $(M_2f)(x,y):=yf(x,y)$. Note that $M_j\tau_t\phi_{\sigma}\in L^2(J_0)$ by the properties of the Airy function and exponential, moreover 
\begin{equation}\label{e26}
	(I-K_{t,\sigma})^{-1}M_k\tau_t\phi_{\sigma}=\big[(I-K_{t,\sigma})^{-1},M_k\big]\tau_t\phi_{\sigma}+M_k(I-K_{t,\sigma})^{-1}\tau_t\phi_{\sigma},\ \ \ \ (t,\sigma)\in\mathbb{R}\times[0,\infty),
\end{equation}
in terms of the operator commutator $[A,B]:=AB-BA$. Moving ahead, we find
\begin{equation*}
	\big((I-K_{t,\sigma})^{-1}M_1\tau_t\phi_{\sigma}\big)(0,y)\stackrel{\eqref{e26}}{=}\Big(\big[(I-K_{t,\sigma})^{-1},M_1\big]\tau_t\phi_{\sigma}\Big)(0,y)
\end{equation*}
and likewise
\begin{equation*}
	\big((I-K_{t,\sigma})^{-1}M_2\tau_t\phi_{\sigma}\big)(0,y)\stackrel{\eqref{e26}}{=}\Big(\big[(I-K_{t,\sigma})^{-1},M_2\big]\tau_t\phi_{\sigma}\Big)(0,y)+yq_{\sigma}^{[0]}(t,y).
\end{equation*}
Hence back in \eqref{e25},
\begin{equation}\label{e27}
	q_{\sigma}^{[2]}(t,y)=(t+\sigma y)q_{\sigma}^{[0]}(t,y)+\Big(\big[(I-K_{t,\sigma})^{-1},M_1+\sigma M_2\big]\tau_t\phi_{\sigma}\Big)(0,y),
\end{equation}
so it remains to simplify the operator commutator. This is straightforward given the well-known Christoffel-Darboux property
\begin{equation*}
	K_{\textnormal{Ai}}(a,b)=\int_0^{\infty}\textnormal{Ai}(a+s)\textnormal{Ai}(s+b)\,\d s=\frac{\textnormal{Ai}(a)\textnormal{Ai}'(b)-\textnormal{Ai}'(a)\textnormal{Ai}(b)}{a-b},
\end{equation*}
and the particular shape of $K_{t,\sigma}(z_1,z_2)$, compare Proposition \ref{prop1}. Precisely,
\begin{align*}
	\Big(\big[&\,(I-K_{t,\sigma})^{-1},M_1+\sigma M_2\big]\tau_t\phi_{\sigma}\Big)(0,y)=\Big((I-K_{t,\sigma})^{-1}\big[K_{t,\sigma},M_1+\sigma M_2\big](I-K_{t,\sigma})^{-1}\tau_t\phi_{\sigma}\Big)(0,y)\\
	&\,=-\int_{J_0}(I-K_{t,\sigma})^{-1}\big((0,y),z_1\big)\Big((\tau_t\phi_{\sigma})(z_1)(D_1\tau_t\phi_{\sigma})(z_2)-(D_1\tau_t\phi_{\sigma})(z_1)(\tau_t\phi_{\sigma})(z_2)\Big)\\
	&\hspace{1cm}\times\big((I-K_{t,\sigma})^{-1}\tau_t\phi_{\sigma}\big)(z_2)\,\d^2z_2\,\d^2z_1=-q_{\sigma}^{[0]}(t,y)p_{\sigma}^{[1]}(t)+q_{\sigma}^{[1]}(t,y)p_{\sigma}^{[0]}(t),\ \ \ (t,y,\sigma)\in\mathbb{R}^2\times[0,\infty),
\end{align*}
and so with \eqref{e27}, precisely \eqref{e24}. The proof is complete.
\end{proof}
We are now one step closer to Theorem \ref{fimain1}, namely we are ready to derive the following auxiliary result. This results constitutes an alternative integro-differential connection for $F_{\sigma}(t)$.
\begin{prop}\label{theo1} For any $(t,\sigma)\in\mathbb{R}\times[0,\infty)$,
\begin{equation}\label{e28}
	F_{\sigma}(t)=\exp\left[-\int_t^{\infty}(s-t)\left\{\int_{-\infty}^{\infty}\big(q_{\sigma}(s,y)\big)^2\,\d y\right\}\d s\right],
\end{equation}
where $q_{\sigma}(t,y)=((I-K_{t,\sigma})^{-1}\tau_t\phi_{\sigma})(0,y)$ solves the integro-differential equation
\begin{equation}\label{e29}
	\frac{\partial^2}{\partial t^2}q_{\sigma}(t,y)=\left[t+\sigma y+2\int_{-\infty}^{\infty}\big(q_{\sigma}(t,\lambda)\big)^2\,\d\lambda\right]q_{\sigma}(t,y),\ \ \ \ (t,y,\sigma)\in\mathbb{R}^2\times[0,\infty),
\end{equation}
subject to the boundary condition
\begin{equation}\label{e30}
	q_{\sigma}(t,y)\sim(\tau_t\phi_{\sigma})(0,y)=\frac{1}{\pi^{\frac{1}{4}}}\e^{-\frac{1}{2}y^2}\textnormal{Ai}(t+\sigma y),
\end{equation}
valid as $t\rightarrow+\infty$, pointwise in $(y,\sigma)\in\mathbb{R}\times[0,\infty)$.
\end{prop}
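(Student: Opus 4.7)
The representation \eqref{e28} is already in hand: it is precisely the content of Corollary \ref{cor1}, so the work left to do concerns the integro-differential equation \eqref{e29} and the boundary condition \eqref{e30}. My plan is to combine the three ingredients assembled above — the first-order system of Proposition \ref{prop3}, the closure relation of Proposition \ref{prop4}, and the conserved quantity of Corollary \ref{cor2} — into a single closed PDE for $q_\sigma = q_\sigma^{[0]}$.

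Concretely, differentiating $q_\sigma^{[0]}(t,y)$ once in $t$ via the second half of \eqref{e20} yields $\partial_t q_\sigma^{[0]}=q_\sigma^{[1]}-q_\sigma^{[0]}p_\sigma^{[0]}$. Differentiating a second time, and using \eqref{e20} both to rewrite $\partial_t q_\sigma^{[1]}$ as $q_\sigma^{[2]}-q_\sigma^{[0]}p_\sigma^{[1]}$ and to rewrite $\partial_t p_\sigma^{[0]}$ as $-\int_{-\infty}^\infty (q_\sigma^{[0]})^2\,\d y$, I obtain
\begin{equation*}
\partial_t^2 q_\sigma^{[0]}=q_\sigma^{[2]}-2q_\sigma^{[0]}p_\sigma^{[1]}-q_\sigma^{[1]}p_\sigma^{[0]}+q_\sigma^{[0]}(p_\sigma^{[0]})^2+q_\sigma^{[0]}\!\int_{-\infty}^{\infty}\!(q_\sigma^{[0]})^2\,\d y+q_\sigma^{[1]}p_\sigma^{[0]},
\end{equation*}
where I have also substituted $\partial_t q_\sigma^{[0]}$ back in. The mixed term $q_\sigma^{[1]}p_\sigma^{[0]}$ cancels, and then substituting the closure relation \eqref{e24} to eliminate $q_\sigma^{[2]}$ leaves a further cancellation of the remaining $q_\sigma^{[1]}p_\sigma^{[0]}$ term. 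What survives is an expression depending only on $q_\sigma^{[0]}$, $p_\sigma^{[0]}$ and $p_\sigma^{[1]}$, namely $\partial_t^2 q_\sigma^{[0]}=\bigl[t+\sigma y-2p_\sigma^{[1]}+(p_\sigma^{[0]})^2+\int(q_\sigma^{[0]})^2\,\d y\bigr]q_\sigma^{[0]}$. Invoking the conservation law \eqref{e23} in the form $(p_\sigma^{[0]})^2-2p_\sigma^{[1]}=\int_{-\infty}^{\infty}(q_\sigma^{[0]})^2\,\d y$ collapses the bracket to $t+\sigma y+2\int_{-\infty}^{\infty}(q_\sigma^{[0]})^2\,\d y$, which is exactly \eqref{e29}.

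For the boundary condition \eqref{e30}, I would revisit the Neumann-series estimate \eqref{e18} obtained in the proof of Corollary \ref{cor1}. There, the resolvent remainder $((I-K_{t,\sigma})^{-1}K_{t,\sigma}\tau_t\phi_\sigma)(0,y)$ was shown to be of order $t^{-3/2}\exp[-\tfrac{4}{3}t^{3/2}+\tfrac{1}{4}t\sigma^2]$, while the leading term $(\tau_t\phi_\sigma)(0,y)=\pi^{-1/4}\e^{-y^2/2}\mathrm{Ai}(t+\sigma y)$ decays only like $\exp[-\tfrac{2}{3}t^{3/2}]$ by the standard Airy asymptotics \cite[\S 9.7]{NIST}. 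Thus pointwise in $(y,\sigma)\in\mathbb{R}\times[0,\infty)$, the remainder is exponentially smaller than the leading term as $t\to+\infty$, which gives the asymptotic equivalence claimed in \eqref{e30}.

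The algebra in the first paragraph is the heart of the proof and is straightforward once Propositions \ref{prop3}, \ref{prop4} and Corollary \ref{cor2} are in place; the only mild subtlety is keeping track of the cancellations between $q_\sigma^{[1]}p_\sigma^{[0]}$ terms, which is where the precise signs in \eqref{e20} and \eqref{e24} matter. The boundary-condition step is essentially a bookkeeping exercise on estimates already derived in Corollary \ref{cor1}. I do not anticipate any genuine obstacle here, since all the preparatory trace identities, the closure relation, and the conservation law have been established earlier in the section; this proposition is the clean synthesis of those three inputs.
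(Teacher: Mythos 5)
Your argument reproduces the paper's proof exactly: \eqref{e28} and \eqref{e30} are read off from the workings of Corollary \ref{cor1}, and \eqref{e29} follows by differentiating $q_\sigma^{[0]}$ twice via \eqref{e20}, eliminating $q_\sigma^{[2]}$ with the closure relation \eqref{e24}, and collapsing the result with the conserved quantity \eqref{e23}. The only blemish is an arithmetic slip in your intermediate display — the correct expansion before any cancellation is $\partial_t^2 q_\sigma^{[0]}=q_\sigma^{[2]}-q_\sigma^{[0]}p_\sigma^{[1]}-q_\sigma^{[1]}p_\sigma^{[0]}+q_\sigma^{[0]}(p_\sigma^{[0]})^2+q_\sigma^{[0]}\int_{-\infty}^{\infty}(q_\sigma^{[0]})^2\,\d\lambda$ (coefficient $1$ on $q_\sigma^{[0]}p_\sigma^{[1]}$, and no extra $+q_\sigma^{[1]}p_\sigma^{[0]}$); the $q_\sigma^{[1]}p_\sigma^{[0]}$ cancellation happens only once, upon substituting \eqref{e24} — but your next line $\partial_t^2 q_\sigma^{[0]}=\bigl[t+\sigma y-2p_\sigma^{[1]}+(p_\sigma^{[0]})^2+\int(q_\sigma^{[0]})^2\,\d\lambda\bigr]q_\sigma^{[0]}$ is correct, so the conclusion stands.
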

\begin{proof} We already know \eqref{e28} and \eqref{e30} from the workings in Corollary \ref{cor1}, so it remains to show that $q_{\sigma}(t,y)$ solves \eqref{e29}. To this end $t$-differentiate $q_{\sigma}(t,y)$ twice via \eqref{e20}, then use \eqref{e23} and \eqref{e24},
\begin{align*}
	\frac{\partial^2}{\partial t^2}q_{\sigma}(t,y)&\,\,\equiv\frac{\partial^2}{\partial t^2}q_{\sigma}^{[0]}(t,y)\stackrel{\eqref{e20}}{=}\frac{\partial}{\partial t}\left[q_{\sigma}^{[1]}(t,y)-q_{\sigma}^{[0]}(t,y)p_{\sigma}^{[0]}(t)\right]\stackrel{\eqref{e20}}{=}q_{\sigma}^{[2]}(t,y)-q_{\sigma}^{[0]}(t,y)p_{\sigma}^{[1]}(t)\\
	&\hspace{1.5cm}-\big(q_{\sigma}^{[1]}(t,y)-q_{\sigma}^{[0]}(t,y)p_{\sigma}^{[0]}(t)\big)p_{\sigma}^{[0]}(t)+q_{\sigma}^{[0]}(t,y)\int_{-\infty}^{\infty}\big(q_{\sigma}^{[0]}(t,\lambda)\big)^2\,\d\lambda\\
	&\,\stackrel{\eqref{e24}}{=}(t+\sigma y)q_{\sigma}^{[0]}(t,y)-2q_{\sigma}^{[0]}(t,y)p_{\sigma}^{[1]}(t)+q_{\sigma}^{[0]}(t,y)\big(p_{\sigma}^{[0]}(t)\big)^2+q_{\sigma}^{[0]}(t,y)\int_{-\infty}^{\infty}\big(q_{\sigma}^{[0]}(t,\lambda)\big)^2\d\lambda\\
	&\,\stackrel{\eqref{e23}}{=}(t+\sigma y)q_{\sigma}(t,y)+2q_{\sigma}(t,y)\int_{-\infty}^{\infty}q_{\sigma}^2(t,\lambda)\,\d\lambda.
\end{align*}
The proof of Theorem \ref{theo1} is complete.
\end{proof}
Evidently, equation \eqref{e29} is similar to the class of integro-differential Painlev\'e-II equations studied in \cite[Proposition $5.2$]{ACQ}, although the origin in loc. cit. are Fredholm determinants corresponding to integral operators acting on $L^2(J\subset\mathbb{R})$ and not on $L^2(J_0\subset\mathbb{R}^2)$. We will now establish the precise relation between the Amir-Corwin-Quastel framework and our result for $F_{\sigma}(t)$ in Proposition \ref{theo1}. Once done, we will have completed our first proof of Theorem \ref{fimain1}.
\subsection{Simplifying the result}
We set out to simplify \eqref{e29} and \eqref{e30} and convert the same to \eqref{fi10},\eqref{fi11} and \eqref{fi11a}.
\begin{lem}\label{lem2} Let $(t,\sigma)\in\mathbb{R}\times[0,\infty)$ and
\begin{equation*}
	N_{t,\sigma}(a,b):=\frac{1}{\sqrt{\pi}}\int_0^{\infty}\left[\int_{-\infty}^{\infty}\e^{-y^2}\textnormal{Ai}(a+\sigma y+t+s)\textnormal{Ai}(s+b+\sigma y+t)\,\d y\right]\d s,\ \ \ a,b\in\mathbb{R}.
\end{equation*}
Then the self-adjoint integral operator $N_{t,\sigma}:L^2(0,\infty)\rightarrow L^2(0,\infty)$ given by
\begin{equation*}
	(N_{t,\sigma}f)(a):=\int_0^{\infty}N_{t,\sigma}(a,b)f(b)\,\d b,
\end{equation*}
is trace class for all $(t,\sigma)\in\mathbb{R}\times[0,\infty)$. Moreover, $0\leq N_{t,\sigma}\leq 1,\|N_{t,\sigma}\|\leq 1$ in operator norm on $L^2(0,\infty)$ and $I-N_{t,\sigma}$ is invertible on $L^2(0,\infty)$ for all $(t,\sigma)\in\mathbb{R}\times[0,\infty)$.
\end{lem}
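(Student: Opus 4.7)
The plan is to realize $N_{t,\sigma}$ as the ``other half'' of the polar decomposition pair with the operator $K_{t,\sigma}$ analyzed in Proposition \ref{prop1}. Concretely, I would introduce $A:L^2(0,\infty)\to L^2(J_0)$ by
\[
(Af)(s,y):=\int_0^{\infty}\phi_{\sigma}(s+a+t,y)f(a)\,\d a,\qquad (s,y)\in J_0,
\]
with $\phi_{\sigma}$ as in \eqref{e9}. The Gaussian factor in $y$ together with the standard Airy bound $|\textnormal{Ai}(x)|\leq c(\chi_{(-\infty,1]}(x)+\e^{-\frac{1}{2}x}\chi_{(1,\infty)}(x))$ invoked in Proposition \ref{prop1} easily shows $A$ is Hilbert--Schmidt, and a Fubini interchange then delivers the two factorizations
\[
A^{\ast}A=N_{t,\sigma}\ \textnormal{on}\ L^2(0,\infty),\qquad AA^{\ast}=K_{t,\sigma}\ \textnormal{on}\ L^2(J_0),
\]
where $K_{t,\sigma}$ has kernel \eqref{e9a}.

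Once these factorizations are in place, Proposition \ref{prop1} applies to $K_{t,\sigma}$ directly, since its kernel \eqref{e9a} and the kernel \eqref{e3a} of $K_{\sigma}$ are related by the unitary translation $(Uf)(x,y):=f(x-t,y)$ between $L^2(J_0)$ and $L^2(J_t)$. This furnishes: $K_{t,\sigma}$ trace class on $L^2(J_0)$, $0\leq K_{t,\sigma}\leq 1$, and $I-K_{t,\sigma}$ invertible. The standard $A^{\ast}A$ versus $AA^{\ast}$ spectral correspondence then transfers these properties to $N_{t,\sigma}$: self-adjointness and $N_{t,\sigma}\geq 0$ are immediate from $N_{t,\sigma}=A^{\ast}A$, while equality of the non-zero spectra with multiplicities yields both $\|N_{t,\sigma}\|=\|K_{t,\sigma}\|\leq 1$ (hence $N_{t,\sigma}\leq 1$) and $\tr_{L^2(0,\infty)}N_{t,\sigma}=\tr_{L^2(J_0)}K_{t,\sigma}<\infty$, which together with positivity secures trace-class membership of $N_{t,\sigma}$.

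For invertibility of $I-N_{t,\sigma}$, I would argue by contradiction: if $A^{\ast}Af=f$ for some nonzero $f\in L^2(0,\infty)$, then $\|Af\|^2=\langle A^{\ast}Af,f\rangle=\|f\|^2\neq 0$ and $AA^{\ast}(Af)=A(A^{\ast}Af)=Af$, producing a nonzero eigenvector $Af$ of $K_{t,\sigma}$ at eigenvalue $1$, in contradiction with the invertibility of $I-K_{t,\sigma}$ already supplied by Proposition \ref{prop1}. The Fredholm alternative applied to the compact self-adjoint operator $N_{t,\sigma}$ then closes the argument.

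The only mildly delicate step I anticipate is the Fubini justification for both $A^{\ast}A=N_{t,\sigma}$ and $AA^{\ast}=K_{t,\sigma}$; the pointwise Airy bound already exploited in Proposition \ref{prop1}, combined with the Gaussian factor in $y$, is more than adequate for a dominated-convergence/Fubini application. All remaining steps are abstract operator-theoretic manipulations requiring no additional spectral input beyond what Proposition \ref{prop1} provides.
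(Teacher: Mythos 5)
Your argument is correct and it does establish every claim of the lemma, but it is not the route the paper takes here. The paper's own proof first uses the Airy equation and the Christoffel--Darboux formula to rewrite the composition kernel as the finite-temperature Airy kernel, identity \eqref{e31}, and then cites \cite[Lemma $9.5$, Proposition $9.6$]{Bo0} for trace-class membership, $0\le N_{t,\sigma}\le 1$ and invertibility of $I-N_{t,\sigma}$. That is, the paper's Lemma \ref{lem2} is essentially a citation; the factorization you propose is nowhere in its proof.

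Your approach, however, is precisely the factorization the paper deploys later, in its second proof of Theorem \ref{fimain1} (Proposition \ref{Sylvprop}): with $P_{t,\sigma}^{\ast}=\epsilon^{\ast}S_{\sigma}H^{\ast}Q_t$, one has $K_{t,\sigma}=P_{t,\sigma}^{\ast}P_{t,\sigma}$ on $L^2(J_0)$ and $N_{t,\sigma}=P_{t,\sigma}P_{t,\sigma}^{\ast}$ on $L^2(0,\infty)$; your operator $A$ is exactly this $P_{t,\sigma}^{\ast}$, and your verification that $A^{\ast}A$ and $AA^{\ast}$ have the stated kernels, that $A$ is Hilbert--Schmidt, and that the nonzero spectra of $A^{\ast}A$ and $AA^{\ast}$ agree with multiplicities (hence matching operator norms, traces, positivity, and the transfer of invertibility of $I-K_{t,\sigma}$ from Proposition \ref{prop1} via the unitary shift between $L^2(J_t)$ and $L^2(J_0)$) is all sound. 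The trade-off is that your self-contained argument does not produce the identity \eqref{e31}, which the paper needs later to import the result \cite[Proposition $5.2$]{ACQ} (see the passage leading to \eqref{e32}--\eqref{e33}); conversely, the paper's proof outsources the operator-theoretic bookkeeping that you carry out explicitly. One could fairly say your proof of Lemma \ref{lem2} anticipates Proposition \ref{Sylvprop} and makes the lemma independent of the external reference, at the cost of having to establish \eqref{e31} separately wherever it is used.
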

\begin{proof} By the Airy differential equation $w''=zw$ for $w=\textnormal{Ai}(z),z\in\mathbb{C}$, integration by parts yields
\begin{align*}
	(a-b)N_{t,\sigma}(a,b)=&\,\frac{1}{\sqrt{\pi}}\int_{-\infty}^{\infty}\e^{-y^2}\big(\textnormal{Ai}(a+\sigma y+t)\textnormal{Ai}'(b+\sigma y+t)-\textnormal{Ai}'(a+\sigma y+t)\textnormal{Ai}(b+\sigma y+t)\big)\,\d y\\
	=&\,(a-b)\int_{-\infty}^{\infty}\Phi\left(\frac{y}{\sigma}\right)\textnormal{Ai}(a+y+t)\textnormal{Ai}(y+b+t)\,\d y,\ \ \ \ \ \Phi(y):=\frac{1}{\sqrt{\pi}}\int_{-\infty}^y\e^{-x^2}\,\d x
\end{align*}
and therefore the identity, valid for all $(t,\sigma)\in\mathbb{R}\times(0,\infty)$,
\begin{equation}\label{e31}
	N_{t,\sigma}(a,b)=\int_{-\infty}^{\infty}\Phi\left(\frac{y}{\sigma}\right)\textnormal{Ai}(a+y+t)\textnormal{Ai}(y+b+t)\,\d y,\ \ \ \ a,b\in\mathbb{R}.
\end{equation}
Consequently, $N_{t,\sigma}$ is an integral operator on $L^2(0,\infty)$ of the form studied in \cite[Section $9$]{Bo0}, so in particular trace class by \cite[Lemma $9.5$]{Bo0} for all $(t,\sigma)\in\mathbb{R}\times[0,\infty)$ (note that $N_{t,0}(a,b)=K_{\textnormal{Ai}}(a+t,b+t)$). Moreover $0\leq N_{t,\sigma}\leq 1$ and the outstanding claim about the invertibility of $I-N_{t,\sigma}$ were proven in \cite[Proposition $9.6$]{Bo0}. This completes our proof of Lemma \ref{lem2}.
\end{proof}
Note that \eqref{e31} is of the form \cite[$(5.1)$]{ACQ}, so we have by \cite[Proposition $5.2$]{ACQ} for the Fredholm determinant of $N_{t,\sigma}$ on $L^2(0,\infty)$,
\begin{equation}\label{e32}
	\det\big(I-N_{t,\sigma}\upharpoonright_{L^2(0,\infty)}\big)=\exp\left[-\int_t^{\infty}(s-t)\left\{\int_{-\infty}^{\infty}\big(p_{\sigma}(s,\lambda)\big)^2\d\nu_{\sigma}(\lambda)\right\}\d s\right],
\end{equation}
with
\begin{equation*}
	\d\nu_{\sigma}(\lambda):=\frac{1}{\sigma\sqrt{\pi}}\e^{-\lambda^2/\sigma^2}\d\lambda
\end{equation*}
and where $p_{\sigma}(t,\lambda)$ solves the integro-differential Painlev\'e-II equation
\begin{equation}\label{e33}
	\begin{cases}
	\displaystyle\frac{\partial^2}{\partial t^2}p_{\sigma}(t,y)=\left[t+y+2\int_{-\infty}^{\infty}\big(p_{\sigma}(t,\lambda)\big)^2\d\nu_{\sigma}(\lambda)\right]p_{\sigma}(t,y)&\bigskip\\	\displaystyle \hspace{0.6cm}p_{\sigma}(t,y)\sim\textnormal{Ai}(t+y),\ \ \ t\rightarrow+\infty,\ \ (y,\sigma)\in\mathbb{R}\times[0,\infty)&\\
	\end{cases}.
\end{equation}
We will now show that there is a direct correspondence between our \eqref{e28},\eqref{e29},\eqref{e30} and \eqref{e32},\eqref{e33}, without using the results in \cite{ACQ}. This will in turn conclude our first proof of Theorem \ref{fimain1}.
\begin{prop}\label{prop5} Define the one-variable function 
\begin{equation*}
	\mathbb{R}\ni x\mapsto\phi_{t,\sigma}^y(x):=\frac{1}{\pi^{\frac{1}{4}}}\e^{-\frac{1}{2}y^2}\textnormal{Ai}(x+\sigma y+t),\ \ \ (t,y,\sigma)\in\mathbb{R}^2\times[0,\infty).
\end{equation*}
Then for all $n\in\mathbb{Z}_{\geq 0}$ and any $(t,y,\sigma)\in\mathbb{R}^2\times[0,\infty)$, with $K_{t,\sigma}$ as in \eqref{e9a} and $N_{t,\sigma}$ as in \eqref{e31},
\begin{equation}\label{e34}
	\big(K_{t,\sigma}^n\tau_t\phi_{\sigma}\big)(0,y)=\big(N_{t,\sigma}^n\phi_{t,\sigma}^y\big)(0),
\end{equation}
where $\tau_t:L^2(\mathbb{R}^2)\rightarrow L^2(\mathbb{R}^2)$ denotes again the shift in the first variable, i.e. $(\tau_tf)(x,y)=f(x+t,y)$ .
\end{prop}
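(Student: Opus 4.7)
The plan is to exploit an operator factorization that identifies $K_{t,\sigma}$ and $N_{t,\sigma}$ as the two products $A^*A$ and $AA^*$ of a common Hilbert-Schmidt operator $A$; once this is in place, \eqref{e34} becomes a short algebraic manipulation that bypasses the Hankel composition machinery altogether. I introduce $A:L^2(J_0)\to L^2(0,\infty)$ by $(Af)(s):=\int_{J_0}\phi_\sigma(x+s+t,y)f(x,y)\,\d x\,\d y$, whose adjoint is $(A^*g)(x,y)=\int_0^\infty\phi_\sigma(x+s+t,y)g(s)\,\d s$. The Hankel composition structure of $K_{t,\sigma}$ displayed just below \eqref{e9} immediately gives $A^*A=K_{t,\sigma}$ on $L^2(J_0)$; a direct Fubini calculation then shows
\[
(AA^*)(s_1,s_2)=\int_{J_0}\phi_\sigma(x+s_1+t,y)\phi_\sigma(x+s_2+t,y)\,\d x\,\d y=N_{t,\sigma}(s_1,s_2),
\]
matching the defining integral representation of $N_{t,\sigma}$ in Lemma \ref{lem2} after renaming the integration variable. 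Trace-class-ness of $K_{t,\sigma}$ (Proposition \ref{prop1}) and $N_{t,\sigma}$ (Lemma \ref{lem2}) guarantees that $A$ is Hilbert-Schmidt, so every operator composition below is unambiguously defined.

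For $n=0$, identity \eqref{e34} reduces to the tautology $(\tau_t\phi_\sigma)(0,y)=\phi_{t,\sigma}^y(0)$. For $n\geq 1$, the factorization yields $K_{t,\sigma}^n=A^*N_{t,\sigma}^{n-1}A$, and since
\[
(A^*g)(0,y)=\int_0^\infty\phi_\sigma(s+t,y)g(s)\,\d s=\langle\phi_{t,\sigma}^y,g\rangle_{L^2(0,\infty)},
\]
I obtain
\[
(K_{t,\sigma}^n\tau_t\phi_\sigma)(0,y)=\langle\phi_{t,\sigma}^y,N_{t,\sigma}^{n-1}A\tau_t\phi_\sigma\rangle_{L^2(0,\infty)}=\langle N_{t,\sigma}^{n-1}\phi_{t,\sigma}^y,A\tau_t\phi_\sigma\rangle_{L^2(0,\infty)},
\]
where the second equality uses self-adjointness of $N_{t,\sigma}$. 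A separate Fubini step then identifies the $L^2(0,\infty)$ function $A\tau_t\phi_\sigma$ with the pointwise evaluation $s\mapsto N_{t,\sigma}(s,0)$ of the continuous kernel, since $(A\tau_t\phi_\sigma)(s)=\int_{J_0}\phi_\sigma(x+s+t,y)\phi_\sigma(x+t,y)\,\d x\,\d y$ is exactly $(AA^*)(s,0)$.

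Plugging this in and using the symmetry $N_{t,\sigma}(s,0)=N_{t,\sigma}(0,s)$, the last pairing telescopes to $\int_0^\infty N_{t,\sigma}(0,s)(N_{t,\sigma}^{n-1}\phi_{t,\sigma}^y)(s)\,\d s=(N_{t,\sigma}^n\phi_{t,\sigma}^y)(0)$, completing the argument. The only delicate point is the legitimacy of the pointwise evaluations at the boundaries $x=0$ and $s_2=0$; this is routine here thanks to the joint continuity of $K_{t,\sigma}(z_1,z_2)$ and $N_{t,\sigma}(a,b)$ inherited from the analyticity of the Airy function and the super-exponential decay of the Gaussian weight $\e^{-y^2/2}$, as already exploited in Lemma \ref{lem0}, Proposition \ref{prop1}, and Lemma \ref{lem2}. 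Accordingly I do not anticipate a genuine technical obstacle: the real creative step is simply spotting the factorization $K_{t,\sigma}=A^*A$, $N_{t,\sigma}=AA^*$, after which the proof reduces to a few lines of Hilbert-space bookkeeping.
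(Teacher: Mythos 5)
Your proof is correct, and it takes a genuinely different route from the paper's own proof of this Proposition. The paper proves \eqref{e34} by a direct kernel-level computation: it repeatedly peels off one application of $K_{t,\sigma}$ using the Hankel composition formula $K_{t,\sigma}(z_1,z_2)=\int_0^\infty(\tau_t\phi_\sigma)(x_1+s,y_1)(\tau_t\phi_\sigma)(s+x_2,y_2)\,\d s$, Fubini's theorem, and the observation that $\int_0^\infty(\tau_t\phi_\sigma)(\lambda,\mu)(\tau_t\phi_\sigma)(\lambda+s,\mu)\,\d\lambda\,\d\mu=N_{t,\sigma}(0,s)$, iterating $n$ times to collapse the $J_0^n$-integral onto an $(0,\infty)^n$-integral of $N_{t,\sigma}$-kernels. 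Your argument, by contrast, packages this into the operator factorization $K_{t,\sigma}=A^*A$ and $N_{t,\sigma}=AA^*$, so that $K_{t,\sigma}^n=A^*N_{t,\sigma}^{n-1}A$ reduces \eqref{e34} to a few lines of Hilbert-space algebra plus the observation that $A\tau_t\phi_\sigma$ is precisely the section $s\mapsto N_{t,\sigma}(s,0)$. Interestingly, the very factorization you spotted \emph{is} used in the paper, but only later, in Section~\ref{sec3}, where it appears (with $A$ relabelled $P_{t,\sigma}$) as the basis for the Sylvester-identity proof of the trace equality \eqref{e38}; the authors deliberately keep Section~\ref{sec2} factorization-free so that the first proof of Theorem~\ref{fimain1} is constructive via Hankel composition. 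Your route is shorter and slicker than the paper's proof of this particular Proposition, at the cost of being less self-contained within Section~\ref{sec2}; each approach buys something. The only caveats --- that pointwise evaluation of $A^*g$ at $(0,y)$ and of $N^{n-1}_{t,\sigma}\phi^y_{t,\sigma}$ at $0$ must be justified via continuity and decay of the kernels --- are ones you flag and correctly dispatch.
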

\begin{proof} There is nothing to prove when $n=0$ given that 
\begin{equation*}
	(\tau_t\phi_{\sigma})(0,y)\stackrel{\eqref{e9}}{=}\frac{1}{\pi^{\frac{1}{4}}}\e^{-\frac{1}{2}y^2}\textnormal{Ai}(\sigma y+t)=\phi_{t,\sigma}^y(0),
\end{equation*}
so we continue with general $n\in\mathbb{Z}_{\geq 1}$. By definition, with $n\in\mathbb{Z}_{\geq 0}$ arbitrary,
\begin{align}\label{e35}
	(K_{t,\sigma}^{n+1}&\,\tau_t\phi_{\sigma})(0,y)=\int_{J_0}K_{t,\sigma}\big((0,y),z_1\big)(K_{t,\sigma}^n\tau_t\phi_{\sigma})(z_1)\,\d^2 z_1\nonumber\\
	=&\,\int_{J_0}K_{t,\sigma}\big((0,y),z_1\big)\int_{J_0}\left[\int_{J_0^{n-2}}\prod_{j=1}^{n-1}K_{t,\sigma}(z_j,z_{j+1})\,\d^2z_2\cdots\d^2z_{n-1}\right](K_{t,\sigma}\tau_t\phi_{\sigma})(z_n)\,\d^2z_n\,\d^2 z_1,
\end{align}
where, with $K_{t,\sigma}(z_1,z_2)=\int_0^{\infty}(\tau_t\phi_{\sigma})(x_1+s,y_1)(\tau_t\phi_{\sigma})(s+x_2,y_2)\,\d s$ and Fubini's theorem,
\begin{align*}
	(K_{t,\sigma}\tau_t\phi_{\sigma})(z_n)=&\,\int_{J_0}K_{t,\sigma}(z_n,w)(\tau_t\phi_{\sigma})(w)\,\d^2 w\\
	=&\,\int_{J_0}\left[\int_0^{\infty}(\tau_t\phi_{\sigma})(x_n+s,y_n)(\tau_t\phi_{\sigma})(s+\lambda,\mu)\,\d s\right](\tau_t\phi_{\sigma})(\lambda,\mu)\,\d\lambda\,\d\mu\\
	=&\,\int_J(\tau_t\phi_{\sigma})(x_n+s,y_n)\left[\int_0^{\infty}(\tau_t\phi_{\sigma})(\lambda,\mu)(\tau_t\phi_{\sigma})(\lambda+s,\mu)\,\d\lambda\right]\,\d s\,\d\mu\\
	=&\,\int_{J_0}K_{t,\sigma}\big((0,\mu),(s,\mu)\big)(\tau_t\phi_{\sigma})(x_n+s,y_n)\,\d s\,\d\mu=\int_0^{\infty}N_{t,\sigma}(0,s)(\tau_t\phi_{\sigma})(x_n+s,y_n)\,\d s.
\end{align*}
Using the last formula back in \eqref{e35} together with the previous composition integral identity for $K_{t,\sigma}(z_j,z_{j+1})$ we obtain after integrating out all variables $x_1,x_2,\ldots,x_n$ over $(0,\infty)$ and afterwards $y_1,y_2,\ldots,y_n$ over $\mathbb{R}$,
\begin{align*}
	(K_{t,\sigma}^{n+1}\tau_t\phi_{\sigma})(0,y)=&\,\int_{\mathbb{R}^n}\int_0^{\infty}\int_{\mathbb{R}_+^{n-1}}\int_0^{\infty}(\tau_t\phi_{\sigma})(\lambda,y)K_{t,\sigma}\big((\lambda,y_1),(s_1,y_1)\big)\left[\prod_{j=1}^{n-2}K_{t,\sigma}\big((s_j,y_{j+1}),(s_{j+1},y_{j+1})\big)\right]\\
	&\hspace{1cm}\times K_{t,\sigma}\big((s_{n-1},y_n),(s,y_n)\big)N_{t,\sigma}(0,s)\,\d s\,(\d s_1\cdots\d s_{n-1})\,\d\lambda\,(\d y_1\cdots\d y_n)\\
	=&\,\int_0^{\infty}\int_{\mathbb{R}_+^{n-1}}\int_0^{\infty}(\tau_t\phi_{\sigma})(\lambda,y)N_{t,\sigma}(\lambda,s_1)\left[\prod_{j=1}^{n-2}N_{t,\sigma}(s_j,s_{j+1})\right]N_{t,\sigma}(s_{n-1},s)N_{t,\sigma}(0,s)\\
	&\hspace{1cm}\times\d s\,(\d s_1\cdots\d s_{n-1})\,\d\lambda.
\end{align*}
Here, $\mathbb{R}_+:=(0,\infty)$. Finally integrate out $s_1,\ldots,s_{n-1}\in(0,\infty)$ via Fubini's theorem and lastly $s\in(0,\infty)$, using the symmetry of $N_{t,\sigma}$,
\begin{align*}
	(K_{t,\sigma}^{n+1}\tau_t\phi_{\sigma})(0,y)=&\,\int_0^{\infty}\int_0^{\infty}(\tau_t\phi_{\sigma})(\lambda,y)N_{t,\sigma}^n(\lambda,s)N_{t,\sigma}(0,s)\,\d s\,\d\lambda=\int_0^{\infty}N_{t,\sigma}^{n+1}(0,\lambda)\phi_{t,\sigma}^y(\lambda)\,\d\lambda\\
	=&\,(N_{t,\sigma}^{n+1}\phi_{t,\sigma}^y)(0),\ \ \ \ (t,y,\sigma)\in\mathbb{R}^2\times[0,\infty).
\end{align*}
The proof of \eqref{e34} is now complete.
\end{proof}
Below we state the central Corollary to Proposition \ref{prop5}.
\begin{cor}\label{cor3} Consider $q_{\sigma}(t,y)$ as defined in Theorem \ref{theo1}. Then
\begin{equation}\label{e36}
	q_{\sigma}\left(t,\frac{y}{\sigma}\right)=\frac{1}{\pi^{\frac{1}{4}}}\e^{-\frac{y^2}{2\sigma^2}}\big((I-N_{t,\sigma})^{-1}\psi_{y+t}\big)(0),\ \ \ \ \ \psi_{y+t}(x):=\textnormal{Ai}(x+y+t),
\end{equation}
valid for all $(t,y,\sigma)\in\mathbb{R}^2\times(0,\infty)$.
\end{cor}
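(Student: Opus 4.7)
The plan is to lift Proposition \ref{prop5} from individual powers of the kernels to their resolvents, and then simply rescale the $y$-variable by $\sigma$ to identify $\psi_{y+t}$.

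First, by Proposition \ref{prop1} the operator $K_{t,\sigma}$ is self-adjoint, trace class and positive on $L^2(J_0)$ with $0\leq K_{t,\sigma}\leq 1$ and $I-K_{t,\sigma}$ invertible; since the eigenvalues of a trace class operator accumulate only at zero and lie in $[0,1)$, the maximal eigenvalue is strictly less than $1$ and the Neumann series
\[
	(I-K_{t,\sigma})^{-1}=\sum_{n=0}^{\infty}K_{t,\sigma}^n
\]
converges in operator norm. The same holds for $N_{t,\sigma}$ on $L^2(0,\infty)$ by Lemma \ref{lem2}. I would then verify that the sum also converges pointwise after evaluation at $(0,y)$, using the continuity and decay bounds already exploited in the proof of Corollary \ref{cor1} (in particular \eqref{e16a}--\eqref{e18}); these provide uniform majorants that allow interchanging the summation with pointwise evaluation.

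Once these summability issues are settled, I would apply Proposition \ref{prop5} term by term to obtain, for any $(t,y,\sigma)\in\mathbb{R}^2\times(0,\infty)$,
\[
	q_{\sigma}(t,y)=\big((I-K_{t,\sigma})^{-1}\tau_t\phi_{\sigma}\big)(0,y)=\sum_{n=0}^{\infty}\big(K_{t,\sigma}^n\tau_t\phi_{\sigma}\big)(0,y)=\sum_{n=0}^{\infty}\big(N_{t,\sigma}^n\phi_{t,\sigma}^y\big)(0)=\big((I-N_{t,\sigma})^{-1}\phi_{t,\sigma}^y\big)(0),
\]
where $\phi_{t,\sigma}^y(x)=\frac{1}{\pi^{1/4}}\e^{-y^2/2}\textnormal{Ai}(x+\sigma y+t)$ as in Proposition \ref{prop5}.

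Finally, I would substitute $y\mapsto y/\sigma$ in the displayed identity. Since the prefactor $\frac{1}{\pi^{1/4}}\e^{-y^2/(2\sigma^2)}$ depends only on $y$ and not on the variable on which $(I-N_{t,\sigma})^{-1}$ acts, it factors out of the resolvent, leaving
\[
	\phi_{t,\sigma}^{y/\sigma}(x)=\frac{1}{\pi^{1/4}}\e^{-y^2/(2\sigma^2)}\textnormal{Ai}(x+y+t)=\frac{1}{\pi^{1/4}}\e^{-y^2/(2\sigma^2)}\psi_{y+t}(x),
\]
which yields \eqref{e36}. The main (and essentially only) technical obstacle is the justification of the pointwise-convergent Neumann series representation, and this is entirely controlled by the $t$-independent bounds and spectral information already assembled in Proposition \ref{prop1}, Lemma \ref{lem2} and the proof of Corollary \ref{cor1}; the algebraic core of the statement is immediate from Proposition \ref{prop5}.
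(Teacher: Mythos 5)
Your overall strategy matches the paper's: expand the resolvent of $K_{t,\sigma}$ in a Neumann series, apply Proposition \ref{prop5} term by term to convert each $K_{t,\sigma}^n$ into $N_{t,\sigma}^n$, resum into the resolvent of $N_{t,\sigma}$, and finally substitute $y\mapsto y/\sigma$. The only structural difference is that the paper inserts a regularization parameter, working with $(I-\gamma K_{t,\sigma})^{-1}$ for $\gamma<1$ so that the Neumann series is geometrically convergent by fiat, and then lets $\gamma\uparrow 1$ at the end; you instead observe that the positive, self-adjoint, compact operator $K_{t,\sigma}$ with $0\leq K_{t,\sigma}\leq 1$ and $I-K_{t,\sigma}$ invertible has spectrum in $[0,1)$ accumulating only at $0$, hence $\|K_{t,\sigma}\|<1$ outright. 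That observation is correct, and it renders the $\gamma$-regularization unnecessary --- a mild simplification.

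One point in your plan needs replacing. You propose to justify interchanging the pointwise evaluation at $(0,y)$ with the summation by invoking the bounds \eqref{e16a}--\eqref{e18} from the proof of Corollary \ref{cor1}. Those estimates are established only for $t\geq t_0>0$ sufficiently large (and under a constraint of the form $0\leq\sigma^2\leq\tfrac{4}{3}\sqrt{t}$), whereas the corollary is claimed for \emph{all} $(t,y,\sigma)\in\mathbb{R}^2\times(0,\infty)$; for $t<0$ or for $\sigma$ large the cited majorants are simply not available, so this step would fail as written. The fix is cheaper than what you propose: for fixed $(t,y,\sigma)$ the kernel $K_{t,\sigma}((0,y),\cdot)$ lies in $L^2(J_0)$, so $g\mapsto (K_{t,\sigma}g)(0,y)$ is a bounded linear functional on $L^2(J_0)$. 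Since $\sum_{n\geq 0}K_{t,\sigma}^n\tau_t\phi_\sigma$ converges in $L^2(J_0)$ by $\|K_{t,\sigma}\|<1$, applying $K_{t,\sigma}$ and evaluating at $(0,y)$ passes through the limit for the tail $n\geq 1$, while the $n=0$ term $(\tau_t\phi_\sigma)(0,y)$ is handled directly; the same reasoning applies to $N_{t,\sigma}$ on $L^2(0,\infty)$. With this replacement your proof goes through for the full parameter range, and the final rescaling step is correct exactly as you wrote it.
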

\begin{proof} By the Neumann series, Proposition \ref{prop1} and Lemma \ref{lem2},
\begin{eqnarray*}
	q_{\sigma}\left(t,\frac{y}{\sigma}\right)&=&\lim_{\gamma\uparrow 1}\big((I-\gamma K_{t,\sigma})^{-1}\tau_t\phi_{\sigma}\big)\left(0,\frac{y}{\sigma}\right)=\lim_{\gamma\uparrow 1}\sum_{m=0}^{\infty}\gamma^m(K_{t,\sigma}^m\tau_t\phi_{\sigma})\left(0,\frac{y}{\sigma}\right)\\
	&\stackrel{\eqref{e34}}{=}&\lim_{\gamma\uparrow 1}\sum_{m=0}^{\infty}\gamma^m(N_{t,\sigma}^m\phi_{t,\sigma}^{\frac{y}{\sigma}})(0)=\lim_{\gamma\uparrow 1}\big((I-\gamma N_{t,\sigma})^{-1}\phi_{t,\sigma}^{\frac{y}{\sigma}}\big)(0)=\big((I-N_{t,\sigma})^{-1}\phi_{t,\sigma}^{\frac{y}{\sigma}}\big)(0).
\end{eqnarray*}
However,
\begin{equation*}
	\phi_{t,\sigma}^{\frac{y}{\sigma}}(x)=\frac{1}{\pi^{\frac{1}{4}}}\e^{-\frac{y^2}{2\sigma^2}}\textnormal{Ai}(x+y+t)=\frac{1}{\pi^{\frac{1}{4}}}\e^{-\frac{y^2}{2\sigma^2}}\psi_{y+t}(x),\ \ \ \ (t,y,x,\sigma)\in\mathbb{R}^3\times[0,\infty),
\end{equation*}
which implies \eqref{e36}.
\end{proof}
We have now gathered sufficient information in order to prove Theorem \ref{fimain1}.
\begin{proof}[First proof of Theorem \ref{fimain1}]
By \eqref{e28} and \eqref{e36},
\begin{eqnarray*}
	F_{\sigma}(t)&=&\exp\left[-\int_t^{\infty}(s-t)\left\{\frac{1}{\sigma}\int_{-\infty}^{\infty}\left(q_{\sigma}\left(s,\frac{y}{\sigma}\right)\right)^2\,\d y\right\}\d s\right]\\
	&=&\exp\left[-\int_t^{\infty}(s-t)\left\{\int_{-\infty}^{\infty}\big(p_{\sigma}(s,y)\big)^2\,\d\nu_{\sigma}(y)\right\}\d s\right],\ \ \ \ p_{\sigma}(t,y):=((I-N_{t,\sigma})^{-1}\psi_{y+t})(0),
\end{eqnarray*}
with $\d\nu_{\sigma}(\lambda)=\frac{1}{\sigma\sqrt{\pi}}\e^{-\lambda^2/\sigma^2}\d\lambda$ as in \eqref{e32}. But with \eqref{e36}, i.e. with the identity
\begin{equation*}
	q_{\sigma}\left(t,\frac{y}{\sigma}\right)=\frac{1}{\pi^{\frac{1}{4}}}\e^{-\frac{y^2}{2\sigma^2}}p_{\sigma}(t,y),
\end{equation*}
our previous \eqref{e29},\eqref{e30} becomes exactly \eqref{e33}, i.e. \eqref{fi10},\eqref{fi11} and \eqref{fi11a} are now proven.
\end{proof}
The reader can notice that Theorem \ref{fimain1}, when combined with \cite[Proposition $5.2$]{ACQ}, yields the following useful and remarkable result, as discussed right above Corollary \ref{corfi1}.

\begin{cor}[Dimensional reduction]\label{theo2} Consider $F_{\sigma}(t)$ as written in \eqref{e10}, then for all $(t,\sigma)\in\mathbb{R}\times(0,\infty)$,
\begin{equation}\label{e37}
	F_{\sigma}(t)=\prod_{k=1}^{\infty}\big(1-\nu_k(t,\sigma)\big),
\end{equation}
in terms of the non-zero eigenvalues $\nu_k(t,\sigma)$ of the operator $N_{t,\sigma}$ introduced in Lemma \ref{lem2}, taking multiplicities into account.
\end{cor}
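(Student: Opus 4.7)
The plan is to recognize that the right hand side of \eqref{fi10} in Theorem \ref{fimain1} and the right hand side of \eqref{e32} are literally the same expression, so that the corresponding Fredholm determinants must coincide. Concretely, Theorem \ref{fimain1} delivers the representation
\begin{equation*}
	F_{\sigma}(t)=\exp\left[-\int_t^{\infty}(s-t)\left\{\int_{-\infty}^{\infty}\big(p_{\sigma}(s,y)\big)^2\,\d\nu_{\sigma}(y)\right\}\d s\right],\ \ \ (t,\sigma)\in\mathbb{R}\times(0,\infty),
\end{equation*}
where $p_{\sigma}(t,y)$ is the solution of the integro-differential Painlev\'e-II equation \eqref{fi11} subject to the Airy boundary condition \eqref{fi11a}. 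On the other hand, Lemma \ref{lem2} identifies $N_{t,\sigma}$ as a trace class operator on $L^2(0,\infty)$ of the exact form treated in \cite[Section $5$]{ACQ}, since from \eqref{e31} its kernel coincides with the finite-temperature Airy kernel of \cite[$(5.1)$]{ACQ} with weight $\Phi(y/\sigma)$. Therefore \cite[Proposition $5.2$]{ACQ} yields precisely \eqref{e32}, i.e. the same exponential expression, but with $p_{\sigma}$ the unique solution of the very same integro-differential Painlev\'e-II boundary value problem \eqref{fi11},\eqref{fi11a}.

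The first step in executing the plan is therefore to simply match notation: the weight $\d\nu_{\sigma}$ and the boundary condition in \eqref{fi11a} already coincide in both contexts. By uniqueness of the solution to the integro-differential Painlev\'e-II system with the prescribed Airy decay at $+\infty$, which is established in \cite{ACQ} and implicit in our Proposition \ref{theo1}, the two functions $p_{\sigma}$ produced by Theorem \ref{fimain1} and by \cite[Proposition $5.2$]{ACQ} are identical. Consequently
\begin{equation*}
	F_{\sigma}(t)=\det\big(I-N_{t,\sigma}\upharpoonright_{L^2(0,\infty)}\big),\ \ \ (t,\sigma)\in\mathbb{R}\times(0,\infty).
\end{equation*}

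The second and final step is to rewrite this Fredholm determinant as an infinite product over eigenvalues. By Lemma \ref{lem2}, $N_{t,\sigma}$ is trace class and self-adjoint on $L^2(0,\infty)$ with $0\leq N_{t,\sigma}\leq 1$ and $I-N_{t,\sigma}$ invertible, so its spectrum consists of a (possibly finite, or at most countable) sequence of real eigenvalues $\nu_k(t,\sigma)\in[0,1)$ accumulating only at $0$. Applying the standard spectral identity \cite[Chapter IV, $(5.11),(6.2)$]{GGK} to the trace class operator $N_{t,\sigma}$ then gives $\det(I-N_{t,\sigma})=\prod_{k=1}^{\infty}(1-\nu_k(t,\sigma))$, multiplicities included, which combined with the preceding display yields \eqref{e37}.

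The only potential obstacle lies in the invocation of uniqueness for the integro-differential Painlev\'e-II boundary value problem: one must be sure that the same $p_{\sigma}$ is produced on both sides, and not two different solutions that happen to satisfy the same ODE and asymptotic. Fortunately, the boundary condition $p_{\sigma}(t,y)\sim\textnormal{Ai}(t+y)$ as $t\rightarrow+\infty$ is pointwise in $(y,\sigma)$ and is the same in both frameworks, so standard uniqueness for the associated Riemann-Hilbert data (or alternatively a Neumann-series comparison of the resolvents $(I-K_{t,\sigma})^{-1}$ and $(I-N_{t,\sigma})^{-1}$ in the $t\rightarrow+\infty$ regime, in the spirit of Proposition \ref{prop5} and Corollary \ref{cor3}) resolves this point.
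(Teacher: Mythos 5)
Your argument is essentially the same as the paper's own proof: both match the representation from Theorem \ref{fimain1} against \eqref{e32} via uniqueness of the integro-differential Painlev\'e-II boundary value problem (which the paper cites from \cite[R$66$]{Bo0}), conclude $F_{\sigma}(t)=\det(I-N_{t,\sigma})$, and then pass to the eigenvalue product via the standard spectral identity for trace class operators from \cite{GGK}. The proposal is correct.
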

\begin{proof} The integro-differential Painlev\'e-II boundary value problem \eqref{e33} is uniquely solvable, cf. \cite[R$66$]{Bo0}. Thus, by \cite[Proposition $5.2$]{ACQ}, equivalently by \eqref{e32}, and \cite[Chapter IV, Theorem $6.1$]{GGK} we have indeed
\begin{equation*}
	F_{\sigma}(t)=\det\big(I-N_{t,\sigma}\upharpoonright_{L^2(0,\infty)}\big)=\prod_{k=1}^{\infty}\big(1-\nu_k(t,\sigma)\big),
\end{equation*}
as claimed in \eqref{e37}.
\end{proof}

\section{Second proof of Theorem \ref{fimain1}}\label{sec3}
Our second proof of Theorem \ref{fimain1} is much shorter and it consists in establishing \eqref{e37} from first principles, without exploiting Hankel composition structures. With \eqref{e37} in place, Theorem \ref{fimain1} is then a consequence of \cite[Proposition $5.2$]{ACQ} - so unlike our first proof, our second proof of Theorem \ref{fimain1} will rely on the results in \cite{ACQ}.
\begin{prop}\label{Sylvprop} For any $(t,\sigma)\in\mathbb{R}\times[0,\infty)$ and all $n\in\mathbb{Z}_{\geq 1}$,
\begin{equation}\label{e38}
	\tr_{L^2(J_0)}K_{t,\sigma}^n=\tr_{L^2(0,\infty)}N_{t,\sigma}^n.
\end{equation}
\end{prop}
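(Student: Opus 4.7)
The plan is to exhibit both $K_{t,\sigma}$ and $N_{t,\sigma}$ as two sides of a single factorization, $K_{t,\sigma}=AA^{\ast}$ and $N_{t,\sigma}=A^{\ast}A$, for a suitable operator $A$, and then invoke the Sylvester-type identity $\tr(AA^{\ast})^n=\tr(A^{\ast}A)^n$. The factorization is suggested by the Hankel composition form of the kernel \eqref{e9a}: recalling $\phi_{t,\sigma}(x,y):=\pi^{-1/4}\e^{-y^2/2}\textnormal{Ai}(x+\sigma y+t)$, one has
\begin{equation*}
K_{t,\sigma}(z_1,z_2)=\int_0^{\infty}\phi_{t,\sigma}(x_1+s,y_1)\phi_{t,\sigma}(s+x_2,y_2)\,\d s,
\end{equation*}
so one should try the operator $A:L^2(0,\infty)\to L^2(J_0)$ with kernel $A(z,s):=\phi_{t,\sigma}(x+s,y)$. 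It is immediate from this formula that $K_{t,\sigma}=AA^{\ast}$ on $L^2(J_0)$.

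Next I would compute $A^{\ast}A$ on $L^2(0,\infty)$. Its kernel is
\begin{equation*}
(A^{\ast}A)(s_1,s_2)=\int_{J_0}\phi_{t,\sigma}(s_1+x,y)\phi_{t,\sigma}(x+s_2,y)\,\d^2z=\frac{1}{\sqrt{\pi}}\int_0^{\infty}\!\int_{-\infty}^{\infty}\e^{-y^2}\textnormal{Ai}(s_1+x+\sigma y+t)\textnormal{Ai}(x+s_2+\sigma y+t)\,\d y\,\d x,
\end{equation*}
which is exactly $N_{t,\sigma}(s_1,s_2)$ as defined in Lemma \ref{lem2}. The interchange of the $x$- and $y$-integrals against the composition-variable integration is legitimate once $s_1,s_2>0$, since the Gaussian factor $\e^{-y^2}$ and the Airy decay as $x\to+\infty$ give enough absolute integrability (after accounting for the linear-in-$y$ shift, completing the square in $y$ gives integrability and then the $x$-integral converges).

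With $K_{t,\sigma}=AA^{\ast}$ and $N_{t,\sigma}=A^{\ast}A$ established, \eqref{e38} follows from the classical fact that $AA^{\ast}$ and $A^{\ast}A$ share the same nonzero spectrum with identical multiplicities. Since Proposition \ref{prop1} and Lemma \ref{lem2} guarantee that both operators are trace class (indeed non-negative and trace class) on their respective Hilbert spaces, we have
\begin{equation*}
\tr_{L^2(J_0)}K_{t,\sigma}^n=\sum_k\lambda_k(K_{t,\sigma})^n=\sum_k\lambda_k(N_{t,\sigma})^n=\tr_{L^2(0,\infty)}N_{t,\sigma}^n,\qquad n\in\mathbb{Z}_{\geq 1}.
\end{equation*}
The main subtlety is simply the verification of the factorization together with the Fubini step; no asymptotic analysis or integrable-systems input is required, which is precisely why this route sidesteps the lengthy Hankel-operator derivation of Section \ref{sec2} and lets Theorem \ref{fimain1} follow from \cite[Proposition $5.2$]{ACQ} via \eqref{e37}.
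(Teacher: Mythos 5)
Your proposal is correct and takes essentially the same approach as the paper: your operator $A$ with kernel $\phi_{t,\sigma}(x+s,y)$ coincides with the paper's $P_{t,\sigma}^{\ast}$ (see \eqref{e38c}), so your factorization $K_{t,\sigma}=AA^{\ast}$, $N_{t,\sigma}=A^{\ast}A$ is exactly the paper's $K_{t,\sigma}=P_{t,\sigma}^{\ast}P_{t,\sigma}$, $N_{t,\sigma}=P_{t,\sigma}P_{t,\sigma}^{\ast}$. The only difference is presentational: you write down $A$ directly and invoke the spectral form of the Sylvester identity, whereas the paper builds $P_{t,\sigma}$ from the elementary maps $Q_t,H,S_{\sigma},\epsilon$ and phrases the identity at the level of Fredholm determinants via Plemelj--Smithies, but the substance is identical.
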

\begin{proof} Let $(t,\sigma)\in\mathbb{R}\times[0,\infty)$ be fixed throughout. The idea is to apply Sylvester's identity \cite[Chapter IV, $(5.9)$]{GGK}, \cite[Corollary $3.8$]{Sim}, i.e. we first set out to factorize $K_{t,\sigma}:L^2(J_0)\rightarrow L^2(J_0)$ with kernel
\begin{equation}\label{e38a}
	K_{t,\sigma}(z_1,z_2)=\frac{1}{\sqrt{\pi}}\e^{-\frac{1}{2}y_1^2-\frac{1}{2}y_2^2}\int_0^{\infty}\textnormal{Ai}(x_1+\sigma y_1+t+s)\textnormal{Ai}(s+x_2+\sigma y_2+t)\,\d s,\ \ z_k=(x_k,y_k).
\end{equation}
Precisely, we define the bounded linear transformations
\begin{equation*}
	S_{\sigma}:L^2(\mathbb{R}^2)\rightarrow L^2(\mathbb{R}^2),\ \ \ \ \ \ \ \ \ \ \ \ \ \ \ \ \ \ \ H:L^2(\mathbb{R}^2)\rightarrow L^2(\mathbb{R}),\ \ \ \ \ \ \ \ \ \ \ \ \ \ \ \ \ \ \epsilon:L^2(J_0)\rightarrow L^2(\mathbb{R}^2),
\end{equation*}
\begin{equation*}
	(S_{\sigma}f)(x,y):=f(x+\sigma y,y),\ \ \ \ \ \ (Hf)(x):=\frac{1}{\pi^{\frac{1}{4}}}\int_{-\infty}^{\infty}f(x,y)\e^{-\frac{1}{2}y^2}\,\d y,\ \ \ \ \ \ (\epsilon f)(x,y):=\chi_{(0,\infty)}(x)f(x,y),
\end{equation*}
where $\chi_A$ denotes the characteristic function of a set $A\subset\mathbb{R}$, and whose Hilbert space adjoints are equal to 
\begin{equation*}
	S_{\sigma}^{\ast}:L^2(\mathbb{R}^2)\rightarrow L^2(\mathbb{R}^2),\ \ \ \ \ \ \ \ \ \ \ \ \ \ \ \ \ \ \ H^{\ast}:L^2(\mathbb{R})\rightarrow L^2(\mathbb{R}^2),\ \ \ \ \ \ \ \ \ \ \ \ \ \ \ \ \ \ \ \ \ \epsilon^{\ast}:L^2(\mathbb{R}^2)\rightarrow L^2(J_0),
\end{equation*}
\begin{equation*}
	(S_{\sigma}^{\ast}f)(x,y)=f(x-\sigma y,y),\ \ \ \ \ \ \ \ \ \ \ \ \ \ \ (H^{\ast}f)(x,y)=\frac{1}{\pi^{\frac{1}{4}}}f(x)\e^{-\frac{1}{2}y^2},\ \ \ \ \ \ \ \ \ \ \ \ \ \ (\epsilon^{\ast} f)(x,y)=f(x,y).\ \ \ 
\end{equation*}
We also require the linear transformation $Q_t:L^2(0,\infty)\rightarrow L^2(\mathbb{R})$ given by
\begin{equation}\label{e38aa}
	(Q_tf)(x):=\int_0^{\infty}\textnormal{Ai}(x+y+t)f(y)\,\d y,
\end{equation}
which is an isometry by \eqref{appC5} below and whose Hilbert space adjoint equals
\begin{equation*}
	Q_t^{\ast}: L^2(\mathbb{R})\rightarrow L^2(0,\infty),\ \ \ \ (Q_t^{\ast}f)(x)=\int_{-\infty}^{\infty}\textnormal{Ai}(x+y+t)f(y)\,\d y.
\end{equation*}
Now define the bounded linear transformation $P_{t,\sigma}:=Q_t^{\ast}HS_{\sigma}^{\ast}\epsilon:L^2(J_0)\rightarrow L^2(0,\infty)$, i.e. we are considering the integral operator from $L^2(J_0)$ to $L^2(0,\infty)$ with kernel 
\begin{equation}\label{e38b}
	P_{t,\sigma}\big(x,(y,z)\big):=\frac{1}{\pi^{\frac{1}{4}}}\textnormal{Ai}(x+y+\sigma z+t)\e^{-\frac{1}{2}z^2},\ \ \ x\in(0,\infty),\ \ (y,z)\in J_0.
\end{equation}
Its Hilbert space adjoint $P_{t,\sigma}^{\ast}=\epsilon^{\ast}S_{\sigma}H^{\ast}Q_t:L^2(0,\infty)\rightarrow L^2(J_0)$ is again an integral operator, this time from $L^2(0,\infty)$ to $L^2(J_0)$ and with kernel
\begin{equation}\label{e38c}
	P_{t,\sigma}^{\ast}\big((x,y),z\big):=\frac{1}{\pi^{\frac{1}{4}}}\textnormal{Ai}(x+\sigma y+z+t)\e^{-\frac{1}{2}y^2}.
\end{equation}
Most importantly, compare \eqref{e38a},\eqref{e38b},\eqref{e38c}, we record the factorization
\begin{equation}\label{e38d}
	K_{t,\sigma}=P_{t,\sigma}^{\ast}P_{t,\sigma}:L^2(J_0)\rightarrow L^2(J_0),
\end{equation}
where $K_{t,\sigma}$ is trace class on $L^2(J_0)$ by Proposition \ref{prop1}. Second, we note that $P_{t,\sigma}P_{t,\sigma}^{\ast}:L^2(0,\infty)\rightarrow L^2(0,\infty)$ has kernel
\begin{align*}
	(P_{t,\sigma}P_{t,\sigma}^{\ast})(a,b)=&\,\int_0^{\infty}\int_{-\infty}^{\infty}P_{t,\sigma}\big(a,(c_1,c_2)\big)P_{t,\sigma}^{\ast}\big((c_1,c_2),b\big)\,\d c_2\d c_1\\
	=&\,\frac{1}{\sqrt{\pi}}\int_0^{\infty}\left[\int_{-\infty}^{\infty}\e^{-c_2^2}\textnormal{Ai}(a+\sigma c_2+t+c_1)\textnormal{Ai}(c_1+b+\sigma c_2+t)\,\d c_2\right]\d c_1,
\end{align*}
and which is equal to $N_{t,\sigma}(a,b)$, see Lemma \ref{lem2}. Consequently, using that $N_{t,\sigma}$ is trace class on $L^2(0,\infty)$, we have by Sylvester's identity \cite[Chapter IV, $(5.9)$]{GGK},
\begin{align*}
	\det\big(I-\gamma K_{t,\sigma}\upharpoonright_{L^2(J_0)}\big)\stackrel{\eqref{e38d}}{=}\det\big(I-\gamma P_{t,\sigma}^{\ast}P_{t,\sigma}\upharpoonright_{L^2(J_0)}\big)
	=\det\big(I-\gamma N_{t,\sigma}\upharpoonright_{L^2(0,\infty)}\big),\ \ 0\leq|\gamma|<1.
\end{align*}
The last equality establishes \eqref{e38} via the Plemelj-Smithies formula \cite[Chapter II, $(3.2)$]{GGK}.
\end{proof}
Equipped with \eqref{e38} we can now prove Theorem \ref{fimain1}.
\begin{proof}[Second proof of Theorem \ref{fimain1}] The trace identities \eqref{e38} establish \eqref{e37} via \eqref{e3}, \eqref{e5} and \cite[Chapter IV, $(5.11),(6.2)$]{GGK}, without exploiting Hankel composition structures. In turn, \eqref{fi10},\eqref{fi11} and \eqref{fi11a} follow now from \cite[Proposition $5.2$]{ACQ} using \eqref{e31}.
\end{proof}

\section{Proof of Corollaries \ref{corfi1} and \ref{corfi2}}\label{sec4}

We begin with a basic estimate for 
\begin{equation}\label{e41}
	\Phi(z)=\frac{1}{\sqrt{\pi}}\int_{-\infty}^z\e^{-x^2}\,\d x=1-\frac{1}{\sqrt{\pi}}\int_z^{\infty}\e^{-x^2}\,\d x=1-\frac{1}{2}\textnormal{erfc}(z),\ \ \ \ \ \Phi(-z)=\frac{1}{2}\textnormal{erfc}(z),\ \ \ z\in\mathbb{C},
\end{equation}
that follows from Mill's ratio in \cite[$7.8.1$]{NIST}.
\begin{lem}\label{lem4} For all $x\in\mathbb{R}$ and $\sigma>0$,
\begin{equation}\label{e42}
	\left|\Phi\left(\frac{x}{\sigma}\right)-\chi_{[0,\infty)}(x)\right|\leq\frac{1}{2}\exp\left[-\frac{x^2}{\sigma^2}\right],
\end{equation}
where $\chi_A$ denotes the characteristic function of a set $A\subset\mathbb{R}$.
\end{lem}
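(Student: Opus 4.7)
The plan is to reduce both cases $x\geq 0$ and $x<0$ to the single elementary inequality
\begin{equation*}
	\textnormal{erfc}(y)\leq\e^{-y^2},\qquad y\geq 0,
\end{equation*}
applied with $y=|x|/\sigma$.

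First I would use the two identities in \eqref{e41} to rewrite the quantity inside the absolute value. For $x\geq 0$, we have $\chi_{[0,\infty)}(x)=1$ and
\begin{equation*}
	\Phi\!\left(\frac{x}{\sigma}\right)-1=-\frac{1}{2}\textnormal{erfc}\!\left(\frac{x}{\sigma}\right),
\end{equation*}
while for $x<0$, we have $\chi_{[0,\infty)}(x)=0$ and
\begin{equation*}
	\Phi\!\left(\frac{x}{\sigma}\right)=\Phi\!\left(-\frac{|x|}{\sigma}\right)=\frac{1}{2}\textnormal{erfc}\!\left(\frac{|x|}{\sigma}\right).
\end{equation*}
In either case,
\begin{equation*}
	\left|\Phi\!\left(\frac{x}{\sigma}\right)-\chi_{[0,\infty)}(x)\right|=\frac{1}{2}\,\textnormal{erfc}\!\left(\frac{|x|}{\sigma}\right).
\end{equation*}
So the claim \eqref{e42} is equivalent to $\textnormal{erfc}(y)\leq\e^{-y^2}$ evaluated at $y=|x|/\sigma\geq 0$.

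To justify $\textnormal{erfc}(y)\leq\e^{-y^2}$ for $y\geq 0$, define $g(y):=\e^{-y^2}-\textnormal{erfc}(y)$. Then $g(0)=1-1=0$, and direct differentiation yields
\begin{equation*}
	g'(y)=-2y\,\e^{-y^2}+\frac{2}{\sqrt{\pi}}\,\e^{-y^2}=\frac{2\,\e^{-y^2}}{\sqrt{\pi}}\bigl(1-y\sqrt{\pi}\bigr),
\end{equation*}
so $g$ is increasing on $[0,1/\sqrt{\pi}]$ and decreasing on $[1/\sqrt{\pi},\infty)$; combined with $g(0)=0$ and $g(y)\to 0$ as $y\to\infty$, this forces $g\geq 0$ on $[0,\infty)$. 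Substituting $y=|x|/\sigma$ completes the proof. The main (minor) obstacle is simply handling the two sign cases symmetrically; nothing deeper is involved.
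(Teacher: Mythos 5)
Your proof is correct and follows essentially the same route as the paper: split by sign of $x$, reduce both cases to $\tfrac{1}{2}\textnormal{erfc}(|x|/\sigma)$, and then apply $\textnormal{erfc}(y)\leq\e^{-y^2}$ for $y\geq 0$. The only difference is that the paper obtains this last inequality by citing NIST's Mill's-ratio bounds (\cite[$7.8.1,7.8.2$]{NIST}), whereas you prove it by an elementary monotonicity argument for $g(y)=\e^{-y^2}-\textnormal{erfc}(y)$.
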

\begin{proof} Fix $\sigma>0$ and let $x\geq 0$, then by \eqref{e41} and \cite[$7.8.1,7.8.2$]{NIST},
\begin{equation*}
	\left|\Phi\left(\frac{x}{\sigma}\right)-1\right|\leq\frac{1}{2}\exp\left[-\frac{x^2}{\sigma^2}\right],\ \ \ \ \ \ \ \ 
	\left|\Phi\left(-\frac{x}{\sigma}\right)\right|\leq\frac{1}{2}\exp\left[-\frac{x^2}{\sigma^2}\right],
\end{equation*}
which proves \eqref{e42}.
\end{proof}
In turn, the proof of Corollary \ref{corfi2} is now straightfoward.
\begin{proof}[Proof of Corollary \ref{corfi2}] First, pointwise in $x,y\in(0,\infty)$ and $t\in\mathbb{R}$,
\begin{equation}\label{e43}
	\lim_{\sigma\downarrow 0}N_{t,\sigma}(x,y)=K_{\textnormal{Ai}}^t(x,y):=\int_0^{\infty}\textnormal{Ai}(x+z+t)\textnormal{Ai}(z+y+t)\,\d z.
\end{equation}
Indeed, the pointwise limit \eqref{e43} is a consequence of \eqref{e31},\eqref{e42} and the boundedness of $K_{\textnormal{Ai}}^t(x,y)$. For \eqref{fi13} we need to establish trace norm convergence on $L^2(0,\infty)$ of $N_{t,\sigma}$, see \eqref{e31},\eqref{e37}, to $K_{\textnormal{Ai}}^t$, see \eqref{e43},\eqref{fi13}. To this end recall that $N_{t,\sigma},K_{\textnormal{Ai}}^t$ are non-negative and that both are self-adjoint, hence by \cite[Theorem $2.20$]{Sim}, we are left to show $N_{t,\sigma}\rightarrow K_{\textnormal{Ai}}^t$ weakly and $\|N_{t,\sigma}\|_1\rightarrow \|K_{\textnormal{Ai}}^t\|_1$ with the trace norm $\|\cdot\|_1$. But this is straightforward given \eqref{e42}, Fubini's theorem, Cauchy-Schwarz inequality and the known analytic and asymptotic properties of the Airy function on $\mathbb{R}$,
\begin{equation*}
	\lim_{\sigma\downarrow 0}\|N_{t,\sigma}\|_1=\lim_{\sigma\downarrow 0}\int_0^{\infty}L_{t,\sigma}(x,x)\,\d x=\int_0^{\infty}K_{\textnormal{Ai}}^t(x,x)\,\d x=\|K_{\textnormal{Ai}}^t\|_1,
\end{equation*}
and, for any $f,g\in L^2(0,\infty)$,
\begin{equation*}
	\lim_{\sigma\downarrow 0}\int_0^{\infty}\int_0^{\infty}N_{t,\sigma}(x,y)f(y)\overline{g(x)}\,\d x\,\d y=\int_0^{\infty}\int_0^{\infty}K_{\textnormal{Ai}}^t(x,y)f(y)\overline{g(x)}\,\d x\,\d y.
\end{equation*}
The proof of \eqref{fi13} is now complete by \cite[Theorem $3.4$]{Sim}.
\end{proof}
The $\sigma\rightarrow+\infty$ degeneration of $F_{\sigma}(t)$ in Corollary \ref{corfi1} is less straightforward and we now state a streamlined proof for it, using a novel contour integral formula for the kernel of $N_{t,\sigma}$, see \eqref{f2} below.
\begin{proof}[Proof of Corollary \ref{corfi1}] Let $\sigma>1$ and set
\begin{equation}\label{e44a}
	a_{\sigma}:=\frac{\sigma}{\sqrt{6\ln\sigma}},\ \ \ \ \ \ \ c_{\sigma}:=a_{\sigma}\left(3\ln\sigma-\frac{5}{4}\ln(6\ln\sigma)-\ln(2\pi)\right).
\end{equation}
We claim that, pointwise in $x,y\in(0,\infty)$ and $t\in\mathbb{R}$,
\begin{equation}\label{e45}
	\lim_{\sigma\rightarrow+\infty}a_{\sigma}N_{a_{\sigma}t,\sigma}(c_{\sigma}+a_{\sigma}x,c_{\sigma}+a_{\sigma}y)=\e^{-t-x}\begin{cases}0,&x\neq y\\ 1,&x=y\end{cases}.
\end{equation}
Indeed, for $(t,\sigma)\in\mathbb{R}\in(0,\infty)$, by \eqref{e31} and Fubini's theorem, for any $a,b\in\mathbb{R}$,
\begin{align}
	&\,N_{t,\sigma}(a,b)=\frac{1}{\sigma\sqrt{\pi}}\int_{-\infty}^{\infty}\int_{-\infty}^y\e^{-(x/\sigma)^2}\textnormal{Ai}(a+y+t)\textnormal{Ai}(y+b+t)\,\d x\,\d y\label{f0}\\
	=&\,\frac{1}{\sigma\sqrt{\pi}}\int_{-\infty}^{\infty}\int_x^{\infty}\e^{-(x/\sigma)^2}\textnormal{Ai}(a+y+t)\textnormal{Ai}(y+b+t)\,\d y\,\d x\stackrel{\eqref{e43}}{=}\frac{1}{\sigma\sqrt{\pi}}\int_{-\infty}^{\infty}\e^{-(x/\sigma)^2}K_{\textnormal{Ai}}^t(a+x,b+x)\,\d x.\nonumber
\end{align}
Hence, using the Gaussian integral
\begin{equation*}
	\e^{-x^2}=\frac{1}{2\sqrt{\pi}}\int_{\Gamma}\e^{-\frac{1}{4}\lambda^2-\im x\lambda}\,\d\lambda,\ \ x\in\mathbb{R},
\end{equation*}
where $\Gamma$ denotes any smooth contour in the upper half-plane oriented from $\infty\e^{\im\alpha}$ to $\infty\e^{\im\beta}$ with $\alpha\in(\frac{3\pi}{4},\pi)$ and $\beta\in(0,\frac{\pi}{4})$, \eqref{f0} becomes
\begin{equation}\label{f1}
	N_{t,\sigma}(a,b)=\frac{1}{2\pi\sigma}\int_{\Gamma}\e^{-\frac{1}{4}\lambda^2}\left[\int_{-\infty}^{\infty}\e^{px}K_{\textnormal{Ai}}^t(a+x,b+x)\,\d x\right]\bigg|_{p=-\im\lambda/\sigma}\d\lambda,
\end{equation}
and we proceed with the evaluation of the Laplace transform of the Airy kernel along the lines of \cite[Lemma $2.6$]{O}. In fact, for any $p\in\mathbb{C}$ with $\Re p>0$, the workings in \cite[Lemma $2.6$]{O} yield
\begin{equation*}
	\int_{-\infty}^{\infty}\e^{px}\textnormal{Ai}(a+x)\textnormal{Ai}(b+x)\,\d x=\frac{p^{-\frac{1}{2}}}{2\sqrt{\pi}}\exp\left[\frac{p^3}{12}-\frac{a+b}{2}p-\frac{(a-b)^2}{4p}\right],\ \ \ \ a,b\in\mathbb{C},
\end{equation*}
with the principal branch for the fractional exponent such that $z^{\alpha}>0$ for $z>0$. Consequently \eqref{f1} equals
\begin{equation}\label{f2}
	N_{t,\sigma}(a,b)=\frac{\e^{\im\frac{3\pi}{4}}}{4\pi}\sqrt{\frac{\sigma}{\pi}}\int_{\Gamma}\exp\left(\im\left[\frac{\lambda^3}{12\sigma^3}+\frac{\lambda}{2\sigma}(a+b+2t)-\frac{\sigma}{4\lambda}(a-b)^2\right]-\frac{\lambda^2}{4}\right)\lambda^{-\frac{3}{2}}\,\d\lambda,\ \ \ a,b\in\mathbb{R}.
\end{equation}
Starting from the seemingly novel contour integral formula \eqref{f2} for the kernel of $N_{t,\sigma}:L^2(0,\infty)\rightarrow L^2(0,\infty)$ we now set out to prove \eqref{e45} and subsequently \eqref{fi12a}. To this end rewrite \eqref{f2} in the form
\begin{equation}\label{f2a}
	N_{t,\sigma}(a,b)=\frac{\e^{\im\frac{3\pi}{4}}}{4\pi}\sqrt{\frac{\sigma}{\pi}}\int_{\Gamma}\e^{f(\lambda;\sigma,t,a)}g(\lambda;\sigma,a-b)\,\d\lambda,\ \ \ \ \ \begin{cases}\displaystyle f(\lambda;\sigma,t,a):=\frac{\im\lambda}{\sigma}(a+t)-\frac{\lambda^2}{4}-\frac{3}{2}\ln\lambda&\bigskip\\
	\displaystyle \,\,\,\,\,g(\lambda;\sigma,a):=\exp\left[\frac{\im\lambda^3}{12\sigma^3}-\frac{\im\lambda}{2\sigma}a-\frac{\im\sigma}{4\lambda}a^2\right]&
	\end{cases}\!\!,
\end{equation}
afterwards choose $\Gamma=\mathbb{R}+\im\delta$ with $\delta>0$ and estimate coarsely by triangle inequality. For any $a,b\in\mathbb{R}$,
\begin{equation*}
	\big| N_{t,\sigma}(a,b)\big|\leq\frac{1}{4\pi}\sqrt{\frac{\sigma}{\pi}}\exp\left[-\frac{\delta}{\sigma}(a+t)+\frac{\delta^2}{4}+\frac{\delta^3}{12\sigma^3}+\frac{\delta}{2\sigma}(a-b)\right]\delta^{-\frac{3}{2}}\int_{-\infty}^{\infty}\e^{-\frac{1}{4}s^2}\exp\left[-\frac{\sigma\delta}{4}\frac{(a-b)^2}{s^2+\delta^2}\right]\d s,
\end{equation*}
where
\begin{equation*}
	\int_0^{\delta}\e^{-\frac{1}{4}s^2}\exp\left[-\frac{\sigma\delta}{4}\frac{(a-b)^2}{s^2+\delta^2}\right]\d s\leq \sqrt{\pi}\exp\left[-\frac{\sigma}{8\delta}(a-b)^2\right].
\end{equation*}
Since also
\begin{equation*}
	\int_{\delta}^{\infty}\e^{-\frac{1}{4}s^2}\exp\left[-\frac{\sigma\delta}{4}\frac{(a-b)^2}{s^2+\delta^2}\right]\d s\leq\int_{\delta}^{\infty}\e^{-\frac{1}{8}s^2}\underbrace{\exp\left[-\frac{s^2}{8}-\frac{\sigma\delta}{8s^2}(a-b)^2\right]}_{\leq\,\exp\left[-\frac{1}{4}\sqrt{\sigma\delta}\,|a-b|\right],\,s\neq 0}\d s\leq\sqrt{2\pi}\exp\left[-\frac{1}{4}\sqrt{\sigma\delta}\,|a-b|\right],
\end{equation*}
we have thus
\begin{equation}\label{f2b}
	\big|N_{t,\sigma}(a,b)\big|\leq\frac{1}{\pi}\sqrt{\frac{\sigma}{2}}\exp\left[-\frac{\delta}{\sigma}(a+t)+\frac{\delta^2}{4}+\frac{\delta^3}{12\sigma^3}+\frac{\delta}{2\sigma}(a-b)\right]\delta^{-\frac{3}{2}}\left(\e^{-\frac{\sigma}{8\delta}(a-b)^2}+\e^{-\frac{1}{4}\sqrt{\sigma\delta}|a-b|}\right).
\end{equation}
The estimate \eqref{f2b} establishes the limiting behavior in \eqref{e45} for $x\neq y$ as soon as
\begin{equation*}
	c_{\sigma}>0\ \ \textnormal{for large}\ \sigma>1\ \textnormal{and}\ \ \frac{a_{\sigma}}{\sigma}\rightarrow 0,\ \ \sqrt{\sigma}a_{\sigma}\rightarrow+\infty\ \  \textnormal{as}\ \ \sigma\rightarrow+\infty,
\end{equation*}
and which is certainly true back in \eqref{e44a}. The precise dependence of the constants $c_{\sigma}$ and $a_{\sigma}$ on $\sigma>1$ originates from the following computation:  with \eqref{e44a} in place and $\delta=\sqrt{6\ln\sigma},\sigma>1$, we obtain from \eqref{f2b} for any $x,y\in\mathbb{R}$ and $(t,\sigma)\in\mathbb{R}\times(1,\infty)$,
\begin{equation*}
	\big|a_{\sigma}N_{a_{\sigma}t,\sigma}(c_{\sigma}+a_{\sigma}x,c_{\sigma}+a_{\sigma}y)\big|\leq 6\e^{-\frac{1}{2}(x+y)-t},
\end{equation*}
and thus by Hadamard's determinant inequality \cite[Chapter IV, $(8.7)$]{GGK},
\begin{equation*}
	\Big|\det\big[a_{\sigma}N_{a_{\sigma}t,\sigma}(c_{\sigma}+a_{\sigma}x_i,c_{\sigma}+a_{\sigma}x_j)\big]_{i,j=1}^n\Big|\leq 6^n\e^{-nt}n^{\frac{n}{2}}\e^{-\frac{1}{2}\sum_{j=1}^nx_j},\ \ \ x_j\geq 0.
\end{equation*}
Consequently, by \eqref{e37}, by the dominated convergence theorem and \cite[Theorem $3.10$]{Sim}, for any fixed $t\in\mathbb{R}$, with $\mathbb{R}_+=(0,\infty)$,
\begin{align*}
	\lim_{\sigma\rightarrow+\infty}&\,F_{\sigma}(c_{\sigma}+a_{\sigma}t)=1+\lim_{\sigma\rightarrow+\infty}\sum_{n=1}^{\infty}\frac{(-1)^n}{n!}\int_{\mathbb{R}_+^n}\det\big[N_{c_{\sigma}+a_{\sigma}t,\sigma}(x_i,x_j)\big]_{i,j=1}^n\d x_1\cdots\d x_n\\
	&=1+\lim_{\sigma\rightarrow+\infty}\sum_{n=1}^{\infty}\frac{(-1)^n}{n!}\int_{\mathbb{R}_+^n}\det\big[a_{\sigma}N_{a_{\sigma}t,\sigma}(c_{\sigma}+a_{\sigma}x_i,c_{\sigma}+a_{\sigma}x_j)\big]_{i,j=1}^n\d x_1\cdots\d x_n\\
	&=1+\sum_{n=1}^{\infty}\frac{(-1)^n}{n!}\int_{\mathbb{R}_+^n}\det\Big[\lim_{\sigma\rightarrow+\infty}a_{\sigma}N_{a_{\sigma}t,\sigma}(c_{\sigma}+a_{\sigma}x_i,c_{\sigma}+a_{\sigma}x_j)\Big]_{i,j=1}^n\d x_1\cdots\d x_n\\
	&\!\!\!\stackrel{\eqref{e45}}{=}1+\sum_{n=1}^{\infty}\frac{(-1)^n}{n!}\int_{\mathbb{R}_+^n}\left(\prod_{i=1}^n\e^{-t-x_i}\right)\d x_1\cdots \d x_n=1+\sum_{n=1}^{\infty}\frac{(-1)^n}{n!}\e^{-nt}=\e^{-\e^{-t}},
\end{align*}
so we are left to establish the pointwise limit in \eqref{e45} for $x=y\in(0,\infty)$. For this we return to \eqref{f2a}, take $a=b=c_{\sigma}+a_{\sigma}x$, replace $t\mapsto a_{\sigma}t$ and choose $\Gamma=\mathbb{R}+\im\lambda_{\ast}$ with
\begin{equation*}
	\lambda_{\ast}:=\frac{1}{\sigma}(c_{\sigma}+a_{\sigma}x+a_{\sigma}t)+\sqrt{\left(\frac{c_{\sigma}+a_{\sigma}x+a_{\sigma}t}{\sigma}\right)^2+3}\,\,>0.
\end{equation*}
Note that $\im\lambda_{\ast}$ constitutes the unique stationary point of $f(\lambda)\equiv f(\lambda;\sigma,a_{\sigma}t,c_{\sigma}+a_{\sigma}x)$ in the upper half-plane, so after parametrizing $\Gamma$ accordingly,
\begin{align}
	&\,\,\,\,a_{\sigma}N_{a_{\sigma}t,\sigma}(c_{\sigma}+a_{\sigma}x,c_{\sigma}+a_{\sigma}x)=\frac{1}{4\pi}\sqrt{\frac{\sigma}{\pi}}\,a_{\sigma}\exp\left[-\frac{\lambda_{\ast}}{\sigma}(c_{\sigma}+a_{\sigma}x+a_{\sigma}t)+\frac{1}{4}\lambda_{\ast}^2-\frac{3}{2}\ln\lambda_{\ast}\right]\label{f4}\\
	&\,\times\int_{-\infty}^{\infty}\exp\left[-\frac{1}{4}s^2+\frac{\im s}{\sigma}(c_{\sigma}+a_{\sigma}x+a_{\sigma}t)-\frac{\im}{2}\lambda_{\ast}s\right]h_{t,\sigma}(s)\,\d s,\  h_{t,\sigma}(s):=\left(1-\frac{\im s}{\lambda_{\ast}}\right)^{-\frac{3}{2}}\exp\left[\frac{\im}{12\sigma^3}(s+\im\lambda_{\ast})^3\right].\nonumber
\end{align}
Next, given that $a_{\sigma}/\sigma\rightarrow 0$ as $\sigma\rightarrow+\infty$ and since $t\in\mathbb{R}$ is fixed, the integral in \eqref{f4} yields after Taylor expansion at $\sigma=\infty$ the following leading order asymptotic behavior
\begin{align*}
	\int_{-\infty}^{\infty}&\,\exp\left[-\frac{1}{4}s^2+\frac{\im s}{\sigma}(c_{\sigma}+a_{\sigma}x+a_{\sigma}t)-\frac{\im}{2}\lambda_{\ast}s\right]h_{t,\sigma}(s)\,\d s\sim\int_{-\infty}^{\infty}\exp\left[-\frac{1}{4}s^2+\frac{\im s}{\sigma} c_{\sigma}-\frac{\im}{2}\lambda_{\ast}s\right]\,\d s\\
	&\hspace{2cm}=2\sqrt{\pi}\exp\left[-\frac{1}{\sigma^2}\left(c_{\sigma}-\frac{1}{2}\lambda_{\ast}\sigma\right)^2\right],\ \ \ \sigma\rightarrow+\infty.
\end{align*}
Consequently, back in \eqref{f4}
\begin{equation}\label{f4a}
	a_{\sigma}N_{a_{\sigma}t,\sigma}(c_{\sigma}+a_{\sigma}x,c_{\sigma}+a_{\sigma}x)\sim\frac{\sqrt{\sigma}}{2\pi}a_{\sigma}\exp\left[-\frac{\lambda_{\ast}}{\sigma}a_{\sigma}(x+t)-\Big(\frac{c_{\sigma}}{\sigma}\Big)^2\right]\lambda_{\ast}^{-\frac{3}{2}},\ \ \sigma\rightarrow+\infty,
\end{equation}
i.e. it remains to insert the explicit expressions for $\lambda_{\ast},c_{\sigma},a_{\sigma}$ and carefully expand the result at $\sigma=+\infty$. In detail, we have as $\sigma\rightarrow\infty$, pointwise in $(x,t)\in(0,\infty)\times\mathbb{R}$,
\begin{equation*}
	\frac{1}{\sigma}(c_{\sigma}+a_{\sigma}x+a_{\sigma}t)\stackrel{\eqref{e44a}}{=}\frac{1}{2}\sqrt{6\ln\sigma}+\mathcal{O}\left(\frac{\ln\ln\sigma}{\sqrt{\ln\sigma}}\right),\ \ \ \ \ \lambda_{\ast}=\sqrt{6\ln\sigma}+\mathcal{O}\left(\frac{\ln\ln\sigma}{\sqrt{\ln\sigma}}\right),
\end{equation*}
\begin{equation*}
	\Big(\frac{c_{\sigma}}{\sigma}\Big)^2\stackrel{\eqref{e44a}}{=}\frac{3}{2}\ln\sigma-\frac{5}{4}\ln(6\ln\sigma)-\ln(2\pi)+\mathcal{O}\left(\frac{\ln^2\ln\sigma}{\ln\sigma}\right),
\end{equation*}
and so \eqref{f4a} yields at once the outstanding pointwise limit in \eqref{e45}. The proof of Corollary \ref{corfi1} is thus complete.
\end{proof}


\section{Proof of Theorem \ref{fimain2} and \ref{fimain3}}\label{sec5}

In this section we carry out the right tail $t\rightarrow+\infty$ asymptotic analysis of $F_{\sigma}(t)$, relying on the insight that $N_{t,\sigma}:L^2(0,\infty)\rightarrow L^2(0,\infty)$ as defined in Lemma \ref{lem2} and \eqref{e31} relates to an integrable operator in the sense of \cite{IIKS}. See \cite[Proposition $5.1$]{ACQ} for essentially the same result as written in \eqref{e39},\eqref{e40} below. 
\begin{lem}\label{lem3} Let $(t,\sigma)\in\mathbb{R}\times(0,\infty)$ and
\begin{equation}\label{e39}
	M_{t,\sigma}(x,y):=\sqrt{\Phi\left(\frac{x}{\sigma}\right)}K_{\textnormal{Ai}}(x+t,y+t)\sqrt{\Phi\left(\frac{y}{\sigma}\right)},\ \ \ x,y\in\mathbb{R},
\end{equation}
with $\Phi(z)=1-\frac{1}{2}\textnormal{erfc}(z)$ and the Airy kernel $K_{\textnormal{Ai}}$, see \eqref{fi4}. Then the self-adjoint integral operator $M_{t,\sigma}:L^2(\mathbb{R})\rightarrow L^2(\mathbb{R})$ given by
\begin{equation*}
	(M_{t,\sigma}f)(x):=\int_{-\infty}^{\infty}M_{t,\sigma}(x,y)f(y)\,\d y,
\end{equation*}
is trace class for all $(t,\sigma)\in\mathbb{R}\times(0,\infty)$ and we have
\begin{equation}\label{e40}
	F_{\sigma}(t)=\det\big(I-M_{t,\sigma}\upharpoonright_{L^2(\mathbb{R})}\big),
	\ \ \ \ (t,\sigma)\in\mathbb{R}\times(0,\infty).
\end{equation}
\end{lem}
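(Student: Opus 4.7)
The plan is to factorize $M_{t,\sigma}$ and $N_{t,\sigma}$ through a common auxiliary operator and then invoke Sylvester's determinant identity. Concretely, for $(t,\sigma)\in\mathbb{R}\times(0,\infty)$, introduce
\begin{equation*}
	A:L^2(\mathbb{R})\longrightarrow L^2(0,\infty),\ \ \ (Af)(a):=\int_{-\infty}^{\infty}\textnormal{Ai}(a+x+t)\sqrt{\Phi(x/\sigma)}\,f(x)\,\d x,
\end{equation*}
with Hilbert space adjoint
\begin{equation*}
	A^{\ast}:L^2(0,\infty)\longrightarrow L^2(\mathbb{R}),\ \ \ (A^{\ast}g)(x)=\sqrt{\Phi(x/\sigma)}\int_0^{\infty}\textnormal{Ai}(x+a+t)g(a)\,\d a.
\end{equation*}
A direct computation of kernels then shows
\begin{equation*}
	(A^{\ast}A)(x,y)=\sqrt{\Phi(x/\sigma)}\,K_{\textnormal{Ai}}(x+t,y+t)\sqrt{\Phi(y/\sigma)}=M_{t,\sigma}(x,y),
\end{equation*}
while
\begin{equation*}
	(AA^{\ast})(a,b)=\int_{-\infty}^{\infty}\Phi(x/\sigma)\,\textnormal{Ai}(a+x+t)\textnormal{Ai}(x+b+t)\,\d x=N_{t,\sigma}(a,b),
\end{equation*}
using the representation \eqref{e31} for the latter.

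The next step is to show that $A$ is Hilbert-Schmidt, which in turn yields trace class-ness of both $A^{\ast}A=M_{t,\sigma}$ and $AA^{\ast}=N_{t,\sigma}$. By Fubini and $\int_0^{\infty}(\textnormal{Ai}(a+x+t))^2\,\d a=K_{\textnormal{Ai}}(x+t,x+t)$,
\begin{equation*}
	\|A\|_2^2=\int_{-\infty}^{\infty}\Phi(x/\sigma)\,K_{\textnormal{Ai}}(x+t,x+t)\,\d x.
\end{equation*}
As $x\rightarrow+\infty$ the factor $K_{\textnormal{Ai}}(x+t,x+t)$ decays super-exponentially while $\Phi(x/\sigma)\leq 1$, and as $x\rightarrow-\infty$ the complementary error function bound \eqref{e42} implies that $\Phi(x/\sigma)$ decays as $\e^{-x^2/\sigma^2}$, which more than compensates the mild $\sqrt{|x|}$-type growth of $K_{\textnormal{Ai}}(x+t,x+t)$ on the negative axis, cf. \cite[\S 9.7]{NIST}. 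Hence $\|A\|_2<\infty$.

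With this factorization in hand, Sylvester's identity \cite[Chapter IV, $(5.9)$]{GGK} gives
\begin{equation*}
	\det\bigl(I-\gamma A^{\ast}A\upharpoonright_{L^2(\mathbb{R})}\bigr)=\det\bigl(I-\gamma AA^{\ast}\upharpoonright_{L^2(0,\infty)}\bigr),\ \ \ |\gamma|<1,
\end{equation*}
which extends to $\gamma=1$ by analyticity of Fredholm determinants in trace class perturbations and the Plemelj-Smithies formula \cite[Chapter II, $(3.2)$]{GGK}. Combining with Corollary \ref{theo2}, equivalently \eqref{e37}, we conclude
\begin{equation*}
	F_{\sigma}(t)=\det\bigl(I-N_{t,\sigma}\upharpoonright_{L^2(0,\infty)}\bigr)=\det\bigl(I-M_{t,\sigma}\upharpoonright_{L^2(\mathbb{R})}\bigr),
\end{equation*}
as claimed in \eqref{e40}. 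The main subtlety is the Hilbert-Schmidt bound on $A$: one has to ensure the Gaussian decay of $\Phi(x/\sigma)$ as $x\rightarrow-\infty$ genuinely tames the oscillatory growth of $K_{\textnormal{Ai}}(x+t,x+t)$ there uniformly in $t$ on compacts, but beyond this routine estimate no further analytic input is required.
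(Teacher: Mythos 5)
Your proof is correct and follows essentially the same route as the paper: factorizing $M_{t,\sigma}=A^{\ast}A$ and $N_{t,\sigma}=AA^{\ast}$ (the paper writes the two factors as $A_{t,\sigma}$ and $B_{t,\sigma}$ without noting they are adjoint to one another), establishing the Hilbert-Schmidt bound, applying Sylvester's identity, and invoking Corollary \ref{theo2}. The only cosmetic difference is that the paper cites \cite[Lemma 2.2]{BCT} for the Hilbert-Schmidt property while you compute $\|A\|_2^2=\int\Phi(x/\sigma)K_{\textnormal{Ai}}(x+t,x+t)\,\d x$ directly, which is equally valid.
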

\begin{proof} Define the linear transformations $A_{t,\sigma}:L^2(0,\infty)\rightarrow L^2(\mathbb{R})$ and $B_{t,\sigma}:L^2(\mathbb{R})\rightarrow L^2(0,\infty)$ with
\begin{equation*}
	\begin{cases}
	\displaystyle (A_{t,\sigma}f)(x):=\int_0^{\infty}\sqrt{\Phi\left(\frac{x}{\sigma}\right)}\textnormal{Ai}(x+y+t)f(y)\,\d y&\smallskip\\
	\displaystyle (B_{t,\sigma}g)(x):=\int_{-\infty}^{\infty}\textnormal{Ai}(x+y+t)\sqrt{\Phi\left(\frac{y}{\sigma}\right)}g(y)\,\d y&
	\end{cases}.
\end{equation*}
Now apply \cite[Lemma $2.2$]{BCT} and conclude that these are Hilbert-Schmidt transformations on their respective spaces for all $(t,\sigma)\in\mathbb{R}\times(0,\infty)$. Consequently, Sylvester's identity applies,
\begin{equation*}
	\det\big(I-B_{t,\sigma}A_{t,\sigma}\upharpoonright_{L^2(0,\infty)}\big)=\det\big(I-A_{t,\sigma}B_{t,\sigma}\upharpoonright_{L^2(\mathbb{R})}\big),\ \ \ \ \ \ \ (t,\sigma)\in\mathbb{R}\times(0,\infty),
\end{equation*}
and it simply remains to check that the kernel of $B_{t,\sigma}A_{t,\sigma}$ matches the kernel of $N_{t,\sigma}$ and likewise the kernel of $A_{t,\sigma}B_{t,\sigma}$ the kernel of $M_{t,\sigma}$. The proof of \eqref{e40} is complete given that all kernel functions are continuous.
\end{proof}

Having established \eqref{e39} and \eqref{e40} the following general Riemann-Hilbert theory of integrable operators is now at our disposal, cf. \cite{IIKS}. First, by a change of variables,
\begin{equation}\label{f4b}
	F_{\sigma}(t)\stackrel{\eqref{e40}}{=}\det\big(I-M_{t,\sigma}\upharpoonright_{L^2(\mathbb{R})}\big)=\det\big(I-\overline{M}_{t,\sigma}\upharpoonright_{L^2(\mathbb{R})}\big),\ \ \ (t,\sigma)\in\mathbb{R}\times(0,\infty),
\end{equation}
where $\overline{M}_{t,\sigma}:L^2(\mathbb{R})\rightarrow L^2(\mathbb{R})$ is trace class with kernel
\begin{equation*}
	\overline{M}_{t,\sigma}(x,y):=\sigma\sqrt{\Phi(x)}K_{\textnormal{Ai}}(x\sigma+t,y\sigma+t)\sqrt{\Phi(y)}=\frac{f_1(x)f_2(y)-f_2(x)f_1(y)}{x-y},\ \ \ \ x,y\in\mathbb{R},
\end{equation*}
having used the abbreviations $f_1(x):=\sqrt{\Phi(x)}\textnormal{Ai}(x\sigma+t)$ and $f_2(x):=\sqrt{\Phi(x)}\textnormal{Ai}'(x\sigma+t)$. In turn, we consider the below Riemann-Hilbert problem (RHP).
\begin{problem}[Master problem]\label{master} Fix $(t,\sigma)\in\mathbb{R}\times(0,\infty)$. Now determine ${\bf X}(z)={\bf X}(z;t,\sigma)\in\mathbb{C}^{2\times 2}$ such that
\begin{enumerate}
	\item[(1)] ${\bf X}(z)$ is analytic for $z\in\mathbb{C}\setminus\mathbb{R}$ and extends continuously to the closed upper and lower half-plane.
	\item[(2)] The continuous limiting values ${\bf X}_{\pm}(z):=\lim_{\epsilon\downarrow 0}{\bf X}(z\pm\im\epsilon)$ on $\mathbb{R}\ni z$ satisfy the jump condition
	\begin{equation}\label{e47}
		{\bf X}_+(z)={\bf X}_-(z)\left\{\mathbb{I}-2\pi\im\,\Phi(z)\begin{bmatrix}\textnormal{Ai}(z\sigma+t)\textnormal{Ai}'(z\sigma+t) & -(\textnormal{Ai}(z\sigma+t))^2\\ (\textnormal{Ai}'(z\sigma+t))^2& -\textnormal{Ai}(z\sigma+t)\textnormal{Ai}'(z\sigma+t)\end{bmatrix}\right\},\ \ z\in\mathbb{R}.
	\end{equation}
	with $\Phi(z)$ as in \eqref{e41}.
	\item[(3)] As $z\rightarrow\infty$,
	\begin{equation}\label{e48}
		{\bf X}(z)=\mathbb{I}+{\bf X}_1z^{-1}+{\bf X}_2z^{-2}+\mathcal{O}\big(z^{-3}\big),\ \ \ \ \ \ {\bf X}_k={\bf X}_k(t,\sigma)=\big[X_k^{mn}(t,\sigma)\big]_{m,n=1}^2.
	\end{equation}
\end{enumerate}
\end{problem}
The following result about the solvability of RHP \ref{master} and its relation to $F_{\sigma}(t)$ in \eqref{f4b} is standard.
\begin{prop}\label{prop6} The RHP \ref{master} is uniquely solvable for any $(t,\sigma)\in\mathbb{R}\times(0,\infty)$. Moreover, the kernel of $R_{t,\sigma}=(I-\overline{M}_{t,\sigma})^{-1}-I$ equals
\begin{equation}\label{e49}
	R_{t,\sigma}(x,y)=\frac{F_1(x)F_2(y)-F_2(x)F_1(y)}{x-y};\ \ \ \ \ \ F_k(z):=\big((I-\overline{M}_{t,\sigma})^{-1}f_k\big)(z),\ \ z\in\mathbb{R},
\end{equation}
where $F_k$ is given in terms of the solution of RHP \ref{master} by 
\begin{equation*}
	\begin{bmatrix} F_1(z)\\ F_2(z)\end{bmatrix}={\bf X}_{\pm}(z)\begin{bmatrix}f_1(z)\\ f_2(z)\end{bmatrix},\ \ \ z\in\mathbb{R},
\end{equation*}
independently of the choice of limiting values on $\mathbb{R}$. Conversely, the solution of RHP \ref{master} is expressible in terms of the $F_k$ through the Cauchy integral
\begin{equation}\label{e51}
	{\bf X}(z)=\mathbb{I}-\int_{-\infty}^{\infty}\begin{bmatrix}F_1(\lambda)f_2(\lambda) & -F_1(\lambda)f_1(\lambda)\\ F_2(\lambda)f_2(\lambda) & -F_2(\lambda)f_1(\lambda)\end{bmatrix}\frac{\d\lambda}{\lambda-z},\ \ \ z\in\mathbb{C}\setminus\mathbb{R},
\end{equation}
and we have the differential identities
\begin{equation}\label{e52}
	\frac{\partial}{\partial t}\ln F_{\sigma}(t)=-\sigma X_1^{12}(t,\sigma),\ \ \ \ \frac{\partial}{\partial\sigma}\ln F_{\sigma}(t)=-X_1^{21}(t,\sigma)-tX_1^{12}(t,\sigma)-2\sigma X_2^{12}(t,\sigma),
\end{equation}
followed by
\begin{equation}\label{e52aa}
	\frac{\d}{\d\alpha}\ln F_{\alpha}(\alpha)=-X_1^{21}(\alpha,\alpha)-2\alpha X_1^{12}(\alpha,\alpha)-2\alpha X_2^{12}(\alpha,\alpha),
\end{equation}
for any $(t,\sigma,\alpha)\in\mathbb{R}\times(0,\infty)\times(0,\infty)$ in terms of the matrix entries $X_k^{mn}(t,\sigma)$ in \eqref{e48}.
\end{prop}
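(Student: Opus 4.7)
The result is the standard package of consequences that follow when one recognizes $\overline{M}_{t,\sigma}$ as an integrable operator in the IIKS sense, so the proof will amount to verifying the IIKS dictionary in our specific setting and then reading off the differential identities from the large-$z$ expansion \eqref{e48} of the solution.

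\emph{Solvability.} First, $\overline{M}_{t,\sigma}$ is self-adjoint, non-negative, trace class on $L^2(\mathbb{R})$, and has operator norm strictly less than one (inherited from $M_{t,\sigma}$ via the change of variables leading to \eqref{f4b}, and ultimately from Lemma \ref{lem3} and Proposition \ref{prop1}). Hence $I-\overline{M}_{t,\sigma}$ is invertible on $L^2(\mathbb{R})$ for every $(t,\sigma)\in\mathbb{R}\times(0,\infty)$. The jump matrix in \eqref{e47} has the IIKS form $\mathbb{I}-2\pi\im\,{\bf f}(z){\bf g}(z)^{T}$ with ${\bf f}=(f_1,f_2)^T$, ${\bf g}=(f_2,-f_1)^T$, and ${\bf g}^T{\bf f}=0$ identically, so $\det v\equiv 1$. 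Consequently any solution of RHP \ref{master} has $\det{\bf X}\equiv 1$ (no jumps, bounded, tends to $1$), and uniqueness follows by the standard Liouville argument applied to the quotient of two solutions.

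\emph{Construction via IIKS.} For existence I would define $F_k:=(I-\overline{M}_{t,\sigma})^{-1}f_k$ and verify \eqref{e49} directly: writing $R_{t,\sigma}(x,y)=\overline{M}_{t,\sigma}(x,y)+(\overline{M}_{t,\sigma}R_{t,\sigma})(x,y)$ and exploiting the integrable kernel identity
\begin{equation*}
\overline{M}_{t,\sigma}(x,y)(x-y)=f_1(x)f_2(y)-f_2(x)f_1(y),
\end{equation*}
one shows by a short manipulation that $(x-y)R_{t,\sigma}(x,y)=F_1(x)F_2(y)-F_2(x)F_1(y)$. Then define ${\bf X}$ by the Cauchy integral \eqref{e51}: the asymptotic \eqref{e48} is immediate from expanding $(\lambda-z)^{-1}$ at infinity, while \eqref{e47} follows from Plemelj--Sokhotski together with the resolvent identities $F_k-\overline{M}_{t,\sigma}F_k=f_k$ written componentwise. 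Uniqueness established above identifies this ${\bf X}$ as the unique RHP solution.

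\emph{Differential identities.} By Jacobi's formula, $\partial_t\ln F_\sigma(t)=-\tr((I-\overline{M}_{t,\sigma})^{-1}\partial_t\overline{M}_{t,\sigma})$. A direct computation using $\partial_tf_1=f_2$, $\partial_tf_2=(x\sigma+t)f_1$ shows that the two middle terms in the difference quotient cancel, leaving the rank-one identity
\begin{equation*}
\partial_t\overline{M}_{t,\sigma}(x,y)=-\sigma f_1(x)f_1(y),
\end{equation*}
so that $\partial_t\ln F_\sigma(t)=\sigma\int F_1 f_1\,\d\lambda$. Reading off the $z^{-1}$ coefficient in \eqref{e51} gives $X_1^{12}=-\int F_1f_1$, hence the first identity in \eqref{e52}. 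For the $\sigma$-derivative, using $\partial_\sigma f_1=xf_2$ and $\partial_\sigma f_2=x(x\sigma+t)f_1$ and the same cancellation mechanism yields
\begin{equation*}
\partial_\sigma\overline{M}_{t,\sigma}=f_2\otimes f_2-tf_1\otimes f_1-\sigma\bigl((M_xf_1)\otimes f_1+f_1\otimes(M_xf_1)\bigr),
\end{equation*}
with $M_x$ multiplication by $x$. This is the one delicate point: the factor of $x$ obstructs a direct read-off from \eqref{e51}. The hard step is dealt with by using self-adjointness of both $(I-\overline{M}_{t,\sigma})^{-1}$ and $M_x$ to rewrite
\begin{equation*}
\int\bigl((I-\overline{M}_{t,\sigma})^{-1}M_xf_1\bigr)(y)f_1(y)\,\d y=\int x\,f_1(x)F_1(x)\,\d x,
\end{equation*}
after which the three traces combine into $-\int F_2f_2+2\sigma\int xF_1f_1+t\int F_1f_1=-X_1^{21}-2\sigma X_2^{12}-tX_1^{12}$, proving the second identity in \eqref{e52}. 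Finally \eqref{e52aa} is an immediate consequence of the chain rule $\frac{\d}{\d\alpha}\ln F_\alpha(\alpha)=(\partial_t+\partial_\sigma)\ln F_\sigma(t)\big|_{t=\sigma=\alpha}$ applied to \eqref{e52}.
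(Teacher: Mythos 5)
Your proposal is correct and follows essentially the same route as the paper: invertibility of $I-\overline{M}_{t,\sigma}$ is inherited from the operator analysis earlier in the paper, the IIKS machinery then furnishes \eqref{e49} and \eqref{e51}, and Jacobi's formula together with the rank-one/rank-finite computations of $\partial_t\overline{M}_{t,\sigma}$ and $\partial_\sigma\overline{M}_{t,\sigma}$ (which you reproduce correctly, matching the paper's \eqref{e52a} and \eqref{e52b}) give \eqref{e52} after reading off matrix entries from the large-$z$ expansion of \eqref{e51}. One small difference is that you obtain \eqref{e52aa} from \eqref{e52} via the chain rule, whereas the paper recomputes $\frac{\d}{\d\alpha}\overline{M}_{\alpha,\alpha}$ directly and repeats the trace argument — your shortcut is cleaner and entirely equivalent; and be aware that the phrase ``operator norm strictly less than one'' is really a consequence of the invertibility proved earlier (via the unique-continuation / Fredholm alternative argument in Proposition \ref{prop1} and Lemma \ref{lem2}), not an independently easier fact, so strictly speaking it should be stated as such rather than offered as the reason for invertibility.
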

\begin{proof} Recall that $I-\overline{M}_{t,\sigma}$ is invertible on $L^2(\mathbb{R})$ for all $(t,\sigma)\in\mathbb{R}\times(0,\infty)$ by \eqref{f4b},\eqref{e37} and Lemma \ref{lem2}. Thus, \cite{IIKS} yields the unique solvability of RHP \ref{master} as well as the formul\ae\,\eqref{e49} and \eqref{e51}, specialized to the case of our self-adjoint operator $\overline{M}_{t,\sigma}$. Moving along, the $t$-identity in \eqref{e52} is standard: $\mathbb{R}\ni t\mapsto\frac{\partial}{\partial t}\overline{M}_{t,\sigma}$ is trace class on $L^2(\mathbb{R})$ with continuous kernel, 
\begin{equation}\label{e52a}
	\frac{\partial}{\partial t}\overline{M}_{t,\sigma}(x,y)=-\sigma f_1(x)f_1(y),
\end{equation}
so
\begin{align*}
	\frac{\partial}{\partial t}\ln F_{\sigma}(t)\stackrel{\eqref{f4b}}{=}-\tr_{L^2(\mathbb{R})}&\,\left((I-\overline{M}_{t,\sigma})^{-1}\frac{\partial\overline{M}_{t,\sigma}}{\partial t}\right)\stackrel{\eqref{e52a}}{=}\sigma\tr_{L^2(\mathbb{R})}\left((I-\overline{M}_{t,\sigma})^{-1}f_1\otimes f_1\right)\\
	&\stackrel{\eqref{e49}}=\sigma\int_{-\infty}^{\infty}F_1(\lambda)f_1(\lambda)\,\d\lambda\stackrel[\eqref{e51}]{\eqref{e48}}{=}-\sigma X_1^{12}(t,\sigma).
\end{align*}
The $\sigma$-identity in \eqref{e52} is slightly more involved: $\mathbb{R}_+\ni\sigma\mapsto\frac{\partial}{\partial\sigma}\overline{M}_{t,\sigma}$ is trace class on $L^2(\mathbb{R})$ with continuous kernel,
\begin{equation}\label{e52b}
	\frac{\partial}{\partial\sigma}\overline{M}_{t,\sigma}(x,y)=f_2(x)f_2(y)-tf_1(x)f_1(y)-\sigma f_1(x)(x+y)f_1(y),
\end{equation}
so
\begin{align*}
	\frac{\partial}{\partial\sigma}\ln F_{\sigma}(t)\stackrel[\eqref{e52b}]{\eqref{f4b}}{=}-\tr_{L^2(\mathbb{R})}&\,\left((I-\overline{M}_{t,\sigma})^{-1}f_2\otimes f_2\right)+t\tr_{L^2(\mathbb{R})}\big((I-\overline{M}_{t,\sigma})^{-1}f_1\otimes f_1\big)\\
	&\hspace{0.5cm}+\sigma\int_{-\infty}^{\infty}\int_{-\infty}^{\infty}(I-\overline{M}_{t,\sigma})^{-1}(\lambda,\mu)f_1(\mu)(\mu+\lambda)f_1(\lambda)\,\d \mu\,\d \lambda,
\end{align*}
and hence by the self-adjointness of $\overline{N}_{t,\sigma}$,
\begin{align*}
	\frac{\partial}{\partial\sigma}\ln F_{\sigma}(t)
	&\stackrel{\eqref{e49}}=-\int_{-\infty}^{\infty}F_2(\lambda)f_2(\lambda)\,\d\lambda+t\int_{-\infty}^{\infty}F_1(\lambda)f_1(\lambda)\,\d\lambda+2\sigma\int_{-\infty}^{\infty}F_1(\lambda)f_1(\lambda)\lambda\,\d\lambda\\
	&\hspace{1.975cm}\stackrel[\eqref{e51}]{\eqref{e48}}{=}-X_1^{21}(t,\sigma)-tX_1^{12}(t,\sigma)-2\sigma X_2^{12}(t,\sigma).
\end{align*}
The derivation of the third identity for $\sigma=t=\alpha$ is similar to the one for the second one and relies on the fact that $\mathbb{R}_+\ni\alpha\mapsto\frac{\d}{\d\alpha}\overline{M}_{\alpha,\alpha}$ is trace class on $L^2(\mathbb{R})$ with continuous kernel
\begin{equation*}
	\frac{\d}{\d\alpha}\overline{M}_{\alpha,\alpha}(x,y)=f_2(x)f_2(y)-2\alpha f_1(x)f_1(y)-\alpha f_1(x)(x+y)f_1(y).
\end{equation*}
This completes our proof of the Proposition.
\end{proof}
With \eqref{e52} in mind, we now investigate RHP \ref{master} asymptotically as $t\rightarrow+\infty$.
\subsection{Asymptotics, part $1$ ($t\rightarrow+\infty$ and $\sigma>0$ such that $t/\sigma\rightarrow+\infty$)} Consider the simple rescaling transformation,
\begin{equation}\label{e53}
	{\bf Y}(z;t,\sigma):={\bf X}\left(\frac{zt}{\sigma};t,\sigma\right),\ \ \ \ z\in\mathbb{C}\setminus\mathbb{R},
\end{equation}
and note that ${\bf Y}(z)$ solves a RHP very similar to RHP \ref{master}, but with rescaled jump matrix on $\mathbb{R}$ and rescaled normalization at $z=\infty$:
\begin{problem}\label{trafo1} Let $(t,\sigma)\in(0,\infty)\times(0,\infty)$. The function ${\bf Y}(z)={\bf Y}(z;t,\sigma)\in\mathbb{C}^{2\times 2}$ defined in \eqref{e53} is uniquely determined by the following properties:
\begin{enumerate}
	\item[(1)] ${\bf Y}(z)$ is analytic for $z\in\mathbb{C}\setminus\mathbb{R}$ and extends continuously to the closed upper and lower half-plane.
	\item[(2)] The continuous limiting values ${\bf Y}_{\pm}(z):=\lim_{\epsilon\downarrow 0}{\bf Y}(z\pm\im\epsilon)$ on $\mathbb{R}\ni z$ satisfy the jump condition ${\bf Y}_+(z)={\bf Y}_-(z){\bf G}_{\bf Y}(z;t,\sigma)$ where the jump matrix ${\bf G}_{\bf Y}(z;t,\sigma)$ is given by
	\begin{equation*}
		{\bf G}_{\bf Y}(z;t,\sigma)=\mathbb{I}-2\pi\im\Phi\left(\frac{zt}{\sigma}\right)\begin{bmatrix}\textnormal{Ai}(t(z+1))\textnormal{Ai}'(t(z+1)) & -(\textnormal{Ai}(t(z+1)))^2\\ (\textnormal{Ai}'(t(z+1)))^2 & -\textnormal{Ai}(t(z+1))\textnormal{Ai}'(t(z+1))\end{bmatrix},\ \ z\in\mathbb{R}.
	\end{equation*}
	\item[(3)] As $z\rightarrow\infty$,
	\begin{equation}\label{e53a}
		{\bf Y}(z)=\mathbb{I}+{\bf Y}_1z^{-1}+{\bf Y}_2z^{-2}+\mathcal{O}\big(z^{-3}\big),\ \ \ \ {\bf Y}_k={\bf Y}_k(t,\sigma)=\left(\frac{\sigma}{t}\right)^k{\bf X}_k(t,\sigma)
	\end{equation}
\end{enumerate}
\end{problem}
By the known analytic and asymptotic properties of the Airy and complementary error function as well as Lemma \ref{lem4}, the jump matrix in RHP \ref{trafo1} satisfies the following norm estimates.
\begin{prop}\label{prop7} There exists $c>0$ such that
\begin{equation*}
	\|{\bf G}_{\bf Y}(\cdot;t,\sigma)-\mathbb{I}\|_{L^{\infty}[0,\infty)}\leq c\sqrt{t}\,\e^{-\frac{4}{3}t^{\frac{3}{2}}},\ \ \ \ \|{\bf G}_{\bf Y}(\cdot;t,\sigma)-\mathbb{I}\|_{L^{\infty}(-\infty,-1]}\leq c\max\left\{1,\frac{\sigma}{\sqrt{t}}\right\}\,\e^{-(t/\sigma)^2},
\end{equation*}
\begin{equation*}
	\|{\bf G}_{\bf Y}(\cdot;t,\sigma)-\mathbb{I}\|_{L^{\infty}[-1,0]}\leq c\sqrt{t}\,\e^{-(t/\sigma)^2},
\end{equation*}
for all $\sigma>0$ and $t\geq 1$. In addition, there exists $c>0$ such that
\begin{equation*}
	\|{\bf G}_{\bf Y}(\cdot;t,\sigma)-\mathbb{I}\|_{L^2[0,\infty)}\leq c\,t^{-\frac{1}{4}}\e^{-\frac{4}{3}t^{\frac{3}{2}}},\ \ \ \ \|{\bf G}_{\bf Y}(\cdot;t,\sigma)-\mathbb{I}\|_{L^2(-\infty,-1]}\leq c\,\sigma t^{-1}\max\left\{1,\frac{\sigma}{\sqrt{t}}\right\}\e^{-(t/\sigma)^2},
\end{equation*}
\begin{equation*}
	\|{\bf G}_{\bf Y}(\cdot;t,\sigma)-\mathbb{I}\|_{L^{2}[-1,0]}\leq c\sqrt{t}\,\e^{-(t/\sigma)^2},
\end{equation*}
for all $\sigma,t>0$.
%
%
%
\end{prop}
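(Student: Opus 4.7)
My proof plan for Proposition \ref{prop7} is a direct entry-by-entry estimation of the jump matrix ${\bf G}_{\bf Y}(z;t,\sigma)-\mathbb{I}$, combining two inputs: the Gaussian tail bound $\Phi(y)\le\frac{1}{2}\e^{-y^2}$ for $y\le 0$ from Lemma \ref{lem4} (together with the trivial $\Phi(y)\le 1$ on $\mathbb{R}$), and the classical Airy asymptotics \cite[$\S 9.7$]{NIST}, which provide a universal $c>0$ with
\begin{equation*}
\big|\textnormal{Ai}(x)\big|\le c(1+x_+)^{-1/4}\e^{-\frac{2}{3}x_+^{3/2}},\qquad \big|\textnormal{Ai}'(x)\big|\le c(1+|x|)^{1/4}\e^{-\frac{2}{3}x_+^{3/2}},\quad x_+:=\max(x,0).
\end{equation*}
Multiplied through the three nonzero matrix entries (diagonal $\textnormal{Ai}\cdot\textnormal{Ai}'$, off-diagonals $(\textnormal{Ai})^2$ and $(\textnormal{Ai}')^2$, all carrying the scalar $-2\pi\im\Phi(zt/\sigma)$), these yield closed-form entrywise majorants that are then supremized or integrated on each of the three intervals separately.

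On $[0,\infty)$ the Airy argument $t(z+1)\ge t\ge 1$ is in the fast-decay regime, so the dominant $(2,1)$-entry is controlled by $c(t(z+1))^{1/2}\e^{-\frac{4}{3}(t(z+1))^{3/2}}$; this is monotone decreasing in $z$, giving the $L^\infty$ bound at $z=0$, and the substitution $u=t(z+1)$ together with Laplace's method applied to $\int_t^\infty u\,\e^{-\frac{8}{3}u^{3/2}}\d u$ produces the $L^2$ prefactor $t^{-1/4}$. On $(-\infty,-1]$ the Airy argument $t(z+1)\le 0$ is oscillatory, so I use $|\textnormal{Ai}'(t(z+1))|^2\le c(1+t|z+1|)^{1/2}$ paired with the Gaussian factor $\Phi(zt/\sigma)\le\frac{1}{2}\e^{-(zt/\sigma)^2}$; maximizing $(1+t|z+1|)^{1/2}\e^{-(zt/\sigma)^2}$ over $z\le -1$ (setting $z=-1-\delta$ with $\delta\ge 0$) identifies two regimes according to whether the stationary point lies at $\delta=0$ or in the interior, which translates exactly into the factor $\max\{1,\sigma/\sqrt{t}\}$; the $L^2$ bound then follows by the Gaussian integration $\int_0^\infty\e^{-2(wt/\sigma)^2}(1+tw)\d w$, which contributes the scale $\sigma/t$.

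On the bounded interval $[-1,0]$ I use both decays simultaneously, i.e.\ combine $\Phi(zt/\sigma)\le\frac{1}{2}\e^{-(zt/\sigma)^2}$ with the Airy decay $\e^{-\frac{4}{3}(t(z+1))^{3/2}}$; the combined exponent $(zt/\sigma)^2+\frac{4}{3}(t(z+1))^{3/2}$ is convex in $z$, and an elementary stationary-point analysis shows its minimum is $(t/\sigma)^2$ up to absorbed polynomial prefactors in $t$, yielding the claimed $c\sqrt{t}\,\e^{-(t/\sigma)^2}$. Since the length of $[-1,0]$ is one, the corresponding $L^2$ bound is an immediate consequence of the $L^\infty$ bound by Cauchy--Schwarz.

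The main technical obstacle is the case $z\in[-1,0]$: here neither Airy decay nor $\Phi$-decay is active uniformly, and only the simultaneous use of both exponentials via the convex combination argument delivers the stated bound; the other five estimates are essentially mechanical applications of the Airy/$\Phi$ asymptotics together with standard Gaussian or Laplace-type integration.
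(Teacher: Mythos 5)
Your treatment of $[0,\infty)$ and $(-\infty,-1]$ is sound and is essentially the paper's one-line argument unpacked: Airy asymptotics together with the Gaussian bound of Lemma~\ref{lem4}. The gap is on $[-1,0]$, precisely the step you flag as the main obstacle. Writing $s=z+1\in[0,1]$, the convex exponent you minimise is
\begin{equation*}
g(s)=\Big(\frac{(1-s)t}{\sigma}\Big)^2+\frac{4}{3}(ts)^{3/2},\qquad g(0)=\Big(\frac{t}{\sigma}\Big)^2,\qquad g'(0)=-2\Big(\frac{t}{\sigma}\Big)^2<0,
\end{equation*}
so the minimiser $s_*$ is always interior. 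Your assertion that $\min_{[0,1]}g=(t/\sigma)^2$ up to an error absorbable into the prefactor $\sqrt t$ is false: setting $a=(t/\sigma)^2$, $b=t^{3/2}$, one has
\begin{equation*}
g(0)-\min_{[0,1]}g\ \ge\ \max_{s\in[0,1]}\Big(2as-\tfrac{4}{3}bs^{3/2}\Big)=\tfrac{2}{3}\frac{a^3}{b^2}=\tfrac{2}{3}\Big(\frac{t}{\sigma^2}\Big)^3\qquad(\sigma\ge t^{1/4}),
\end{equation*}
and for $\sigma<t^{1/4}$ the gap already exceeds $g(0)-g(1)=(t/\sigma)^2-\frac{4}{3}t^{3/2}$. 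This correction is not $\mathcal{O}(\ln t)$ and grows without bound as $\sigma$ drops below $\sqrt t$, so $\sup_{[-1,0]}\|{\bf G}_{\bf Y}-\mathbb I\|$ exceeds $c\sqrt t\,\e^{-(t/\sigma)^2}$ by a factor of order $\e^{\frac23(t/\sigma^2)^3}$. Concretely, at $t=16$, $\sigma=2$ one finds $\min g\approx 44.6$ while $(t/\sigma)^2=64$, a discrepancy $\approx\e^{19}$.

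You should also be aware that the paper's proof shares this defect. The asserted monotonicity of $z\mapsto\Phi(\tfrac{zt}{\sigma})\e^{-\frac{4}{3}t^{3/2}(z+1)^{3/2}}$ on $[-1,0]$ fails: its logarithmic derivative at $z=-1^+$ is $\frac{t}{\sigma}\,\Phi'(-\tfrac{t}{\sigma})/\Phi(-\tfrac{t}{\sigma})>0$ while the Airy contribution vanishes there, so the product rises to an interior peak at $z_*=s_*-1$. What is in fact true, by an exact algebraic identity (eliminate $\sqrt{\tau^2+4}$ from both expressions using $s_*^{1/2}\tau=1-s_*$ with $\tau=\sigma^2/\sqrt t$), is that $\min_{[0,1]}g=B(t,\sigma)$ with $B$ as in \eqref{fi15}; the two quantities $B(t,\sigma)$ and $(t/\sigma)^2$ agree to $\mathcal O(1)$ only once $\sigma\gtrsim\sqrt t$. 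The correct replacement for the third and sixth bounds is therefore $c\sqrt t\,\e^{-B(t,\sigma)}$. Since $B(t,\sigma)\to\infty$ whenever $t,t/\sigma\to\infty$, the downstream small-norm theory in Theorem~\ref{theo3} and the error bookkeeping in the proof of Theorem~\ref{fimain2} are unaffected, but your $[-1,0]$ argument, and the Proposition's stated bound on that interval, need to be amended along these lines.
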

\begin{proof} We only use that for all $x>0$ with some numerical constant $c>0$, cf. \cite[Chapter $9$]{NIST},
\begin{equation*}
	|\textnormal{Ai}(x)|\leq c\,x^{-\frac{1}{4}}\e^{-\frac{2}{3}x^{\frac{3}{2}}},\ \ \ |\textnormal{Ai}'(x)|\leq c\,x^{\frac{1}{4}}\e^{-\frac{2}{3}x^{\frac{3}{2}}},\ \ \ |\textnormal{Ai}(-x)|\leq c\,|x|^{-\frac{1}{4}},\ \ \ |\textnormal{Ai}'(-x)|\leq c\,|x|^{\frac{1}{4}}.
\end{equation*}
These combined with \eqref{e42} and the fact that $\Phi(\frac{zt}{\sigma})\exp[-\frac{4}{3}t^{3/2}(z+1)^{3/2}]$ is decreasing in $z\in[-1,0]$ yield at once the above six inequalities.
\end{proof}
By Proposition \ref{prop7}, as soon as $t\rightarrow+\infty$ and $\sigma>0$ is such that $t/\sigma\rightarrow+\infty$, the jump matrix in the ${\bf Y}$-RHP is exponentially close to the identity matrix on all of $\mathbb{R}$. 
Consequently, the general theory in \cite{DZ} ensures asymptotic solvability of the same problem, a fact which is made precise in the following result.
\begin{theo}\label{theo3} There exists $t_0>0$ such that the RHP for ${\bf Y}(z)$ defined in \eqref{e53} is uniquely solvable in $L^2(\mathbb{R})$ for all $t\geq t_0$ and all $\sigma>0$ such that $t\sigma^{-1}\geq t_0$. We can compute the solution of the same problem iteratively via the integral equation
\begin{equation}\label{e54}
	{\bf Y}(z;t,\sigma)=\mathbb{I}+\frac{1}{2\pi\im}\int_{-\infty}^{\infty}{\bf Y}_-(\lambda;t,\sigma)\big({\bf G}_{\bf Y}(\lambda;t,\sigma)-\mathbb{I}\big)\frac{\d\lambda}{\lambda-z},\ \ \ z\in\mathbb{C}\setminus\mathbb{R}
\end{equation}
using that, for all $t\geq t_0$ and $\sigma>0$ such that $t\sigma^{-1}\geq t_0$, with $c>0$,
\begin{equation}\label{e55}
	\|{\bf Y}_-(\cdot;t,\sigma)-\mathbb{I}\|_{L^2(\mathbb{R})}\leq c\max\left\{t^{-\frac{1}{4}}\e^{-\frac{4}{3}t^{\frac{3}{2}}},\sigma t^{-1}\max\left\{1,\frac{\sigma}{\sqrt{t}}\right\}\e^{-(t/\sigma)^2},\sqrt{t}\,\e^{-(t/\sigma)^2}\right\}.
\end{equation}
\end{theo}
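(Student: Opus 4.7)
\textbf{Proof proposal for Theorem \ref{theo3}.}
The plan is a standard application of small-norm Riemann--Hilbert theory in the spirit of \cite{DZ}: convert RHP \ref{trafo1} into a singular integral equation on $L^2(\mathbb{R})$ and use Proposition \ref{prop7} to show the relevant operator is a small perturbation of the identity in the regime $t \geq t_0$, $t/\sigma \geq t_0$.

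First, I would write the jump as ${\bf G}_{\bf Y}(z;t,\sigma) = \mathbb{I} + \Delta(z;t,\sigma)$ and seek ${\bf Y}_-(z) = \mathbb{I} + \mu(z)$ with $\mu \in L^2(\mathbb{R})$. Denoting by $C_\pm$ the boundary values on $\mathbb{R}$ of the Cauchy transform on $L^2(\mathbb{R})$, and setting $C_\Delta f := C_-(f\Delta)$, the RHP is equivalent to the singular integral equation
\begin{equation*}
(I - C_\Delta)\mu = C_-\Delta,
\end{equation*}
and once this is solved ${\bf Y}(z)$ is recovered via the Cauchy integral \eqref{e54}.

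Next I would estimate $\|C_\Delta\|_{L^2(\mathbb{R}) \to L^2(\mathbb{R})} \leq \|C_-\|_{L^2 \to L^2}\,\|\Delta\|_{L^\infty(\mathbb{R})}$. The $L^\infty$ bounds of Proposition \ref{prop7} on the three subintervals $[0,\infty), [-1,0], (-\infty,-1]$ each decay super-exponentially in the regime $t \to +\infty$ and $t/\sigma \to +\infty$ (the factor $\e^{-(t/\sigma)^2}$ dominating any polynomial in $t$ or $\sigma/\sqrt{t}$ once $t_0$ is chosen sufficiently large). Hence there exists $t_0 > 0$ such that $\|C_\Delta\|_{L^2 \to L^2} \leq \tfrac{1}{2}$ for all $t \geq t_0$, $t/\sigma \geq t_0$, so that $I - C_\Delta$ is invertible by Neumann series with a uniformly bounded inverse.

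With $(I-C_\Delta)^{-1}$ bounded, I would solve $\mu = (I-C_\Delta)^{-1}C_-\Delta$ and estimate
\begin{equation*}
\|\mu\|_{L^2(\mathbb{R})} \leq 2\|C_-\|_{L^2 \to L^2}\,\|\Delta\|_{L^2(\mathbb{R})},
\end{equation*}
then insert the $L^2$ estimates from Proposition \ref{prop7} (using $\|\Delta\|_{L^2(\mathbb{R})} \leq \|\Delta\|_{L^2[0,\infty)} + \|\Delta\|_{L^2[-1,0]} + \|\Delta\|_{L^2(-\infty,-1]}$) to obtain \eqref{e55}. Unique solvability of the RHP in $L^2(\mathbb{R})$ follows because any two solutions differ by an entire function vanishing at infinity, hence are equal. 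The representation \eqref{e54} then follows from the Sokhotski--Plemelj formula applied to the Cauchy integral of $\mu\Delta = {\bf Y}_-({\bf G}_{\bf Y}-\mathbb{I})$. The step requiring the most care, though still routine, is verifying that the $L^\infty$ smallness of $\Delta$ is truly uniform in the region $\{t \geq t_0,\, t/\sigma \geq t_0\}$ --- this is where one must combine the three estimates of Proposition \ref{prop7} and, if necessary, enlarge $t_0$ to absorb the polynomial prefactors into the Gaussian/super-exponential decay.
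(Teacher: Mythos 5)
Your proof reproduces the (implicit) argument in the paper exactly: the paper provides no explicit derivation of Theorem \ref{theo3} but cites the Deift--Zhou small-norm theory, which is precisely the singular-integral-equation-plus-Neumann-series scheme you describe. The reduction to $(I - C_\Delta)\mu = C_-\Delta$, the $L^\infty$-control of the operator norm, the $L^2$-control of the inhomogeneity, and the Liouville uniqueness step are all the right ingredients and are carried out correctly.

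The one place where your write-up is optimistic is the final remark — your suggestion to ``enlarge $t_0$'' to absorb the polynomial prefactors does not in fact secure the required uniform $L^\infty$ smallness of $\Delta$ over the \emph{entire} region $\{t \geq t_0,\ t/\sigma \geq t_0\}$. The bound $c\sqrt{t}\,\e^{-(t/\sigma)^2}$ on $[-1,0]$ from Proposition \ref{prop7} does not vanish if $t/\sigma$ stays pinned near $t_0$ while $t\to\infty$: along $\sigma = t/t_0$ one gets $\sqrt{t}\,\e^{-t_0^2}\to\infty$ no matter how large $t_0$ is chosen. One genuinely needs $(t/\sigma)^2 \gg \ln t$, i.e.\ the two-parameter limit $t\to\infty$ \emph{and} $t/\sigma\to\infty$ that the paper's prose paragraph before Theorem \ref{theo3} actually invokes. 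This is an imprecision already present in the paper's theorem statement, not an error you introduced; in the downstream application (Theorem \ref{fimain2}) the constraint $\sigma \leq t^\epsilon$ with $\epsilon\in(0,1)$ gives $(t/\sigma)^2 \geq t^{2-2\epsilon}$, which crushes every polynomial prefactor, and the small-norm argument is fully rigorous there. It is also worth noting that the global \emph{unique solvability} of the $\mathbf{Y}$-RHP requires no small-norm argument at all, since it follows from the unique solvability of the master $\mathbf{X}$-RHP established in Proposition \ref{prop6} and the invertible change of variables \eqref{e53}; the small-norm machinery is needed only for the quantitative estimate \eqref{e55}.
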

Equipped with Theorem \ref{theo3}, we arrive at the below proof of Theorem \ref{fimain2}.
\begin{proof}[Proof of Theorem \ref{fimain2}] By \eqref{e52},\eqref{e53a},\eqref{e54},\eqref{e55}, Cauchy-Schwarz inequality and the explicit form of ${\bf G}_{\bf Y}(z;t,\sigma)$,
\begin{equation}\label{e56}
	\frac{\partial}{\partial t}\ln F_{\sigma}(t)=-\frac{\im t}{2\pi}\int_{-\infty}^{\infty}\big[{\bf Y}_-(\lambda;t,\sigma)\big({\bf G}_{\bf Y}(\lambda;t,\sigma)-\mathbb{I}\big)\big]^{12}\,\d\lambda=\int_{-\infty}^{\infty}\Phi\left(\frac{\lambda}{\sigma}\right)\textnormal{Ai}^2(\lambda+t)\,\d\lambda+r(t,\sigma),
\end{equation}
where the error term $r(t,\sigma)$ satisfies
\begin{equation*}
	\big|r(t,\sigma)\big|\leq t\big(\textnormal{RHS of}\,\eqref{e55}\big)^2.
\end{equation*}
Note that, see \eqref{e31},
\begin{equation*}
	\int_{-\infty}^{\infty}\Phi\left(\frac{y}{\sigma}\right)\textnormal{Ai}^2(y+t)\,\d y=N_{t,\sigma}(0,0)\stackrel{\eqref{f2}}{=}\frac{\e^{\im\frac{3\pi}{4}}}{4\pi}\sqrt{\frac{\sigma}{\pi}}\int_{\Gamma}\exp\left[\im\left(\frac{\lambda^3}{12\sigma^3}+\frac{\lambda t}{\sigma}\right)-\frac{\lambda^2}{4}\right]\lambda^{-\frac{3}{2}}\,\d\lambda,
\end{equation*}
so after changing variables according to the rule $y=\sigma\sqrt{t}\,\mu$,
\begin{equation*}
	\int_{-\infty}^{\infty}\Phi\left(\frac{y}{\sigma}\right)\textnormal{Ai}^2(y+t)\,\d y=\frac{\e^{\im\frac{3\pi}{4}}}{4\pi}\frac{t^{-\frac{1}{4}}}{\sqrt{\pi}}\int_{\Gamma}\e^{-t^{\frac{3}{2}}h(\mu,\tau)}\mu^{-\frac{3}{2}}\,\d\mu;\ \ \ \ \ h(\mu,\tau):=-\im\left(\frac{\mu^3}{12}+\mu\right)+\frac{\tau}{4}\mu^2,
\end{equation*}
where we abbreviate 
\begin{equation*}
	\tau:=\sigma^2/\sqrt{t}.
\end{equation*}
For $\epsilon\in(0,\frac{1}{4}]$, we have that $0<\tau\leq t^{2(\epsilon-\frac{1}{4})}$ is bounded as $t\rightarrow+\infty$, so we can asymptotically evaluate the integral in question by the steepest descent method. In detail, by \cite[Chapter $4$, Theorem $7.1$]{Olv},
\begin{equation*}
	\int_{-\infty}^{\infty}\Phi\left(\frac{y}{\sigma}\right)\textnormal{Ai}^2(y+t)\,\d y=\frac{1}{2\pi t}\frac{(-\tau+\sqrt{4+\tau^2})^{-\frac{3}{2}}}{(4+\tau^2)^{\frac{1}{4}}}\e^{-t^{\frac{3}{2}}h(\mu_{\ast},\tau)}\Big(1+\mathcal{O}\big(t^{-\frac{3}{2}}\big)\Big),\ \ t\rightarrow+\infty,\ \ 0<\sigma\leq t^{\epsilon},
\end{equation*}
for any $\epsilon\in(0,\frac{1}{4}]$ where $\mu_{\ast}:=-\im\tau+\im\sqrt{4+\tau^2}$ is the unique stationary point of $\mu\mapsto h(\mu,\tau)$ in the upper half-plane and which is bounded away from zero. Seeing that by \eqref{fi15},
\begin{equation*}
	t^{\frac{3}{2}}h(\mu_{\ast},\tau)=B(t,\sigma),
\end{equation*}
estimate \eqref{e56} then yields in the same asymptotic regime $t\rightarrow+\infty,0<\sigma\leq t^{\epsilon},\epsilon\in(0,\frac{1}{4}]$,
\begin{equation*}
	\frac{\partial}{\partial t}\ln F_{\sigma}(t)=\frac{1}{2\pi t}\frac{(-\sigma^2t^{-\frac{1}{2}}+\sqrt{4+\sigma^4t^{-1}})^{-\frac{3}{2}}}{(4+\sigma^4t^{-1})^{\frac{1}{4}}}\e^{-B(t,\sigma)}\Big(1+\mathcal{O}\big(t^{-\frac{3}{2}}\big)\Big)+\mathcal{O}\big(\max\Big\{t^{-\frac{1}{2}}\e^{-\frac{8}{3}t^{\frac{3}{2}}},t\e^{-2(t/\sigma)^2}\Big\}\big).
\end{equation*}
However, since $h(\mu_{\ast},\tau)<\frac{8}{3}$ and $h(\mu_{\ast},\tau)<\frac{2}{\tau}$ for all $\tau\geq 0$, we have thus for all $\epsilon\in(0,\frac{1}{4}]$,
\begin{equation}\label{e56a}
	\frac{\partial}{\partial t}\ln F_{\sigma}(t)=\frac{1}{2\pi t}\frac{(-\sigma^2t^{-\frac{1}{2}}+\sqrt{4+\sigma^4t^{-1}})^{-\frac{3}{2}}}{(4+\sigma^4t^{-1})^{\frac{1}{4}}}\e^{-B(t,\sigma)}\Big(1+\mathcal{O}\big(t^{-\frac{3}{2}}\big)\Big),\ \ \ t\rightarrow+\infty,\ \ 0<\sigma\leq t^{\epsilon}.
\end{equation}
Moving ahead, for $\epsilon\in(\frac{1}{4},1)$ we return to the contour integral formula for $N_{t,\sigma}(0,0)$, but no longer rescale. Instead we write
\begin{equation*}
	\int_{-\infty}^{\infty}\Phi\left(\frac{y}{\sigma}\right)\textnormal{Ai}^2(y+t)\,\d y=\frac{\e^{\im\frac{3\pi}{4}}}{4\pi}\sqrt{\frac{\sigma}{\pi}}\int_{\Gamma}\e^{-(t/\sigma)k(y,t,\sigma)}y^{-\frac{3}{2}}\,\d y;\ \ \ k(y,t,\sigma):=-\im\left(y+\frac{y^3}{12\sigma^2t}\right)+\frac{y^2\sigma}{4t},
\end{equation*}
and note that the unique stationary point $y_{\ast}:=-\im\sigma^3+\im\sigma^3\sqrt{1+4\tau^{-2}}$ of $y\mapsto k(y,t,\sigma)$ in the upper half-plane is bounded away from zero. Thus, by the steepest descent method,
\begin{equation*}
	\int_{-\infty}^{\infty}\Phi\left(\frac{y}{\sigma}\right)\textnormal{Ai}^2(y+t)\,\d y=\frac{1}{2\pi}\frac{(-1+\sqrt{1+4\tau^{-2}})^{-\frac{3}{2}}}{\sigma^4(1+4\tau^{-2})^{\frac{1}{4}}}\e^{-(t/\sigma)k(y_{\ast},t,\sigma)}\Big(1+\mathcal{O}\big(\sigma t^{-1}\big)\Big),\ t\rightarrow+\infty,\ t^{\frac{1}{4}}\leq\sigma\leq t^{\epsilon},
\end{equation*}
for any $\epsilon\in(\frac{1}{4},1)$. But since also by \eqref{fi15},
\begin{equation*}
	(t/\sigma)k(y_{\ast},t,\sigma)=B(t,\sigma)=t^{\frac{3}{2}}h(\mu_{\ast},\tau),
\end{equation*}
estimate \eqref{e56} yields by the same reasoning as above,
\begin{equation}\label{e56b}
	\frac{\partial}{\partial t}\ln F_{\sigma}(t)=\frac{1}{2\pi\sigma^4}\frac{(-1+\sqrt{1+4t\sigma^{-4}})^{-\frac{3}{2}}}{(1+4t\sigma^{-4})^{\frac{1}{4}}}\e^{-B(t,\sigma)}\Big(1+\mathcal{O}\big(\sigma t^{-1}\big)\Big),\ \ t\rightarrow+\infty,\ t^{\frac{1}{4}}\leq\sigma\leq t^{\epsilon},
\end{equation}
for any $\epsilon\in(\frac{1}{4},1)$. Observe that the leading orders in the right hand sides of \eqref{e56a} and \eqref{e56b} match, so we are left to integrate both expansions with respect to $t$. Indeed, since $F_{\sigma}(\infty)=1$ for all $\sigma>0$, we have for all $t\geq t_0$,
\begin{equation*}
	\ln F_{\sigma}(t)=-\int_t^{\infty}\frac{\partial}{\partial s}\ln F_{\sigma}(s)\,\d s,
\end{equation*}
and so as $t\rightarrow+\infty$ and $0<\sigma\leq t^{\epsilon}$ with $\epsilon\in(0,\frac{1}{4}]$, using the representation for $B(t,\sigma)$ in \eqref{fi15},
\begin{align*}
	\ln F_{\sigma}(t)&\,\stackrel{\eqref{e56a}}{=}-\frac{1}{2\pi}\int_t^{\infty}\frac{(-\sigma^2s^{-\frac{1}{2}}+\sqrt{4+\sigma^4s^{-1}})^{-\frac{3}{2}}}{s(4+\sigma^4s^{-1})^{\frac{1}{4}}}\e^{-B(s,\sigma)}\Big(1+\mathcal{O}\big(s^{-\frac{3}{2}}\big)\Big)\,\d s\\
	=&\,-\frac{1}{2\pi t^{\frac{3}{2}}}\frac{(-\sigma^2t^{-\frac{1}{2}}+\sqrt{4+\sigma^4t^{-1}})^{-\frac{5}{2}}}{(4+\sigma^4t^{-1})^{\frac{1}{4}}}\e^{-B(t,\sigma)}\Big(1+\mathcal{O}\big(t^{-\frac{3}{2}}\big)\Big)=-A(t,\sigma)\e^{-B(t,\sigma)}\Big(1+\mathcal{O}\big(t^{-\frac{3}{2}}\big)\Big).
\end{align*}
On the other hand, as $t\rightarrow+\infty$ and $t^{\frac{1}{4}}\leq\sigma\leq t^{\epsilon}$ with $\epsilon\in(\frac{1}{4},1)$, using the alternative representation for $B(t,\sigma)$ in \eqref{fi17a},
\begin{align*}
	\ln F_{\sigma}(t)&\,\stackrel{\eqref{e56b}}{=}-\frac{1}{2\pi\sigma^4}\int_t^{\infty}\frac{(-1+\sqrt{1+4s\sigma^{-4}})^{-\frac{3}{2}}}{(1+4s\sigma^{-4})^{\frac{1}{4}}}\e^{-B(s,\sigma)}\Big(1+\mathcal{O}\big(\sigma s^{-1}\big)\Big)\,\d s\\
	=&\,-\frac{1}{2\pi\sigma^6}\frac{(-1+\sqrt{1+4t\sigma^{-4}})^{-\frac{5}{2}}}{(1+4t\sigma^{-4})^{\frac{1}{4}}}\e^{-B(t,\sigma)}\Big(1+\mathcal{O}\big(\sigma t^{-1}\big)\Big)=-A(t,\sigma)\e^{-B(t,\sigma)}\Big(1+\mathcal{O}\big(\sigma t^{-1}\big)\Big).
\end{align*}
Hence, seeing that $A(t,\sigma)\e^{-B(t,\sigma)}\rightarrow 0$ super exponentially fast in the current asymptotic regimes, expansion \eqref{fi14} now follows from exponentiation. Our proof of Theorem \ref{fimain2} is complete.
\end{proof}
\subsection{Asymptotics, part 2 ($t\rightarrow+\infty$ and $\sigma\rightarrow+\infty$)} Consider the transformation,
\begin{equation}\label{e58}
	{\bf Y}(z;t,\sigma,z_0):={\bf X}\left(\frac{z-t}{\sigma};t,\sigma\right){\bf M}^{\textnormal{Ai}}\left(z;z_0\sigma,\frac{\pi}{4}\right),\ \ \ \ z\in\mathbb{C}\setminus\Sigma_{\bf Y},
\end{equation}
with the Airy parametrix ${\bf M}^{\textnormal{Ai}}$ in \eqref{app2}, the contour $\Sigma_{\bf Y}$ as in RHP \ref{trafo2} and a yet to be determined real parameter $z_0$. Using RHP \ref{AiryRHP}, our initial RHP \ref{master} is transformed to the following one:
\begin{problem}\label{trafo2} Let $(t,\sigma,z_0)\in(0,\infty)\times(0,\infty)\times\mathbb{R}$. The function ${\bf Y}(z)={\bf Y}(z;t,\sigma,z_0)\in\mathbb{C}^{2\times 2}$ defined in \eqref{e58} is uniquely determined by the following properties:
\begin{enumerate}
	\item[(1)] ${\bf Y}(z)$ is analytic for $z\in\mathbb{C}\setminus\Sigma_{\bf Y}$ where $\Sigma_{\bf Y}:=\bigcup_{j=1}^4\Gamma_j^{\sigma}\cup\{z_0\sigma\}$ with
	\begin{equation*}
		\Gamma_1^{\sigma}:=(z_0\sigma,\infty),\ \ \ \ \Gamma_3^{\sigma}:=(-\infty,z_0\sigma),\ \ \ \ \Gamma_2^{\sigma}:=\e^{-\im\frac{\pi}{4}}(-\infty,z_0\sigma),\ \ \ \ \Gamma_4^{\sigma}:=\e^{\im\frac{\pi}{4}}(-\infty,z_0\sigma),
	\end{equation*}
	denotes the oriented contour shown in Figure \ref{figuretrafo2}. Moreover, on each connected component of $\mathbb{C}\setminus\Sigma_{\bf Y}$ there is a continuous extension of ${\bf Y}(z)$ to the closure of the same component.
	%
	\item[(2)] The continuous limiting values ${\bf Y}_{\pm}(z)$ on $\Gamma_j^{\sigma}\ni z$ satisfy the jump condition ${\bf Y}_+(z)={\bf Y}_-(z){\bf G}_{\bf Y}(z;t,\sigma,z_0)$ where the jump matrix ${\bf G}_{\bf Y}(z;t,\sigma,z_0)$ is piecewise given by
	\begin{equation*}
		{\bf G}_{\bf Y}(z;t,\sigma,z_0)=\begin{bmatrix}1 & 1-\Phi\big(\frac{z-t}{\sigma}\big)\\ 0 & 1\end{bmatrix},\ \ z\in\Gamma_1^{\sigma};\ \ \ \ {\bf G}_{\bf Y}(z;t,\sigma,z_0)=\begin{bmatrix}1 & 0\\ 1 & 1\end{bmatrix},\ \ z\in\Gamma_2^{\sigma}\cup\Gamma_4^{\sigma};
	\end{equation*}
	and
	\begin{equation*}
		{\bf G}_{\bf Y}(z;t,\sigma,z_0)=\begin{bmatrix}\Phi\big(\frac{z-t}{\sigma}\big) & 1-\Phi\big(\frac{z-t}{\sigma}\big)\smallskip\\ -1-\Phi\big(\frac{z-t}{\sigma}\big) & \Phi\big(\frac{z-t}{\sigma}\big)\end{bmatrix},\ \ z\in\Gamma_3^{\sigma}.
	\end{equation*}
	\begin{figure}[tbh]
	\begin{tikzpicture}[xscale=0.65,yscale=0.65]
	\draw [thick, color=red, decoration={markings, mark=at position 0.25 with {\arrow{>}}}, decoration={markings, mark=at position 0.75 with {\arrow{>}}}, postaction={decorate}] (-5,0) -- (5,0);
\node [below] at (0.75,-0.2) {{\small $z=z_0\sigma$}};
\node [right] at (4.5,0.6) {{\small $\Gamma_1^{\sigma}$}};
\node [left] at (-4.5,0.6) {{\small $\Gamma_3^{\sigma}$}};
\draw [thick, color=red, decoration={markings, mark=at position 0.5 with {\arrow{>}}}, postaction={decorate}] (-4,4) -- (0,0);
\draw [thick, color=red, decoration={markings, mark=at position 0.5 with {\arrow{>}}}, postaction={decorate}] (-4,-4) -- (0,0);
\node [right] at (-3.6,4.4) {{\small $\Gamma_2^{\sigma}$}};
\node [right] at (-3.6,-4.4) {{\small $\Gamma_4^{\sigma}$}};
\end{tikzpicture}
\caption{The oriented jump contours for ${\bf Y}(z)$ in the complex $z$-plane.}
\label{figuretrafo2}
\end{figure}
	\item[(3)] ${\bf Y}(z)$ is bounded in a neighbourhood of $z=z_0\sigma$.
	\item[(4)] As $z\rightarrow\infty$, valid in a full neighborhood of infinity off the jump contours,
	\begin{equation}\label{e59}
		{\bf Y}(z)=\Big\{\mathbb{I}+{\bf Y}_1z^{-1}+{\bf Y}_2z^{-2}+\mathcal{O}\big(z^{-3}\big)\Big\}z^{-\frac{1}{4}\sigma_3}\frac{1}{\sqrt{2}}\begin{bmatrix}1 & 1\\ -1 & 1\end{bmatrix}\e^{-\im\frac{\pi}{4}\sigma_3}\e^{-\frac{2}{3}z^{\frac{3}{2}}\sigma_3},
	\end{equation}
	where we choose principal branches for all fractional exponents and the coefficients ${\bf Y}_1,{\bf Y}_2$ equal
	\begin{align*}
		{\bf Y}_1={\bf Y}_1(t,\sigma)=&\,\,\sigma{\bf X}_1(t,\sigma)-\frac{7}{48}\begin{bmatrix}0 & 0\\ 1 & 0\end{bmatrix},\\ 
		{\bf Y}_2={\bf Y}_2(t,\sigma)=&\,\,\sigma^2{\bf X}_2(t,\sigma)+t\sigma{\bf X}_1(t,\sigma)
		-\frac{7\sigma}{48}{\bf X}_1(t,\sigma)\begin{bmatrix}0 & 0\\ 1 & 0\end{bmatrix}+\frac{5}{48}\begin{bmatrix}0 & 1\\ 0 & 0\end{bmatrix}.
	\end{align*}
\end{enumerate}
\end{problem}
Our next move will simplify the jump on $\Gamma_3$ in RHP \ref{trafo2}, while modifying the one on $\Gamma_2\cup\Gamma_4$. In detail, we make use of the matrix factorization
\begin{equation*}
	\begin{bmatrix}\Phi(\zeta) & 1-\Phi(\zeta)\\
	-1-\Phi(\zeta) & \Phi(\zeta)\end{bmatrix}=\underbrace{\begin{bmatrix}1 & 0\\ \Phi(\zeta)(1-\Phi(\zeta))^{-1} & 1\end{bmatrix}}_{=:{\bf M}(\zeta)}\begin{bmatrix}0 & 1-\Phi(\zeta)\\ -(1-\Phi(\zeta))^{-1} & 0\end{bmatrix}\begin{bmatrix}1 & 0\\ \Phi(\zeta)(1-\Phi(\zeta))^{-1} & 1\end{bmatrix},
\end{equation*}
valid for any $\zeta\in\mathbb{R}$ given that $\Phi:\mathbb{R}\rightarrow(0,1)$. In addition, since $1-\Phi(\zeta)=\frac{1}{2}\textnormal{erfc}(\zeta)$, the same factorization is also valid in the closed sector $\textnormal{arg}\,\zeta\in[\frac{3\pi}{4},\frac{5\pi}{4}]$ which does not contain any of the zeros of the complementary error function, see \cite[$\S 7.13$(ii)]{NIST}. Thus, provided we choose 
\begin{equation}\label{rej2}
	\omega:=\frac{t}{\sigma}\geq z_0
\end{equation}
later on, the following transformation is well-defined,
\begin{equation}\label{e60}
	{\bf T}(z;t,\sigma,z_0):={\bf Y}(z\sigma;t,\sigma,z_0)\begin{cases}\mathbb{I},&\textnormal{arg}\,(z-z_0)\in(-\frac{3\pi}{4},\frac{3\pi}{4})\smallskip\\
	{\bf M}^{-1}(z-\omega),&\textnormal{arg}\,(z-z_0)\in(\frac{3\pi}{4},\pi)\smallskip\\
	{\bf M}(z-\omega),&\textnormal{arg}\,(z-z_0)\in(-\pi,-\frac{3\pi}{4})
	\end{cases},\ \ \ \
%
%
%
%
%
\end{equation}
and it transforms RHP \ref{trafo2} to the problem below:
\begin{problem}\label{trafo3} Let $(t,\sigma,z_0)\in(0,\infty)\times(0,\infty)\times\mathbb{R}$ with $z_0\leq\frac{t}{\sigma}=:\omega$. The function ${\bf T}(z)={\bf T}(z;t,\sigma,z_0)\in\mathbb{C}^{2\times 2}$ defined in \eqref{e60} is uniquely determined by the following properties:
\begin{enumerate}
	\item ${\bf T}(z)$ is analytic for $z\in\mathbb{C}\setminus\Sigma_{\bf T}$ where $\Sigma_{\bf T}:=\bigcup_{j=1}^4\Gamma_j\cup\{z_0\}$ with
	\begin{equation*}
		\Gamma_1:=(z_0,\infty),\ \ \ \ \Gamma_3:=(-\infty,z_0),\ \ \ \ \Gamma_2:=\e^{-\im\frac{\pi}{4}}(-\infty,z_0),\ \ \ \ \Gamma_4:=\e^{\im\frac{\pi}{4}}(-\infty,z_0),
	\end{equation*}	
	denotes the oriented contour shown in Figure \ref{figuretrafo2a}. On each connected component of $\mathbb{C}\setminus\Sigma_{\bf T}$ there is a continuous extension of ${\bf T}(z)$ to the closure of the same component. 
		\begin{figure}[tbh]
	\begin{tikzpicture}[xscale=0.65,yscale=0.65]
	\draw [thick, color=red, decoration={markings, mark=at position 0.25 with {\arrow{>}}}, decoration={markings, mark=at position 0.75 with {\arrow{>}}}, postaction={decorate}] (-5,0) -- (5,0);
\node [below] at (0.75,-0.2) {{\small $z=z_0$}};
\node [right] at (4.5,0.6) {{\small $\Gamma_1$}};
\node [left] at (-4.5,0.6) {{\small $\Gamma_3$}};
\draw [thick, color=red, decoration={markings, mark=at position 0.5 with {\arrow{>}}}, postaction={decorate}] (-4,4) -- (0,0);
\draw [thick, color=red, decoration={markings, mark=at position 0.5 with {\arrow{>}}}, postaction={decorate}] (-4,-4) -- (0,0);
\node [right] at (-3.6,4.4) {{\small $\Gamma_2$}};
\node [right] at (-3.6,-4.4) {{\small $\Gamma_4$}};
\end{tikzpicture}
\caption{The oriented jump contours for ${\bf T}(z)$ in the complex $z$-plane.}
\label{figuretrafo2a}
\end{figure}
%
	\item The continuous limiting values ${\bf T}_{\pm}(z)$ on $\Gamma_j\ni z$ satisfy ${\bf T}_+(z)={\bf T}_-(z){\bf G}_{\bf T}(z;t,\sigma,z_0)$ where the jump matrix ${\bf G}_{\bf T}(z;t,\sigma,z_0)$ is of the form
	\begin{align*}
		{\bf G}_{\bf T}(z;t,\sigma,z_0)=&\,\begin{bmatrix}1 & 1-\Phi(z-\omega)\smallskip\\ 0 & 1\end{bmatrix},\ \ z\in\Gamma_1;\\
		{\bf G}_{\bf T}(z;t,\sigma,z_0)=&\,\begin{bmatrix}1 & 0\smallskip\\ \big(1-\Phi(z-\omega)\big)^{-1} & 1\end{bmatrix},\ \ z\in\Gamma_2\cup\Gamma_4;\\
		{\bf G}_{\bf T}(z;t,\sigma,z_0)=&\,\begin{bmatrix}0 & 1-\Phi(z-\omega)\smallskip\\ -\big(1-\Phi(z-\omega)\big)^{-1} & 0\end{bmatrix},\ \ z\in\Gamma_3.
	\end{align*}
	\item ${\bf T}(z)$ is bounded in a neighbourhood of $z=z_0$.
	\item As $z\rightarrow\infty$, valid in a full neighborhood of infinity off the jump contours,
	\begin{equation}\label{e61}
		{\bf T}(z)=\Big\{\mathbb{I}+{\bf T}_1z^{-1}+{\bf T}_2z^{-2}+\mathcal{O}\big(z^{-3}\big)\Big\}(z\sigma)^{-\frac{1}{4}\sigma_3}\frac{1}{\sqrt{2}}\begin{bmatrix}1 & 1\\ -1 & 1\end{bmatrix}\e^{-\im\frac{\pi}{4}\sigma_3}\e^{-\frac{2}{3}(z\sigma)^{\frac{3}{2}}\sigma_3},
	\end{equation}
	with principal branches throughout and the coefficients ${\bf T}_1,{\bf T}_2$ equal to, compare \eqref{e59},
	\begin{equation*}
		{\bf T}_1={\bf T}_1(t,\sigma)=\frac{1}{\sigma}{\bf Y}_1(t,\sigma),\ \ \ \ \ \ {\bf T}_2={\bf T}_2(t,\sigma)=\frac{1}{\sigma^2}{\bf Y}_2(t,\sigma).
	\end{equation*}
\end{enumerate}
\end{problem}
\begin{remark}\label{rem6} The analytic properties of ${\bf T}(z)$ listed in RHP \ref{trafo3} need no explanation given RHP \ref{trafo2} and the fact that $1-\Phi(\zeta)$ is zero-free in the sector $\textnormal{arg}\,\zeta\in[\frac{3\pi}{4},\frac{5\pi}{4}]$. As for the asymptotic normalization \eqref{e61}, notice that
\begin{equation*}
	\e^{-\frac{2}{3}(z\sigma)^{\frac{3}{2}}\sigma_3}{\bf M}^{\pm 1}(z-\omega)\e^{\frac{2}{3}(z\sigma)^{\frac{3}{2}}\sigma_3}\rightarrow\mathbb{I},
\end{equation*}
super-exponentially fast as $z\rightarrow\infty$ and $\textnormal{arg}\,(z-z_0)\in(-\pi,-\frac{3\pi}{4})\cup(\frac{3\pi}{4},\pi)$, for any $t,\sigma>0$ such that $z_0\leq\omega<+\infty$, cf. \cite[$\S 7.12$]{NIST}. Hence, modulo the rescaling $z\mapsto z\sigma$, the transformation \eqref{e60} affects the normalization \eqref{e59} not to leading orders.
\end{remark}
The off-diagonal entries in the jump matrix in RHP \ref{trafo3} on $\Gamma_3$, together with the normalization \eqref{e61}, motivate our next step, namely the $g$-function transformation. First, define the scalar-valued function $g:\mathbb{C}\setminus(-\infty,z_0]\rightarrow\mathbb{C}$ as the contour integral
\begin{equation}\label{e62}
	g(z;t,\sigma,z_0):=\int_{z_0}^z\rho(\zeta;t,\sigma,z_0)\,\d\zeta,
\end{equation}
with path of integration in $\mathbb{C}\setminus(-\infty,z_0]$ and where
\begin{equation}\label{e63}
	\rho(z;t,\sigma,z_0):=(z-z_0)^{\frac{1}{2}}\left[1+\frac{1}{2(\pi\sigma)^{\frac{3}{2}}}\int_{-\infty}^{z_0}\frac{\e^{-(\lambda-\omega)^2}}{1-\Phi(\lambda-\omega)}\frac{1}{\sqrt{z_0-\lambda}}\frac{\d\lambda}{\lambda-z}\right],\ \ \ z\in\mathbb{C}\setminus(-\infty,z_0],
\end{equation}
is defined with the principal branch for $(z-z_0)^{\frac{1}{2}}$ such that $(z-z_0)^{\frac{1}{2}}=\sqrt{z-z_0}>0$ when $\mathbb{R}\ni z>z_0$. Second, provided we choose $z_0=z_0(t,\sigma)$ as the unique real-valued solution to the equation
\begin{equation}\label{e64}
	z_0=-\frac{1}{(\pi\sigma)^{\frac{3}{2}}}\int_{-\infty}^{z_0}\frac{\e^{-(\lambda-\omega)^2}}{1-\Phi(\lambda-\omega)}\frac{\d\lambda}{\sqrt{z_0-\lambda}},
\end{equation}
so that in particular $(-\infty,0)\ni z_0<\omega=\frac{t}{\sigma}$ for all $t,\sigma>0$ which makes \eqref{e60} bona fide, the function $g(z)=g(z;t,\sigma,z_0)$ has the following analytic and asymptotic properties.
\begin{prop}\label{prop8} Let $t,\sigma>0$. Consider $g(z)=g(z;t,\sigma,z_0),z\in\mathbb{C}\setminus(-\infty,z_0]$ as defined in \eqref{e62},\eqref{e63} and \eqref{e64}. Then
\begin{enumerate}
	\item $g(z)$ is analytic for $z\in\mathbb{C}\setminus(-\infty,z_0]$ and attains continuous boundary values $g_{\pm}(z)=\lim_{\epsilon\downarrow 0}g(z\pm\im\epsilon)$ on $(-\infty,z_0)\ni z$.
	\item The boundary values $g_{\pm}(z)$ on $(-\infty,z_0)$ satisfy the constraint
	\begin{equation}\label{e65}
		g_+(z)+g_-(z)=\frac{1}{\sigma^{\frac{3}{2}}}\ln\big(1-\Phi(z-\omega)\big)-\frac{1}{\sigma^{\frac{3}{2}}}\ln\big(1-\Phi(z_0-\omega)\big),\ \ z\in(-\infty,z_0).
	\end{equation}
	\item As $z\rightarrow\infty$ with $z\notin(-\infty,z_0]$, for any $t,\sigma>0$,
	\begin{equation}\label{e66}
		g(z)=\frac{2}{3}z^{\frac{3}{2}}+\frac{\ell}{2}+\frac{g_1}{z^{\frac{1}{2}}}+\frac{g_2}{z^{\frac{3}{2}}}+\mathcal{O}\big(z^{-\frac{5}{2}}\big),
	\end{equation}
	choosing the principal branch for $z^{\alpha}:\mathbb{C}\setminus(-\infty,0]\rightarrow\mathbb{C}$ such that $z^{\alpha}>0$ if $z>0$. The $z$-independent coefficients $\ell=\ell(t,\sigma)$ and $g_j=g_j(t,\sigma)$ equal
	\begin{equation}\label{e66a}
		\ell=-\frac{1}{\sigma^{\frac{3}{2}}}\ln\big(1-\Phi(z_0-\omega)\big),\ \ \ g_1=-\frac{1}{(\pi\sigma)^{\frac{3}{2}}}\int_{-\infty}^{z_0}\frac{\e^{-(\lambda-\omega)^2}}{1-\Phi(\lambda-\omega)}\sqrt{z_0-\lambda}\,\d\lambda-\frac{z_0^2}{4},
	\end{equation}
	and
	\begin{equation}\label{e66aa}
		g_2=\frac{1}{3(\pi\sigma)^{\frac{3}{2}}}\int_{-\infty}^{z_0}\frac{\e^{-(\lambda-\omega)^2}}{1-\Phi(\lambda-\omega)}(z_0-\lambda)^{\frac{3}{2}}\,\d\lambda-\frac{z_0}{2(\pi\sigma)^{\frac{3}{2}}}\int_{-\infty}^{z_0}\frac{\e^{-(\lambda-\omega)^2}}{1-\Phi(\lambda-\omega)}\sqrt{z_0-\lambda}\,\d\lambda-\frac{z_0^3}{12}.
	\end{equation}
\end{enumerate}
\end{prop}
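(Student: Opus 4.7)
\textbf{Proof strategy for Proposition \ref{prop8}.} The plan is to verify the three items in order, isolating at each step the specific role of the equilibrium equation \eqref{e64}. Throughout, set $f(\lambda):=\frac{\e^{-(\lambda-\omega)^2}}{(1-\Phi(\lambda-\omega))\sqrt{z_0-\lambda}}$, so that $\rho(z)=(z-z_0)^{1/2}+\frac{(z-z_0)^{1/2}}{2(\pi\sigma)^{3/2}}\,\mathcal{C}(z)$ with the Cauchy transform $\mathcal{C}(z):=\int_{-\infty}^{z_0}\frac{f(\lambda)}{\lambda-z}\,\d\lambda$.

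For item (1), the density $f$ is integrable on $(-\infty,z_0)$ by Gaussian decay at $-\infty$ and the integrable $|z_0-\lambda|^{-1/2}$ singularity at $\lambda=z_0$. Hence $\mathcal{C}(z)$ is analytic off $(-\infty,z_0]$ and attains continuous boundary values there by Sokhotski--Plemelj; combined with the principal branch $(z-z_0)^{1/2}$ this gives the claimed analyticity of $\rho$, which $g$ inherits through the contour integral \eqref{e62}. For item (2), on $(-\infty,z_0)$ we have $(z-z_0)^{1/2}_\pm=\pm\im\sqrt{z_0-z}$ and $\mathcal{C}_+(z)-\mathcal{C}_-(z)=2\pi\im f(z)$, so the ``$1$'' contribution cancels in $\rho_++\rho_-$ while the Cauchy piece yields
\begin{equation*}
\rho_+(z)+\rho_-(z)=-\frac{1}{\sqrt{\pi}\,\sigma^{3/2}}\,\frac{\e^{-(z-\omega)^2}}{1-\Phi(z-\omega)},\qquad z<z_0.
\end{equation*}
Integrating along the real line from $z_0$ to $z$ and using $\frac{\d}{\d z}\ln(1-\Phi(z-\omega))=-\frac{1}{\sqrt{\pi}}\frac{\e^{-(z-\omega)^2}}{1-\Phi(z-\omega)}$ delivers \eqref{e65}.

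For item (3), expand $(z-z_0)^{1/2}=z^{1/2}\sum_{k\geq 0}\binom{1/2}{k}(-z_0)^k z^{-k}$ and $(z-\lambda)^{-1}=\sum_{k\geq 0}\lambda^k z^{-k-1}$ as asymptotic series at $z=\infty$ (the Gaussian factor in $f$ makes the tails harmless), multiply out, and collect powers of $z^{-1/2}$. The coefficient of $z^{-1/2}$ in $\rho(z)$ equals $\frac{z_0}{2}+\frac{1}{2(\pi\sigma)^{3/2}}J_0$ with $J_0:=\int_{-\infty}^{z_0}f(\lambda)\,\d\lambda$, which vanishes \emph{precisely} by the defining equation \eqref{e64}; thus only the powers $z^{1/2},z^{-3/2},z^{-5/2},\ldots$ survive in $\rho$. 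Integrating term by term produces the expansion \eqref{e66} with coefficients that are polynomials in $z_0$ and in the moments $J_k:=\int\lambda^k f(\lambda)\,\d\lambda$. Rewriting $\lambda^k$ binomially in powers of $(z_0-\lambda)$ and simplifying via $J_0=-z_0(\pi\sigma)^{3/2}$ converts the $J_k$ into the moments $K_m:=\int(z_0-\lambda)^{m-1/2}(1-\Phi(\lambda-\omega))^{-1}\e^{-(\lambda-\omega)^2}\,\d\lambda$ appearing in \eqref{e66a},\eqref{e66aa}. The additive constant $\ell$ is pinned down by evaluating \eqref{e65} as $z\to-\infty$: since $\Phi(z-\omega)\to 0$ while $z^{3/2}_++z^{3/2}_-=0$ and $z^{-(2k+1)/2}_++z^{-(2k+1)/2}_-=0$ for the principal branches on the negative axis, the expansion \eqref{e66} forces $g_+(z)+g_-(z)\to\ell$, whence $\ell=-\sigma^{-3/2}\ln(1-\Phi(z_0-\omega))$.

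I expect the main obstacle to be purely computational, namely keeping track of the algebra that converts the polynomial moments $J_k$ into the shifted moments $K_m$ while systematically invoking $J_0=-z_0(\pi\sigma)^{3/2}$ to cancel the tower of $z_0^{k+1}$ contributions; no new analytic input beyond items (1) and (2) enters the derivation of the explicit coefficients \eqref{e66a},\eqref{e66aa}.
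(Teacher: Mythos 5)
Your proposal is correct and follows essentially the same route as the paper: item (1) via the Cauchy transform of the integrable density together with Plemelj--Sokhotski, item (2) by integrating the resulting jump relation $\rho_+(z)+\rho_-(z)=\sigma^{-3/2}\frac{\d}{\d z}\ln(1-\Phi(z-\omega))$, and item (3) by expanding $\rho$ at $z=\infty$, using \eqref{e64} to annihilate the $z^{-1/2}$ coefficient, and then binomially recasting the moments $\int\lambda^k f\,\d\lambda$ as the shifted moments $\int(z_0-\lambda)^{m-\frac12}f\,\d\lambda$ appearing in \eqref{e66a}--\eqref{e66aa}. Your determination of $\ell$ by sending $z\to-\infty$ in \eqref{e65} and matching against the expansion \eqref{e66} is exactly the paper's ``consistency'' argument for the integration constant $g_0=\ell/2$.
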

\begin{proof} Fix $t,\sigma>0$ throughout. We begin by observing that $(-\infty,0)\ni x\mapsto-\ln(1-\Phi(x))$ is convex, and so 
\begin{equation*}
	-\int_{-\infty}^{z_0}\frac{\e^{-(\lambda-\omega)^2}}{1-\Phi(\lambda-\omega)}\frac{\d\lambda}{\sqrt{z_0-\lambda}}=-\int_{-\infty}^0\frac{\e^{-(\mu-\omega+z_0)^2}}{1-\Phi(\mu-\omega+z_0)}\frac{\d\mu}{\sqrt{-\mu}}<0
\end{equation*}
is a decreasing continuous function in $z_0\in(-\infty,0)$, which vanishes at $z_0=-\infty$. On the other hand, the left hand side in \eqref{e64} is an increasing, continuous function that connects $-\infty$ to $+\infty$ as $z_0$ increases on $\mathbb{R}$. Thus \eqref{e64} indeed has a unique real-valued solution $z_0=z_0(t,\sigma)$, for any $t,\sigma>0$. Next, given the analytic and asymptotic properties of the complementary error function on $\mathbb{R}$, we note that
\begin{equation*}
	\Omega(z):=\int_{-\infty}^{z_0}\frac{\e^{-(\lambda-\omega)^2}}{1-\Phi(\lambda-\omega)}\frac{1}{\sqrt{z_0-\lambda}}\frac{\d\lambda}{\lambda-z}
\end{equation*}
is well-defined for any $z\in\mathbb{C}\setminus(-\infty,z_0]$. Moreover we have the representation, valid for $z\notin(-\infty,z_0]$,
\begin{equation}\label{e67}
	\Omega(z)=\int_{-\infty}^{z_0}\left[\frac{\e^{-(\lambda-\omega)^2}}{1-\Phi(\lambda-\omega)}-\frac{\e^{-(z_0-\omega)^2}}{1-\Phi(z_0-\omega)}\right]\frac{1}{\sqrt{z_0-\lambda}}\frac{\d\lambda}{\lambda-z}-\frac{\e^{-(z_0-\omega)^2}}{1-\Phi(z_0-\omega)}\pi(z-z_0)^{-\frac{1}{2}},
\end{equation}
%
%
where the first integral is an analytic function in $z$ off its integration path, since it is a Cauchy integral of a function which obeys a H\"older condition on the contour, and the second term is analytic save for a branch cut.
Consequently $\Omega:\mathbb{C}\setminus(-\infty,z_0]\rightarrow\mathbb{C}$ is analytic on $\mathbb{C}\setminus(-\infty,z_0]$ and attains continuous boundary values $\Omega_{\pm}(z)$ on $(-\infty,z_0)\ni z$ by \eqref{e67} and \cite[$\S 4.6,5.1$]{Ga}. Moreover, by the Plemelj-Sokhotski formula,
\begin{equation*}
	\Omega_+(z)-\Omega_-(z)=\frac{\e^{-(z-\omega)^2}}{1-\Phi(z-\omega)}\frac{2\pi\im}{\sqrt{z_0-z}},\ \ \ z\in(-\infty,z_0).
\end{equation*}
This reasoning verifies the analyticity of $\mathbb{C}\setminus(-\infty,z_0]\ni z\mapsto\rho(z)=\rho(z;t,\sigma,z_0)$, for any $t,\sigma>0$ and $z_0\in\mathbb{R}$, the existence of its continuous boundary values $\rho_{\pm}(z)=\lim_{\epsilon\downarrow 0}\rho(z\pm\im\epsilon)$ on $(-\infty,z_0)$,
\begin{equation*}
	\rho_{\pm}(z)=\pm\im\sqrt{z_0-z}\left[1+\frac{1}{2(\pi\sigma)^{\frac{3}{2}}}\Omega_{\pm}(z)\right],\ \ \ z\in(-\infty,z_0),
\end{equation*}
and the relation
\begin{equation}\label{e69}
	\rho_+(z)+\rho_-(z)=-\frac{1}{\sqrt{\pi}\sigma^{\frac{3}{2}}}\frac{\e^{-(z-\omega)^2}}{1-\Phi(z-\omega)}=\frac{1}{\sigma^{\frac{3}{2}}}\frac{\d}{\d z}\ln\big(1-\Phi\left(z-\omega\right)\big),\ \ z\in(-\infty,z_0).
\end{equation}
All together, the antiderivative of $z\mapsto\rho(z)$, i.e. $z\mapsto g(z)$ with integration path in $\mathbb{C}\setminus(-\infty,z_0]$, is analytic in $\mathbb{C}\setminus(-\infty,z_0]\ni z$ and has continuous boundary values $g_{\pm}(z)$ on $(-\infty,z_0)$ that satisfy
\begin{equation*}
	g_+(z)+g_-(z)=\int_{z_0}^z\big[\rho_+(\zeta)+\rho_-(\zeta)\big]\,\d\zeta\stackrel{\eqref{e69}}{=}\frac{1}{\sigma^{\frac{3}{2}}}\ln\big(1-\Phi\left(z-\omega\right)\big)-\frac{1}{\sigma^{\frac{3}{2}}}\ln\big(1-\Phi\left(z_0-\omega\right)\big).
\end{equation*}
This establishes property $(1)$ and $(2)$ in Proposition \ref{prop8}. For expansion \eqref{e66}, we begin with the observation
\begin{align*}
	\rho(z)=z^{\frac{1}{2}}\bigg[1-&\,\frac{1}{z}\left\{\frac{z_0}{2}+\rho_0\right\}-\frac{1}{z^2}\left\{\frac{z_0^2}{8}-\frac{1}{2}z_0\rho_0+\rho_1\right\}-\frac{1}{z^3}\left\{\frac{z_0^3}{16}-\frac{1}{8}z_0^2\rho_0-\frac{1}{2}z_0\rho_1+\rho_2\right\}\\
	&\,+\mathcal{O}\big(z^{-4}\big)\bigg];\ \ \ \ \ \ \ \ \ \rho_k:=\frac{1}{2(\pi\sigma)^{\frac{3}{2}}}\int_{-\infty}^{z_0}\frac{\e^{-(\lambda-\omega)^2}}{1-\Phi(\lambda-\omega)}\frac{\lambda^k}{\sqrt{z_0-\lambda}}\,\d\lambda,\ \ k\in\mathbb{Z}_{\geq 0},
\end{align*}
valid as $z\rightarrow\infty$ and $z\notin(-\infty,z_0]$. Thus, with \eqref{e64}, i.e. with $z_0=-2\rho_0$ in place, after simplification,
%
%
%
%
%
%
%
%
%
%
\begin{equation}\label{e70}
	\rho(z)=z^{\frac{1}{2}}\bigg[1-\frac{g_1}{2z^2}-\frac{3g_2}{2z^3}+\mathcal{O}\big(z^{-4}\big)\bigg],
\end{equation}
using $g_j=g_j(t,\sigma)$ as in \eqref{e66a},\eqref{e66aa}, and so after $z$-integration, from \eqref{e62},
\begin{equation}\label{e71}
	g(z)=\frac{2}{3}z^{\frac{3}{2}}+g_0+\frac{g_1}{z^{\frac{1}{2}}}+\frac{g_2}{z^{\frac{3}{2}}}+\mathcal{O}\big(z^{-\frac{5}{2}}\big),\ \ \ \ z\rightarrow\infty,\ \ z\notin(-\infty,z_0],
\end{equation}
where $g_0=g_0(t,\sigma)$, the integration constant, is yet to be determined. However, consistency between \eqref{e71} and \eqref{e65} enforces $g_0=-\frac{1}{2}\sigma^{-\frac{3}{2}}\ln(1-\Phi(z_0-\omega))$ and thus $\ell=\ell(t,\sigma)$ in \eqref{e66} as written in \eqref{e66a}. This completes the proof of property $(3)$ and thus the proof of Proposition \ref{prop8}.
\end{proof}
Equipped with the above properties of $g(z)=g(z;t,\sigma,z_0)$ we now proceed with the aformentioned $g$-function transformation. Define
\begin{equation}\label{e75}
	{\bf S}(z;t,\sigma,z_0):=\begin{bmatrix}1 & 0\\ g_1\sigma^2 & 1\end{bmatrix}{\bf T}(z;t,\sigma,z_0)\e^{s(g(z)-\frac{\ell}{2})\sigma_3},\ \ z\in\mathbb{C}\setminus\Sigma_{\bf T}; \ \ \ s=\sigma^{\frac{3}{2}},
\end{equation}
where we don't indicate the $(t,\sigma,z_0)$-dependence of $g(z),\ell$ and $g_1$, compare \eqref{e62} and \eqref{e66a}. The defining properties of ${\bf S}(z)$ are summarized below:
\begin{problem}\label{trafo4} Let $(t,\sigma)\in(0,\infty)\times(0,\infty)$ and $z_0=z_0(t,\sigma)$ as in \eqref{e64}. The function ${\bf S}(z)={\bf S}(z;t,\sigma,z_0)\in\mathbb{C}^{2\times 2}$ defined in \eqref{e75} is uniquely determined by the following properties:
\begin{enumerate}
	\item ${\bf S}(z)$ is analytic for $z\in\mathbb{C}\setminus\Sigma_{\bf T}$ with $\Sigma_{\bf T}$ defined in RHP \ref{trafo3} and shown in Figure \ref{figuretrafo2a}. On each connected component of $\mathbb{C}\setminus\Sigma_{\bf T}$ there is a continuous extension of ${\bf S}(z)$ to the closure of the same component.
%
%
	\item The continuous limiting values ${\bf S}_{\pm}(z)$ on $\Gamma_j\ni z$ satisfy ${\bf S}_+(z)={\bf S}_-(z){\bf G}_{\bf S}(z;t,\sigma,z_0)$ with the jump matrix ${\bf G}_{\bf S}(z;t,\sigma,z_0)$ equal to
	\begin{align*}
		{\bf G}_{\bf S}(z;t,\sigma,z_0)=&\,\,\begin{bmatrix}1 & (1-\Phi(z-\omega))\e^{-s(2g(z)-\ell)}\smallskip\\ 0 & 1\end{bmatrix},\ \ z\in\Gamma_1;\\
		{\bf G}_{\bf S}(z;t,\sigma,z_0)=&\,\,\begin{bmatrix}1 & 0\smallskip\\ (1-\Phi(z-\omega))^{-1}\e^{s(2g(z)-\ell)} & 1\end{bmatrix},\ \ z\in\Gamma_2\cup\Gamma_4;
	\end{align*}
	and
	\begin{align*}
		{\bf G}_{\bf S}(z;t,\sigma,z_0)=&\,\,\begin{bmatrix}0 & 1\smallskip\\ -1 & 0\end{bmatrix},\ \ z\in\Gamma_3.
	\end{align*}
	\item ${\bf S}(z)$ is bounded in a neighbourhood of $z=z_0$.
	\item As $z\rightarrow\infty$, valid in a full neighborhood of infinity off the jump contours,
	\begin{equation}\label{e76}
		{\bf S}(z)=\Big\{\mathbb{I}+{\bf S}_1z^{-1}+{\bf S}_2z^{-2}+\mathcal{O}\big(z^{-3}\big)\Big\}(z\sigma)^{-\frac{1}{4}\sigma_3}\frac{1}{\sqrt{2}}\begin{bmatrix}1 & 1\\ -1 & 1\end{bmatrix}\e^{-\im\frac{\pi}{4}\sigma_3}
	\end{equation}
	with the principal branch for $z^{\alpha}:\mathbb{C}\setminus(-\infty,0]\rightarrow\mathbb{C}$ and ${\bf S}_1,{\bf S}_2$ equal to, compare \eqref{e61},\eqref{e66},
	\begin{align*}
		{\bf S}_1=&\,{\bf S}_1(t,\sigma)=\begin{bmatrix}1 & 0\\
		g_1\sigma^2 & 1\end{bmatrix}{\bf T}_1(t,\sigma)\begin{bmatrix}1 & 0\\ -g_1\sigma^2 & 1\end{bmatrix}+\begin{bmatrix}\frac{1}{2}g_1^2\sigma^3 & -g_1\sigma\smallskip\\ -g_2\sigma^2+\frac{1}{3}g_1^3\sigma^5 & -\frac{1}{2}g_1^2\sigma^3\end{bmatrix},\\
		{\bf S}_2=&\,{\bf S}_2(t,\sigma)=\begin{bmatrix}1 & 0\\ g_1\sigma^2 & 1\end{bmatrix}{\bf T}_1(t,\sigma)\begin{bmatrix}\frac{1}{2}g_1^2\sigma^3 & -g_1\sigma\smallskip\\ -g_2\sigma^2-\frac{1}{6}g_1^3\sigma^5 & \frac{1}{2}g_1^2\sigma^3\end{bmatrix}+
		\begin{bmatrix}1 & 0\\ g_1\sigma^2 & 1\end{bmatrix}{\bf T}_2(t,\sigma)\begin{bmatrix}1 & 0\\ -g_1\sigma^2 & 1\end{bmatrix}\\
		&\hspace{2cm}+\begin{bmatrix}g_1g_2\sigma^3+\frac{1}{24}g_1^4\sigma^6 & -g_2\sigma-\frac{1}{6}g_1^3\sigma^4\smallskip\\ \ast & -\frac{1}{8}g_1^4\sigma^6\end{bmatrix}.
	\end{align*}
	We use $\ast$ to represent scalar entries, dependent on $g_j(t,\sigma)$ only, that are not needed in the following.
\end{enumerate}
\end{problem}
As seen in property $(2)$ of RHP \ref{trafo4}, certain combinations of $1-\Phi(z-\omega)$ and $\e^{-s(2g(z)-\ell)}$ appear in the jump entries. These turn out to be asymptotically small away from $z=z_0$, made precise below in our next result.
\begin{prop}\label{prop9}
The following two inequalities hold for any $t,\sigma>0$ with $s:=\sigma^{\frac{3}{2}}>0$,
\begin{equation}\label{e66b}
		\bigg|\e^{-s(2g(z)-\ell)}\big(1-\Phi\left(z-\omega\right)\big)\bigg|\leq \e^{-\frac{4}{3}s(z-z_0)^{\frac{3}{2}}},\ \ z\in\Gamma_1;
\end{equation}
\begin{equation}\label{e66c}
		\bigg|\e^{s(2g(z)-\ell)}\big(1-\Phi\left(z-\omega\right)\big)^{-1}\bigg|\leq \e^{-\frac{2}{3}s|z-z_0|^{\frac{3}{2}}+2s|z_0||z-z_0|^{\frac{1}{2}}}
		,\ \ z\in\Gamma_2\cup\Gamma_4.
\end{equation}
Furthermore, for any $t,\sigma>0$, let 
\begin{equation}\label{e66d}
	0<\epsilon_{t,\sigma}:=\frac{1}{3\sqrt{2}}\min\left\{\frac{t}{\sigma},1\right\}>0.
\end{equation}
Then the function
\begin{equation*}
	z\mapsto\eta(z)=\eta(z;t,\sigma,z_0):=1+\frac{1}{2(\pi\sigma)^{\frac{3}{2}}}\int_{-\infty}^{z_0}\left[\frac{\e^{-(\lambda-\omega)^2}}{1-\Phi(\lambda-\omega)}-\frac{\e^{-(z-\omega)^2}}{1-\Phi(z-\omega)}\right]\frac{1}{\sqrt{z_0-\lambda}}\frac{\d\lambda}{\lambda-z}
\end{equation*}
is analytic in the open disk $|z-z_0|<\epsilon_{t,\sigma}$ and satisfies $\eta(z_0)\geq 1$. Additionally, there exists $c>0$, so that for any $t,\sigma\geq 1$,
\begin{equation}\label{e66e}
	\big|\eta(z)-\eta(z_0)\big|\leq c|z-z_0|,\ \ \ \ |z-z_0|<\epsilon_{t,\sigma}.
\end{equation}
\end{prop}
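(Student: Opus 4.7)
The plan is to reduce all three claims to a single identity for the $g$-function. First, I evaluate the Cauchy integral $\int_{-\infty}^{z_0}\d\lambda/[(\lambda-z)\sqrt{z_0-\lambda}] = -\pi(z-z_0)^{-1/2}$ (by the substitution $\lambda = z_0 - u^2(z-z_0)$ for $z>z_0$ and analytic continuation to $\mathbb{C}\setminus(-\infty,z_0]$) and insert it into \eqref{e63} to obtain
\begin{equation*}
\rho(z) = (z-z_0)^{1/2}\eta(z) + \tfrac{1}{2s}\tfrac{\d}{\d z}\ln\big(1-\Phi(z-\omega)\big),\qquad s:=\sigma^{3/2},
\end{equation*}
valid throughout $\mathbb{C}\setminus(-\infty,z_0]$ in regions where $1-\Phi(\cdot-\omega)$ is non-vanishing. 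Integrating from $z_0$ to $z$ and using the definition of $\ell$ in \eqref{e66a} yields
\begin{equation*}
s\big(2g(z)-\ell\big) = 2sI(z) + \ln\big(1-\Phi(z-\omega)\big),\qquad I(z):=\int_{z_0}^z(\zeta-z_0)^{1/2}\eta(\zeta)\,\d\zeta,
\end{equation*}
so the quantities appearing in \eqref{e66b} and \eqref{e66c} collapse to $\e^{-2sI(z)}$ and $\e^{2sI(z)}$ respectively.

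For \eqref{e66b} (with $z > z_0$ real), I show that $\eta(\zeta) \geq 1$ for all real $\zeta \geq z_0$. This follows from the strict monotonicity of $f(x):=\e^{-x^2}/(1-\Phi(x)) = -\sqrt{\pi}\frac{\d}{\d x}\ln(1-\Phi(x))$ on $\mathbb{R}$, a classical reflection of log-concavity of the Gaussian tail: for $\lambda < z_0 \leq \zeta$ real one has $f(\lambda-\omega) < f(\zeta-\omega)$ while $\lambda-\zeta<0$, rendering the integrand defining $\eta(\zeta)-1$ non-negative. Consequently $I(z) \geq \tfrac{2}{3}(z-z_0)^{3/2}$, which yields \eqref{e66b} after exponentiation.

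For \eqref{e66c} (with $z = z_0 + R\e^{\pm\im 3\pi/4}$, $R=|z-z_0|$), I parameterize the contour linearly by $\zeta = z_0 + t\e^{\pm\im 3\pi/4}$, $t\in[0,R]$, so $(\zeta-z_0)^{1/2}\d\zeta = t^{1/2}\e^{\pm\im 9\pi/8}\d t$. Splitting $\eta = 1 + (\eta-1)$, the constant piece contributes $\Re\big[2\int_0^R t^{1/2}\e^{\pm\im 9\pi/8}\d t\big] = -\tfrac{4}{3}R^{3/2}\cos(\pi/8) < -\tfrac{2}{3}R^{3/2}$, providing a definite margin. The $(\eta-1)$-piece I then split at $t_0:=\min\{R,\epsilon_{t,\sigma}\}$: on $[0,t_0]$ the Lipschitz bound \eqref{e66e} with $\eta(z_0) \geq 1$ controls the contribution of order $O(R^{5/2})$, while on $[t_0,R]$ (if $R>\epsilon_{t,\sigma}$) I bound $|\eta(\zeta)|$ directly from its integral definition using $|\lambda-\zeta| \geq |\zeta-z_0|/\sqrt{2}$ on the ray, the monotonicity bound $f(\lambda-\omega) \leq f(z_0-\omega)$ on $(-\infty,z_0]$, and the super-exponential decay $|f(\zeta-\omega)| \leq C\e^{-(z_0-\omega)^2 - \sqrt{2}|z_0-\omega|t}$ that results from $\Re[(\zeta-\omega)^2] = (z_0-\omega)^2 + \sqrt{2}|z_0-\omega|t$. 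The extra term $2s|z_0||z-z_0|^{1/2}$ in the statement of \eqref{e66c} is the slack needed to absorb boundary contributions from this splitting. This is the main obstacle of the proof, for the geometry of the descent rays forces a delicate balance between the $\cos(\pi/8)$ decay margin and the non-trivial behavior of $\eta$ far from $z_0$.

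For the analyticity and Lipschitz claims on $\eta$, I use the Hadamard quotient $[f(\lambda-\omega)-f(z-\omega)]/(\lambda-z) = \int_0^1 f'((1-u)(z-\omega) + u(\lambda-\omega))\,\d u$, exhibiting $\eta(z)$ as analytic in $z$ whenever each segment from $z-\omega$ to $\lambda-\omega$ (with $\lambda\in(-\infty,z_0]$) avoids the zeros of $\textnormal{erfc}$. Since those zeros lie in the sectors $|\arg w|>\pi/4$ off the real axis while the segments in question stay within $\epsilon_{t,\sigma}$ of the negative real ray $(-\infty, z_0-\omega]$, the specific choice $\epsilon_{t,\sigma} = (3\sqrt{2})^{-1}\min\{t/\sigma,1\}$ keeps them safely away. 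The positivity $\eta(z_0)\geq 1$ follows from the integrand-positivity argument of the preceding paragraph by passing to the limit $z\to z_0$ under the integral sign. Finally, \eqref{e66e} is obtained by differentiating $\eta$ under the integral sign and bounding $\sup_{|z-z_0|\leq \epsilon_{t,\sigma}/2}|\eta'(z)|$ uniformly for $t,\sigma\geq 1$ (using that $|f''|$ decays on the relevant negative-real region and that the $1/\sqrt{z_0-\lambda}$ weight is integrable), then invoking the mean-value inequality.
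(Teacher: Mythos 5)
Your identity $\rho(z) = (z-z_0)^{1/2}\eta(z) + \frac{1}{2s}\frac{\d}{\d z}\ln\big(1-\Phi(z-\omega)\big)$ and the resulting reduction of \eqref{e66b} to $\eta(\zeta)\geq 1$ on $(z_0,\infty)$ are correct and equivalent to the paper's argument, which differentiates $2g-\ell-\sigma^{-3/2}\ln(1-\Phi)$ and invokes convexity of $-\ln(1-\Phi)$; the analyticity, $\eta(z_0)\geq 1$, and \eqref{e66e} claims are likewise handled in the same spirit as the paper. The proof of \eqref{e66c}, however, has a genuine gap. Setting $I(z)=\int_{z_0}^z(\zeta-z_0)^{1/2}\eta(\zeta)\,\d\zeta$ and $R=|z-z_0|$, you need $\Re I(z)\leq -\frac{1}{3}R^{3/2}+|z_0|R^{1/2}$, and you propose to bound $\Re\int_0^R t^{1/2}\e^{\pm\im 9\pi/8}\eta(\zeta(t))\,\d t$ by writing $\eta=1+(\eta-1)$. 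The constant piece gives the $\cos(\pi/8)$ margin, but the $(\eta-1)$ piece is not actually controlled: the Lipschitz estimate \eqref{e66e} says nothing about the additive constant $\eta(z_0)-1\geq 0$, whose size you never bound; \eqref{e66e} holds only for $t,\sigma\geq 1$ while \eqref{e66c} is claimed for all $t,\sigma>0$; it does not reach the tail $[t_0,R]$, which remains a sketch; and $\Im\eta$ enters $\Re\big[\e^{\pm\im 9\pi/8}\eta\big]$ with a non-trivial coefficient, so without control of $\Im\eta$ along the ray the constant-piece margin can be lost. Asserting that the $2s|z_0|R^{1/2}$ slack absorbs all of this is an aspiration, not a proof, particularly since the quantity to be absorbed is governed by $\eta(z_0)-1$ and $\Im\eta$ rather than by $|z_0|$.

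The paper sidesteps the issue. It first applies the erfc monotonicity \eqref{appB1} to reduce the left-hand side of \eqref{e66c} to $|\e^{2sg(z)}|$ — in your notation, precisely the inequality $\Re I(z)\leq\Re g(z)$ — and then derives the exact closed form $g(z)=\frac{2}{3}(z-z_0)^{3/2}+z_0(z-z_0)^{1/2}+\frac{1}{(\pi\sigma)^{3/2}}\int_{-\infty}^{z_0}\frac{\e^{-(\lambda-\omega)^2}}{1-\Phi(\lambda-\omega)}\arctan\big[((z-z_0)/(z_0-\lambda))^{1/2}\big]\,\d\lambda$ by Fubini and the endpoint equation \eqref{e64}. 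On the ray $\arg(z-z_0)=\pm 3\pi/4$ this is estimated using the elementary bound $|\arctan(w)|\leq\sqrt{2}|w|$ for $\arg w=\pm 3\pi/8$ and one further application of \eqref{e64}. The arctangent formula is the ingredient your argument lacks: it replaces the uncontrolled integral of $\eta$ along the ray by an explicit, readily estimated expression. To close the gap, either establish $\Re I\leq\Re g$ via \eqref{appB1} and supply the arctangent estimate for $\Re g$, or actually prove the uniform baseline and tail bounds on $\eta$ that your splitting implicitly requires.
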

\begin{proof}
We have for $z>z_0$,
\begin{align*}
	\frac{\d}{\d z}\bigg[2&g(z)-\ell-\frac{1}{\sigma^{\frac{3}{2}}}\ln\big(1-\Phi\left(z-\omega\right)\big)\bigg]\stackrel{\eqref{e62}}{=}2\rho(z)+\frac{1}{\sqrt{\pi}\sigma^{\frac{3}{2}}}\frac{\e^{-(z-\omega)^2}}{1-\Phi(z-\omega)}\\
	&\,\stackrel{\eqref{e63}}{=}\sqrt{z-z_0}\left\{2+\frac{1}{(\pi\sigma)^{\frac{3}{2}}}\int_{-\infty}^{z_0}\left[\frac{\e^{-(\lambda-\omega)^2}}{1-\Phi(\lambda-\omega)}-\frac{\e^{-(z-\omega)^2}}{1-\Phi(z-\omega)}\right]\frac{1}{\sqrt{z_0-\lambda}}\frac{\d\lambda}{\lambda-z}\right\}\geq 2\sqrt{z-z_0},
\end{align*}
since $(-\infty,0)\ni x\mapsto-\ln(1-\Phi(x))$ is convex and since
\begin{equation*}
	\sqrt{z-z_0}\int_{-\infty}^{z_0}\frac{1}{\sqrt{z_0-\lambda}}\frac{\d\lambda}{\lambda-z}=-\pi,\ \ \ \ z>z_0.
\end{equation*}
Thus, upon integration of the above inequality, for any $z>z_0$,
\begin{align*}
	2g(z)-\ell-\frac{1}{\sigma^{\frac{3}{2}}}\ln\big(1-&\,\Phi\left(z-\omega\right)\big)\stackrel{\eqref{e62}}{=}\int_{z_0}^z\frac{\d}{\d\zeta}\bigg[2g(\zeta)-\ell-\frac{1}{\sigma^{\frac{3}{2}}}\ln\big(1-\Phi\left(\zeta-\omega\right)\big)\bigg]\,\d\zeta\\
	&\,\geq 2\int_{z_0}^z\sqrt{\zeta-z_0}\,\d\zeta=\frac{4}{3}(z-z_0)^{\frac{3}{2}},
\end{align*}
which implies \eqref{e66b}. For \eqref{e66c} we write instead
\begin{equation}\label{e72}
	\bigg|\e^{s(2g(z)-\ell)}\big(1-\Phi\left(z-\omega\right)\big)^{-1}\bigg|\stackrel{\eqref{e66a}}{=}\big|\e^{2sg(z)}\big|\,\frac{\textnormal{erfc}(z_0-\omega)}{\big|\textnormal{erfc}(z-\omega)\big|}\stackrel{\eqref{appB1}}{\leq}\big|\e^{2sg(z)}\big|,
\end{equation}
and now estimate $\Gamma_2\cup\Gamma_4\ni z\mapsto g(z)$ as follows: since $g(z)=\int_0^{z-z_0}\rho(\zeta+z_0)\,\d\zeta$ we can use \eqref{e63}, Fubini's theorem and the endpoint equation \eqref{e64} to obtain the following exact expression for $g(z)$,
\begin{equation*}
	g(z)=\frac{2}{3}(z-z_0)^{\frac{3}{2}}+z_0(z-z_0)^{\frac{1}{2}}+\frac{1}{(\pi\sigma)^{\frac{3}{2}}}\int_{-\infty}^{z_0}\frac{\e^{-(\lambda-\omega)^2}}{1-\Phi(\lambda-\omega)}\arctan\left[\left(\frac{z-z_0}{z_0-\lambda}\right)^{\frac{1}{2}}\right]\d\lambda,\ z\in\mathbb{C}\setminus(-\infty,z_0],
\end{equation*}
with principal branches for all fractional exponents and also with the principal value for the inverse trigonometric function, cf. \cite[$\S 4.23$(ii)]{NIST}. But for $w=|w|\e^{\pm\im\frac{3\pi}{8}}\in\mathbb{C}$ we have the simple estimate
\begin{equation*}
	\big|\arctan(w)\big|=\left|\int_0^w\frac{\d t}{1+t^2}\right|\leq|w|\sup_{0\leq t\leq|w|}\frac{1}{|1+\e^{\pm\im\frac{3\pi}{4}}t^2|}\leq\sqrt{2}|w|,
\end{equation*}
and so all together, for $z\in\Gamma_2\cup\Gamma_4$,
\begin{equation}\label{e72a}
	\Re g(z)\leq\frac{2}{3}|z-z_0|^{\frac{3}{2}}\cos\Big(\frac{9\pi}{8}\Big)+z_0\left(\cos\Big(\frac{3\pi}{8}\Big)-\sqrt{2}\right)|z-z_0|^{\frac{1}{2}},
\end{equation}
where we have used \eqref{e64} one more time. Estimate \eqref{e72a} yields \eqref{e66c} and completes our derivation of the same inequality.
Next, $z\mapsto\eta(z)$ is well defined for all $z\in\mathbb{C}$ away from the sectors $\textnormal{arg}(z-\frac{t}{\sigma})\in(-\frac{3\pi}{4},-\frac{\pi}{2})\cup(\frac{\pi}{2},\frac{3\pi}{4})$ and hence certainly analytic in any disk centered at $z_0$ which lies in the complement of the same sectors. This is achieved by \eqref{e66d}. Also, again by convexity of $(-\infty,0)\ni x\mapsto-\ln(1-\Phi(x))$, we obtain at once the lower bound $\eta(z_0)\geq 1$, so it remains to derive \eqref{e66e} for any $t,\sigma\geq 1$. For that we estimate the derivative
\begin{equation*}
	\frac{\d}{\d z}\eta(z)=\frac{1}{2(\pi\sigma)^{\frac{3}{2}}}\int_{-\infty}^{z_0-\epsilon_{t,\sigma}}k\left(\lambda,z;\omega\right)\frac{\d\lambda}{\sqrt{z_0-\lambda}}+\frac{1}{2(\pi\sigma)^{\frac{3}{2}}}\int_{z_0-\epsilon_{t,\sigma}}^{z_0}k\left(\lambda,z;\omega\right)\frac{\d\lambda}{\sqrt{z_0-\lambda}}
\end{equation*}
in the disk $|z-z_0|<\epsilon_{t,\sigma}$ when $t,\sigma\geq 1$. Given that
\begin{equation*}
	k\left(\lambda,z;\omega\right):=\frac{\partial}{\partial z}\left\{\left[\frac{\e^{-(\lambda-\omega)^2}}{1-\Phi(\lambda-\omega)}-\frac{\e^{-(z-\omega)^2}}{1-\Phi(z-\omega)}\right]\frac{1}{\lambda-z}\right\},
\end{equation*}
we obtain for $\lambda\in(-\infty,z_0-\epsilon_{t,\sigma})$, by the properties of the functions involved,
\begin{equation*}
	\left|k\left(\lambda,z;\omega\right)\right|\leq\frac{c}{1+|\lambda|},
\end{equation*}
uniformly on $|z-z_0|<\epsilon_{t,\sigma}$ for all $t,\sigma\geq 1$. On the other hand, when $\lambda\in(z_0-\epsilon,z_0)$, in fact more generally for $|\lambda-z|<2\epsilon_{t,\sigma}$ and $|z-z_0|<\epsilon_{t,\sigma}$, by Taylor's theorem,
\begin{align*}
	\left|k\left(\lambda,z;\omega\right)\right|=&\,\frac{1}{|\lambda-z|^2}\left|\frac{\e^{-(\lambda-\omega)^2}}{1-\Phi(\lambda-\omega)}-\frac{\e^{-(z-\omega)^2}}{1-\Phi(z-\omega)}+\sqrt{\pi}\frac{\d^2}{\d\lambda^2}\ln\big(1-\Phi\left(\lambda-\omega\right)\big)\bigg|_{\lambda=z}(\lambda-z)\right|\\
	\leq&\,\frac{c}{\epsilon_{t,\sigma}^2}\max_{|\lambda-z|=2\epsilon_{t,\sigma}}\left|\frac{\e^{-(\lambda-\omega)^2}}{1-\Phi(\lambda-\omega)}\right|\leq\frac{c}{\epsilon_{t,\sigma}^2}.
\end{align*}
Consequently, there exist $c_k>0$ so that for any $t,\sigma\geq 1$,
\begin{equation*}
	\left|\frac{\d}{\d z}\eta(z)\right|\leq c_1+\frac{c_2}{\sigma^{\frac{3}{2}}}\int_{z_0-\epsilon_{t,\sigma}}^{z_0}\frac{\d\lambda}{\epsilon_{t,\sigma}^2\sqrt{z_0-\lambda}}=c_1+\frac{c_3}{(\sigma\epsilon_{t,\sigma})^{\frac{3}{2}}}\stackrel{\eqref{e66d}}{\leq}c_4,
\end{equation*}
valid in the open disk $|z-z_0|<\epsilon_{t,\sigma}$. This yields \eqref{e66e} after integration and  completes our proof of Proposition \ref{prop9}.
\end{proof}

By \eqref{e66b} and \eqref{e66c}, seeing that $sz_0$ is bounded from above because of \eqref{appB4}, ${\bf G}_{\bf S}$ in RHP \ref{trafo4} is aymptotically localized near $z=z_0$ and along the line segment $\Gamma_3$. We therefore proceed with the necessary local analysis and begin with the following standard problem:
\begin{problem}\label{para1} Let $\sigma>0$. Find ${\bf P}^{(\infty)}(z)={\bf P}^{(\infty)}(z;\sigma,z_0)\in\mathbb{C}^{2\times 2}$ such that
\begin{enumerate}
	\item ${\bf P}^{(\infty)}(z)$ is analytic for $z\in\mathbb{C}\setminus(-\infty,z_0]$.
	\item The function ${\bf P}^{(\infty)}(z)$ attains square-integrable limiting values on $(-\infty,z_0]$ and those are related by the jump condition
	\begin{equation*}
		{\bf P}^{(\infty)}_+(z)={\bf P}^{(\infty)}_-(z)\begin{bmatrix}0 & 1\\ -1 & 0\end{bmatrix},\ \ \ z\in(-\infty,z_0).
	\end{equation*}
	\item As $z\rightarrow\infty$,
	\begin{equation}\label{e74a}
		{\bf P}^{(\infty)}(z)=\bigg\{\mathbb{I}+\frac{z_0}{4z}\sigma_3+\frac{z_0^2}{32z^2}\begin{bmatrix}5 & 0\\ 0 & -3\end{bmatrix}+\mathcal{O}\big(z^{-3}\big)\bigg\}(z\sigma)^{-\frac{1}{4}\sigma_3}\frac{1}{\sqrt{2}}\begin{bmatrix}1 & 1\\ -1 & 1\end{bmatrix}\e^{-\im\frac{\pi}{4}\sigma_3},
	\end{equation}
	with the principal branch for $z^{\alpha}:\mathbb{C}\setminus(-\infty,0]\rightarrow\mathbb{C}$ such that $z^{\alpha}>0$ when $z>0$.
\end{enumerate}
\end{problem}
A direct computation verifies that 
\begin{equation}\label{e77}
	{\bf P}^{(\infty)}(z)=(z-z_0)^{-\frac{1}{4}\sigma_3}\sigma^{-\frac{1}{4}\sigma_3}\frac{1}{\sqrt{2}}\begin{bmatrix}1 & 1\\ -1 & 1\end{bmatrix}\e^{-\im\frac{\pi}{4}\sigma_3},\ \ z\in\mathbb{C}\setminus(-\infty,z_0],
\end{equation}
has the properties listed in RHP \ref{para1}. Next, in a vicinity of $z=z_0$ we require a solution to the following model problem.
\begin{problem}\label{para2} Let $t,\sigma\geq 1$ and $z_0=z_0(t,\sigma)$ as well as $\epsilon_{t,\sigma}>0$ as in \eqref{e64} and \eqref{e66d}. Now find ${\bf P}^{(z_0)}(z)={\bf P}^{(z_0)}(z;t,\sigma,z_0)\in\mathbb{C}^{2\times 2}$ such that
\begin{enumerate}
	\item ${\bf P}^{(z_0)}(z)$ is analytic for $z\in\mathbb{D}_{\epsilon_{t,\sigma}}(z_0)\setminus\Sigma_{\bf T}$ with the open disk $\mathbb{D}_r(z_0):=\{z\in\mathbb{C}:\,|z-z_0|<r\}$.
	\item ${\bf P}^{(z_0)}(z)$ has the following local jump behavior, see Figure \ref{figuretrafo2a} for contour orientations,
	\begin{align*}
		{\bf P}_+^{(z_0)}(z)=&\,\,{\bf P}_-^{(z_0)}(z)\begin{bmatrix}1 & (1-\Phi(z-\omega))\e^{-s(2g(z)-\ell)}\smallskip\\ 0 & 1\end{bmatrix},\ \ z\in\Gamma_1\cap\mathbb{D}_{\epsilon_{t,\sigma}}(z_0);\\
		{\bf P}_+^{(z_0)}(z)=&\,\,{\bf P}_-^{(z_0)}(z)\begin{bmatrix}1 & 0\\ (1-\Phi(z-\omega))^{-1}\e^{s(2g(z)-\ell)} & 1\end{bmatrix},\ \ z\in(\Gamma_2\cup\Gamma_4)\cap\mathbb{D}_{\epsilon_{t,\sigma}}(z_0);
	\end{align*}
	followed by
	\begin{align*}
		{\bf P}_+^{(z_0)}(z)=&\,\,{\bf P}_-^{(z_0)}(z)\begin{bmatrix}0 & 1\\ -1 & 0\end{bmatrix},\ \ z\in\Gamma_3\cap\mathbb{D}_{\epsilon_{t,\sigma}}(z_0).
	\end{align*}
	\item ${\bf P}^{(z_0)}(z)$ is bounded in a neighbourhood of $z=z_0$.
	\item As $t,\sigma\rightarrow\infty$, we have the following asymptotic matching between ${\bf P}^{(z_0)}(z)$ and ${\bf P}^{(\infty)}(z)$,
	\begin{align}\label{e78}
		{\bf P}^{(z_0)}(z)\sim\bigg\{\mathbb{I}\,\,+&\,\sum_{m=0}^{\infty}\begin{bmatrix}0 & 0\\ b_{2m+1} & 0\end{bmatrix}\left(\frac{\sigma(z-z_0)}{\zeta(z)}\right)^{\frac{1}{2}}\big(\zeta(z)\big)^{-3m-1}+\sum_{m=0}^{\infty}\begin{bmatrix}0 & a_{2m+1}\\ 0 & 0\end{bmatrix}\nonumber\\
		&\,\times\left(\frac{\sigma(z-z_0)}{\zeta(z)}\right)^{-\frac{1}{2}}\big(\zeta(z)\big)^{-3m-2}+\sum_{m=1}^{\infty}\begin{bmatrix}a_{2m} & 0\\ 0 & b_{2m}\end{bmatrix}\big(\zeta(z)\big)^{-3m}\bigg\}{\bf P}^{(\infty)}(z),
%
%
%
	\end{align}
	which holds for $0<\frac{1}{4}\epsilon_{t,\sigma}\leq|z-z_0|\leq \frac{3}{4}\epsilon_{t,\sigma}<\epsilon_{t,\sigma}$. Here, $\zeta=\zeta(z)$ is defined in \eqref{e79} below and the $(t,\sigma)$-independent coefficients $\{a_m,b_m\}_{m=1}^{\infty}$ are given in RHP \ref{AiryRHP}.
\end{enumerate}
\end{problem}
In order to solve Problem \ref{para2} we consider the change of coordinates
\begin{equation}\label{e79}
	 \zeta(z)=\zeta(z;t,\sigma,z_0):=\left[\frac{3s}{2}\left\{g(z)-\frac{\ell}{2}-\frac{1}{2\sigma^{\frac{3}{2}}}\ln\big(1-\Phi(z-\omega)\big)\right\}\right]^{\frac{2}{3}},\ \ z\in\mathbb{D}_{\epsilon_{t,\sigma}}(z_0);\ \ \ \ s=\sigma^{\frac{3}{2}}>0,
\end{equation}
which is well-defined by Proposition \ref{prop9} and \eqref{e66e} for all $t,\sigma\geq 1$. Indeed, we have
\begin{align*}
	g(z)-\frac{\ell}{2}-\frac{1}{2\sigma^{\frac{3}{2}}}\ln&\,\big(1-\Phi(z-\omega)\big)=\int_{z_0}^z(\xi-z_0)^{\frac{1}{2}}\eta(\xi)\,\d\xi\\
	=&\,\frac{2}{3}\eta(z_0)(z-z_0)^{\frac{3}{2}}\Big\{1+\frac{3}{5}\frac{\eta'(z_0)}{\eta(z_0)}(z-z_0)+\mathcal{O}\big((z-z_0)^2\big)\Big\},\ \ z\in\mathbb{D}_{\epsilon_{t,\sigma}}(z_0),
\end{align*}
for all $t,\sigma\geq 1$ and so $\mathbb{D}_{\epsilon_{t,\sigma}}(z_0)\ni z\mapsto\zeta(z)$ is locally conformal,
\begin{equation}\label{e79a}
	\zeta(z)=\sigma\big(\eta(z_0)\big)^{\frac{2}{3}}(z-z_0)\left[1+\sum_{k=1}^{\infty}\frac{\eta^{(k)}(z_0)}{\eta(z_0)}\frac{3}{2k+3}\frac{(z-z_0)^k}{k!}\right]^{\frac{2}{3}},\ \ z\in\mathbb{D}_{\epsilon_{t,\sigma}}(z_0).
\end{equation}
In turn, the function
\begin{equation}\label{e80}
	{\bf P}^{(z_0)}(z)=\left(\frac{\sigma(z-z_0)}{\zeta(z)}\right)^{-\frac{1}{4}\sigma_3}{\bf M}^{\textnormal{Ai}}\Big(\zeta(z);z_0,\frac{\pi}{4}\Big)\e^{\frac{2}{3}\zeta^{\frac{3}{2}}(z)\sigma_3},\ \ \ z\in\mathbb{D}_{\epsilon_{t,\sigma}}(z_0)\setminus\Sigma_{\bf T},
\end{equation}
defined in terms of the Airy parametrix \eqref{app2}, solves RHP \ref{para2}.
\begin{rem}\label{rem7} In order to verify that \eqref{e80} has the desired properties one needs to recall RHP \ref{AiryRHP} and realize that
\begin{equation*}
	\frac{\sigma(z-z_0)}{\zeta(z)}=\frac{1}{(\eta(z_0))^{\frac{2}{3}}}\Big\{1-\frac{2}{5}\frac{\eta'(z_0)}{\eta(z_0)}(z-z_0)+\mathcal{O}\big((z-z_0)^2\big)\Big\},\ \ z\in\mathbb{D}_{\epsilon_{t,\sigma}}(z_0),
\end{equation*}
is locally analytic. Moreover, \eqref{e78} follows from property $(4)$ in RHP \ref{AiryRHP} and from the estimate, see \eqref{e79a},
\begin{equation*}
	\big|\zeta(z)\big|\geq c\sigma|z-z_0|\geq c\sigma\epsilon_{t,\sigma}=c\min\{t,\sigma\},\ \ \ 0<\frac{1}{4}\epsilon_{t,\sigma}\leq|z-z_0|\leq \frac{3}{4}\epsilon_{t,\sigma},\ \ \ c>0,
\end{equation*}
where we use that $\eta(z_0)\geq 1$ for all $t,\sigma\geq 1$.
\end{rem}
Having completed the necessary local analysis with \eqref{e77} and \eqref{e80} we now move to our final transformation. Define
\begin{equation}\label{e81}
	{\bf R}(z;t,\sigma,z_0):={\bf S}(z;t,\sigma,z_0)\begin{cases}\displaystyle\big({\bf P}^{(z_0)}(z;t,\sigma,z_0)\big)^{-1},&z\in\mathbb{D}_{\frac{1}{2}\epsilon_{t,\sigma}}(z_0)\setminus\Sigma_{\bf T}\bigskip\\
	\displaystyle\big({\bf P}^{(\infty)}(z;\sigma,z_0)\big)^{-1},&z\notin\overline{\mathbb{D}_{\frac{1}{2}\epsilon_{t,\sigma}}(z_0)}\setminus\Sigma_{\bf T}
	\end{cases},
\end{equation}
and recall RHP \ref{trafo4}, \ref{para1} and \ref{para2}. The characterizing properties of ${\bf R}(z)$ are then as follows:
\begin{problem}\label{trafo5} Let $t,\sigma\geq 1$ and $z_0=z_0(t,\sigma),\epsilon_{t,\sigma}$ as in \eqref{e64} and \eqref{e66d}. The function ${\bf R}(z)={\bf R}(z;t,\sigma,z_0)\in\mathbb{C}^{2\times 2}$ defined in \eqref{e81} is uniquely determined by the following properties:
\begin{enumerate}
	\item ${\bf R}(z)$ is analytic for $z\in\mathbb{C}\setminus\Sigma_{\bf R}$ where
	\begin{equation*}
		\Sigma_{\bf R}:=\partial\mathbb{D}_{\frac{1}{2}\epsilon_{t,\sigma}}(z_0)\cup\left[z_0+\frac{1}{2}\epsilon_{t,\sigma},\infty\right)\cup\left\{\e^{-\im\frac{\pi}{4}}(-\infty,z_0]\setminus\mathbb{D}_{\frac{1}{2}\epsilon_{t,\sigma}}(z_0)\right\}\cup\left\{\e^{\im\frac{\pi}{4}}(-\infty,z_0]\setminus\mathbb{D}_{\frac{1}{2}\epsilon_{t,\sigma}}(z_0)\right\}
	\end{equation*}
	is shown in Figure \ref{figuretrafo3}. Moreover, on each connected component of $\mathbb{C}\setminus\Sigma_{\bf R}$ there is a continuous extension of ${\bf R}(z)$ to the closure of the same component.
	\item The continuous limiting values ${\bf R}_{\pm}(z)$ on $\Sigma_{\bf R}\ni z$ obey the constraint ${\bf R}_+(z)={\bf R}_-(z){\bf G}_{\bf R}(z;t,\sigma,z_0)$ where, for $z\in[z_0+\frac{1}{2}\epsilon_{t,\sigma},\infty)$,
	\begin{align*}
		{\bf G}_{\bf R}(z;t,\sigma,z_0)={\bf P}^{(\infty)}(z)\begin{bmatrix}1 & (1-\Phi(z-\omega))\e^{-s(2g(z)-\ell)}\smallskip\\ 0 & 1\end{bmatrix}\big({\bf P}^{(\infty)}(z)\big)^{-1},
	\end{align*}
	and for $z\in\e^{-\im\frac{\pi}{4}}(-\infty,z_0-\frac{1}{2}\epsilon_{t,\sigma}]\cup\e^{\im\frac{\pi}{4}}(-\infty,z_0-\frac{1}{2}\epsilon_{t,\sigma}]$,
	\begin{equation*}
		{\bf G}_{\bf R}(z;t,\sigma,z_0)={\bf P}^{(\infty)}(z)\begin{bmatrix}1 & 0\smallskip\\ (1-\Phi(z-\omega))^{-1}\e^{s(2g(z)-\ell)} & 1\end{bmatrix}\big({\bf P}^{(\infty)}(z)\big)^{-1}.
	\end{equation*}
	Additionally, for $z\in\partial\mathbb{D}_{\epsilon_{t,\sigma}}(z_0)$,
	\begin{equation*}
		{\bf G}_{\bf R}(z;t,\sigma,z_0)={\bf P}^{(z_0)}(z)\big({\bf P}^{(\infty)}(z)\big)^{-1}.
	\end{equation*}
	\item As $z\rightarrow\infty$ and $z\notin\Sigma_{\bf R}$,
	\begin{equation}\label{e82}
		{\bf R}(z)=\mathbb{I}+{\bf R}_1z^{-1}+{\bf R}_2z^{-2}+\mathcal{O}\big(z^{-3}\big),
	\end{equation}
	with ${\bf R}_1,{\bf R}_2$ equal to, compare \eqref{e76},\eqref{e74a},
	\begin{align*}
		 {\bf R}_1={\bf R}_1(t,\sigma)=&\,\,{\bf S}_1(t,\sigma)-\frac{z_0}{4}\sigma_3,\\
		 {\bf R}_2={\bf R}_2(t,\sigma)=&\,\,{\bf S}_2(t,\sigma)+\frac{z_0^2}{32}\begin{bmatrix}-3 & 0\\ 0 & 5\end{bmatrix}-\frac{z_0}{4}{\bf S}_1(t,\sigma)\sigma_3.
	\end{align*}
\end{enumerate}
\end{problem}
	\begin{figure}[tbh]
	\begin{tikzpicture}[xscale=0.65,yscale=0.65]
	\draw [thick, color=red, decoration={markings, mark=at position 0.5 with {\arrow{>}}}, postaction={decorate}] (1,0) -- (5,0);
	\draw [thick, color=red, decoration={markings, mark=at position 0.25 with {\arrow{<}}}, decoration={markings, mark=at position 0.75 with {\arrow{<}}}, postaction={decorate}] (0,0) circle [radius=1];
\node [below] at (0.05,-0.1) {{\small $z_0$}};
\node [right] at (4.5,0.6) {{\small $[z_0+\frac{1}{2}\epsilon_{t,\sigma},\infty)$}};
\draw [thick, color=red, decoration={markings, mark=at position 0.5 with {\arrow{>}}}, postaction={decorate}] (-4,4) -- (-0.7071067810,0.7071067810);
\draw [thick, color=red, decoration={markings, mark=at position 0.5 with {\arrow{>}}}, postaction={decorate}] (-4,-4) -- (-0.7071067810,-0.7071067810);
\node [right] at (-3.6,4.4) {{\small $\e^{-\im\frac{\pi}{4}}(-\infty,z_0]\setminus\mathbb{D}_{\frac{1}{2}\epsilon_{t,\sigma}}(z_0)$}};
\node [right] at (-3.6,-4.4) {{\small $\e^{\im\frac{\pi}{4}}(-\infty,z_0]\setminus\mathbb{D}_{\frac{1}{2}\epsilon_{t,\sigma}}(z_0)$}};
\draw [fill, color=black] (0,0) circle [radius=0.07];
\end{tikzpicture}
\caption{The oriented jump contours for ${\bf R}(z)$ in the complex $z$-plane.}
\label{figuretrafo3}
\end{figure}
\begin{rem}\label{rem8} By construction, compare RHP \ref{para1} and \ref{para2}, ${\bf R}(z)$ is analytic in the disk $\mathbb{D}_{\frac{1}{2}\epsilon_{t,\sigma}}(z_0)$ and on the half ray $\Gamma_3$.
\end{rem}
By Proposition \ref{prop9} and \eqref{e78}, as soon as $t,\sigma\rightarrow+\infty$, the jump matrix in the ${\bf R}$-RHP is close to the identity matrix on all of $\Sigma_{\bf R}$. In detail, we record the following norm estimates.
\begin{prop}\label{prop10} There exist $t_0,\sigma_0\geq 1$ and $c>0$ such that
\begin{equation}\label{e83}
	\|{\bf G}_{\bf R}(\cdot;t,\sigma,z_0)-\mathbb{I}\|_{L^{\infty}(\Sigma_{\bf R})}\leq\frac{c}{\sigma\epsilon_{t,\sigma}},\ \ \ \ \|{\bf G}_{\bf R}(\cdot;t,\sigma,z_0)-\mathbb{I}\|_{L^2(\Sigma_{\bf R})}\leq\frac{c}{\sigma\sqrt{\epsilon_{t,\sigma}}}
\end{equation}
for all $t\geq t_0$ and $\sigma\geq \sigma_0$.
\end{prop}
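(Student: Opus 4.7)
The plan is to decompose $\Sigma_{\bf R}$ into the three topologically distinct components identified in Figure \ref{figuretrafo3}, namely the unbounded radial ray $\Gamma_1^{\mathrm{out}}:=[z_0+\tfrac{1}{2}\epsilon_{t,\sigma},\infty)$, the two slanted outer rays $\Gamma_{\pm}^{\mathrm{out}}:=\mathrm{e}^{\pm\mathrm{i}\pi/4}(-\infty,z_0-\tfrac{1}{2}\epsilon_{t,\sigma}]$, and the circle $\mathcal{C}:=\partial\mathbb{D}_{\frac{1}{2}\epsilon_{t,\sigma}}(z_0)$, and then estimate $\mathbf{G}_{\mathbf R}-\mathbb{I}$ on each piece separately. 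I expect the dominant contribution to come entirely from the circle $\mathcal{C}$, while the rays will be controlled by a much stronger, essentially super-exponential bound inherited from Proposition \ref{prop9}.

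On $\Gamma_1^{\mathrm{out}}$, the jump is a triangular perturbation of the identity whose $(12)$-entry (before conjugation by $\mathbf{P}^{(\infty)}$) is $(1-\Phi(z-\omega))\mathrm{e}^{-s(2g(z)-\ell)}$. I would first invoke \eqref{e66b} to bound this scalar by $\mathrm{e}^{-\frac{4}{3}s(z-z_0)^{3/2}}$. The conjugation by $\mathbf{P}^{(\infty)}(z)$, which contains $(z-z_0)^{-\frac{1}{4}\sigma_3}$ via \eqref{e77}, introduces at worst a factor of $(z-z_0)^{1/2}$ on each nonzero entry of $\mathbf{G}_{\mathbf R}-\mathbb{I}$. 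Since $s\epsilon_{t,\sigma}^{3/2}=(\sigma\epsilon_{t,\sigma})^{3/2}\to\infty$ under our hypotheses, the resulting integrand $(z-z_0)^{1/2}\mathrm{e}^{-\frac{4}{3}s(z-z_0)^{3/2}}$ is monotone decreasing on $\Gamma_1^{\mathrm{out}}$ once $t_0,\sigma_0$ are sufficiently large, so both the $L^\infty$ and the $L^2$ norms on this ray are bounded by $\mathrm{e}^{-c(\sigma\epsilon_{t,\sigma})^{3/2}}$ for some $c>0$, which is far better than what is needed. The rays $\Gamma_{\pm}^{\mathrm{out}}$ are handled identically via \eqref{e66c}, noting that the linear-in-$|z-z_0|^{1/2}$ correction in the exponent there is harmless because $s|z_0|$ is bounded uniformly by \eqref{appB4} (as already remarked in the text right before Problem \ref{para1}).

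On the circle $\mathcal{C}$, where $|z-z_0|=\tfrac{1}{2}\epsilon_{t,\sigma}$, the jump is $\mathbf{P}^{(z_0)}(z)\bigl(\mathbf{P}^{(\infty)}(z)\bigr)^{-1}$. Here I would apply the asymptotic matching \eqref{e78} directly: each term in the formal series carries a factor of $\bigl(\zeta(z)\bigr)^{-k}$ for some $k\geq 1$, possibly accompanied by $(\sigma(z-z_0)/\zeta(z))^{\pm 1/2}$, which is uniformly bounded by Remark \ref{rem7}. Using the lower bound $|\zeta(z)|\geq c\,\sigma\epsilon_{t,\sigma}$ from the same Remark, the leading term is of order $(\sigma\epsilon_{t,\sigma})^{-1}$ and all higher terms are further suppressed by powers of the same small quantity, giving the pointwise bound $\|\mathbf{G}_{\mathbf R}-\mathbb{I}\|_{L^\infty(\mathcal{C})}\leq c/(\sigma\epsilon_{t,\sigma})$. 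The $L^2$ bound on $\mathcal{C}$ then follows at once since $\mathcal{C}$ has arc length $\pi\epsilon_{t,\sigma}$, producing the extra factor $\sqrt{\epsilon_{t,\sigma}}$ that converts $1/(\sigma\epsilon_{t,\sigma})$ into $1/(\sigma\sqrt{\epsilon_{t,\sigma}})$.

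The genuine obstacle — and the only step requiring care — is making the matching on $\mathcal{C}$ rigorous rather than formal: one must verify that the asymptotic series in \eqref{e78} may be truncated at a single term with a uniform remainder on $\mathcal{C}$ for $t,\sigma\geq 1$, using the standard Airy parametrix remainder estimate together with the conformality statement \eqref{e79a} and the bound $\eta(z_0)\geq 1$ from Proposition \ref{prop9}. Once this uniform remainder is in hand, assembling the three pieces and taking $c$ as the maximum of the constants obtained on $\Gamma_1^{\mathrm{out}}$, $\Gamma_{\pm}^{\mathrm{out}}$ and $\mathcal{C}$ yields \eqref{e83}.
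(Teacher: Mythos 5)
Your proposal is correct and follows essentially the same route as the paper's proof: decompose $\Sigma_{\bf R}$ into the three rays and the circle, show the ray contributions are exponentially small via the pointwise bounds \eqref{e66b}, \eqref{e66c} together with the boundedness of $s z_0$ from \eqref{appB4}, and derive the stated polynomial bound on the circle from the Airy-parametrix matching \eqref{e78} combined with $|\zeta(z)|\gtrsim\sigma\epsilon_{t,\sigma}$. One small imprecision: conjugation by ${\bf P}^{(\infty)}$ introduces a factor $\sqrt{\sigma(z-z_0)}$ on the off-diagonal ray entries, not merely $(z-z_0)^{1/2}$ (you dropped the $\sigma^{-\frac14\sigma_3}$ factor in \eqref{e77}), but this is absorbed harmlessly by the super-exponential decay and does not affect the conclusion.
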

\begin{proof} The three rays extending towards infinity yield exponentially small contributions to the identity only. For instance, on the real line extending to $+\infty$,
\begin{equation*}
	{\bf G}_{\bf R}(z;t,\sigma,z_0)=\mathbb{I}-\frac{\im}{2}\big(1-\Phi(z-\omega)\big)\e^{-s(2g(z)-\ell)}\begin{bmatrix}1 & \frac{1}{\sqrt{\sigma(z-z_0)}}\smallskip\\
	-\sqrt{\sigma(z-z_0)} & -1\end{bmatrix},\ \ z\geq z_0+\frac{1}{2}\epsilon_{t,\sigma},
\end{equation*}
i.e. with \eqref{e66b} we are led to the optimization of $z\mapsto\sqrt{\sigma(z-z_0)}\exp[-\frac{4}{3}s(z-z_0)^{3/2}]$ on $[z_0+\frac{1}{2}\epsilon_{t,\sigma},\infty)$. This function's critical point is located at $z=z_0+1/(4^{2/3}\sigma)$ and choosing $t,\sigma$ sufficiently large we can ensure that it does not lie in $[z_0+\frac{1}{2}\epsilon_{t,\sigma},\infty)$. Consequently, we obtain exponentially small sub-leading contributions in both, $L^2$ and $L^{\infty}$ sense on the three rays,
\begin{equation*}
	\|{\bf G}_{\bf R}(\cdot;t,\sigma,z_0)-\mathbb{I}\|_{L^{\infty}(\Sigma_{\bf R}\setminus\partial\mathbb{D}_{\frac{1}{2}\epsilon_{t,\sigma}}(z_0))}\leq c_1\sqrt{\sigma\epsilon_{t,\sigma}}\,\e^{-c_2(\sigma\epsilon_{t,\sigma})^{\frac{3}{2}}},\ \ \ c_k>0,\ \ t\geq t_0,\ \sigma\geq\sigma_0,
\end{equation*}
and
\begin{equation}\label{e83a}
	\|{\bf G}_{\bf R}(\cdot;t,\sigma,z_0)-\mathbb{I}\|_{L^2(\Sigma_{\bf R}\setminus\partial\mathbb{D}_{\frac{1}{2}\epsilon_{t,\sigma}}(z_0))}\leq c_3\sqrt[4]{\frac{\epsilon_{t,\sigma}}{\sigma}}\,\e^{-c_4(\sigma\epsilon_{t,\sigma})^{\frac{3}{2}}},\ \ \ c_k>0,\ \ t\geq t_0,\ \sigma\geq\sigma_0.
\end{equation}
On the circle boundary $\partial\mathbb{D}_{\frac{1}{2}\epsilon_{t,\sigma}}(z_0)$ we use \eqref{e78} instead and \eqref{e83} follows at once.
\end{proof}
In turn, by \cite{DZ}, RHP \ref{trafo5} is asymptotically solvable:
\begin{theo}\label{theo4} There exist $t_0,\sigma_0\geq 1$ such that the RHP for ${\bf R}(z)$ defined in \eqref{e81} is uniquely solvable in $L^2(\Sigma_{\bf R})$ for all $t\geq t_0$ and all $\sigma\geq\sigma_0$. We can compute the solution of the same problem iteratively via the integral equation
\begin{equation}\label{e84}
	{\bf R}(z;t,\sigma,z_0)=\mathbb{I}+\frac{1}{2\pi\im}\int_{\Sigma_{\bf R}}{\bf R}_-(\lambda;t,\sigma,z_0)\big({\bf G}_{\bf R}(\lambda;t,\sigma,z_0)-\mathbb{I}\big)\frac{\d\lambda}{\lambda-z},\ \ \ z\in\mathbb{C}\setminus\Sigma_{\bf R},
\end{equation}
using that, for all $t\geq t_0$ and $\sigma\geq\sigma_0$, with $c>0$,
\begin{equation}\label{e85}
	\|{\bf R}_-(\cdot;t,\sigma,z_0)-\mathbb{I}\|_{L^2(\Sigma_{\bf R})}\leq\frac{c}{\sigma\sqrt{\epsilon_{t,\sigma}}}.
\end{equation}
\end{theo}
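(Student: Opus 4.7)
\medskip

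The plan is to prove Theorem \ref{theo4} via the standard small-norm Riemann-Hilbert theorem of Deift-Zhou, using Proposition \ref{prop10} as the key input. We convert the RHP \ref{trafo5} into a singular integral equation on $\Sigma_{\bf R}$, solve it by Neumann series in $L^2(\Sigma_{\bf R})$ thanks to the smallness of ${\bf G}_{\bf R}-\mathbb{I}$, and then reconstruct ${\bf R}(z)$ via a Cauchy integral.

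In more detail, I would first set up the standard framework. Write ${\bf R}_-(z)=\mathbb{I}+\mu(z)$ and use the jump condition ${\bf R}_+={\bf R}_-{\bf G}_{\bf R}$ to derive, via the Plemelj-Sokhotski formula, the singular integral equation $\mu-C_{-}\bigl(\mu\,({\bf G}_{\bf R}-\mathbb{I})\bigr)=C_{-}({\bf G}_{\bf R}-\mathbb{I})$ on $\Sigma_{\bf R}$, where $C_-:L^2(\Sigma_{\bf R})\to L^2(\Sigma_{\bf R})$ is the minus Cauchy operator. Equivalently, with $C_{w}f:=C_{-}(f({\bf G}_{\bf R}-\mathbb{I}))$, we seek $\mu\in L^2(\Sigma_{\bf R})$ satisfying $(I-C_w)\mu=C_-({\bf G}_{\bf R}-\mathbb{I})$. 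Since the contour $\Sigma_{\bf R}$ consists of a circle of radius $\frac{1}{2}\epsilon_{t,\sigma}$ together with three straight rays meeting the circle transversally at fixed angles, the operator norm of $C_-$ on $L^2(\Sigma_{\bf R})$ is bounded uniformly in $t\geq t_0,\sigma\geq\sigma_0$ by a standard argument involving rotation of the rays and scale-invariance of the Cauchy operator on $L^2$; this gives $\|C_-\|_{L^2\to L^2}\leq\kappa$ with $\kappa>0$ independent of $t,\sigma$.

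With that in hand, the $L^{\infty}$-bound in Proposition \ref{prop10} yields $\|C_w\|_{L^2\to L^2}\leq\kappa\,\|{\bf G}_{\bf R}-\mathbb{I}\|_{L^{\infty}(\Sigma_{\bf R})}\leq\kappa c/(\sigma\epsilon_{t,\sigma})$, which can be made strictly less than $\tfrac{1}{2}$ by choosing $t_0,\sigma_0$ large enough (recall $\sigma\epsilon_{t,\sigma}=\tfrac{1}{3\sqrt2}\min\{t,\sigma\}$). Thus $(I-C_w)^{-1}$ exists on $L^2(\Sigma_{\bf R})$ by the Neumann series, and
\begin{equation*}
\|\mu\|_{L^2(\Sigma_{\bf R})}\leq 2\|C_-({\bf G}_{\bf R}-\mathbb{I})\|_{L^2(\Sigma_{\bf R})}\leq 2\kappa\,\|{\bf G}_{\bf R}-\mathbb{I}\|_{L^2(\Sigma_{\bf R})}\leq\frac{2\kappa c}{\sigma\sqrt{\epsilon_{t,\sigma}}},
\end{equation*}
which is precisely the bound \eqref{e85}. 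Reconstructing ${\bf R}$ from ${\bf R}_-=\mathbb{I}+\mu$ via the Cauchy integral representation
\begin{equation*}
{\bf R}(z)=\mathbb{I}+\frac{1}{2\pi\im}\int_{\Sigma_{\bf R}}{\bf R}_-(\lambda)\bigl({\bf G}_{\bf R}(\lambda)-\mathbb{I}\bigr)\frac{\d\lambda}{\lambda-z}
\end{equation*}
then yields \eqref{e84}, and one verifies directly that this ${\bf R}$ satisfies all the properties of RHP \ref{trafo5}, including the normalization \eqref{e82} which follows by expanding $(\lambda-z)^{-1}$ at $z=\infty$ and using the Cauchy-Schwarz inequality together with Proposition \ref{prop10}.

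The main technical point to be careful about is the uniform boundedness of $C_-$ on $L^2(\Sigma_{\bf R})$ as $t,\sigma$ vary, since the contour itself depends on $\epsilon_{t,\sigma}$ through the radius of the circle. This is handled either by noting that the rescaling $\lambda\mapsto(\lambda-z_0)/\epsilon_{t,\sigma}$ maps $\Sigma_{\bf R}$ to a fixed Lipschitz graph (the Cauchy operator on $L^2$ is invariant under such rescalings), or by directly invoking the $T(1)$ theorem for Cauchy integrals on Lipschitz curves. All other steps are entirely routine applications of the small-norm theory, and uniqueness follows from the Fredholm alternative combined with a standard $L^2$-vanishing theorem for RHPs with identity jump (or, equivalently, from the invertibility of $I-C_w$).
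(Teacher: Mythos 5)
Your proposal is correct and follows the same approach the paper takes: the paper simply invokes the Deift--Zhou small-norm theory (citing \cite{DZ}) to pass from Proposition \ref{prop10} to Theorem \ref{theo4}, and your write-up is exactly the standard mechanics of that theory, spelled out. You correctly derive the singular integral equation $(I-C_w)\mu=C_-({\bf G}_{\bf R}-\mathbb{I})$ from the jump and Plemelj, use the $L^\infty$ bound in \eqref{e83} together with $\sigma\epsilon_{t,\sigma}=\tfrac{1}{3\sqrt2}\min\{t,\sigma\}$ to contract $C_w$ on $L^2(\Sigma_{\bf R})$, and then feed the $L^2$ bound in \eqref{e83} through the Neumann series to obtain \eqref{e85}. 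The one genuinely delicate point --- uniform $L^2$-boundedness of $C_-$ across the $(t,\sigma)$-dependent contour --- you flag and dispatch correctly via translation/dilation invariance, since the affine map $\lambda\mapsto(\lambda-z_0)/\epsilon_{t,\sigma}$ carries $\Sigma_{\bf R}$ to a fixed contour (circle of radius $\tfrac12$ plus three rays at fixed angles). Reconstruction via the Cauchy integral \eqref{e84} and the uniqueness remark via the Fredholm alternative / $L^2$-vanishing lemma complete the argument.
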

Equipped with Theorem \ref{theo4}, we can now derive Theorem \ref{fimain3}.
\begin{proof}[Proof of Theorem \ref{fimain3}]
Let $t\geq t_0,\sigma\geq\sigma_0$ as in Theorem \ref{theo4}. We begin by recalling the explicit and invertible sequence of transformations
\begin{equation*}
	{\bf X}(z)\stackrel{\eqref{e58}}{\mapsto}{\bf Y}(z)\stackrel{\eqref{e60}}{\mapsto}{\bf T}(z)\stackrel{\eqref{e75}}{\mapsto}{\bf S}(z)\stackrel{\eqref{e81}}{\mapsto}{\bf R}(z),\ \ z\notin(\Sigma_{\bf R}\cup\mathbb{R}),
\end{equation*}
which converts the differential identities \eqref{e52},\eqref{e52aa} into the following three exact formul\ae,
\begin{equation}\label{e88}
	\frac{\partial}{\partial t}\ln F_{\sigma}(t)=-g_1(t,\sigma)\sigma^2-\sigma R_1^{12}(t,\sigma),
\end{equation}
followed by
\begin{align}\label{e89}
	\frac{\partial}{\partial\sigma}\ln F_{\sigma}(t)=g_1(t,\sigma)t\sigma-3g_2(t,\sigma)\sigma^2+\frac{1}{16\sigma}-R_1^{21}(t,\sigma)-2\sigma R_2^{12}(t,\sigma)+\left(t+\frac{\sigma}{2}z_0(t,\sigma)\right)R_1^{12}(t,\sigma),
\end{align}
and concluding with
\begin{align}\label{e89a}
	\frac{\d}{\d\alpha}\ln F_{\alpha}(\alpha)=-3g_2(\alpha,\alpha)\alpha^2+\frac{1}{16\alpha}-R_1^{21}(\alpha,\alpha)-2\alpha R_2^{12}(\alpha,\alpha)+\frac{\alpha}{2}z_0(\alpha,\alpha)R_1^{12}(\alpha,\alpha).
\end{align}
Here we have used that $\det{\bf R}(z)\equiv 1$ and so $\tr{\bf R}_1=0$. Moreover, $g_j=g_j(t,\sigma)$ and $z_0=z_0(t,\sigma)$ are as in \eqref{e66a},\eqref{e66aa} and \eqref{e64}, and the scalar entries $R_k^{mn}(t,\sigma)$ of ${\bf R}_k$ are computable via
\begin{equation*}
	{\bf R}_k(t,\sigma)=\frac{\im}{2\pi}\int_{\Sigma_{\bf R}}{\bf R}_-(\lambda;t,\sigma,z_0)\big({\bf G}_{\bf R}(\lambda;t,\sigma,z_0)-\mathbb{I}\big)\lambda^{k-1}\,\d\lambda,\ \ \ k\in\mathbb{Z}_{\geq 1},
\end{equation*}
compare \eqref{e84}. Assuming now temporarily that $t_0\leq t\leq\sigma$, we have the exact identity
\begin{equation*}
	\ln F_{\sigma}(t)=\ln F_{\sigma}(\sigma)-\int_t^{\sigma}\frac{\partial}{\partial x}\ln F_{\sigma}(x)\,\d x
\end{equation*}
and can thus use \eqref{e88} and \eqref{e89a} in its asymptotic evaluation. First, by Corollary \ref{appcor1}, we have for all $\sigma\geq\sigma_0$ and all $t>0$, using again $\omega=\frac{t}{\sigma}$, compare \eqref{rej2},
\begin{equation}\label{e90}
	-g_1(t,\sigma)\sigma^2=\frac{\partial}{\partial t}\left\{\sigma^{\frac{3}{2}}B(\omega)\right\}-\frac{1}{4\sigma}\left\{\frac{\d}{\d\omega}C(\omega)\right\}^2-\frac{\sigma^{-\frac{3}{2}}}{12\pi^{\frac{9}{2}}}\frac{\partial}{\partial t}\left\{\int_{-\infty}^0\frac{\e^{-(\lambda-\omega)^2}}{1-\Phi(\lambda-\omega)}\frac{\d\lambda}{\sqrt{-\lambda}}\right\}^3-r_2(t,\sigma)\sigma^2,
\end{equation}
with the error term $r_2(t,\sigma)$ satisfying $\int_t^{\infty}|r_2(x,\sigma)|\,\d x\leq c\sigma^{-5}$ in the same parameter regime and $C(\omega)$ as in \eqref{rej3}. On the other hand, by \eqref{e78} and a residue computation, with $\Gamma:=\partial\mathbb{D}_{\frac{1}{2}\epsilon_{t,\sigma}}(z_0)$, for all $t\geq t_0,\sigma\geq\sigma_0$,
\begin{equation*}
	\frac{\im}{2\pi}\ointclockwise_{\Gamma}\big({\bf G}_{\bf R}(\lambda;t,\sigma,z_0)-\mathbb{I}\big)\,\d\lambda\sim-\frac{7}{48}\begin{bmatrix}0 & 0\\ 1 & 0\end{bmatrix}\frac{1}{\sigma\eta(z_0)}-\frac{1}{16}\begin{bmatrix}0 & 1\\ 0 & 0\end{bmatrix}\frac{\eta'(z_0)}{(\sigma\eta(z_0))^2}+\sum_{n=3}^{\infty}\sum_{m=1}^{\infty}\frac{{\bf J}_{mn}(\omega)}{\sigma^n(\pi\sigma)^{\frac{3m}{2}}},
\end{equation*}
with $z\mapsto\eta(z)=\eta(z;t,\sigma,z_0)$ as in Proposition \ref{prop9} and a matrix-valued function ${\bf J}_{mn}$ which satisfies
\begin{equation*}
	\int_t^{\infty}\left\|{\bf J}_{mn}\left(\frac{s}{\sigma}\right)\right\|\d s\leq c_{mn}\sigma,\ \ \ \ \ \ \int_{\sigma}^{\infty}\left\|{\bf J}_{mn}\left(\frac{t}{s}\right)\right\|\frac{\d s}{s^{n+\frac{3m}{2}}}\leq c_{mn}t^{1-n-\frac{3m}{2}},\ \ \ c_{mn}>0,\ \ t\geq t_0,\ \sigma\geq \sigma_0.
\end{equation*}
Also, using the integral equation, 
\begin{equation*}
	{\bf R}_-(z;t,\sigma,z_0)=\mathbb{I}+\frac{1}{2\pi\im}\int_{\Sigma_{\bf R}}{\bf R}_-(\lambda;t,\sigma,z_0)\big({\bf G}_{\bf R}(\lambda;t,\sigma,z_0)-\mathbb{I}\big)\frac{\d\lambda}{\lambda-z_-},\ \ \ z\in\Sigma_{\bf R},
\end{equation*}
see \eqref{e84}, with \eqref{e85} in place, we obtain iteratively through the structure in \eqref{e78}, for all $t\geq t_0,\sigma\geq\sigma_0$,
\begin{equation*}
	\frac{\im}{2\pi}\ointclockwise_{\Gamma}\big({\bf R}_-(\lambda;t,\sigma,z_0)-\mathbb{I}\big)\big({\bf G}_{\bf R}(\lambda;t,\sigma,z_0)-\mathbb{I}\big)\,\d\lambda\sim\sum_{n=3}^{\infty}\sum_{m=1}^{\infty}\frac{{\bf K}_{mn}(\omega)}{\sigma^n(\pi\sigma)^{\frac{3m}{2}}}
\end{equation*}
with a matrix-valued function ${\bf K}_{mn}$ that satisfies
\begin{equation*}
	\int_t^{\infty}\left\|{\bf K}_{mn}\left(\frac{s}{\sigma}\right)\right\|\d s\leq c_{mn}\sigma,\ \ \ \ \ \ \int_{\sigma}^{\infty}\left\|{\bf K}_{mn}\left(\frac{t}{s}\right)\right\|\frac{\d s}{s^{n+\frac{3m}{2}}}\leq c_{mn}t^{1-n-\frac{3m}{2}},\ \ \ c_{mn}>0,\ \ t\geq t_0,\ \sigma\geq \sigma_0.
\end{equation*}
But also, for any $t_0\leq t\leq\sigma$, by \eqref{e83a} and \eqref{e85},
\begin{equation*}
	\left\|\frac{\im}{2\pi}\int_{\Sigma_{\bf R}\setminus\Gamma}{\bf R}_-(\lambda;t,\sigma,z_0)\big({\bf G}_{\bf R}(\lambda;t,\sigma,z_0)-\mathbb{I}\big)\,\d\lambda\right\|\leq\left(\frac{c}{\sigma}\right)\e^{-c_2t^{\frac{3}{2}}},\ \ \ c>0,
\end{equation*}
and thus together, for all $t_0\leq t\leq\sigma$,
\begin{equation}\label{e91}
	{\bf R}_1(t,\sigma)=-\frac{7}{48}\begin{bmatrix}0 & 0\\ 1 & 0\end{bmatrix}\frac{1}{\sigma\eta(z_0)}-\frac{1}{16}\begin{bmatrix}0 & 1\\ 0 & 0\end{bmatrix}\frac{\eta'(z_0)}{(\sigma\eta(z_0))^2}+\sum_{n=3}^{\infty}\sum_{m=1}^{\infty}\frac{{\bf L}_{mn}(\omega)}{\sigma^n(\pi\sigma)^{\frac{3m}{2}}}+\mathcal{O}\Big(\sigma^{-1}\e^{-ct^{\frac{3}{2}}}\Big)
\end{equation}
where the matrix-valued function ${\bf L}_{mn}$ satisfies
\begin{equation*}
	\int_t^{\infty}\left\|{\bf L}_{mn}\left(\frac{s}{\sigma}\right)\right\|\d s\leq c_{mn}\sigma,\ \ \ \ \ \ \int_{\sigma}^{\infty}\left\|{\bf L}_{mn}\left(\frac{t}{s}\right)\right\|\frac{\d s}{s^{n+\frac{3m}{2}}}\leq c_{mn}t^{1-n-\frac{3m}{2}},\ \ \ c_{mn}>0,\ \ t\geq t_0,\ \sigma\geq \sigma_0.
\end{equation*}
Consequently, combining \eqref{e90} and \eqref{e91} we find
\begin{align}\label{e92}
	\int_t^{\sigma}\frac{\partial}{\partial x}&\,\ln F_{\sigma}(x)\,\d x\stackrel{\eqref{e88}}{=}-\int_t^{\sigma}\big(g_1(x,\sigma)\sigma^2+\sigma R_1^{12}(x,\sigma)\big)\,\d x=\sigma^{\frac{3}{2}}C(1)-\sigma^{\frac{3}{2}}C(\omega)\nonumber\\
	&\,-\frac{1}{4}\int_{\omega}^1\left\{\frac{\d}{\d u}D(u)\right\}^2\d u+\mathcal{O}\big(\sigma^{-\frac{3}{2}}\big),\ \ \ \ t_0\leq t\leq\sigma,
\end{align}
with $D(u)$ as in \eqref{rej3}. Second, by Corollary \ref{appcor2}, for all $\alpha\geq\alpha_0$,
\begin{equation}\label{e93}
	-3g_2(\alpha,\alpha)\alpha^2=\frac{\partial}{\partial\alpha}\left\{\alpha^{\frac{3}{2}}C(1)\right\}+\mathcal{O}\big(\alpha^{-\frac{5}{2}}\big).
\end{equation}
Moreover, for all $t\geq t_0,\sigma\geq\sigma_0$ by \eqref{e78} and via iteration,
\begin{align*}
	\frac{\im}{2\pi}\ointclockwise_{\Gamma}{\bf R}_-(\lambda;t,\sigma,z_0)\big({\bf G}_{\bf R}(\lambda;t,\sigma,z_0)-\mathbb{I}\big)\lambda\,\d\lambda&\,\sim-\frac{7}{48}\begin{bmatrix}0 & 0\\ 1 & 0\end{bmatrix}\frac{z_0}{\sigma\eta(z_0)}\\
	+\frac{5}{48}&\left(\frac{1}{\sigma^2\eta(z_0)}-\frac{3}{5}\frac{z_0\eta'(z_0)}{(\sigma\eta(z_0))^2}\right)\begin{bmatrix}0 & 1\\ 0 & 0\end{bmatrix}+\sum_{n=3}^{\infty}\sum_{m=0}^{\infty}\frac{{\bf M}_{mn}(\omega)}{\sigma^n(\pi\sigma)^{\frac{3m}{2}}},
\end{align*}
where the matrix-valued function ${\bf M}_{mn}$ satisfies
\begin{equation*}
	\int_t^{\infty}\left\|{\bf M}_{mn}\left(\frac{s}{\sigma}\right)\right\|\d s\leq c_{mn}\sigma,\ \ \ \ \int_{\sigma}^{\infty}\left\|{\bf M}_{mn}\left(\frac{t}{s}\right)\right\|\frac{\d s}{s^{n+\frac{3m}{2}}}\leq c_{mn}t^{1-n-\frac{3m}{2}},\ \ \ c_{mn}>0,\ \ t\geq t_0,\ \sigma\geq \sigma_0.
\end{equation*}
Furthermore, for any $\sigma_0\leq\sigma\leq t$, by \eqref{e83a} and \eqref{e85},
\begin{equation*}
	\left\|\frac{\im}{2\pi}\int_{\Sigma_{\bf R}\setminus\Gamma}{\bf R}_-(\lambda;t,\sigma,z_0)\big({\bf G}_{\bf R}(\lambda;t,\sigma,z_0)-\mathbb{I}\big)\lambda\,\d\lambda\right\|\leq\left(\frac{c}{\sigma}\right)\e^{-c_2\sigma^{\frac{3}{2}}},\ \ \ c>0,
\end{equation*}
and so all together, for all $\sigma_0\leq\sigma\leq t$,
\begin{align}\label{e94}
	{\bf R}_2(t,\sigma)\sim-\frac{7}{48}\begin{bmatrix}0 & 0\\ 1 & 0\end{bmatrix}\frac{z_0}{\sigma\eta(z_0)}
	+&\,\frac{5}{48}\left(\frac{1}{\sigma^2\eta(z_0)}-\frac{3}{5}\frac{z_0\eta'(z_0)}{(\sigma\eta(z_0))^2}\right)\begin{bmatrix}0 & 1\\ 0 & 0\end{bmatrix}
	+\sum_{n=3}^{\infty}\sum_{m=0}^{\infty}\frac{{\bf M}_{mn}(\omega)}{\sigma^n(\pi\sigma)^{\frac{3m}{2}}}\nonumber\\
	&\,+\mathcal{O}\Big(\sigma^{-1}\e^{-c\sigma^{\frac{3}{2}}}\Big).
\end{align}
Consequently, combining \eqref{e93},\eqref{e94} and \eqref{e91},
\begin{eqnarray}
	\ln F_{\sigma}(\sigma)\!\!\!&=&\!\!\!\ln F_{\sigma_0}(\sigma_0)+\int_{\sigma_0}^{\sigma}\frac{\d}{\d\alpha}\ln F_{\alpha}(\alpha)\,\d\alpha\nonumber\\
	&\stackrel{\eqref{e89a}}{=}&\!\!\!\ln F_{\sigma_0}(\sigma_0)-\int_{\sigma_0}^{\sigma}\big(3g_2(\alpha,\alpha)\alpha^2-\frac{1}{16\alpha}+R_1^{21}(\alpha,\alpha)+2\alpha R_2^{12}(\alpha,\alpha)-\frac{\alpha}{2}z_0(\alpha,\alpha)R_1^{12}(\alpha,\alpha)\Big)\,\d\alpha\nonumber\\
	&=&\sigma^{\frac{3}{2}}C(1)+\eta_0+\mathcal{O}\big(\sigma^{-1}\big),\ \ \ \sigma\geq\sigma_0,\label{e95}
\end{eqnarray}
where $\eta_0$ is a numerical constant, independent of $(t,\sigma)$. Together with \eqref{e92}, \eqref{e95} yields thus
\begin{align}
	\ln F_{\sigma}(t)=&\,\ln F_{\sigma}(\sigma)-\int_t^{\sigma}\frac{\partial}{\partial x}\ln F_{\sigma}(x)\,\d x\nonumber\\
	=&\,\sigma^{\frac{3}{2}}C(\omega)+\frac{1}{4}\int_{\omega}^1\left\{\frac{\d}{\d u}D(u)\right\}^2\d u+\eta_1+\mathcal{O}\big(\sigma^{-1}\big),\ \ \ \ t_0\leq t\leq\sigma,\label{e96}
\end{align}
with another numerical constant $\eta_1$. Moving ahead, we now consider the complementary regime $\sigma_0\leq\sigma\leq t$ and use the identity
\begin{equation*}
	\ln F_{\sigma}(t)=\ln F_t(t)-\int_{\sigma}^t\frac{\partial}{\partial y}\ln F_y(t)\,\d y.
\end{equation*}
By Corollary \ref{appcor1} and \ref{appcor2}, for all $\sigma\geq\sigma_0$ and all $t>0$,
\begin{align}\label{e97}
	g_1(t,\sigma)t\sigma-3g_2(t,\sigma)\sigma^2+\frac{1}{16\sigma}=\frac{\partial}{\partial\sigma}\left\{\sigma^{\frac{3}{2}}C(\omega)\right\}+\frac{t}{4\sigma^2}\left\{\frac{\d}{\d\omega}D(\omega)\right\}^2+\frac{1}{16\sigma}+\overline{r}(t,\sigma)
\end{align}
and the error term $\overline{r}(t,\sigma)$ satisfies $\int_{\sigma}^{\infty}|\overline{r}(t,y)|\,\d y\leq c \sigma^{-\frac{3}{2}}$ in the same parameter regime. Moving ahead, \eqref{e91} gets replaced by
\begin{equation*}
	{\bf R}_1(t,\sigma)\sim-\frac{7}{48}\begin{bmatrix}0 & 0\\ 1 & 0\end{bmatrix}\frac{1}{\sigma\eta(z_0)}-\frac{1}{16}\begin{bmatrix}0 & 1\\ 0 & 0\end{bmatrix}\frac{\eta'(z_0)}{(\sigma\eta(z_0))^2}+\sum_{n=3}^{\infty}\sum_{m=1}^{\infty}\frac{{\bf L}_{mn}(\omega)}{\sigma^n(\pi\sigma)^{\frac{3m}{2}}}+\mathcal{O}\Big(\sigma^{-1}\e^{-c\sigma^{\frac{3}{2}}}\Big)
\end{equation*}
once $\sigma_0\leq\sigma\leq t$ and thus we find together with \eqref{e94} that
\begin{equation}\label{e98}
	R_1^{21}(t,\sigma)+2\sigma R_2^{12}(t,\sigma)-\left(t+\frac{\sigma}{2}z_0(t,\sigma)\right)R_1^{12}(t,\sigma)=\frac{1}{16\sigma}+\hat{r}(t,\sigma),\ \ \ \sigma_0\leq\sigma\leq t,
\end{equation}
where $\hat{r}(t,\sigma)$ is such that $\int_{\sigma}^{\infty}|\hat{r}(t,y)|\,\d y\leq c\sigma^{-\frac{3}{2}}$. Hence, combining \eqref{e97} and \eqref{e98},
\begin{equation*}
	\int_{\sigma}^t\frac{\partial}{\partial y}\ln F_y(t)\,\d y\stackrel{\eqref{e89}}{=}t^{\frac{3}{2}}C(1)-\sigma^{\frac{3}{2}}C(\omega)+\frac{1}{4}\int_1^{\omega}\left\{\frac{\d}{\d u}D(u)\right\}^2\d u+\mathcal{O}\big(\sigma^{-\frac{3}{2}}\big),\ \ \ \ \sigma_0\leq\sigma\leq t,
\end{equation*}
and so with \eqref{e95},
\begin{equation}\label{e99}
	\ln F_{\sigma}(t)=\ln F_t(t)-\int_{\sigma}^t\frac{\partial}{\partial y}\ln F_y(t)\,\d y=\sigma^{\frac{3}{2}}C(\omega)-\frac{1}{4}\int_1^{\omega}\left\{\frac{\d}{\d u}D(u)\right\}^2\d u+\eta_2+\mathcal{O}\big(t^{-1}\big)+\mathcal{O}\big(\sigma^{-\frac{3}{2}}\big),
\end{equation}
valid for all $\sigma_0\leq\sigma\leq t$, with another numerical constant $\eta_2$. It now remains to combine \eqref{e99} and \eqref{e96} into the single estimate
\begin{equation}\label{e100}
	\ln F_{\sigma}(t)=\sigma^{\frac{3}{2}}C(\omega)+\frac{1}{4}\int_{\omega}^{\infty}\left\{\frac{\d}{\d u}D(u)\right\}^2\d u+\eta+\mathcal{O}\big(\max\{t^{-1},\sigma^{-1}\}\big),\ \ t\geq t_0,\ \ \sigma\geq\sigma_0,
\end{equation}
where we use that $\frac{\d}{\d x}D(x)$ is square integrable on $\mathbb{R}_+$ and $\eta$ denotes an outstanding numerical constant. In order to determine it, we simply fix $\sigma\geq\sigma_0$ and let $t\rightarrow+\infty$, i.e. we let $\omega\rightarrow+\infty$. In this limit, both leading terms in \eqref{e100} approach zero super-exponentially fast, on the other hand \eqref{fi14} yields $F_{\sigma}(t)\sim 1$ in the same limit, so we must have $\eta=0$. This completes our proof of Theorem \ref{fimain3}.
\end{proof}
\section{Proof of Corollary \ref{fimain4}}\label{sec6}
We now return to our starting point in RHP \ref{master} but set out to investigate the same problem as $t\rightarrow-\infty$. En route we will use the same notations and symbols as in our previous $t\rightarrow+\infty$ analysis.
\subsection{Asymptotics, part $3$ ($t\rightarrow-\infty$ and $0<\sigma t^2\leq c$ with $c>0$)} Inspired by \cite{CCR} we consider the transformation,
\begin{equation}\label{e102}
	{\bf Y}(z;t,\sigma):={\bf X}\left(\frac{z-t}{\sigma};t,\sigma\right){\bf M}^{\textnormal{Ai}}\left(z;t,\frac{\pi}{4}\right),\ \ \ \ z\in\mathbb{C}\setminus\Sigma_{\bf Y},
\end{equation}
with the Airy parametrix ${\bf M}^{\textnormal{Ai}}$ in \eqref{app2} and the contour $\Sigma_{\bf Y}$ in RHP \ref{trafo6}. Using RHP \ref{AiryRHP}, the initial RHP \ref{master} is transformed to the one below:
\begin{problem}\label{trafo6} Let $(t,\sigma)\in(-\infty,0)\times(0,\infty)$. The function ${\bf Y}(z)={\bf Y}(z;t,\sigma)\in\mathbb{C}^{2\times 2}$ defined in \eqref{e102} is uniquely determined by the following properties:
\begin{enumerate}
	\item[(1)] ${\bf Y}(z)$ is analytic for $z\in\mathbb{C}\setminus\Sigma_{\bf Y}$ where $\Sigma_{\bf Y}:=\bigcup_{j=1}^4\Gamma_j\cup\{t\}$ with
	\begin{equation*}
		\Gamma_1:=(t,\infty),\ \ \ \ \ \Gamma_3:=(-\infty,t),\ \ \ \ \ \Gamma_2:=\e^{-\im\frac{\pi}{4}}(-\infty,t),\ \ \ \ \ \Gamma_4:=\e^{\im\frac{\pi}{4}}(-\infty,t),
	\end{equation*}
	denotes the contour shown in Figure \ref{figuretrafo5}. In addition, on each connected component of $\mathbb{C}\setminus\Sigma_{\bf Y}$ there is a continuous extension of ${\bf Y}(z)$ to the closure of the same component.
%
%
%
	\item[(2)] The continuous limiting values ${\bf Y}_{\pm}(z)$ on $\Gamma_j\ni z$ satisfy the jump condition ${\bf Y}_+(z)={\bf Y}_-(z){\bf G}_{\bf Y}(z;t,\sigma)$ where the jump matrix ${\bf G}_{\bf Y}(z;t,\sigma)$ is piecewise given by
	\begin{equation*}
		{\bf G}_{\bf Y}(z;t,\sigma)=\begin{bmatrix}1 & 1-\Phi\big(\frac{z-t}{\sigma}\big)\\ 0 & 1\end{bmatrix},\ \ z\in\Gamma_1;\ \ \ \ {\bf G}_{\bf Y}(z;t,\sigma)=\begin{bmatrix}1 & 0\\ 1 & 1\end{bmatrix},\ \ z\in\Gamma_2\cup\Gamma_4;
	\end{equation*}
	and
	\begin{equation*}
		{\bf G}_{\bf Y}(z;t,\sigma)=\begin{bmatrix}\Phi\big(\frac{z-t}{\sigma}\big) & 1-\Phi\big(\frac{z-t}{\sigma}\big)\smallskip\\ -1-\Phi\big(\frac{z-t}{\sigma}\big) & \Phi\big(\frac{z-t}{\sigma}\big)\end{bmatrix},\ \ z\in\Gamma_3.
	\end{equation*}
	\begin{figure}[tbh]
	\begin{tikzpicture}[xscale=0.65,yscale=0.65]
	\draw [thick, color=red, decoration={markings, mark=at position 0.25 with {\arrow{>}}}, decoration={markings, mark=at position 0.75 with {\arrow{>}}}, postaction={decorate}] (-5,0) -- (5,0);
\node [below] at (0.75,-0.2) {{\small $z=t$}};
\node [right] at (4.5,0.6) {{\small $\Gamma_1$}};
\node [left] at (-4.5,0.6) {{\small $\Gamma_3$}};
\draw [thick, color=red, decoration={markings, mark=at position 0.5 with {\arrow{>}}}, postaction={decorate}] (-4,4) -- (0,0);
\draw [thick, color=red, decoration={markings, mark=at position 0.5 with {\arrow{>}}}, postaction={decorate}] (-4,-4) -- (0,0);
\node [right] at (-3.6,4.4) {{\small $\Gamma_2$}};
\node [right] at (-3.6,-4.4) {{\small $\Gamma_4$}};
\end{tikzpicture}
\caption{The oriented jump contours for ${\bf Y}(z)$ in the complex $z$-plane.}
\label{figuretrafo5}
\end{figure}
	\item[(3)] ${\bf Y}(z)$ is bounded in a neighbourhood of $z=t$.
	\item[(4)] As $z\rightarrow\infty$, provided $z\notin\Sigma_{\bf Y}$,
	\begin{equation}\label{e103}
		{\bf Y}(z)=\Big\{\mathbb{I}+{\bf Y}_1z^{-1}+{\bf Y}_2z^{-2}+\mathcal{O}\big(z^{-3}\big)\Big\}z^{-\frac{1}{4}\sigma_3}\frac{1}{\sqrt{2}}\begin{bmatrix}1 & 1\\ -1 & 1\end{bmatrix}\e^{-\im\frac{\pi}{4}\sigma_3}\e^{-\frac{2}{3}z^{\frac{3}{2}}\sigma_3},
	\end{equation}
	where we choose principal branches for all fractional exponents and the coefficients ${\bf Y}_1,{\bf Y}_2$ equal
	\begin{align*}
		{\bf Y}_1={\bf Y}_1(t,\sigma)=&\,\,\sigma{\bf X}_1(t,\sigma)-\frac{7}{48}\begin{bmatrix}0 & 0\\ 1 & 0\end{bmatrix},\\ 
		{\bf Y}_2={\bf Y}_2(t,\sigma)=&\,\,\sigma^2{\bf X}_2(t,\sigma)+t\sigma{\bf X}_1(t,\sigma)
		-\frac{7\sigma}{48}{\bf X}_1(t,\sigma)\begin{bmatrix}0 & 0\\ 1 & 0\end{bmatrix}+\frac{5}{48}\begin{bmatrix}0 & 1\\ 0 & 0\end{bmatrix}.
	\end{align*}
\end{enumerate}
\end{problem}
Next, with \eqref{e42} in mind, we center RHP \ref{trafo6} at $z=0$. In detail, we consider the transformation
\begin{equation}\label{e104}
	{\bf T}(z;t,\sigma):={\bf Y}(z+t;t,\sigma),\ \ z\in\mathbb{C}\setminus(\Sigma_{\bf Y}-t),
\end{equation}
and thus transform RHP \ref{trafo6} to the one below:
\begin{problem}\label{trafo7} Let $(t,\sigma)\in(-\infty,0)\times(0,\infty)$. The function ${\bf T}(z)={\bf T}(z;t,\sigma)\in\mathbb{C}^{2\times 2}$ defined in \eqref{e104} is uniquely determined by the following four properties:
\begin{enumerate}
	\item[(1)] ${\bf T}(z)$ is analytic for $z\in\mathbb{C}\setminus\Sigma_{\bf T}$. The contour $\Sigma_{\bf T}:=\Sigma_{\bf Y}-t$ is shown in Figure \ref{figuretrafo6} and
	on each connected component of $\mathbb{C}\setminus\Sigma_{\bf T}$ there is a continuous extension of ${\bf T}(z)$ to the closure of the same component.
	\item[(2)] The continuous limiting values ${\bf T}_{\pm}(z)$ on $\Sigma_{\bf T}\ni z$ satisfy ${\bf T}_+(z)={\bf T}_-(z){\bf G}_{\bf T}(z;t,\sigma)$ where the jump matrix ${\bf G}_{\bf T}(z;t,\sigma)$ is of the form, 
	\begin{align*}
		{\bf G}_{\bf T}(z;t,\sigma)=&\,\begin{bmatrix} 1 & 1-\Phi(\frac{z}{\sigma})\\ 0 & 1\end{bmatrix},\ \ z\in\Sigma_4;\ \ \ \ \ {\bf G}_{\bf T}(z;t,\sigma)=\begin{bmatrix}\Phi(\frac{z}{\sigma}) & 1-\Phi(\frac{z}{\sigma})\smallskip\\
		-1-\Phi(\frac{z}{\sigma}) & \Phi(\frac{z}{\sigma})\end{bmatrix},\ \ z\in\Sigma_1;\\
		&\hspace{1cm}{\bf G}_{\bf T}(z;t,\sigma)=\begin{bmatrix}1 & 0\\ 1 & 1\end{bmatrix},\ \ z\in\Sigma_2\cup\Sigma_3.
	\end{align*}
	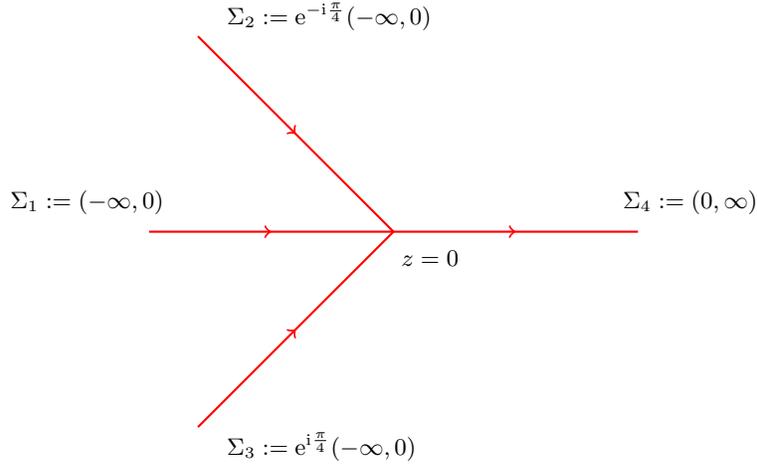
\begin{figure}[tbh]
	\begin{tikzpicture}[xscale=0.65,yscale=0.65]
	\draw [thick, color=red, decoration={markings, mark=at position 0.25 with {\arrow{>}}}, decoration={markings, mark=at position 0.75 with {\arrow{>}}}, postaction={decorate}] (-5,0) -- (5,0);
\node [below] at (0.75,-0.2) {{\small $z=0$}};
\node [right] at (4.5,0.6) {{\small $\Sigma_4:=(0,\infty)$}};
\node [left] at (-4.5,0.6) {{\small $\Sigma_1:=(-\infty,0)$}};
\draw [thick, color=red, decoration={markings, mark=at position 0.5 with {\arrow{>}}}, postaction={decorate}] (-4,4) -- (0,0);
\draw [thick, color=red, decoration={markings, mark=at position 0.5 with {\arrow{>}}}, postaction={decorate}] (-4,-4) -- (0,0);
\node [right] at (-3.6,4.4) {{\small $\Sigma_2:=\e^{-\im\frac{\pi}{4}}(-\infty,0)$}};
\node [right] at (-3.6,-4.4) {{\small $\Sigma_3:=\e^{\im\frac{\pi}{4}}(-\infty,0)$}};
\end{tikzpicture}
\caption{The oriented jump contour $\Sigma_{\bf T}$ in the complex $z$-plane.}
\label{figuretrafo6}
\end{figure}
	\item[(3)] ${\bf T}(z)$ is bounded in a neighbourhood of $z=0$.
	\item[(4)] As $z\rightarrow\infty$, valid in a full vicinity of infinity off the jump contour $\Sigma_{\bf T}$,
	\begin{equation}\label{e105}
		{\bf T}(z)=\Big\{\mathbb{I}+{\bf T}_1z^{-1}+{\bf T}_2z^{-2}+\mathcal{O}\big(z^{-3}\big)\Big\}z^{-\frac{1}{4}\sigma_3}\frac{1}{\sqrt{2}}\begin{bmatrix}1 & 1\\ -1 & 1\end{bmatrix}\e^{-\im\frac{\pi}{4}\sigma_3}\e^{-\frac{2}{3}(z+t)^{\frac{3}{2}}\sigma_3},
	\end{equation}
	using principal branches throughout and where the coefficients ${\bf T}_1,{\bf T}_2$, equal, compare \eqref{e103},
	\begin{align*}
		{\bf T}_1=&\,{\bf T}_1(t,\sigma)={\bf Y}_1(t,\sigma)-\frac{t}{4}\sigma_3,\\
		{\bf T}_2=&\,{\bf T}_2(t,\sigma)={\bf Y}_2(t,\sigma)-t{\bf Y}_1(t,\sigma)-\frac{t}{4}{\bf Y}_1(t,\sigma)\sigma_3+\frac{t^2}{32}\begin{bmatrix}5 & 0\\ 0 & -3\end{bmatrix}.
	\end{align*}
\end{enumerate}
\end{problem}
Moving ahead, we employ a $g$-function transformation similar to the one from \cite[$(3.8)$]{CIK} with $k=0$ or, equivalently, the one from \cite[$(3.1)$]{Bo1} with $V=0$: let 
\begin{equation*}
	g(z)=g(z;t):=\frac{2}{3}z^{\frac{1}{2}}\left(z+\frac{3t}{2}\right),\ \ z\in\mathbb{C}\setminus(-\infty,0],
\end{equation*}
be defined and analytic for $z\in\mathbb{C}\setminus(-\infty,0]$ such that $z^{\frac{1}{2}}=\sqrt{z}>0$ for $z>0$. Now set
\begin{equation}\label{e106}
	{\bf S}(z;t,\sigma):=\begin{bmatrix}1 & 0\\
	-\frac{1}{4}t^2& 1\end{bmatrix}{\bf T}(z;t,\sigma)\e^{g(z;t)\sigma_3},\ \ z\in\mathbb{C}\setminus\Sigma_{\bf T},
\end{equation}
and summarize the defining properties of ${\bf S}(z)$ below:
\begin{problem}\label{trafo8} Let $(t,\sigma)\in(-\infty,0)\times(0,\infty)$. The function ${\bf S}(z)={\bf S}(z;t,\sigma)\in\mathbb{C}^{2\times 2}$ defined in \eqref{e106} is uniquely determined by the following four properties:
\begin{enumerate}
	\item[(1)] ${\bf S}(z)$ is analytic for $z\in\mathbb{C}\setminus\Sigma_{\bf T}$ and on each connected component of $\mathbb{C}\setminus\Sigma_{\bf T}$ there is a continuous extension of ${\bf S}(z)$ to the closure of the same component, compare Figure \ref{figuretrafo6}.
%
%
	\item[(2)] The continuous limiting values ${\bf S}_{\pm}(z)$ on $\Sigma_{\bf T}\ni z$ satisfy ${\bf S}_+(z)={\bf S}_-(z){\bf G}_{\bf S}(z;t,\sigma)$ with ${\bf G}_{\bf S}(z;t,\sigma)$ equal to
	\begin{align*}
		{\bf G}_{\bf S}(z;t,\sigma)=&\,\begin{bmatrix}1 & (1-\Phi(\frac{z}{\sigma}))\e^{-2g(z;t)}\\ 0 & 1\end{bmatrix},\ \ z\in\Sigma_4;\\
		{\bf G}_{\bf S}(z;t,\sigma)=&\,\begin{bmatrix}\Phi(\frac{z}{\sigma})\e^{\Pi(z;t)} & 1-\Phi(\frac{z}{\sigma})\smallskip\\
		-1-\Phi(\frac{z}{\sigma}) & \Phi(\frac{z}{\sigma})\e^{-\Pi(z;t)}\end{bmatrix},\ \ z\in\Sigma_1;\\
		{\bf G}_{\bf S}(z;t,\sigma)=&\,\begin{bmatrix}1 & 0\\ \e^{2g(z;t)} & 1\end{bmatrix},\ \ z\in\Sigma_2\cup\Sigma_3;
	\end{align*}
	where we abbreviate 
	\begin{equation*}
		\Pi(z;t):=g_+(z;t)-g_-(z;t)=\frac{4\im}{3}\sqrt{|z|}\left(z+\frac{3t}{2}\right),\ \ \ z<0.
	\end{equation*}
	\item[(3)] ${\bf S}(z)$ is bounded in a neighbourhood of $z=0$.
	\item[(4)] As $z\rightarrow\infty$, valid in a full vicinity of infinity off the jump contour $\Sigma_{\bf T}$
	\begin{equation}\label{e107}
		{\bf S}(z)=\Big\{\mathbb{I}+{\bf S}_1z^{-1}+{\bf S}_2z^{-2}+\mathcal{O}\big(z^{-3}\big)\Big\}z^{-\frac{1}{4}\sigma_3}\frac{1}{\sqrt{2}}\begin{bmatrix}1 & 1\\ -1 & 1\end{bmatrix}\e^{-\im\frac{\pi}{4}\sigma_3},
	\end{equation}
	with principal branches throughout and coefficients ${\bf S}_1,{\bf S}_2$ given by, see \eqref{e105},
	\begin{align*}
		{\bf S}_1={\bf S}_1(t,\sigma)=&\,\begin{bmatrix}1 & 0\smallskip\\ -\frac{1}{4}t^2 & 1\end{bmatrix}{\bf T}_1(t,\sigma)\begin{bmatrix}1 & 0\smallskip\\ \frac{1}{4}t^2 & 1\end{bmatrix}+\begin{bmatrix}\frac{1}{32}t^4 & \frac{1}{4}t^2\smallskip\\ -\frac{1}{24}t^3-\frac{1}{192}t^6 & -\frac{1}{32}t^4\end{bmatrix}\\
		{\bf S}_2={\bf S}_2(t,\sigma)=&\,\begin{bmatrix}1 & 0\smallskip\\ -\frac{1}{4}t^2 & 1\end{bmatrix}{\bf T}_2(t,\sigma)\begin{bmatrix}1 & 0\smallskip\\ \frac{1}{4}t^2 & 1\end{bmatrix}+\begin{bmatrix}1 & 0\smallskip\\ -\frac{1}{4}t^2 & 1\end{bmatrix}{\bf T}_1(t,\sigma)\begin{bmatrix}\frac{1}{32}t^4 & \frac{1}{4}t^2\smallskip\\ -\frac{1}{24}t^3+\frac{1}{384}t^6 & \frac{1}{32}t^4\end{bmatrix}\\
		&\hspace{2cm}+\begin{bmatrix}-\frac{1}{96}(t^5-\frac{1}{64}t^8) & -\frac{1}{24}(t^3-\frac{1}{16}t^6)\smallskip\\
		\frac{1}{64}(t^4+\frac{1}{12}t^7-\frac{1}{480}t^{10}) & -\frac{1}{2048}t^8\end{bmatrix}.
	\end{align*}
\end{enumerate}
\end{problem}
We proceed with the following inequalities for the nontrivial entries of ${\bf G}_{\bf S}(z;t,\sigma)$.
\begin{prop}\label{prop11} We have for any $t<0$ and $\sigma>0$, 
\begin{equation}\label{e108}
	\Big|\e^{2g(z;t)}\Big|\leq \e^{\frac{1}{2}t\sqrt{|z|}},\ \ z\in\Sigma_2\cup\Sigma_3;\ \ \ \ \ \ \big|\Phi\Big(\frac{z}{\sigma}\Big)\big|\leq\frac{1}{2}\e^{-(z/\sigma)^2},\ \ z\in\Sigma_1;
\end{equation}
followed by
\begin{equation}\label{e109}
	\Big|\big(1-\Phi\Big(\frac{z}{\sigma}\Big)\big)\e^{-2g(z;t)}\Big|\leq\frac{1}{2}\exp\left[-\frac{\sqrt{z}}{\sigma^2}\left(z^{\frac{3}{2}}+\frac{4}{3}z\sigma^2+2t\sigma^2\right)\right],\ \ z\in\Sigma_4.
\end{equation}
\end{prop}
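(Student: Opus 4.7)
The three inequalities reduce to elementary bounds, with no real conceptual obstacle, so my plan is to verify each by substitution and explicit trigonometry on the jump rays, leaning throughout on the Mill's ratio estimate already recorded in Lemma \ref{lem4}.

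For the second inequality on $\Sigma_1=(-\infty,0)$, I would argue that since $z/\sigma<0$, the bound $\textnormal{erfc}(x)\leq\e^{-x^2}$ valid for $x\geq 0$, cf.\ \cite[$7.8.1$]{NIST}, together with $\Phi(z/\sigma)=\frac{1}{2}\textnormal{erfc}(|z|/\sigma)$, immediately yields $|\Phi(z/\sigma)|\leq\frac{1}{2}\e^{-(z/\sigma)^2}$.

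For the third inequality on $\Sigma_4=(0,\infty)$, I first note that with the principal branch one has $g(z;t)=\frac{2}{3}z^{3/2}+t\sqrt{z}$, so that $\e^{-2g(z;t)}=\exp[-\frac{4}{3}z^{3/2}-2t\sqrt{z}]$. Multiplying by the analogous erfc bound on $1-\Phi(z/\sigma)=\frac{1}{2}\textnormal{erfc}(z/\sigma)$ gives
\begin{equation*}
	\Big|(1-\Phi(z/\sigma))\e^{-2g(z;t)}\Big|\leq\frac{1}{2}\exp\left[-\frac{z^2}{\sigma^2}-\frac{4}{3}z^{3/2}-2t\sqrt{z}\right],
\end{equation*}
after which factoring $-\sqrt{z}/\sigma^2$ out of the exponent recovers \eqref{e109} verbatim.

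The only computation requiring a little care is the first inequality on $\Sigma_2\cup\Sigma_3$. The plan is to parametrize $z=r\e^{\pm\im 3\pi/4}$ with $r>0$, so that the principal branch gives $z^{1/2}=r^{1/2}\e^{\pm\im 3\pi/8}$; substituting into $g(z;t)=\frac{2}{3}z^{1/2}(z+\frac{3t}{2})$ and taking twice the real part yields
\begin{equation*}
	2\Re g(z;t)=-\frac{4}{3}\cos(\pi/8)\,r^{3/2}+2\cos(3\pi/8)\,t\,r^{1/2}.
\end{equation*}
The $r^{3/2}$ contribution is manifestly non-positive, and I would then combine this with the numerical inequality $\cos(3\pi/8)>\frac{1}{4}$, noting that multiplication by the negative quantity $t$ flips the order, to deduce $2\cos(3\pi/8)\,t\,r^{1/2}\leq\frac{t}{2}\,r^{1/2}$. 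Adding the two bounds gives $2\Re g(z;t)\leq\frac{t}{2}\sqrt{|z|}$, and exponentiation produces \eqref{e108}. The only mildly subtle point is the correct identification of the principal branch on the rays $\Sigma_2\cup\Sigma_3$, which lie in the second and third quadrants; this sign choice ensures the $r^{3/2}$ contribution comes with the favorable minus sign and no cancellation with the $t$-dependent term is required.
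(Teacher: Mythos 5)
Your proposal is correct and follows exactly the route the paper intends: the paper's own proof is a one-line statement that all three inequalities are ``easy consequences of the explicit formula for $g(z;t)$ and estimate \eqref{e42}'', and your verification spells out precisely what that means — substitute $z=r\e^{\pm\im 3\pi/4}$ on $\Sigma_2\cup\Sigma_3$, use $\cos(3\pi/8)>\tfrac14$ together with the sign flip from $t<0$, apply Lemma \ref{lem4} on $\Sigma_1$ and $\Sigma_4$, and note the algebraic identity rewriting the $\Sigma_4$ exponent. All three computations check out, including the principal-branch bookkeeping that makes the $r^{3/2}$ term have the favorable sign.
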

\begin{proof} All three estimates are easy consequences of the explicit formula for $g(z;t)$ and estimate \eqref{e42}.
\end{proof}
By \eqref{e108} and \eqref{e109}, ${\bf G}_{\bf S}$ in RHP \ref{trafo8} is asymptotically localized near $z=0$ and along the line segment $\Sigma_1$, as $t\rightarrow-\infty$ and $0<\sigma\leq c|t|^{-\frac{1}{2}-\epsilon}$ for any $c,\epsilon>0$ (so in particular for the values of $\sigma$ with $0<\sigma t^2\leq c$ for any fixed $c>0$).
We thus proceed with the necessary local analysis and the following two model problems:
\begin{problem}\label{para3} Find ${\bf P}^{(\infty)}(z)\in\mathbb{C}^{2\times 2}$ such that
\begin{enumerate}
	\item[(1)] ${\bf P}^{(\infty)}(z)$ is analytic for $z\in\mathbb{C}\setminus(-\infty,0]$.
	\item[(2)] ${\bf P}^{(\infty)}(z)$ attains square-integrable limiting values on $(-\infty,0]$ and those are related by the jump condition
	\begin{equation*}
		{\bf P}^{(\infty)}_+(z)={\bf P}^{(\infty)}_-(z)\begin{bmatrix}0 & 1\\ -1 & 0\end{bmatrix},\ \ \ z\in(-\infty,0).
	\end{equation*}
	\item[(3)] As $z\rightarrow\infty$,
	\begin{equation*}
		{\bf P}^{(\infty)}(z)=\Big\{\mathbb{I}+\mathcal{O}\big(z^{-\infty}\big)\Big\}z^{-\frac{1}{4}\sigma_3}\frac{1}{\sqrt{2}}\begin{bmatrix}1 & 1\\ -1 & 1\end{bmatrix}\e^{-\im\frac{\pi}{4}\sigma_3},
	\end{equation*}
	with the principal branch for $z^{\alpha}:\mathbb{C}\setminus(-\infty,0]\rightarrow\mathbb{C}$ such that $z^{\alpha}>0$ when $z>0$.
\end{enumerate}
\end{problem}
As in \eqref{e77}, a direct computation verifies that
\begin{equation}\label{e110}
	{\bf P}^{(\infty)}(z)=z^{-\frac{1}{4}\sigma_3}\frac{1}{\sqrt{2}}\begin{bmatrix}1 & 1\\ -1 & 1\end{bmatrix}\e^{-\im\frac{\pi}{4}\sigma_3},\ \ z\in\mathbb{C}\setminus(-\infty,0],
\end{equation}
has the properties listed in RHP \ref{para3}. Next, in a vicinity of $z=0$, we require a solution to the following model problem.
\begin{problem}\label{para4} Let $-t\geq t_0$ be sufficiently large and $\sigma>0$ such that $0<\sigma t^2\leq c$ for any fixed $c>0$. Now find ${\bf P}^{(0)}(z)={\bf P}^{(0)}(z;t,\sigma)\in\mathbb{C}^{2\times 2}$ such that
\begin{enumerate}
	\item[(1)] ${\bf P}^{(0)}(z)$ is analytic for $z\in\mathbb{D}_1(0)\setminus\Sigma_{\bf T}$ with the open disk $\mathbb{D}_r(z_0):=\{z\in\mathbb{C}:\,|z-z_0|<r\}$.
	\item[(2)] ${\bf P}^{(0)}(z)$ has the following local jump behavior, see Figure \ref{figuretrafo6} for contour orientations:
	\begin{align*}
		{\bf P}_+^{(0)}(z)=&\,{\bf P}_-^{(0)}(z)\begin{bmatrix}1 & (1-\Phi(\frac{z}{\sigma}))\e^{-2g(z;t)}\smallskip\\ 0 & 1\end{bmatrix},\ \ z\in\Sigma_4\cap\mathbb{D}_1(0);\\
		{\bf P}_+^{(0)}(z)=&\,{\bf P}_-^{(0)}(z)\begin{bmatrix}\Phi(\frac{z}{\sigma})\e^{\Pi(z;t)} & 1-\Phi(\frac{z}{\sigma})\smallskip\\
		-1-\Phi(\frac{z}{\sigma}) & \Phi(\frac{z}{\sigma})\e^{-\Pi(z;t)}\end{bmatrix},\ \ z\in\Sigma_1\cap\mathbb{D}_1(0);\\
		{\bf P}_+^{(0)}(z)=&\,{\bf P}_-^{(0)}(z)\begin{bmatrix}1 & 0\smallskip\\ \e^{2g(z;t)} & 1\end{bmatrix},\ \ z\in(\Sigma_2\cup\Sigma_3)\cap\mathbb{D}_1(0).
	\end{align*}
	\item[(3)] When $-t\geq t_0$ and $\sigma>0$ such that $0<\sigma t^2\leq c$ for any fixed $c>0$, we have the following asymptotic matching between ${\bf P}^{(0)}(z)$ and ${\bf P}^{(\infty)}(z)$,
	\begin{equation}\label{e111}
		{\bf P}^{(0)}(z)=\Big\{\mathbb{I}+\mathcal{O}\big(\sigma t\big)+\mathcal{O}\big(t^{-1}\big)\Big\}{\bf P}^{(\infty)}(z),
	\end{equation}
	which holds uniformly for $0<\frac{1}{4}\leq|z|\leq\frac{3}{4}<1$. Here, ${\bf M}_1^{\textnormal{PV}}(x)$ is as in RHP \ref{genBesselRHP}, condition $(4)$.
\end{enumerate}
\end{problem}
In order to solve Problem \ref{para4} we consider the conformal change of coordinates
\begin{equation*}
	\zeta(z):=\frac{z}{\sigma},\ \ \ z\in\mathbb{D}_1(0),
\end{equation*}
and in turn the function
\begin{equation}\label{e112}
	{\bf P}^{(0)}(z)=\sigma^{-\frac{1}{4}\sigma_3}{\bf M}^{\textnormal{PV}}\left(\zeta(z);-t\sqrt{\sigma}-\frac{2}{3}z\sqrt{\sigma}\right)\e^{g(z;t)\sigma_3},\ \ \ z\in\mathbb{D}_1(0)\setminus\Sigma_{\bf T},
\end{equation}
defined in terms of the generalized Bessel parametrix ${\bf M}^{\textnormal{PV}}(\zeta;x)$, the unique solution of RHP \ref{genBesselRHP}. Note that \eqref{e112} is well-defined in the indicated domain for all $-t$ sufficiently large and $\sigma>0$ such that $0<\sigma t^2\leq c$, compare Lemma \ref{genBessellem}, condition (i). Moreover, the same function is analytic in $z\in\mathbb{D}_1(0)\setminus\Sigma_{\bf T}$ by RHP \ref{genBesselRHP} and it attains continuous limiting values on $\Sigma_{\bf T}$. Using RHP \ref{genBesselRHP}, condition $(2)$, one then checks that these limiting values are as desired, so we are left to establish that \eqref{e112} obeys \eqref{e111}. To that end, if $-t\geq t_0$ and $\sigma>0$ are such that $0<\delta\leq\sigma t^2\leq c$ is bounded away from zero, then \eqref{e111} is immediate from \eqref{e110}, \eqref{e112} and \eqref{app4a}, given that \eqref{app4a} holds uniformly in any small disk centered on the positive half ray and given that $-t\sqrt{\sigma}$ is bounded away from zero, compare Lemma \ref{genBessellem}. In fact, for those values of $(t,\sigma)$ we have
\begin{equation}\label{e113}
		{\bf P}^{(0)}(z)=\Big\{\mathbb{I}+\sigma^{-\frac{1}{4}\sigma_3}{\bf M}_1^{\textnormal{PV}}\left(-t\sqrt{\sigma}-\frac{2}{3}z\sqrt{\sigma}\right)\sigma^{\frac{1}{4}\sigma_3}\Big(\frac{z}{\sigma}\Big)^{-1}+\mathcal{O}\big(\sigma^{\frac{3}{2}}\big)\Big\}{\bf P}^{(\infty)}(z),
\end{equation}
uniformly for $0<\frac{1}{4}\leq|z|\leq\frac{3}{4}<1$. If however $-t\geq t_0$ and $\sigma>0$ are such that $0<\sigma t^2\leq c$ tends to zero, then \eqref{app4a} is no longer applicable since $-t\sqrt{\sigma}\downarrow 0$. Instead we now use \eqref{app4b} and RHP \ref{BesselRHP}, condition $(4)$, and obtain
\begin{align}
	&\,{\bf P}^{(0)}(z)=\bigg\{\mathbb{I}+\frac{\sigma}{2z}\left(-t-\frac{2z}{3}\right)^{\frac{1}{2}\sigma_3}\begin{bmatrix}\frac{3}{8} & 1\smallskip\\ -\frac{9}{64} & -\frac{3}{8}\end{bmatrix}\left(-t-\frac{2z}{3}\right)^{-\frac{1}{2}\sigma_3}\int_{-\infty}^{\infty}\big(\chi_{[0,\infty)}(u)-\Phi(u)\big)\,\d u\label{e114}\\
	&\,+\left(-t-\frac{2z}{3}\right)^{\frac{1}{2}\sigma_3}\begin{bmatrix}\frac{9}{128} & -\frac{1}{8}\smallskip\\ -\frac{3}{1024} & -\frac{9}{128}\end{bmatrix}\left(-t-\frac{2z}{3}\right)^{-\frac{1}{2}\sigma_3}\frac{1}{z}\left(-t-\frac{2z}{3}\right)^{-2}+\mathcal{O}\big(t^3\sigma^2\big)+\mathcal{O}(\sigma)\bigg\}{\bf P}^{(\infty)}(z),\nonumber
\end{align}
uniformly for $0<\frac{1}{4}\leq|z|\leq\frac{3}{4}<1$. Combining \eqref{e113} and \eqref{e114} we arrive at \eqref{e111}, i.e. \eqref{e112} constitutes a solution of RHP \ref{para4}. Having completed the necessary local analysis with \eqref{e110} and \eqref{e112} we now compare both functions to ${\bf S}(z)$ in the following way. Define
\begin{equation}\label{e115}
	{\bf R}(z;t,\sigma):={\bf S}(z;t,\sigma)\begin{cases}\big({\bf P}^{(0)}(z;t,\sigma)\big)^{-1},&z\in\mathbb{D}_{\frac{1}{2}}(0)\setminus\Sigma_{\bf T}\smallskip\\ \big({\bf P}^{(\infty)}(z)\big)^{-1},&z\notin\overline{\mathbb{D}_{\frac{1}{2}}(0)}\setminus\Sigma_{\bf T}\end{cases},
\end{equation}
and recall RHP \ref{trafo8}, \ref{para3} and \ref{para4}. The defining properties of ${\bf R}(z)$ are as follows:
\begin{problem}\label{trafo9} Let $-t\geq t_0$ and $\sigma>0$ such that $0<\sigma t^2\leq c$. The function ${\bf R}(z)={\bf R}(z;t,\sigma)\in\mathbb{C}^{2\times 2}$ defined in \eqref{trafo9} is uniquely determined by the following properties:
\begin{enumerate}
	\item[(1)] ${\bf R}(z)$ is analytic for $z\in\mathbb{C}\setminus\Sigma_{\bf R}$ where
	\begin{equation*}
		\Sigma_{\bf R}:=\partial\mathbb{D}_{\frac{1}{2}}(0)\cup\left[\frac{1}{2},\infty\right)\cup\left(-\infty,-\frac{1}{2}\right]\cup\big(\Sigma_2\setminus\mathbb{D}_{\frac{1}{2}}(0)\big)\cup\big(\Sigma_3\setminus\mathbb{D}_{\frac{1}{2}}(0)\big)
	\end{equation*}
	is shown in Figure \ref{figuretrafo7}. Moreover, on each connected component of $\mathbb{C}\setminus\Sigma_{\bf R}$ there is a continuous extension of ${\bf R}(z)$ to the closure of the same component.
%
%
	\item[(2)] The continuous limiting values ${\bf R}_{\pm}(z)$ on $\Sigma_{\bf R}\ni z$ obey the constraint ${\bf R}_+(z)={\bf R}_-(z){\bf G}_{\bf R}(z;t,\sigma)$ where, for $z\in[\frac{1}{2},\infty)$,
	\begin{equation*}
		{\bf G}_{\bf R}(z;t,\sigma)={\bf P}^{(\infty)}(z)\begin{bmatrix}1 & (1-\Phi(\frac{z}{\sigma}))\e^{-2g(z;t)}\smallskip\\ 0 & 1\end{bmatrix}\big({\bf P}^{(\infty)}(z)\big)^{-1},
	\end{equation*}
	and for $z\in(-\infty,-\frac{1}{2}]$,
	\begin{equation*}
		{\bf G}_{\bf R}(z;t,\sigma)={\bf P}^{(\infty)}_-(z)\begin{bmatrix}1-\Phi(\frac{z}{\sigma}) & -\Phi(\frac{z}{\sigma})\e^{\Pi(z;t)}\smallskip\\ \Phi(\frac{z}{\sigma})\e^{-\Pi(z;t)} & 1+\Phi(\frac{z}{\sigma})\end{bmatrix}\big({\bf P}^{(\infty)}_-(z)\big)^{-1}.
	\end{equation*}
	Additionally, for $z\in(\Sigma_2\cup\Sigma_3)\setminus\mathbb{D}_{\frac{1}{2}}(0)$,
	\begin{equation*}
		{\bf G}_{\bf R}(z;t,\sigma)={\bf P}^{(\infty)}(z)\begin{bmatrix}1 & 0\smallskip\\ \e^{2g(z;t)} & 1\end{bmatrix}\big({\bf P}^{(\infty)}(z)\big)^{-1},
	\end{equation*}
	and for $z\in\partial\mathbb{D}_{\frac{1}{2}}(0)$,
	\begin{equation*}
		{\bf G}_{\bf R}(z;t,\sigma)={\bf P}^{(0)}(z;t,\sigma)\big({\bf P}^{(\infty)}(z)\big)^{-1}.
	\end{equation*}
	\item[(3)] As $z\rightarrow\infty$ and $z\notin\Sigma_{\bf R}$,
	\begin{equation*}
		{\bf R}(z)=\mathbb{I}+{\bf R}_1z^{-1}+{\bf R}_2z^{-2}+\mathcal{O}\big(z^{-3}\big),
	\end{equation*}
	with ${\bf R}_1,{\bf R}_2$ equal to, compare \eqref{e107} and \eqref{e110},
	\begin{equation*}
		{\bf R}_1={\bf R}_1(t,\sigma)={\bf S}_1(t,\sigma),\ \ \ \ {\bf R}_2={\bf R}_2(t,\sigma)={\bf S}_2(t,\sigma).
	\end{equation*}
\end{enumerate}
\begin{figure}[tbh]
	\begin{tikzpicture}[xscale=0.65,yscale=0.65]
	\draw [thick, color=red, decoration={markings, mark=at position 0.5 with {\arrow{>}}}, postaction={decorate}] (1,0) -- (5,0);
	\draw [thick, color=red, decoration={markings, mark=at position 0.5 with {\arrow{<}}}, postaction={decorate}] (-1,0) -- (-5,0);
	\draw [thick, color=red, decoration={markings, mark=at position 0.25 with {\arrow{<}}}, decoration={markings, mark=at position 0.75 with {\arrow{<}}}, postaction={decorate}] (0,0) circle [radius=1];
\node [below] at (0.05,-0.1) {{\small $0$}};
\node [right] at (4.5,0.6) {{\small $[\frac{1}{2},\infty)$}};
\node [left] at (-4.5,0.6) {{\small $(-\infty,-\frac{1}{2}]$}};
\draw [thick, color=red, decoration={markings, mark=at position 0.5 with {\arrow{>}}}, postaction={decorate}] (-4,4) -- (-0.7071067810,0.7071067810);
\draw [thick, color=red, decoration={markings, mark=at position 0.5 with {\arrow{>}}}, postaction={decorate}] (-4,-4) -- (-0.7071067810,-0.7071067810);
\node [right] at (-3.6,4.4) {{\small $\Sigma_2\setminus\mathbb{D}_{\frac{1}{2}}(0)$}};
\node [right] at (-3.6,-4.4) {{\small $\Sigma_3\setminus\mathbb{D}_{\frac{1}{2}}(0)$}};
\draw [fill, color=black] (0,0) circle [radius=0.07];
\end{tikzpicture}
\caption{The oriented jump contours for ${\bf R}(z)$ in the complex $z$-plane.}
\label{figuretrafo7}
\end{figure}
\end{problem}
\begin{rem}\label{rem10} By construction, compare RHP \ref{para4}, ${\bf R}(z)$ is analytic in the disk $\mathbb{D}_{\frac{1}{2}}(0)$.
\end{rem}
By Proposition \ref{prop11} and \eqref{e111}, as soon as $t\rightarrow-\infty$ and $\sigma>0$ is such that $0<\sigma t^2\leq c$, the jump matrix in the ${\bf R}$-RHP is close to the identity matrix on all of $\Sigma_{\bf R}$. In detail, we have the following small norm estimates.
\begin{prop}\label{prop12} For any $c>0$ there exist $t_0,d>0$ such that
\begin{equation}\label{e116}
	\|{\bf G}_{\bf R}(\cdot;t,\sigma)-\mathbb{I}\|_{L^{\infty}(\Sigma_{\bf R})}\leq d\max\big\{\sigma t,t^{-1}\big\},\ \ \ \ \|{\bf G}_{\bf R}(\cdot;t,\sigma)-\mathbb{I}\|_{L^2(\Sigma_{\bf R})}\leq d\max\big\{\sigma t,t^{-1}\big\}
\end{equation}
for all $-t\geq t_0$ and all $\sigma>0$ such that $0<\sigma t^2\leq c$.
\end{prop}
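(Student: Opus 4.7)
The plan is to split $\Sigma_{\bf R}$ into its five connected pieces (the circle $\partial\mathbb{D}_{1/2}(0)$, the half-rays $[\tfrac12,\infty)$ and $(-\infty,-\tfrac12]$, and the two diagonal arms $\Sigma_{2,3}\setminus\mathbb{D}_{1/2}(0)$) and estimate ${\bf G}_{\bf R}-\mathbb{I}$ separately on each, then combine. The circle will produce the advertised leading order $\max\{\sigma t,t^{-1}\}$, whereas every other component will give super-exponentially small contributions that are harmless for both $L^\infty$ and $L^2$ norms.

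On $\partial\mathbb{D}_{1/2}(0)$ the jump equals ${\bf P}^{(0)}(z)\big({\bf P}^{(\infty)}(z)\big)^{-1}$, so I simply insert the matching estimate \eqref{e111} to obtain
\begin{equation*}
	\|{\bf G}_{\bf R}(\cdot;t,\sigma)-\mathbb{I}\|_{L^\infty(\partial\mathbb{D}_{1/2}(0))}\leq d\,\max\{\sigma t,t^{-1}\},
\end{equation*}
for some $d>0$ and for all $-t\geq t_0,0<\sigma t^2\leq c$. Since the circle has finite length $\pi$, the same bound (up to adjusting $d$) transfers to the $L^2$ norm over $\partial\mathbb{D}_{1/2}(0)$. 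On the outer pieces I conjugate by ${\bf P}^{(\infty)}(z)=z^{-\sigma_3/4}\cdot\mathrm{const}$ and insert the estimates from Proposition \ref{prop11}. On $[\tfrac12,\infty)$ the only nontrivial entry is $(1-\Phi(z/\sigma))\e^{-2g(z;t)}$, and \eqref{e109} combined with $\sigma\leq c\,t^{-2}$ yields $z^{3/2}/\sigma^2\geq z^{3/2}t^4/c^2$, which for $z\geq\tfrac12$ dominates $2|t|$ and gives super-exponential decay in $|t|$; the conjugation factor $z^{\pm 1/4}$ is absorbed without difficulty and the resulting bound is integrable over $[\tfrac12,\infty)$. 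On $\Sigma_{2,3}\setminus\mathbb{D}_{1/2}(0)$ the entry $\e^{2g(z;t)}$ is bounded by $\e^{\frac12 t\sqrt{|z|}}=\e^{-\frac12|t|\sqrt{|z|}}$ by \eqref{e108}, which is exponentially small in $|t|$ and integrable in $|z|$. Finally, on $(-\infty,-\tfrac12]$ the nontrivial entries carry a factor $\Phi(z/\sigma)$ which by \eqref{e108} is controlled by $\tfrac12\e^{-(z/\sigma)^2}\leq\tfrac12\e^{-1/(4\sigma^2)}$, a quantity that is super-exponentially small as $\sigma\downarrow 0$ (and integrable against $\d z$ after a trivial tail estimate).

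Combining the five pieces, the circle dominates and delivers both inequalities in \eqref{e116}, while the four unbounded pieces are subdominant to any power of the leading order. The only delicate point is checking that on $[\tfrac12,\infty)$ the linear-in-$|t|$ contribution $2t$ appearing in the exponent of \eqref{e109} is truly dominated by $z^{3/2}/\sigma^2$; this is where the precise regime $0<\sigma t^2\leq c$ is used, and it is the main (though rather mild) obstacle in the otherwise routine piecewise bookkeeping. With $t_0$ chosen large enough that $(2^{-3/2}/c^2)t_0^3\geq 4$, the exponent is negative and bounded away from $0$ uniformly on $[\tfrac12,\infty)$ and across the admissible range of $\sigma$, completing the argument.
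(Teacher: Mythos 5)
Your argument is correct and follows essentially the same route as the paper: the circle $\partial\mathbb{D}_{1/2}(0)$ delivers the leading order via the matching estimate \eqref{e111}, while the four unbounded rays are exponentially small by \eqref{e42}, \eqref{e108}, \eqref{e109}, with the at-most-$|z|^{1/2}$ growth introduced by conjugation with ${\bf P}^{(\infty)}$ swallowed by the Gaussian/exponential decay. Your write-up is somewhat more explicit than the paper's terse three-sentence proof — in particular about the polynomial factor from ${\bf P}^{(\infty)}$ and about the sign check on $[\tfrac12,\infty)$ ensuring the $-2t\sqrt{z}$ term in the exponent of \eqref{e109} does not spoil decay once $\sigma\le c\,t^{-2}$ — but the underlying decomposition and lemmas invoked are identical.
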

\begin{proof} Since ${\bf P}_-^{(\infty)}(z)$ is bounded on the four rays extending to infinity and independent of $(t,\sigma)$, the same four rays yield exponentially small contributions of ${\bf G}_{\bf R}(z;t,\sigma)$ to the identity by \eqref{e42} and \eqref{e108},\eqref{e109}. Hence, \eqref{e116} is a direct consequence of \eqref{e111}.
\end{proof}
In turn, \eqref{e116} implies that RHP \ref{trafo9} is asymptotically solvable, cf. \cite{DZ}:
\begin{theo}\label{theo5} Let $c>0$. There exist $t_0,d>0$ such that the RHP for ${\bf R}(z)$ defined in \eqref{e115} is uniquely solvable in $L^2(\Sigma_{\bf R})$ for all $-t\geq t_0$ and all $\sigma>0$ such that $0<\sigma t^2\leq c$. We can compute the solution of the same problem iteratively via the integral equation
\begin{equation}\label{e117}
	{\bf R}(z;t,\sigma)=\mathbb{I}+\frac{1}{2\pi\im}\int_{\Sigma_{\bf R}}{\bf R}_-(\lambda;t,\sigma)\big({\bf G}_{\bf R}(\lambda;t,\sigma)-\mathbb{I}\big)\frac{\d\lambda}{\lambda-z},\ \ \ \ z\in\mathbb{C}\setminus\Sigma_{\bf R},
\end{equation}
using that, for all $-t\geq t_0$ and $\sigma>0$ such that $0<\sigma t^2\leq c$, we have
\begin{equation}\label{e118}
	\|{\bf R}_-(\cdot;t,\sigma)-\mathbb{I}\|_{L^2(\Sigma_{\bf R})}\leq d\max\big\{\sigma t,t^{-1}\big\}.
\end{equation}
\end{theo}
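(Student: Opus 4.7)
The plan is to deduce Theorem \ref{theo5} from the standard small-norm theory for Riemann-Hilbert problems, as developed in \cite{DZ}, once the estimates of Proposition \ref{prop12} are in place. The contour $\Sigma_{\bf R}$ depicted in Figure \ref{figuretrafo7} is a finite union of smooth arcs meeting at finitely many self-intersection points, so the Cauchy singular integral operator $C_{\pm}:L^2(\Sigma_{\bf R})\to L^2(\Sigma_{\bf R})$, defined via non-tangential boundary values of the Cauchy transform, is bounded with operator norm depending only on the geometry of $\Sigma_{\bf R}$ (which here is $(t,\sigma)$-independent, since $\Sigma_{\bf R}$ does not vary with the parameters).

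First, I would recast RHP \ref{trafo9} as a singular integral equation. Writing ${\bf R}={\bf R}_-+({\bf R}-{\bf R}_-)$ and using the Plemelj-Sokhotski formulas, the normalization at $z=\infty$ together with the jump condition ${\bf R}_+={\bf R}_-{\bf G}_{\bf R}$ is equivalent to
\begin{equation*}
	{\bf R}_-(z;t,\sigma)=\mathbb{I}+C_-\bigl[{\bf R}_-(\cdot;t,\sigma)\bigl({\bf G}_{\bf R}(\cdot;t,\sigma)-\mathbb{I}\bigr)\bigr](z),\ \ \ z\in\Sigma_{\bf R},
\end{equation*}
which rearranges to $({\bf I}-\mathcal{C}_{{\bf G}_{\bf R}-\mathbb{I}})({\bf R}_--\mathbb{I})=C_-[{\bf G}_{\bf R}-\mathbb{I}]$, where $\mathcal{C}_W f:=C_-(fW)$. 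By \eqref{e116} we have
\begin{equation*}
	\|\mathcal{C}_{{\bf G}_{\bf R}-\mathbb{I}}\|_{L^2(\Sigma_{\bf R})\to L^2(\Sigma_{\bf R})}\leq\|C_-\|\,\|{\bf G}_{\bf R}-\mathbb{I}\|_{L^\infty(\Sigma_{\bf R})}\leq d'\max\{\sigma t,t^{-1}\},
\end{equation*}
so taking $t_0$ large enough (and allowing $c$ to absorb the resulting constants) this operator norm is strictly less than $\tfrac12$ for all $-t\geq t_0$ and all $\sigma>0$ with $0<\sigma t^2\leq c$. Hence $I-\mathcal{C}_{{\bf G}_{\bf R}-\mathbb{I}}$ is invertible on $L^2(\Sigma_{\bf R})$ by the Neumann series, yielding a unique solution ${\bf R}_-\in\mathbb{I}+L^2(\Sigma_{\bf R})$ of the singular integral equation.

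Second, I would extract the $L^2$ estimate \eqref{e118}. From the invertibility just obtained,
\begin{equation*}
	\|{\bf R}_--\mathbb{I}\|_{L^2(\Sigma_{\bf R})}\leq\bigl\|(I-\mathcal{C}_{{\bf G}_{\bf R}-\mathbb{I}})^{-1}\bigr\|\,\|C_-\|\,\|{\bf G}_{\bf R}-\mathbb{I}\|_{L^2(\Sigma_{\bf R})}\leq 2\|C_-\|\cdot d\max\{\sigma t,t^{-1}\},
\end{equation*}
again by \eqref{e116}, which gives \eqref{e118} after renaming constants. Finally, reconstructing ${\bf R}$ on $\mathbb{C}\setminus\Sigma_{\bf R}$ via the Cauchy integral of ${\bf R}_-({\bf G}_{\bf R}-\mathbb{I})\in L^1(\Sigma_{\bf R})\cap L^2(\Sigma_{\bf R})$ yields precisely the integral representation \eqref{e117}; uniqueness of ${\bf R}$ in the class prescribed by RHP \ref{trafo9} follows by the usual determinant-equals-one argument (since $\det{\bf G}_{\bf R}=1$ along $\Sigma_{\bf R}$ and ${\bf R}\to\mathbb{I}$ at infinity, a Liouville argument gives $\det{\bf R}\equiv 1$, and any two solutions differ by an entire bounded matrix tending to $\mathbb{I}$).

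There is no real obstacle: the entire content of the proof was transferred to Proposition \ref{prop12}. The only point requiring a touch of care is verifying that the Cauchy operator norm $\|C_-\|_{L^2(\Sigma_{\bf R})\to L^2(\Sigma_{\bf R})}$ is controlled independently of $(t,\sigma)$, but this is immediate from the fact that $\Sigma_{\bf R}$ is a fixed finite union of smooth arcs not depending on the parameters.
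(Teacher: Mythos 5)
Your argument is correct and is essentially the same as the paper's: Theorem~\ref{theo5} is stated immediately after Proposition~\ref{prop12} with only the remark ``\eqref{e116} implies that RHP~\ref{trafo9} is asymptotically solvable, cf.~\cite{DZ},'' and you have filled in precisely the standard small-norm argument that citation stands for, including the key observation that $\Sigma_{\bf R}$ is a fixed, $(t,\sigma)$-independent contour so that $\|C_-\|$ is uniform in the parameters.
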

At this point we are prepared to prove Corollary \ref{fimain4}.
\begin{proof}[Proof of Corollary \ref{fimain4}] We choose $c>0$ and $-t\geq t_0$ together with $0<\sigma t^2\leq c$ as in Theorem \ref{theo5}. After that we recall the explicit and invertible sequence of transformations
\begin{equation*}
	{\bf X}(z)\stackrel{\eqref{e102}}{\mapsto}{\bf Y}(z)\stackrel{\eqref{e104}}{\mapsto}{\bf T}(z)\stackrel{\eqref{e106}}{\mapsto}{\bf S}(z)\stackrel{\eqref{e115}}{\mapsto}{\bf R}(z),
\end{equation*}
that converts the differential identities in \eqref{e52} into the following two exact formul\ae,
\begin{equation}\label{e120}
	\frac{\partial}{\partial t}\ln F_{\sigma}(t)=\frac{1}{4}t^2-R_1^{12}(t,\sigma),
\end{equation}
and, using en route that $\tr{\bf R}_1=0$ because $\det{\bf R}(z)\equiv 1$,
\begin{equation}\label{e121}
	\frac{\partial}{\partial\sigma}\ln F_{\sigma}(t)=\frac{1}{16\sigma}-\frac{1}{\sigma}R_1^{21}(t,\sigma)-\frac{t}{2\sigma}R_1^{12}(t,\sigma)-\frac{2}{\sigma}R_2^{12}(t,\sigma).
\end{equation}
Next, by \eqref{e117} with $c>0$,
\begin{equation*}
	{\bf R}_k(t,\sigma)=\frac{\im}{2\pi}\ointclockwise_{|\lambda|=\frac{1}{2}}{\bf R}_-(\lambda;t,\sigma)\big({\bf G}_{\bf R}(\lambda;t,\sigma)-\mathbb{I}\big)\,\lambda^{k-1}\,\d\lambda+\mathcal{O}\left(\e^{-c|t|}\right),\ \ \ \ \ k\in\{1,2\},
\end{equation*}
so in particular by \eqref{e114} and \eqref{e118}, as $t\rightarrow-\infty$ and $\sigma>0$ is such that $-t\sqrt{\sigma}\downarrow 0$,
\begin{align*}
	{\bf R}_1(t,\sigma)=&\,\frac{\sigma}{2}(-t)^{\frac{1}{2}\sigma_3}\begin{bmatrix}\frac{3}{8} & 1\smallskip\\ -\frac{9}{64} & -\frac{3}{8}\end{bmatrix}(-t)^{-\frac{1}{2}\sigma_3}\int_{-\infty}^{\infty}\big(\chi_{[0,\infty)}(y)-\Phi(y)\big)\,\d y\\
	&\hspace{1cm}+(-t)^{\frac{1}{2}\sigma_3}\begin{bmatrix}\frac{9}{128} & -\frac{1}{8}\smallskip\\ -\frac{3}{1024} & -\frac{9}{128}\end{bmatrix}(-t)^{-\frac{1}{2}\sigma_3}t^{-2}+\mathcal{O}\big(t^3\sigma^2\big)+\mathcal{O}(\sigma)+\mathcal{O}\big(\max\big\{(\sigma t)^2,t^{-2}\big\}\big),\\
	{\bf R}_2(t,\sigma)=&\,\mathcal{O}\big(t^3\sigma^2\big)+\mathcal{O}(\sigma)+\mathcal{O}\big(\max\big\{(\sigma t)^2,t^{-2}\big\}\big).
\end{align*}
In turn, by \eqref{e120} and \eqref{e121} after indefinite integration, as $t\rightarrow-\infty$, with a numerical constant $\eta_1\in\mathbb{R}$,
\begin{equation}\label{e122}
	\ln F_{\sigma}(t)=\frac{t^3}{12}-\frac{1}{8}\ln|t|+\eta_1+\mathcal{O}\big(\max\big\{\sigma t^2,t^{-1}\big\}\big),\ \ t\rightarrow-\infty,\ \sigma>0:\ -t\sqrt{\sigma}\downarrow 0.
\end{equation}
On the other hand, by \eqref{e114} and \eqref{e118}, as $t\rightarrow-\infty$ and $\sigma>0$ is such that $0<\delta\leq-t\sqrt{\sigma}\leq c$ is bounded away from zero yet finite,
\begin{align*}
	{\bf R}_1(t,\sigma)=&\,\sigma^{-\frac{1}{4}\sigma_3}{\bf M}_1^{\textnormal{PV}}(-t\sqrt{\sigma})\sigma^{\frac{1}{4}\sigma_3}\sigma+\mathcal{O}\big(\sigma^{\frac{3}{2}}\big)+\mathcal{O}\big(\max\big\{(\sigma t)^2,t^{-2}\big\}\big),\\
	{\bf R}_2(t,\sigma)=&\,\mathcal{O}\big(\sigma^{\frac{3}{2}}\big)+\mathcal{O}\big(\max\big\{(\sigma t)^2,t^{-2}\big\}\big).
\end{align*}
Moreover, from the workings in Lemma \ref{genBessellem}, leading to \eqref{app4b}, as $x\downarrow 0$,
\begin{align}
	{\bf M}_1^{\textnormal{PV}}(x)=\frac{1}{2}x^{\frac{1}{2}\sigma_3}\begin{bmatrix}\frac{3}{8} & 1\smallskip\\ -\frac{9}{64} & -\frac{3}{8}\end{bmatrix}&\,x^{-\frac{1}{2}\sigma_3}\int_{-\infty}^{\infty}\big(\chi_{[0,\infty)}(y)-\Phi(y)\big)\,\d y\nonumber\\
	&\,+x^{\frac{1}{2}\sigma_3}\begin{bmatrix}\frac{9}{128} & -\frac{1}{8}\smallskip\\ -\frac{3}{1024} & -\frac{9}{128}\end{bmatrix}x^{-\frac{1}{2}\sigma_3}x^{-2}+x^{\frac{1}{2}\sigma_3}\mathcal{O}\big(x^2\big)x^{-\frac{1}{2}\sigma_3},\label{e123}
\end{align}
where $\mathcal{O}(x^2)$ is to be understood entrywise. Thus, setting
\begin{equation*}
	{\bf M}_1^{\textnormal{PV}}(x)=:\begin{bmatrix}p(x) & q(x)\\ r(x) & -p(x)\end{bmatrix},\ \ \ \ \ \ p,q,r:(0,\infty)\rightarrow\mathbb{R},
\end{equation*}
we derive from \eqref{e120} and \eqref{e121} by indefinite integration, as $t\rightarrow-\infty$, with a numerical constant $\eta_2\in\mathbb{R}$,
\begin{equation}\label{e124}
	\ln F_{\sigma}(t)=\frac{t^3}{12}-\frac{1}{8}\ln|t|+\int_0^{-t\sqrt{\sigma}}\left(q(x)+\frac{1}{8x}\right)\d x+\eta_2+\mathcal{O}\big(t^{-1}\big),\ \ \sigma>0:\ 0<\delta\leq-t\sqrt{\sigma}\leq c,
\end{equation}
using en route that
\begin{equation}\label{e125}
	q(x)+\frac{1}{8x}=\frac{x}{2}\int_{-\infty}^{\infty}\big(\chi_{[0,\infty)}(y)-\Phi(y)\big)\,\d y+\mathcal{O}\big(x^3\big),\ \ x\downarrow 0,
\end{equation}
is integrable at $x=0$. Consequently, combining \eqref{e122} and \eqref{e124} we arrive at \eqref{fi25}, modulo the integral term, after fixing $-t\geq t_0$ and then letting $\sigma\downarrow 0$, while using \eqref{e125},\eqref{fi13} and the numerical constant
\begin{equation*}
	\lim_{t\rightarrow-\infty}\left[\lim_{\sigma\downarrow 0}F_{\sigma}(t)-\frac{t^3}{12}+\frac{1}{8}\ln|t|\right]=\frac{1}{24}\ln 2+\zeta'(-1),
\end{equation*}
known from \cite[$(9)$]{BBD} and \cite[$(3)$]{DIK}. We are now left to establish that the integral term in \eqref{e124} is indeed of the type as in \eqref{fi25} with the asymptotics \eqref{fi26} in place. To this end, set $u(x):=2p(x)-q^2(x)$ and obtain from Lemma \ref{genBessellem}, part (iii) and \eqref{e123} that
\begin{equation}\label{rej1}
	u(x)=q'(x),\ x>0;\ \ \ \ \ u(x)=\frac{1}{8x^2}+\frac{1}{2}\int_{-\infty}^{\infty}\big(\chi_{[0,\infty)}(y)-\Phi(y)\big)\,\d y+\mathcal{O}\big(x^2\big),\ \ x\downarrow 0.
\end{equation}
Consequently,
\begin{equation*}
	\int_0^y\left(q(x)+\frac{1}{8x}\right)\d x=\int_0^y(y-x)\left(u(x)-\frac{1}{8x^2}\right)\d x,\ \ y>0,
\end{equation*}
and this completes our proof of \eqref{fi25}.
\end{proof}
\begin{appendix}
\section{Useful model functions}\label{appA}
We assemble a few model functions that are useful in our asymptotic analysis. Most of these are very well-known in nonlinear steepest descent literature.
\subsection{The Airy parametrix} Let $w=\textnormal{Ai}(z),z\in\mathbb{C}$ denote the Airy function, cf. \cite[$9.2.2$]{NIST}. Define the unimodular Wronskian matrix
\begin{equation}\label{app1}
	{\bf M}_0(\zeta):=\sqrt{2\pi}\,\e^{-\im\frac{\pi}{4}}\begin{bmatrix}\textnormal{Ai}(\zeta) & \e^{\im\frac{\pi}{3}}\textnormal{Ai}(\e^{-\im\frac{2\pi}{3}}\zeta)\smallskip\\
	\textnormal{Ai}'(\zeta) & \e^{-\im\frac{\pi}{3}}\textnormal{Ai}'(\e^{-\im\frac{2\pi}{3}}\zeta) \end{bmatrix},\ \ \ \zeta\in\mathbb{C},
\end{equation}
which is an entire function, and in turn the Airy parametrix ${\bf M}^{\textnormal{Ai}}(\zeta)={\bf M}^{\textnormal{Ai}}(\zeta;\zeta_0)$ for any fixed $\zeta_0\in\mathbb{R}$ and $\alpha\in(0,\frac{\pi}{2})$:
\begin{equation}\label{app2}
	{\bf M}^{\textnormal{Ai}}(\zeta;\zeta_0,\alpha):={\bf M}_0(\zeta)\begin{cases}\mathbb{I},&\textnormal{arg}\,(\zeta-\zeta_0)\in(0,\pi-\alpha)\\ \bigl[\begin{smallmatrix} 1 & 0\\ -1 & 1\end{smallmatrix}\bigr],&\textnormal{arg}\,(\zeta-\zeta_0)\in(\pi-\alpha,\pi)\\ \bigl[\begin{smallmatrix}1&-1\\ 0 & 1\end{smallmatrix}\bigr],&\textnormal{arg}\,(\zeta-\zeta_0)\in(-\pi+\alpha,0)\\
	\bigl[\begin{smallmatrix}1&-1\\ 0 & 1\end{smallmatrix}\bigr]\bigl[\begin{smallmatrix} 1 & 0\\ 1 & 1\end{smallmatrix}\bigr],&\textnormal{arg}\,(\zeta-\zeta_0)\in(-\pi,-\pi+\alpha)\end{cases}.
\end{equation}
The relevant analytic and asymptotic properties of ${\bf M}^{\textnormal{Ai}}$ are summarized below.
\begin{problem}\label{AiryRHP} The Airy parametrix ${\bf M}^{\textnormal{Ai}}(\zeta)={\bf M}^{\textnormal{Ai}}(\zeta;\zeta_0,\alpha)$ in \eqref{app2} has the following properties.
\begin{enumerate}
	\item[(1)] ${\bf M}^{\textnormal{Ai}}(\zeta)$ is analytic for $\zeta\in\mathbb{C}\setminus\big(\bigcup_{j=1}^4\Gamma_j\cup\{\zeta_0\}\big)$ with
	\begin{equation*}
		\Gamma_1:=(\zeta_0,\infty),\ \ \ \ \Gamma_3:=(-\infty,\zeta_0),\ \ \ \ \Gamma_2:=\e^{-\im\alpha}(-\infty,\zeta_0),\ \ \ \ \Gamma_4:=\e^{\im\alpha}(-\infty,\zeta_0),
	\end{equation*}
	and all four rays are oriented ``from left to right'' as shown in Figure \ref{figureAiry}.
	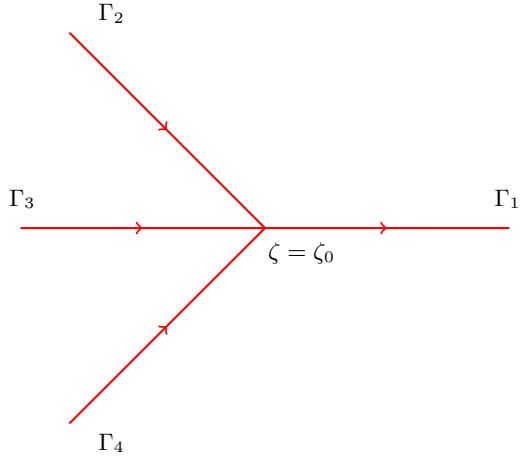
\begin{figure}[tbh]
\begin{tikzpicture}[xscale=0.65,yscale=0.65]
\draw [thick, color=red, decoration={markings, mark=at position 0.25 with {\arrow{>}}}, decoration={markings, mark=at position 0.75 with {\arrow{>}}}, postaction={decorate}] (-5,0) -- (5,0);
\node [below] at (0.75,-0.1) {{\small $\zeta=\zeta_0$}};
\node [right] at (4.5,0.6) {{\small $\Gamma_1$}};
\node [left] at (-4.5,0.6) {{\small $\Gamma_3$}};
\draw [thick, color=red, decoration={markings, mark=at position 0.5 with {\arrow{>}}}, postaction={decorate}] (-4,4) -- (0,0);
\draw [thick, color=red, decoration={markings, mark=at position 0.5 with {\arrow{>}}}, postaction={decorate}] (-4,-4) -- (0,0);
\node [right] at (-3.6,4.4) {{\small $\Gamma_2$}};
\node [right] at (-3.6,-4.4) {{\small $\Gamma_4$}};
\end{tikzpicture}
\caption{The oriented jump contours for the Airy parametrix ${\bf M}^{\textnormal{Ai}}(\zeta)$ in the complex $\zeta$-plane.}
\label{figureAiry}
\end{figure}
	\item[(2)] The limiting values from from either side of the jump contours $\Gamma_j$ are related via the jump conditions
	\begin{equation*}
		\begin{cases}  
		{\bf M}_+^{\textnormal{Ai}}(\zeta)={\bf M}_-^{\textnormal{Ai}}(\zeta)\bigl[\begin{smallmatrix}1&1\\ 0 & 1\end{smallmatrix}\bigr],& \zeta\in\Gamma_1;\smallskip\\
		{\bf M}_+^{\textnormal{Ai}}(\zeta)={\bf M}_-^{\textnormal{Ai}}(\zeta)\bigl[\begin{smallmatrix}1 & 0\\ 1 & 1\end{smallmatrix}\bigr],& \zeta\in\Gamma_2\cup\Gamma_4;
		\end{cases}\ \ \ \ \ \ \ {\bf M}_+^{\textnormal{Ai}}(\zeta)={\bf M}_-^{\textnormal{Ai}}(\zeta)\bigl[\begin{smallmatrix}0 & 1\\ -1 & 0\end{smallmatrix}\bigr],\ \ \zeta\in\Gamma_3.
	\end{equation*}
	\item[(3)] ${\bf M}^{\textnormal{Ai}}(\zeta)$ is bounded in a neighbourhood of $\zeta=\zeta_0$.
	\item[(4)] As $\zeta\rightarrow\infty$, valid in a full neighborhood of infinity off the jump contours,
	\begin{align*}
		{\bf M}^{\textnormal{Ai}}(\zeta)\sim&\,\left\{\mathbb{I}+\sum_{m=0}^{\infty}\begin{bmatrix}0 & 0\\ b_{2m+1} & 0\end{bmatrix}\zeta^{-3m-1}+\sum_{m=0}^{\infty}\begin{bmatrix}0 & a_{2m+1}\\ 0 & 0\end{bmatrix}\zeta^{-3m-2}+\sum_{m=1}^{\infty}\begin{bmatrix}a_{2m}& 0\\ 0 & b_{2m}\end{bmatrix}\zeta^{-3m}\right\}\\
		&\hspace{1cm}\times\zeta^{-\frac{1}{4}\sigma_3}\frac{1}{\sqrt{2}}\begin{bmatrix}1 & 1\\ -1 & 1\end{bmatrix}\e^{-\im\frac{\pi}{4}\sigma_3}\e^{-\frac{2}{3}\zeta^{\frac{3}{2}}\sigma_3},
	\end{align*}
	with $\zeta^{\alpha}:\mathbb{C}\setminus(-\infty,0]\rightarrow\mathbb{C}$ denoting the principal branch such that $\zeta^{\alpha}>0$ for $\zeta>0$ and 
	\begin{equation*}
		\sigma_3:=\bigl[\begin{smallmatrix}1&0\\ 0 & -1\end{smallmatrix}\bigr]\ \ \ \textnormal{as well as}\ \  a_m:=\left(\frac{3}{2}\right)^mu_m,\ \  b_m:=\left(\frac{3}{2}\right)^mv_m
	\end{equation*}
	with the coefficients $\{u_m,v_m\}_{m=1}^{\infty}$ from \cite[$9.7.2$]{NIST}. In particular, $b_1=-\frac{7}{48}$ and $a_1=\frac{5}{48}$.
\end{enumerate}
\end{problem}
\subsection{The Bessel parametrix} Let $w=I_0(z),z\in\mathbb{C}$ and $w=K_0(z),z\in\mathbb{C}\setminus(-\infty,0]$ denote the modified Bessel functions of order zero, cf. \cite[$10.25.2,10.25.3$]{NIST}. Define the unimodular Wronskian matrix
\begin{equation}\label{app3}
	{\bf N}_0(\zeta):=\e^{-\im\frac{\pi}{4}\sigma_3}\pi^{\frac{1}{2}\sigma_3}\begin{bmatrix}I_0(\zeta^{\frac{1}{2}}) & \frac{\im}{\pi}K_0(\zeta^{\frac{1}{2}})\smallskip\\
	\im\pi\zeta^{\frac{1}{2}}I_0'(\zeta^{\frac{1}{2}}) & -\zeta^{\frac{1}{2}}K_0'(\zeta^{\frac{1}{2}})\end{bmatrix},\ \ \ \zeta\in\mathbb{C}\setminus(-\infty,0],
\end{equation}
with principal branches throughout, in particular for $\zeta^{\frac{1}{2}}:\mathbb{C}\setminus(-\infty,0]\rightarrow\mathbb{C}$ such that $\zeta^{\frac{1}{2}}=\sqrt{\zeta}>0$ when $\zeta>0$, and in turn the Bessel parametrix ${\bf M}^{\textnormal{Be}}(\zeta)$:
\begin{equation}\label{app4}
	{\bf M}^{\textnormal{Be}}(\zeta):=\begin{bmatrix}1 & 0\\ -\frac{3}{8} & 1\end{bmatrix}{\bf N}_0(\zeta)\begin{cases}\mathbb{I},&\textnormal{arg}\,\zeta\in(-\frac{3\pi}{4},\frac{3\pi}{4})\\
	\bigl[\begin{smallmatrix}1 & 0\\ -1& 1\end{smallmatrix}\bigr],&\textnormal{arg}\,\zeta\in(\frac{3\pi}{4},\pi)\\
	\bigl[\begin{smallmatrix}1 & 0\\ 1& 1\end{smallmatrix}\bigr],&\textnormal{arg}\,\zeta\in(-\pi,-\frac{3\pi}{4})
	\end{cases}.
\end{equation}
The relevant analytic and asymptotic properties of ${\bf M}^{\textnormal{Be}}$ are summarized below.
\begin{problem}\label{BesselRHP} The Bessel parametrix ${\bf M}^{\textnormal{Be}}(\zeta)$ in \eqref{app4} has the following properties.
\begin{enumerate}
	\item[(1)] ${\bf M}^{\textnormal{Be}}(\zeta)$ is analytic for $\zeta\in\mathbb{C}\setminus\big(\bigcup_{j=1}^3\Sigma_j\cup\{0\}\big)$ with
	\begin{equation*}
		\Sigma_1:=(-\infty,0),\ \ \ \ \ \ \Sigma_2:=\e^{-\im\frac{\pi}{4}}(-\infty,0),\ \ \ \ \ \ \Sigma_3:=\e^{\im\frac{\pi}{4}}(-\infty,0),
	\end{equation*}
	and all three rays are oriented ``from left to right'' as shown in Figure \ref{figureBessel}.
	\begin{figure}[tbh]
\begin{tikzpicture}[xscale=0.65,yscale=0.65]
\draw [thick, color=red, decoration={markings, mark=at position 0.25 with {\arrow{>}}}, decoration={markings, mark=at position 0.75 with {\arrow{>}}}, postaction={decorate}] (-5,0) -- (0,0);
\node [below] at (0.75,-0.1) {{\small $\zeta=0$}};
\node [left] at (-4.5,0.6) {{\small $\Sigma_1$}};
\draw [thick, color=red, decoration={markings, mark=at position 0.5 with {\arrow{>}}}, postaction={decorate}] (-4,4) -- (0,0);
\draw [thick, color=red, decoration={markings, mark=at position 0.5 with {\arrow{>}}}, postaction={decorate}] (-4,-4) -- (0,0);
\node [right] at (-3.6,4.4) {{\small $\Sigma_2$}};
\node [right] at (-3.6,-4.4) {{\small $\Sigma_3$}};
\end{tikzpicture}
\caption{The oriented jump contours for the Bessel parametrix ${\bf M}^{\textnormal{Be}}(\zeta)$ in the complex $\zeta$-plane.}
\label{figureBessel}
\end{figure}
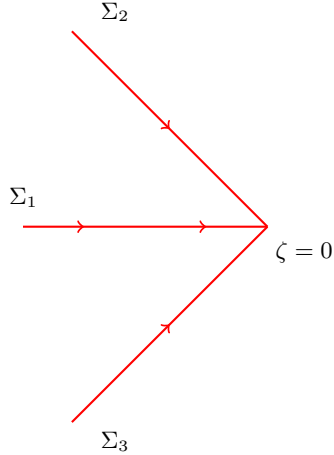
	\item[(2)] The limiting values from either side of the jump contours $\Sigma_j$ are related via the jump conditions
	\begin{equation*}
		{\bf M}_+^{\textnormal{Be}}(\zeta)={\bf M}_-^{\textnormal{Be}}(\zeta)\bigl[\begin{smallmatrix}0 & 1\\ -1 & 0\end{smallmatrix}\bigr],\ \ \ \zeta\in\Sigma_1;\ \ \ \ \ \ \ \ \ {\bf M}_+^{\textnormal{Be}}(\zeta)={\bf M}_-^{\textnormal{Be}}(\zeta)\bigl[\begin{smallmatrix}1 & 0\\ 1 & 1\end{smallmatrix}\bigr],\ \ \ \zeta\in\Sigma_2\cup\Sigma_3.
	\end{equation*}
	\item[(3)] As $\zeta\rightarrow 0$,
	\begin{equation*}
		{\bf M}^{\textnormal{Be}}(\zeta)=\begin{bmatrix}1 & 0\\ -\frac{3}{8}&1\end{bmatrix}\widehat{{\bf M}^{\textnormal{Be}}}(\zeta)\begin{bmatrix} 1 & \frac{1}{2\pi\im}\ln\zeta\\ 0 & 1\end{bmatrix}\begin{cases}\mathbb{I},&\textnormal{arg}\,\zeta\in(-\frac{3\pi}{4},\frac{3\pi}{4})\\
	\bigl[\begin{smallmatrix}1 & 0\\ -1& 1\end{smallmatrix}\bigr],&\textnormal{arg}\,\zeta\in(\frac{3\pi}{4},\pi)\\
	\bigl[\begin{smallmatrix}1 & 0\\ 1& 1\end{smallmatrix}\bigr],&\textnormal{arg}\,\zeta\in(-\pi,-\frac{3\pi}{4})
	\end{cases}
	\end{equation*}
	with the principal branch for the logarithm $\ln:\mathbb{C}\setminus(-\infty,0]\rightarrow\mathbb{C}$ and where $\zeta\mapsto\widehat{{\bf M}^{\textnormal{Be}}}(\zeta)$ is analytic at $\zeta=0$. In detail, for $|\zeta|<r$,
	\begin{equation*}
		\widehat{{\bf M}^{\textnormal{Be}}}(\zeta)=\e^{-\im\frac{\pi}{4}\sigma_3}\pi^{\frac{1}{2}\sigma_3}\begin{bmatrix}I_0(\zeta^{\frac{1}{2}}) & \frac{\im}{\pi}\Big(I_0(\zeta^{\frac{1}{2}})\ln 2+\sum_{k=0}^{\infty}\psi(k+1)\frac{(\frac{1}{4}\zeta)^k}{(k!)^2}\Big)\smallskip\\ \im\pi\zeta^{\frac{1}{2}}I_0'(\zeta^{\frac{1}{2}}) & 
		-\Big(\zeta^{\frac{1}{2}}I_0'(\zeta^{\frac{1}{2}})\ln 2-I_0(\zeta^{\frac{1}{2}})+\sum_{k=1}^{\infty}2k\,\psi(k+1)\frac{(\frac{1}{4}\zeta)^k}{(k!)^2}\Big)\end{bmatrix},
	\end{equation*}
	with
	\begin{equation*}
		I_0(\zeta^{\frac{1}{2}})=\sum_{k=0}^{\infty}\frac{(\frac{1}{4}\zeta)^k}{(k!)^2};\ \ \ \ \zeta^{\frac{1}{2}}I_0'(\zeta^{\frac{1}{2}})=\sum_{k=1}^{\infty}2k\frac{(\frac{1}{4}\zeta)^k}{(k!)^2},\ \ \zeta\in\mathbb{C};\ \ \ \psi(\zeta):=\frac{\Gamma'(\zeta)}{\Gamma(\zeta)},\ \ \zeta\in\mathbb{C}\setminus\{0,-1,-2,\ldots\}.
	\end{equation*}
	\item[(4)] As $\zeta\rightarrow\infty$, valid in a full neighbhorhood of infinity off the jump contours,
	\begin{align*}
		{\bf M}^{\textnormal{Be}}(\zeta)\sim&\,\left\{\mathbb{I}+\sum_{m=1}^{\infty}\begin{bmatrix}a_{2m}(0) & a_{2m-1}(0)\smallskip\\
		b_{2m+1}(0)-\frac{3}{8}a_{2m}(0) & b_{2m}(0)-\frac{3}{8}a_{2m-1}(0)\end{bmatrix}\zeta^{-m}\right\}\\
		&\hspace{1cm}\times\zeta^{-\frac{1}{4}\sigma_3}\frac{1}{\sqrt{2}}\begin{bmatrix}1 & 1\\ -1 & 1\end{bmatrix}\e^{-\im\frac{\pi}{4}\sigma_3}\e^{\zeta^{\frac{1}{2}}\sigma_3},
	\end{align*}
	using the principal branch for $\zeta^{\alpha}:\mathbb{C}\setminus(-\infty,0]\rightarrow\mathbb{C}$ such that $\zeta^{\alpha}>0$ for $\zeta>0$ and where $\{a_m(\nu),b_m(\nu)\}_{m=1}^{\infty}$ denote the coefficients in \cite[$10.17.1,10.17.8$]{NIST}. In particular, $a_1(0)=-\frac{1}{8},a_2(0)=\frac{9}{128},b_2(0)=-\frac{15}{128}$ and $b_3(0)=\frac{105}{1024}$.
\end{enumerate}
\end{problem}
\subsection{A generalized Bessel parametrix} The following generalization of RHP \ref{BesselRHP} was first studied in \cite[Section $7.1$]{CCR}, albeit in slightly disguised and broader fashion. Let $w=\Phi(z),z\in\mathbb{C}$ be as in \eqref{e41}.
\begin{problem}\label{genBesselRHP} For any $x>0$, determine ${\bf M}^{\textnormal{PV}}(\zeta)={\bf M}^{\textnormal{PV}}(\zeta;x)\in\mathbb{C}^{2\times 2}$\footnote{The superscript PV is used to reiterate Remark \ref{rejrem}: The solution to RHP \ref{genBesselRHP} relates to an integro-differential generalization of the Painlev\'e-V, PV, equation.} such that
\begin{enumerate}
	\item[(1)] ${\bf M}^{\textnormal{PV}}(\zeta)$ is analytic for $\zeta\in\mathbb{C}\setminus\big(\bigcup_{j=1}^4\Sigma_j\cup\{0\}\big)$ with
	\begin{equation*}
		\Sigma_1:=(-\infty,0),\ \ \ \ \ \ \Sigma_2:=\e^{-\im\frac{\pi}{4}}(-\infty,0),\ \ \ \ \ \ \Sigma_3:=\e^{\im\frac{\pi}{4}}(-\infty,0),\ \ \ \ \ \ \ \Sigma_4:=(0,\infty),
	\end{equation*}
	and all four rays are oriented ``from left to right'' as shown in Figure \ref{figuregenBessel}. ${\bf M}^{\textnormal{PV}}(\zeta)$ attains continuous limiting values ${\bf M}_{\pm}^{\textnormal{PV}}(\zeta)$ on $\Sigma_j\ni\zeta$.
	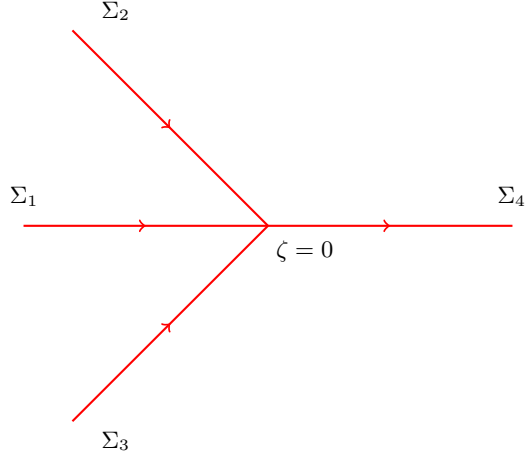
\begin{figure}[tbh]
\begin{tikzpicture}[xscale=0.65,yscale=0.65]
\draw [thick, color=red, decoration={markings, mark=at position 0.25 with {\arrow{>}}}, decoration={markings, mark=at position 0.75 with {\arrow{>}}}, postaction={decorate}] (-5,0) -- (5,0);
\node [below] at (0.75,-0.1) {{\small $\zeta=0$}};
\node [left] at (-4.5,0.6) {{\small $\Sigma_1$}};
\node [right] at (4.5,0.6) {{\small $\Sigma_4$}};
\draw [thick, color=red, decoration={markings, mark=at position 0.5 with {\arrow{>}}}, postaction={decorate}] (-4,4) -- (0,0);
\draw [thick, color=red, decoration={markings, mark=at position 0.5 with {\arrow{>}}}, postaction={decorate}] (-4,-4) -- (0,0);
\node [right] at (-3.6,4.4) {{\small $\Sigma_2$}};
\node [right] at (-3.6,-4.4) {{\small $\Sigma_3$}};
\end{tikzpicture}
\caption{The oriented jump contours for the generalized Bessel parametrix ${\bf M}^{\textnormal{PV}}(\zeta)$ in the complex $\zeta$-plane.}
\label{figuregenBessel}
\end{figure}
	\item[(2)] The limiting values from either side of the jump contours $\Sigma_j$ are related via the jump conditions
	\begin{equation*}
		{\bf M}_+^{\textnormal{PV}}(\zeta)={\bf M}_-^{\textnormal{PV}}(\zeta)\Bigl[\begin{smallmatrix}\Phi(\zeta) & 1-\Phi(\zeta)\smallskip\\ -1-\Phi(\zeta) & \Phi(\zeta)\end{smallmatrix}\Bigr],\ \ \ \zeta\in\Sigma_1;\ \ \ \ \ \ {\bf M}_+^{\textnormal{PV}}(\zeta)={\bf M}_-^{\textnormal{PV}}(\zeta)\bigl[\begin{smallmatrix}1 & 0\\ 1 & 1\end{smallmatrix}\bigr],\ \ \ \zeta\in\Sigma_2\cup\Sigma_3.
	\end{equation*}
	\begin{equation*}
		{\bf M}_+^{\textnormal{PV}}(\zeta)={\bf M}_-^{\textnormal{PV}}(\zeta)\Bigl[\begin{smallmatrix}1& 1-\Phi(\zeta)\smallskip\\
		0 & 1\end{smallmatrix}\Bigr],\ \ \ \zeta\in\Sigma_4.
	\end{equation*}
	\item[(3)] ${\bf M}^{\textnormal{PV}}(\zeta)$ is bounded in a neighbourhood of $\zeta=0$.
	\item[(4)] As $\zeta\rightarrow\infty$, valid in a full neighborhood of infinity off the jump contours,
	\begin{equation}\label{app4a}
		{\bf M}^{\textnormal{PV}}(\zeta)=\Big\{\mathbb{I}+{\bf M}_1^{\textnormal{PV}}\zeta^{-1}+\mathcal{O}\big(\zeta^{-2}\big)\Big\}\zeta^{-\frac{1}{4}\sigma_3}\frac{1}{\sqrt{2}}\begin{bmatrix}1 & 1\\ -1 & 1\end{bmatrix}\e^{-\im\frac{\pi}{4}\sigma_3}\e^{x\zeta^{\frac{1}{2}}\sigma_3},
	\end{equation}
	with $\zeta^{\alpha}:\mathbb{C}\setminus(-\infty,0]\rightarrow\mathbb{C}$ denoting the principal branch such that $\zeta^{\alpha}>0$ for $\zeta>0$ and where the scalar entries of ${\bf M}_1^{\textnormal{PV}}={\bf M}_1^{\textnormal{PV}}(x)$ are $\zeta$-independent.
\end{enumerate}
\end{problem}
The unique solvability of RHP \ref{genBesselRHP}, for any $x>0$, can be established as in \cite[Appendix A]{CCR}, after collapsing the rays $\Sigma_2$ and $\Sigma_3$ with $\Sigma_1$. We therefore take the existence of ${\bf M}^{\textnormal{PV}}(\zeta)={\bf M}^{\textnormal{PV}}(\zeta;x)$ for granted and concern ourselves instead with some of its properties:
\begin{lem}\label{genBessellem} Let ${\bf M}^{\textnormal{PV}}(\zeta;x)$ denote the unique solution of RHP \ref{genBesselRHP} defined for $\zeta\in\mathbb{C}\setminus(\bigcup_{j=1}^4\Sigma_j\cup\{0\})$ and $x>0$.
\begin{enumerate}
	\item[(i)] For any $x>0$, there exists $\epsilon>0$ so that ${\bf M}^{\textnormal{PV}}(\zeta;z)$ exists for all $z\in\mathbb{D}_{\epsilon}(x)=\{z\in\mathbb{C}:\,|z-x|<\epsilon\}$ and solves the same RHP \ref{genBesselRHP}. In particular, condition $(3)$ in the same problem holds uniformly for the same $z\in\mathbb{D}_{\epsilon}(x)$ and $\mathbb{D}_{\epsilon}(x)\ni z\mapsto {\bf M}_1^{\textnormal{PV}}(z)$ is analytic.
	\item[(ii)] As $x\downarrow 0$,
	\begin{equation}\label{app4b}
		{\bf M}^{\textnormal{PV}}(\zeta;x)=x^{\frac{1}{2}\sigma_3}\left\{\mathbb{I}+\frac{1}{2\zeta}\begin{bmatrix}\frac{3}{8} & 1\smallskip\\ -\frac{9}{64} & -\frac{3}{8}\end{bmatrix}\int_{-\infty}^{\infty}\big(\chi_{[0,\infty)}(y)-\Phi(y)\big)\,\d y+\mathcal{O}\left(\frac{x^2}{|\zeta|}\right)\right\}{\bf M}^{\textnormal{Be}}(\zeta x^2),
	\end{equation}
	uniformly for $|\zeta x^2|>\frac{1}{4}$ and $\zeta\notin\Sigma_1\cup\Sigma_2\cup\Sigma_3\cup\Sigma_4$, compare Figure \ref{figureBessel}.
	\item[(iii)] Setting
	\begin{equation}\label{app4bb}
		{\bf M}_1^{\textnormal{PV}}(x)=:\begin{bmatrix}p(x) & q(x)\\ r(x) & -p(x)\end{bmatrix},\ \ \ \ \ p,q,r:\mathbb{R}_+\rightarrow\mathbb{R},
	\end{equation}
	we have that $q'(x)=2p(x)-q^2(x),x>0$ and 
	\begin{equation}\label{app4c}
		\Psi(\zeta;x):=\begin{bmatrix}1 & 0\\ q(x) & 1\end{bmatrix}{\bf M}^{\textnormal{PV}}(\zeta;x),\ \ \zeta\in\mathbb{C}\setminus\big(\bigcup_{j=1}^4\Sigma_j\cup\{0\}\big),\ \ x>0,
	\end{equation}
	satisfies
	\begin{equation}\label{app4d}
		\frac{\partial}{\partial x}\Psi(\zeta;x)=\begin{bmatrix}0 & -1\\ 4p(x)-2q^2(x)-\zeta & 0\end{bmatrix}\Psi(\zeta;x).
	\end{equation}
\end{enumerate}
\end{lem}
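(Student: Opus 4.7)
The three assertions require rather different arguments and I would address them in order. Part (i) is a standard analytic perturbation of the Riemann--Hilbert problem. Observe that the jump matrices in RHP \ref{genBesselRHP} do not depend on $x$ at all; the parameter $x$ enters only through the asymptotic normalization $\e^{x\zeta^{1/2}\sigma_3}$ at infinity. After normalizing out the model factor $\zeta^{-\sigma_3/4}\frac{1}{\sqrt{2}}\bigl[\begin{smallmatrix}1&1\\-1&1\end{smallmatrix}\bigr]\e^{-\im\frac{\pi}{4}\sigma_3}\e^{x\zeta^{1/2}\sigma_3}$ on the right, the resulting problem is equivalent to a singular integral equation $(I - \mathcal{C}_z)\rho = g_z$ on the jump contour whose Cauchy operator $\mathcal{C}_z$ and inhomogeneity $g_z$ depend holomorphically on $z\in\mathbb{C}$. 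Unique solvability at $z = x$ yields invertibility of $I - \mathcal{C}_x$, and analytic Fredholm continuation extends invertibility to a complex disk $\mathbb{D}_\epsilon(x)$ with density $\rho$ holomorphic in $z$. Recovering ${\bf M}^{\textnormal{PV}}(\zeta;z)$ via the Cauchy integral of $\rho$ then yields a solution of RHP \ref{genBesselRHP} with all stated properties, holomorphic in $z$, and in particular analyticity of $z\mapsto{\bf M}_1^{\textnormal{PV}}(z)$.

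For part (ii), the plan is a Deift--Zhou small-norm comparison to the Bessel parametrix. Rescale the spectral variable via $\zeta\mapsto\hat\zeta/x^2$ and set $\hat{\bf M}(\hat\zeta;x) := x^{-\frac{1}{2}\sigma_3}{\bf M}^{\textnormal{PV}}(\hat\zeta/x^2;x)$. A direct check from \eqref{app4a} shows that $\hat{\bf M}$ has the Bessel-type normalization $(\mathbb{I} + \mathcal{O}(\hat\zeta^{-1}))\hat\zeta^{-\sigma_3/4}\frac{1}{\sqrt{2}}\bigl[\begin{smallmatrix}1&1\\-1&1\end{smallmatrix}\bigr]\e^{-\im\frac{\pi}{4}\sigma_3}\e^{\hat\zeta^{1/2}\sigma_3}$ at infinity, and jump matrices identical to those of ${\bf M}^{\textnormal{Be}}(\hat\zeta)$ except with $\chi_{[0,\infty)}(\hat\zeta)$ replaced by $\Phi(\hat\zeta/x^2)$ (together with an additional trivial jump across the positive axis which degenerates to the Bessel jump in the limit). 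The ratio ${\bf E}(\hat\zeta;x):=\hat{\bf M}(\hat\zeta;x)\,({\bf M}^{\textnormal{Be}}(\hat\zeta))^{-1}$ therefore solves a RHP whose jump differs from the identity by a factor controlled by $|\Phi(\hat\zeta/x^2) - \chi_{[0,\infty)}(\hat\zeta)|$, and this is Gaussian small on $|\hat\zeta|$ bounded below by \eqref{e42} but of order one in a boundary layer of size $x^2$ around $\hat\zeta = 0$. Small-norm theory then yields
\begin{equation*}
	{\bf E}(\hat\zeta;x) = \mathbb{I} + \frac{1}{2\pi\im}\int_{\Sigma}{\bf E}_-(\lambda;x)\big({\bf G}_{\bf E}(\lambda;x)-\mathbb{I}\big)\frac{\d\lambda}{\lambda-\hat\zeta}.
\end{equation*}
Rescaling $\lambda = x^2 y$ in the integral, extracting the local expansion of ${\bf M}^{\textnormal{Be}}$ at $\hat\zeta = 0$ from RHP \ref{BesselRHP}(3) to compute the conjugating factors, and reading off the leading $1/\hat\zeta$ residue at infinity produces the claimed matrix $\frac{x^2}{2\hat\zeta}\bigl[\begin{smallmatrix}3/8&1\\-9/64&-3/8\end{smallmatrix}\bigr]\int_{-\infty}^\infty(\chi_{[0,\infty)}(y)-\Phi(y))\,\d y$, with remainder $\mathcal{O}(x^2/|\hat\zeta|)$ arising from the higher-order $y$-moments and from the subleading Bessel data. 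Undoing the rescaling $\zeta = \hat\zeta/x^2$ and multiplying through by $x^{\sigma_3/2}$ delivers \eqref{app4b}. The boundary-layer analysis that produces the correct constant matrix is the main delicate step of the whole lemma.

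For part (iii) I would invoke the classical Lax-pair mechanism. Since the jump matrices in RHP \ref{genBesselRHP} are independent of $x$, the matrix ${\bf U}(\zeta;x) := \partial_x {\bf M}^{\textnormal{PV}}(\zeta;x)\,({\bf M}^{\textnormal{PV}}(\zeta;x))^{-1}$ has no jumps across $\bigcup_j\Sigma_j$. Boundedness of ${\bf M}^{\textnormal{PV}}$ near $\zeta = 0$ together with $\det{\bf M}^{\textnormal{PV}}\equiv 1$ implies that ${\bf U}$ is bounded near the origin and hence entire. Expanding at infinity via \eqref{app4a}, a direct computation using the identity $A\sigma_3 A^{-1} = \bigl[\begin{smallmatrix}0&-1\\-1&0\end{smallmatrix}\bigr]$ for $A := \frac{1}{\sqrt{2}}\bigl[\begin{smallmatrix}1&1\\-1&1\end{smallmatrix}\bigr]\e^{-\im\frac{\pi}{4}\sigma_3}$ yields, with ${\bf M}_1^{\textnormal{PV}}(x) = \bigl[\begin{smallmatrix}p(x) & q(x)\\ r(x) & -p(x)\end{smallmatrix}\bigr]$,
\begin{equation*}
	{\bf U}(\zeta;x) = \begin{bmatrix}-q(x) & -1\\ 2p(x)-\zeta & q(x)\end{bmatrix} + \mathcal{O}\big(\zeta^{-1}\big),\ \ \ \zeta\to\infty.
\end{equation*}
Liouville's theorem then forces ${\bf U}$ to equal this degree-one polynomial exactly. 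Finally, from \eqref{app4c}, the gauge identity
\begin{equation*}
	\partial_x\Psi\cdot\Psi^{-1} = \bigl[\begin{smallmatrix}1&0\\q&1\end{smallmatrix}\bigr]{\bf U}\bigl[\begin{smallmatrix}1&0\\-q&1\end{smallmatrix}\bigr] + \bigl[\begin{smallmatrix}0&0\\q'&0\end{smallmatrix}\bigr]
\end{equation*}
reduces after matrix multiplication to $\bigl[\begin{smallmatrix}0 & -1\\ q' + 2p - q^2 - \zeta & 0\end{smallmatrix}\bigr]$, and matching this with the claimed form in \eqref{app4d} simultaneously forces the relation $q' = 2p - q^2$ and establishes the Lax equation \eqref{app4d}.
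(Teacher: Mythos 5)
Parts (i) and (iii) are essentially the paper's arguments: (i) is the analytic Fredholm continuation in the parameter, which indeed depends only on the holomorphy of the normalization factor $\e^{x\zeta^{1/2}\sigma_3}$ since the jumps do not involve $x$; (iii) is the classical Lax-pair mechanism, entireness of $\partial_x{\bf M}^{\textnormal{PV}}\cdot({\bf M}^{\textnormal{PV}})^{-1}$ plus Liouville, followed by the gauge transformation \eqref{app4c}.

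Part (ii), however, has a genuine gap. You compare $\hat{\bf M}(\hat\zeta;x)=x^{-\frac{1}{2}\sigma_3}{\bf M}^{\textnormal{PV}}(\hat\zeta/x^2;x)$ directly to ${\bf M}^{\textnormal{Be}}(\hat\zeta)$ and then invoke small-norm theory for the ratio ${\bf E}=\hat{\bf M}({\bf M}^{\textnormal{Be}})^{-1}$. But the jump of ${\bf E}$ near $\hat\zeta=0$ is \emph{not} small in $L^{\infty}$: on $\Sigma_4=(0,\infty)$ the conjugated deviation equals $\big(1-\Phi(\hat\zeta/x^2)\big)$ times an $O(1)$ Bessel conjugation factor, and $1-\Phi(0^+)=\frac{1}{2}$, so the deviation is uniformly bounded below on a boundary layer of width $\sim x^2$ — you note this yourself, then write ``small-norm theory then yields'' anyway. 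Even if the Neumann series for $(I-\mathcal{C}_-({\bf G}_{\bf E}-\mathbb{I}))^{-1}$ happened to converge, the best available bound is $\|{\bf E}_--\mathbb{I}\|_{L^2}\lesssim\|{\bf G}_{\bf E}-\mathbb{I}\|_{L^2}=O(x)$, and the resulting correction to the $1/\hat\zeta$-coefficient of ${\bf E}$, of size $\|{\bf E}_--\mathbb{I}\|_{L^2}\cdot\|{\bf G}_{\bf E}-\mathbb{I}\|_{L^2}=O(x^2)$, is of the same order as the leading term you want to extract. So the formula is not an approximation you can trust. The paper avoids this by constructing an explicit local parametrix ${\bf P}^{(0)}(\zeta;x)$ inside a fixed disk (equation \eqref{app5}), built from the Cauchy transform
\begin{equation*}
	m(\zeta;x)=\frac{1}{2\pi\im}\int_{-\infty}^{\infty}\frac{\chi_{[0,\infty)}(\lambda x^{-2})-\Phi(\lambda x^{-2})}{\lambda-\zeta}\,\d\lambda,
\end{equation*}
which cancels the $\Phi$-versus-$\chi_{[0,\infty)}$ mismatch of the jumps exactly inside the disk. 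The ratio ${\bf J}$ of \eqref{app5a} then has a jump supported on the circle boundary plus exponentially-suppressed rays, and there the deviation is genuinely $O(x^2)$ in both $L^{\infty}$ and $L^2$ (the estimates \eqref{app7}). Small-norm theory applies cleanly, the $\rho$-correction is $O(x^4)$ hence truly subleading, and the coefficient in \eqref{app4b} falls out of a single residue computation combined with the $x\downarrow 0$ expansion of $m$. Without a local parametrix absorbing the boundary layer, your argument does not produce the stated expansion.
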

\begin{proof} Property (i) is a consequence of the analytic Fredholm theorem used in the solvability proof of RHP \ref{genBesselRHP}, see for instance \cite{Z}. For property (ii), we argue very much as in \cite[$7.2$]{CCR}, so we first fix $\zeta\in\mathbb{D}_1(0)\setminus(\bigcup_{j=1}^4\Sigma_j\cup\{0\})$ and define
\begin{equation}\label{app5}
	{\bf P}^{(0)}(\zeta;x):=\begin{bmatrix}1 & 0\\ -\frac{3}{8}&1\end{bmatrix}{\bf N}_0(\zeta)\begin{bmatrix}1 & m(\zeta;x)\\ 0 & 1\end{bmatrix}\begin{cases}\mathbb{I},&\textnormal{arg}\,\zeta\in(-\frac{3\pi}{4},\frac{3\pi}{4})\\
	\bigl[\begin{smallmatrix}1 & 0\\ -1& 1\end{smallmatrix}\bigr],&\textnormal{arg}\,\zeta\in(\frac{3\pi}{4},\pi)\\
	\bigl[\begin{smallmatrix}1 & 0\\ 1& 1\end{smallmatrix}\bigr],&\textnormal{arg}\,\zeta\in(-\pi,-\frac{3\pi}{4})
	\end{cases}
\end{equation}
in terms of \eqref{app3} and the function
\begin{equation*}
	m(\zeta;x):=\frac{1}{2\pi\im}\int_{-\infty}^{\infty}\frac{\chi_{[0,\infty)}(\lambda x^{-2})-\Phi(\lambda x^{-2})}{\lambda-\zeta}\,\d\lambda,
	\ \ \zeta\in\mathbb{C}\setminus\mathbb{R}.
\end{equation*}
Observe that ${\bf P}^{(0)}(\zeta;x)$ satisfies properties $(1),(2)$ and $(3)$ in RHP \ref{genBesselRHP}, up to the rescaling $\zeta\mapsto\zeta x^{-2}$. Hence, with \eqref{e42} and \eqref{app4} in mind, we then study the ratio function
\begin{equation}\label{app5a}
	{\bf J}(\zeta;x):=x^{-\frac{1}{2}\sigma_3}{\bf M}^{\textnormal{PV}}(\zeta x^{-2};x)\begin{cases}\big({\bf M}^{\textnormal{Be}}(\zeta)\big)^{-1},&\zeta\in(\mathbb{C}\setminus\overline{\mathbb{D}_{\frac{1}{4}}(0)})\setminus\Sigma_{\bf T}\\
	({\bf P}^{(0)}(\zeta;x))^{-1},&\zeta\in\mathbb{D}_{\frac{1}{4}}(0)\setminus\Sigma_{\bf T}\end{cases}
\end{equation}
where $\Sigma_{\bf T}$ is shown in Figure \ref{figuretrafo6}. This function, for any $x>0$, is analytic in $\zeta\in\mathbb{C}\setminus((-\infty,-\frac{1}{4}]\cup\partial\mathbb{D}_{\frac{1}{4}}(0)\cup[\frac{1}{4},\infty))$, see Figure \ref{figuregenBesselratio} for the corresponding jump contour $\Sigma_{\bf J}$, and it obeys
\begin{equation*}
	{\bf J}_+(\zeta;x)={\bf J}_-(\zeta;x){\bf M}^{\textnormal{Be}}(\zeta)\begin{bmatrix}1 & 1-\Phi(\zeta x^{-2})\\ 0 & 1\end{bmatrix}\big({\bf M}^{\textnormal{Be}}(\zeta)\big)^{-1},\ \ \zeta>\frac{1}{4},
\end{equation*}
as well as
\begin{equation*}
	{\bf J}_+(\zeta;x)={\bf J}_-(\zeta;x){\bf M}^{\textnormal{Be}}_-(\zeta)\begin{bmatrix}1-\Phi(\zeta x^{-2}) & -\Phi(\zeta x^{-2})\\ \Phi(\zeta x^{-2}) & 1+\Phi(\zeta x^{-2})\end{bmatrix}\big({\bf M}_-^{\textnormal{Be}}(\zeta)\big)^{-1},\ \ \zeta<-\frac{1}{4},
\end{equation*}
and
\begin{equation}\label{app6}
	{\bf J}_+(\zeta;x)={\bf J}_-(\zeta;x){\bf P}^{(0)}(\zeta;x)\big({\bf M}^{\textnormal{Be}}(\zeta)\big)^{-1},\ \ \ |\zeta|=\frac{1}{4}.
\end{equation}
Moreover, by RHP \ref{BesselRHP}, condition $(4)$, and RHP \ref{genBesselRHP}, condition $(4)$, we have that
\begin{equation*}
	{\bf J}(\zeta;x)=\mathbb{I}+\mathcal{O}\big(\zeta^{-1}\big),\ \ \ \zeta\rightarrow\infty,\ \zeta\notin\mathbb{R}.
\end{equation*}
Thus, if ${\bf G}_{\bf J}(\zeta;x)$ denotes the underlying jump matrix of ${\bf J}(\zeta;x)$, then by \eqref{e42} and RHP \ref{BesselRHP}, with some $c>0$,
\begin{align*}
	\big\|{\bf G}_{\bf J}(\zeta;x)-\mathbb{I}\|\leq&\, c\sqrt{|\zeta|}\e^{-\zeta^2/x^4+\sqrt{|\zeta|}},\ \ \ \ \ \zeta\in\left(-\infty,-\frac{1}{4}\right)\cup\left(\frac{1}{4},\infty\right);\\
	 \big\|{\bf G}_{\bf J}(\zeta;x)-\mathbb{I}\|\leq &\,c\big|m(\zeta;x)\big|,\ \ \ \ \ \ |\zeta|=\frac{1}{4},
\end{align*}
where
\begin{equation*}
	\big|m(\zeta;x)\big|=\frac{1}{2\pi}\left|\int_{-\infty}^{\infty}\frac{\chi_{[0,\infty)}(\lambda x^{-2})-\Phi(\lambda x^{-2})}{\lambda-\zeta}\,\d\lambda\right|\stackrel{\eqref{e42}}{\leq}\frac{cx^2}{|\Im\zeta|},\ \ |\zeta|=\frac{1}{4},\ \ |\Im\zeta|\geq\delta>0.
\end{equation*}
But $\frac{\d}{\d x}\Phi(x)$ decays super exponentially fast at $\pm\infty$, so the last estimate also holds near $\zeta=\pm \frac{1}{4}$, compare the reasoning in \cite[Section $6.3$]{CCR}, i.e. we have all together, with some $c,x_0>0$,
\begin{equation}\label{app7}
	\|{\bf G}_{\bf J}(\cdot;x)-\mathbb{I}\|_{L^2(\Sigma_{\bf J})}\leq cx^2,\ \ \ \ \ \|{\bf G}_{\bf J}(\cdot;x)-\mathbb{I}\|_{L^{\infty}(\Sigma_{\bf J})}\leq cx^2\ \ \ \forall\,\,0<x\leq x_0.
\end{equation}
These small norm estimates show that ${\bf J}(\zeta;x)$ is computable via the integral equation
\begin{equation*}
	{\bf J}(\zeta;x)=\mathbb{I}+\frac{1}{2\pi\im}\int_{\Sigma_{\bf J}}{\bf J}_-(\lambda;x)\big({\bf G}_{\bf J}(\lambda;x)-\mathbb{I}\big)\frac{\d\lambda}{\lambda-\zeta},\ \ \ \zeta\in\mathbb{C}\setminus\Sigma_{\bf J},
\end{equation*}
for all $0<x\leq x_0$, cf. \cite{DZ}, using that in the same parameter regime
\begin{equation}\label{app8}
	\|{\bf J}_-(\cdot;x)-\mathbb{I}\|_{L^2(\Sigma_{\bf J})}\leq cx^2.
\end{equation}
Consequently, 
\begin{equation*}
	{\bf J}(\zeta;x)\stackrel[\eqref{app8}]{\eqref{app7}}{=}\mathbb{I}+\frac{1}{2\pi\im}\ointclockwise_{|\lambda|=\frac{1}{4}}\big({\bf G}_{{\bf J}}(\lambda;x)-\mathbb{I}\big)\frac{\d\lambda}{\lambda-\zeta}+\mathcal{O}\left(\frac{x^4}{1+|\zeta|}\right),\ \ x\downarrow 0,
\end{equation*}
which holds uniformly in $\zeta\in\mathbb{C}\setminus\Sigma_{\bf J}$. Since by \eqref{app6},
\begin{equation*}
	{\bf G}_{\bf J}(\lambda;x)-\mathbb{I}=\begin{bmatrix}1&0\\ -\frac{3}{8} & 1\end{bmatrix}{\bf N}_0(\lambda)\begin{bmatrix}0 & 1\\ 0 & 0\end{bmatrix}\big({\bf N}_0(\lambda)\big)^{-1}\begin{bmatrix}1 & 0\\ \frac{3}{8} & 1\end{bmatrix}m(\lambda;x),\ \ \ \ |\lambda|=\frac{1}{4},
\end{equation*}
and also
\begin{equation*}
	m(\lambda;x)=-\frac{x^2}{2\pi\im\lambda}\int_{-\infty}^{\infty}\big(\chi_{[0,\infty)}(y)-\Phi(y)\big)\,\d y+\mathcal{O}\big(x^4\big),\ \ \ \ \ x\downarrow 0,
\end{equation*}
uniformly in $|\lambda|=\frac{1}{4}$, an explicit residue computation yields for $|\zeta|>\frac{1}{4}$, while using property $(3)$ in RHP \ref{BesselRHP},
\begin{align}
	{\bf J}(\zeta;x)=&\,\,\mathbb{I}-\frac{x^2}{2\pi\im\zeta}\begin{bmatrix}1 & 0\\ -\frac{3}{8} & 1\end{bmatrix}\widehat{{\bf M}^{\textnormal{Be}}}(0)\begin{bmatrix}0 & 1\\ 0 & 0\end{bmatrix}\big(\widehat{{\bf M}^{\textnormal{Be}}}(0)\big)^{-1}\begin{bmatrix}1 & 0\\ \frac{3}{8} & 1\end{bmatrix}\int_{-\infty}^{\infty}\big(\chi_{[0,\infty)}(y)-\Phi(y)\big)\,\d y+\mathcal{O}\left(\frac{x^4}{|\zeta|}\right)\nonumber\\
	=&\,\,\mathbb{I}+\frac{x^2}{2\zeta}\begin{bmatrix}\frac{3}{8} & 1\smallskip\\ -\frac{9}{64} & -\frac{3}{8}\end{bmatrix}\int_{-\infty}^{\infty}\big(\chi_{[0,\infty)}(y)-\Phi(y)\big)\,\d y+\mathcal{O}\left(\frac{x^4}{|\zeta|}\right),\ \ \ x\downarrow 0.\label{app9}
\end{align}
It now remains to recall \eqref{app5a}, i.e. \eqref{app4b} follows at once from \eqref{app9} after changing variables.\bigskip
	\begin{figure}[tbh]
\begin{tikzpicture}[xscale=0.65,yscale=0.65]
\draw [thick, color=red, decoration={markings, mark=at position 0.5 with {\arrow{>}}}, postaction={decorate}] (-5,0) -- (-1,0);
\draw [thick, color=red, decoration={markings, mark=at position 0.5 with {\arrow{>}}}, postaction={decorate}] (1,0) -- (5,0);
\node [left] at (-4.5,0.6) {{\small $(-\infty,-\frac{1}{2}]$}};
\node [right] at (4.5,0.6) {{\small $[\frac{1}{2},\infty)$}};
\draw [thick, color=red, decoration={markings, mark=at position 0.25 with {\arrow{<}}}, decoration={markings, mark=at position 0.75 with {\arrow{<}}}, postaction={decorate}] (0,0) circle [radius=1];
\draw [fill, color=black] (0,0) circle [radius=0.07];
\node [below] at (0.05,-0.1) {{\small $0$}};
\end{tikzpicture}
\caption{The oriented jump contours for the function ${\bf J}(\zeta;x)$ in the complex $\zeta$-plane.}
\label{figuregenBesselratio}
\end{figure}
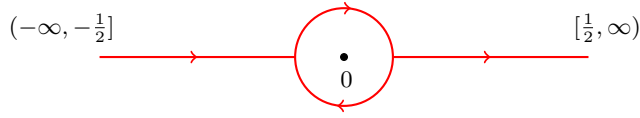

Finally, for property (iii) we can adapt the arguments in \cite[Section $7.5$]{CCR}, i.e. we first notice that
\begin{equation*}
	{\bf A}(\zeta;x):=\frac{\partial}{\partial x}{\bf M}^{\textnormal{PV}}(\zeta;x)\big({\bf M}^{\textnormal{PV}}(\zeta;x)\big)^{-1},\ \ \ (\zeta,x)\in\mathbb{C}\times\mathbb{R}_+
\end{equation*}
is entire in $\zeta$ by RHP \ref{genBesselRHP}, in fact by \eqref{app4a}, with some $x$-dependent functions $\ast$,
\begin{equation*}
	{\bf A}(\zeta;x)=\begin{bmatrix}-q(x) & -1\\ 2p(x)-\zeta & q(x)\end{bmatrix}+\frac{1}{\zeta}\begin{bmatrix}\ast & q'(x)+q^2(x)-2p(x)\\ \ast & \ast\end{bmatrix}+\mathcal{O}\big(\zeta^{-2}\big),\ \ \ \zeta\rightarrow\infty,
\end{equation*}
and so for $\zeta$ away from the jump contour shown in Figure \ref{figuregenBessel}, with $x>0$,
\begin{equation*}
	\frac{\partial}{\partial x}{\bf M}^{\textnormal{PV}}(\zeta;x)=\begin{bmatrix}-q(x) & -1\\ 2p(x)-\zeta & q(x)\end{bmatrix}{\bf M}^{\textnormal{PV}}(\zeta;x),\ \ \ \ q'(x)=2p(x)-q^2(x).
\end{equation*}
The last system shows that \eqref{app4c} satisfies \eqref{app4d} and this completes our proof of the Lemma.
\end{proof}
%

\section{Auxiliary results}\label{appB}
We collect a series of auxiliary results which are used in the main body of the text.
\subsection{Isometry property of $Q_t$} The linear transformation $Q_t:L^2(0,\infty)\rightarrow L^2(\mathbb{R})$ defined in \eqref{e38aa} is likely known to be an isometry, we now supplement a short proof for the same fact. To this end, let $\mathcal{S}(\mathbb{R})$ denote the Schwartz space on $\mathbb{R}$, i.e. the set of all smooth functions $f$ so that $f$ and all its derivatives are rapidly decreasing. The Fourier transform of a function $f\in\mathcal{S}(\mathbb{R})$ is defined by
\begin{equation*}
	\hat{f}(\xi):=\int_{-\infty}^{\infty}f(x)\e^{-\im\xi x}\,\d x,\ \ \xi\in\mathbb{R},
\end{equation*}
and we recall that $\hat{f}\in\mathcal{S}(\mathbb{R})$ for any $f\in\mathcal{S}(\mathbb{R})$, cf. \cite[Theorem $1.3$]{St}. Also, if $f\in\mathcal{S}(\mathbb{R})$ then 
\begin{equation}\label{appC1}
	\|\hat{f}\|_{L^2(\mathbb{R})}=\sqrt{2\pi}\,\|f\|_{L^2(\mathbb{R})},
\end{equation}
see \cite[Theorem $1.12$]{St}, and we let $C_0^{\infty}(\mathbb{R})$ denote the set of compactly supported, smooth functions. Note that $C_0^{\infty}(\mathbb{R})$ is dense in $L^2(\mathbb{R})$.
\begin{lem}\label{isolem} Define
\begin{equation}\label{appC2}
	(Q_tf)(x):=\int_{-\infty}^{\infty}\textnormal{Ai}(x+y+t)f(y)\,\d y
\end{equation}
for $f\in C_0^{\infty}(\mathbb{R})$ and fixed $t\in\mathbb{R}$. Then $\|Q_tf\|_{L^2(\mathbb{R})}=\|f\|_{L^2(\mathbb{R})}$.
\end{lem}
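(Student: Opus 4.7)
The approach is Fourier-analytic. The key observation is that the Airy function serves as the integral kernel whose Fourier transform is the unimodular symbol $e^{i\xi^3/3}$. Convolution with Ai is therefore a Fourier multiplier of modulus one, hence an isometry on $L^2(\mathbb{R})$. The lemma then follows from the reflection $y\mapsto -y$ and a translation.

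First I would rewrite $(Q_tf)(x)$ as a convolution. Setting $\widetilde{f}(y):=f(-y)$, the change of variable $y\mapsto -y$ in \eqref{appC2} gives
\begin{equation*}
    (Q_tf)(x) = \int_{-\infty}^\infty \mathrm{Ai}\bigl((x+t)-y'\bigr)\widetilde{f}(y')\,\mathrm{d}y' = (\mathrm{Ai}\ast\widetilde{f})(x+t).
\end{equation*}
Since $\|\widetilde{f}\|_{L^2(\mathbb{R})}=\|f\|_{L^2(\mathbb{R})}$ and translations are isometries on $L^2(\mathbb{R})$, it suffices to prove $\|\mathrm{Ai}\ast\widetilde{f}\|_{L^2(\mathbb{R})}=\|\widetilde{f}\|_{L^2(\mathbb{R})}$ for every $\widetilde{f}\in C_0^\infty(\mathbb{R})$.

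Next, I would insert the absolutely convergent contour integral \eqref{k3} for $\mathrm{Ai}$ on $\mathbb{R}+\mathrm{i}\delta$, $\delta>0$, into the convolution. Writing $\lambda=\xi+\mathrm{i}\delta$ one has $|\mathrm{e}^{\mathrm{i}\lambda^3/3}| = \mathrm{e}^{-\delta\xi^2+\delta^3/3}$, and the compact support of $\widetilde{f}$ makes the double integrand absolutely integrable. Fubini therefore yields
\begin{equation*}
    (\mathrm{Ai}\ast\widetilde{f})(w)=\frac{1}{2\pi}\int_{\mathbb{R}+\mathrm{i}\delta}\mathrm{e}^{\mathrm{i}(\lambda^3/3+w\lambda)}\,\widehat{\widetilde{f}}(\lambda)\,\mathrm{d}\lambda,
\end{equation*}
where $\widehat{\widetilde{f}}$ extends to an entire function (Paley--Wiener), bounded on horizontal strips. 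Contour deformation back to $\mathbb{R}$ is justified by Cauchy's theorem once the side contributions on $[\pm M,\pm M+\mathrm{i}\delta]$ are shown to vanish as $M\to\infty$: this is immediate, as on any horizontal line in $0\le \operatorname{Im}\lambda\le\delta$ the factor $\mathrm{e}^{\mathrm{i}\lambda^3/3}$ provides Gaussian decay in $\operatorname{Re}\lambda$, while $\widehat{\widetilde{f}}$ is at worst of polynomial growth along horizontal lines. Consequently
\begin{equation*}
    (\mathrm{Ai}\ast\widetilde{f})(w)=\frac{1}{2\pi}\int_{-\infty}^\infty\mathrm{e}^{\mathrm{i}w\xi}\bigl[\mathrm{e}^{\mathrm{i}\xi^3/3}\widehat{\widetilde{f}}(\xi)\bigr]\,\mathrm{d}\xi,
\end{equation*}
i.e.\ $\mathrm{Ai}\ast\widetilde{f}$ is the inverse Fourier transform of the Schwartz function $h(\xi):=\mathrm{e}^{\mathrm{i}\xi^3/3}\widehat{\widetilde{f}}(\xi)$ (the product is Schwartz since multiplication by the smooth function $\mathrm{e}^{\mathrm{i}\xi^3/3}$ with polynomially growing derivatives preserves $\mathcal{S}(\mathbb{R})$).

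Finally I would apply Plancherel \eqref{appC1}. Since $|h(\xi)|=|\widehat{\widetilde{f}}(\xi)|$,
\begin{equation*}
    \|\mathrm{Ai}\ast\widetilde{f}\|_{L^2(\mathbb{R})}^2 = \frac{1}{2\pi}\|h\|_{L^2(\mathbb{R})}^2=\frac{1}{2\pi}\|\widehat{\widetilde{f}}\|_{L^2(\mathbb{R})}^2=\|\widetilde{f}\|_{L^2(\mathbb{R})}^2=\|f\|_{L^2(\mathbb{R})}^2,
\end{equation*}
which, together with the earlier reduction, yields $\|Q_tf\|_{L^2(\mathbb{R})}=\|f\|_{L^2(\mathbb{R})}$. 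The one technical step deserving care is the contour shift from $\mathbb{R}+\mathrm{i}\delta$ to $\mathbb{R}$; everything else is a standard Fourier-multiplier calculation enabled by the fact that the symbol of the Airy convolution operator has unit modulus.
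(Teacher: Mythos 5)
Your proof is correct and follows essentially the same route as the paper's: insert the contour-integral representation \eqref{k3} for the Airy function, deform $\mathbb{R}+\mathrm{i}\delta$ to $\mathbb{R}$ via Paley--Wiener bounds and Cauchy's theorem, and conclude by Plancherel using the fact that the resulting multiplier has unit modulus. Packaging $Q_t$ as a reflected, translated convolution with $\mathrm{Ai}$ is a cosmetic but clarifying reformulation; the paper instead writes $(Q_tf)(x)=\tfrac{1}{2\pi}\widehat{g}(x)$ with $g(\lambda)=\mathrm{e}^{-\mathrm{i}(\lambda^3/3+t\lambda)}\hat f(\lambda)$ and applies Plancherel twice, which is the same computation.
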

\begin{proof} When $f\in C_0^{\infty}(\mathbb{R})$, then $f\in\mathcal{S}(\mathbb{R})$ with $\textnormal{supp}(f)\subset[-a,a]$, say. Consequently, \eqref{appC2} is well-defined and we have by Fubini's theorem and \eqref{k3},
\begin{equation}\label{appC3}	
	(Q_tf)(x)=\int_{-\infty}^{\infty}\left[\frac{1}{2\pi}\int_{\mathbb{R}+\im\delta}\e^{\im(\frac{1}{3}\lambda^3+[x+y+t]\lambda)}\,\d\lambda\right]f(y)\,\d y=\frac{1}{2\pi}\int_{\mathbb{R}+\im\delta}\e^{\im(\frac{1}{3}\lambda^3+[x+t]\lambda)}\hat{f}(-\lambda)\,\d\lambda.
\end{equation}
However, by \cite[Theorem $11.1.3$]{Sim2}, $\hat{f}$ is an entire function so that for all $n\in\mathbb{Z}_{\geq 1}$, there is $c_n>0$ with
\begin{equation*}
	\big|\hat{f}(x+\im y)\big|\leq c_n(1+|x|+|y|)^{-n}\e^{a|y|}.
\end{equation*}
Thus, we can use Cauchy's theorem to deform $\mathbb{R}+\im\delta$ to $\mathbb{R}$ in \eqref{appC3} resulting in the identity
\begin{equation*}
	(Q_tf)(x)=\frac{1}{2\pi}\int_{-\infty}^{\infty}\e^{-\im(\frac{1}{3}\lambda^3+t\lambda)}\hat{f}(\lambda)\e^{-\im\lambda x}\,\d\lambda=\frac{1}{2\pi}\,\hat{g}(x),\ \ \ x\in\mathbb{R},
\end{equation*}
where $g(\lambda):=\e^{-\im(\frac{1}{3}\lambda^3+t\lambda)}\hat{f}(\lambda)$ is entire with good decay properties at $\pm\infty$, in fact $g\in\mathcal{S}(\mathbb{R})$. Thus,
\begin{equation*}
	\|Q_tf\|_{L^2(\mathbb{R})}=\frac{1}{2\pi}\|\hat{g}\|_{L^2(\mathbb{R})}\stackrel{\eqref{appC1}}{=}\frac{1}{\sqrt{2\pi}}\|g\|_{L^2(\mathbb{R})}=\frac{1}{\sqrt{2\pi}}\|\hat{f}\|_{L^2(\mathbb{R})}\stackrel{\eqref{appC1}}{=}\|f\|_{L^2(\mathbb{R})},\ \ t\in\mathbb{R},
\end{equation*}
as claimed above.
\end{proof}
\begin{cor} If $f\in L^2(\mathbb{R})$, we define
\begin{equation}\label{appC4}
	y_t(f):=\lim_{\lambda\rightarrow\infty}\int_{-\lambda}^{\lambda}\textnormal{Ai}(x+y+t)f(y)\,\d y,
\end{equation}
where the limit is understood in the $L^2(\mathbb{R})$ norm. Then, with $Q_t:L^2(\mathbb{R})\rightarrow L^2(\mathbb{R})$ given by $f\mapsto y_t(f)$, we have $\|Q_tf\|_{L^2(\mathbb{R})}=\|f\|_{L^2(\mathbb{R})}$, so in particular 
\begin{equation}\label{appC5}
	\|Q_tf\|_{L^2(\mathbb{R})}=\|f\|_{L^2(0,\infty)}\ \ \ \forall\, f\in L^2(0,\infty).
\end{equation}
\end{cor}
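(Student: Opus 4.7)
My plan is to extend the isometry of Lemma \ref{isolem} from the dense subspace $C_0^{\infty}(\mathbb{R})\subset L^2(\mathbb{R})$ to all of $L^2(\mathbb{R})$ by a standard completion argument, then to verify that the abstract extension matches the concrete truncation prescription in \eqref{appC4}. Finally, the special case $f\in L^2(0,\infty)\hookrightarrow L^2(\mathbb{R})$ (extended by zero) delivers \eqref{appC5}.

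First I would fix $t\in\mathbb{R}$ and recall that Lemma \ref{isolem} shows $Q_t$ is norm-preserving on $C_0^{\infty}(\mathbb{R})$. Since $C_0^{\infty}(\mathbb{R})$ is dense in $L^2(\mathbb{R})$, the map $Q_t$ admits a unique bounded linear extension $\widetilde{Q}_t:L^2(\mathbb{R})\rightarrow L^2(\mathbb{R})$, again satisfying $\|\widetilde{Q}_tf\|_{L^2(\mathbb{R})}=\|f\|_{L^2(\mathbb{R})}$, because if $f_n\in C_0^{\infty}(\mathbb{R})$ converges in $L^2(\mathbb{R})$ to $f\in L^2(\mathbb{R})$, then by the isometry $(Q_tf_n)$ is Cauchy in $L^2(\mathbb{R})$ and converges to the element we call $\widetilde{Q}_tf$.

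The main obstacle will be to identify $\widetilde{Q}_tf$ with the limit written in \eqref{appC4}. For this I would set $f_{\lambda}:=f\cdot\chi_{[-\lambda,\lambda]}\in L^2(\mathbb{R})$ and observe $f_{\lambda}\rightarrow f$ in $L^2(\mathbb{R})$ as $\lambda\rightarrow\infty$, so by continuity of $\widetilde{Q}_t$ we have $\widetilde{Q}_tf_{\lambda}\rightarrow\widetilde{Q}_tf$ in $L^2(\mathbb{R})$. It then suffices to show that for each compactly supported $g\in L^2(\mathbb{R})$, say $\textnormal{supp}(g)\subset[-\lambda,\lambda]$, the function $\widetilde{Q}_tg$ coincides almost everywhere with the pointwise integral
\begin{equation*}
	y_t^{\lambda}(g)(x):=\int_{-\lambda}^{\lambda}\textnormal{Ai}(x+y+t)g(y)\,\d y,\qquad x\in\mathbb{R},
\end{equation*}
which is well-defined by Cauchy-Schwarz on the bounded interval $[-\lambda,\lambda]$. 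To that end I would approximate $g$ by mollification: pick $g_n\in C_0^{\infty}(\mathbb{R})$ with $\textnormal{supp}(g_n)\subset[-\lambda-1,\lambda+1]$ and $g_n\rightarrow g$ in $L^2(\mathbb{R})$. By Lemma \ref{isolem}, $\widetilde{Q}_tg_n=Q_tg_n$ agrees pointwise with $y_t^{\lambda+1}(g_n)$, and $\widetilde{Q}_tg_n\rightarrow\widetilde{Q}_tg$ in $L^2(\mathbb{R})$, so a subsequence converges almost everywhere. On the other hand, for every $x\in\mathbb{R}$, Cauchy-Schwarz on $[-\lambda-1,\lambda+1]$ yields
\begin{equation*}
	\big|y_t^{\lambda+1}(g_n)(x)-y_t^{\lambda+1}(g)(x)\big|^2\leq\|g_n-g\|_{L^2(\mathbb{R})}^2\int_{-\lambda-1}^{\lambda+1}\big(\textnormal{Ai}(x+y+t)\big)^2\,\d y\longrightarrow 0,
\end{equation*}
so $Q_tg_n(x)\rightarrow y_t^{\lambda+1}(g)(x)=y_t^{\lambda}(g)(x)$ for every $x\in\mathbb{R}$. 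Comparing the two limits on the convergent subsequence gives $\widetilde{Q}_tg=y_t^{\lambda}(g)$ almost everywhere, exactly as required. Writing $Q_t$ for $\widetilde{Q}_t$, the first isometry statement is proved.

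Finally, for \eqref{appC5} I would take $f\in L^2(0,\infty)$ and view it as an element of $L^2(\mathbb{R})$ via trivial extension by zero off $(0,\infty)$; then the first assertion immediately gives $\|Q_tf\|_{L^2(\mathbb{R})}=\|f\|_{L^2(\mathbb{R})}=\|f\|_{L^2(0,\infty)}$, completing the proof.
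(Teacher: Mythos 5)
Your proof is correct and takes essentially the same route as the paper: extend the isometry of Lemma \ref{isolem} from the dense subspace $C_0^{\infty}(\mathbb{R})$ to all of $L^2(\mathbb{R})$, then restrict to $L^2(0,\infty)$. The paper's own proof is considerably terser and leaves implicit the identification of the abstract bounded extension with the truncation limit \eqref{appC4}; you spell out that step carefully via the Cauchy--Schwarz argument on compactly supported $g$, which is a useful gap to fill but not a different approach.
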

\begin{proof} The limit in the right hand side of \eqref{appC4} exists in $L^2(\mathbb{R})$ by the fact that $\|Q_tf\|_{L^2(\mathbb{R})}=\|f\|_{L^2(\mathbb{R})}$ for all $f\in C_0^{\infty}(\mathbb{R})$, see Lemma \ref{isolem}, with $Q_tf$ as in \eqref{appC2}. But since $C_0^{\infty}(\mathbb{R})$ is dense in $L^2(\mathbb{R})$, we find in turn $\|Q_tf\|_{L^2(\mathbb{R})}=\|f\|_{L^2(\mathbb{R})}$ for all $f\in L^2(\mathbb{R})$, and thus also the restriction of the same norm equality to $f\in L^2(0,\infty)$ holds true. The proof is complete.
\end{proof}
\subsection{Monotonicity of the complementary error function} It is well-known that $w=\textnormal{erfc}(z)$ constitutes a monotonically decreasing function on $\mathbb{R}\ni z$, cf. \cite[$7.2.2$]{NIST}. Below we state a certain generalization of this behavior for the modulus of the complementary error function in a sector of the complex plane.
\begin{lem}\label{applem1} For any $x\leq 0,r\geq 0$ and $\theta\in[\frac{3\pi}{4},\frac{5\pi}{4}]$,
\begin{equation}\label{appB1}
	\Big|\textnormal{erfc}\big(x+r\e^{\im\theta}\big)\Big|\geq \textnormal{erfc}(x).
\end{equation}
\end{lem}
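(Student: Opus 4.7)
My approach is to apply a minimum modulus principle (equivalently, the Phragmén--Lindelöf maximum principle for the reciprocal $1/\textnormal{erfc}(z)$) on the closed sector
\begin{equation*}
	S_x:=\{x+re^{i\theta}:r\geq 0,\,\theta\in[\tfrac{3\pi}{4},\tfrac{5\pi}{4}]\},
\end{equation*}
which is based at the vertex $x\leq 0$ and opens leftward with half-angle $\pi/4$. The desired inequality is equivalent to $|h(z)|\leq 1$ on $S_x$ for the function $h(z):=\textnormal{erfc}(x)/\textnormal{erfc}(z)$.

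The first task is to verify that $\textnormal{erfc}$ is zero-free in $S_x$. The complex zeros $\{z_k\}_{k\geq 1}\cup\{\bar z_k\}_{k\geq 1}$ of $\textnormal{erfc}$ satisfy $|\arg z_k|<\tfrac{3\pi}{4}$, with $|\arg z_k|\uparrow\tfrac{3\pi}{4}$ as $k\to\infty$, see \cite[$\S 7.13$]{NIST}. For $x\leq 0$, the shift $z_k\mapsto z_k-x=z_k+|x|$ translates each zero horizontally to the right, which (for $\Im z_k\neq 0$) strictly decreases $|\arg|$; hence $|\arg(z_k-x)|<\tfrac{3\pi}{4}$ and $z_k\notin S_x$. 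Consequently $h$ is holomorphic on $S_x$ and continuous on $\bar S_x$.

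The second task is to verify the boundary bound $|h|\leq 1$ on $\partial S_x$ and at infinity. At the vertex $z=x$, $h(x)=1$; at infinity, the asymptotic expansion $\textnormal{erfc}(z)=2+e^{-z^2}/(\sqrt{\pi}z)(1+O(z^{-2}))$ valid as $z\to\infty$ for $|\arg(-z)|<\tfrac{3\pi}{4}$ (with the oscillatory prefactor $|e^{-z^2}|=1$ subdominant to the constant term $2$ on the Stokes rays $\arg z=\pm\tfrac{3\pi}{4}$) yields $|\textnormal{erfc}(z)|\to 2\geq\textnormal{erfc}(x)$, so $|h(z)|\to\textnormal{erfc}(x)/2\leq 1$. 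On the boundary rays, by the conjugate symmetry $\overline{\textnormal{erfc}(z)}=\textnormal{erfc}(\bar z)$ it suffices to consider $z=x+re^{3i\pi/4}=a+ib$ with $a:=x-r/\sqrt{2}\leq x$ and $b:=r/\sqrt{2}\geq 0$. Deforming the integration path in $\textnormal{erfc}(z)=(2/\sqrt{\pi})\int_z^\infty e^{-t^2}\,dt$ to the vertical segment $[a+ib,a]$ followed by the real half-line $[a,\infty)$ produces
\begin{equation*}
	\Re\,\textnormal{erfc}(z)=\textnormal{erfc}(a)+\tfrac{2}{\sqrt{\pi}}\int_0^{b} e^{c^2-a^2}\sin(2|a|c)\,dc,
\end{equation*}
and when $2|a|b\leq\pi$ the integrand is pointwise non-negative, which together with $\Re\,\textnormal{erfc}(z)>0$ gives $|\textnormal{erfc}(z)|\geq\Re\,\textnormal{erfc}(z)\geq\textnormal{erfc}(a)\geq\textnormal{erfc}(x)$ by the monotonicity of $\textnormal{erfc}$ on $(-\infty,0]$.

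Finally, Phragmén--Lindelöf applies to the bounded holomorphic $h$ on the sector $S_x$ of opening angle $\tfrac{\pi}{2}$ (the admissible order threshold being $2$, and $h$ is in fact of order zero at infinity): combined with $|h|\leq 1$ on $\partial S_x$, this yields $|h|\leq 1$ throughout $S_x$, which is the claim. The main obstacle is the complementary sub-regime $2|a|b>\pi$ on the boundary rays, where the sign of the oscillatory kernel $\sin(2|a|c)$ is no longer controlled; there I plan to combine the leading-order asymptotic $|\textnormal{erfc}(z)|=2+O(1/|z|)$, which secures a wide margin over $\textnormal{erfc}(x)\leq 2$ once $r$ is sufficiently large, with an intermediate Phragmén--Lindelöf argument on the truncated sub-sector $\{|z-x|\geq R\}\cap S_x$ for a threshold $R$ that bridges the near-vertex regime and the asymptotic regime.
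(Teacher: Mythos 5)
Your route via zero-freeness of $\textnormal{erfc}$ in the sector $S_x$ and Phragm\'en--Lindel\"of is genuinely different from the paper's, and it is not complete. The unfinished step is the boundary bound on the rays $\arg(z-x)=\pm\tfrac{3\pi}{4}$: the path deformation you carried out correctly gives $\Re\,\textnormal{erfc}(z)=\textnormal{erfc}(a)+\tfrac{2}{\sqrt{\pi}}\int_0^{b}\e^{c^2-a^2}\sin(2|a|c)\,\d c$, but pointwise positivity of the integrand holds only when $2|a|b\leq\pi$, and your proposed completion for the complementary range is a sketch that does not fit together as stated. A Phragm\'en--Lindel\"of argument on the truncated subsector $\{|z-x|\geq R\}\cap S_x$ needs a lower modulus bound on the \emph{entire} boundary of that subsector, including the arc $\{|z-x|=R\}\cap S_x$; but that arc consists of interior points of $S_x$, where \eqref{appB1} is precisely what you are trying to prove, not something you may assume. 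What one would actually need is to check, on the rays themselves, that the regime $2|a|b>\pi$ is fully contained in a range where the asymptotic $|\textnormal{erfc}(z)|=2+O(1/|z|)$ already delivers $|\textnormal{erfc}(z)|\geq\textnormal{erfc}(x)$ for quantitative reasons, uniformly in $x\leq 0$; you gesture at this but do not verify it.

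The paper's proof avoids all of this with a single direct modulus estimate, uniformly in $r\geq 0$. After reflecting and shifting the contour in the defining integral one has
\begin{equation*}
\textnormal{erfc}\big(x+r\e^{\im\theta}\big)=2-\frac{2}{\sqrt{\pi}}\int_{-x}^{\infty}\e^{-(t-r\e^{\im\theta})^2}\,\d t,
\end{equation*}
and then
\begin{equation*}
\left|\int_{-x}^{\infty}\e^{-(t-r\e^{\im\theta})^2}\,\d t\right|\leq\e^{-r^2\cos(2\theta)}\int_{-x}^{\infty}\e^{-t^2+2tr\cos\theta}\,\d t\leq\int_{-x}^{\infty}\e^{-t^2}\,\d t,
\end{equation*}
since $\cos(2\theta)\geq 0$ and $\cos\theta\leq 0$ for $\theta\in[\tfrac{3\pi}{4},\tfrac{5\pi}{4}]$ and $t\geq -x\geq 0$; the reverse triangle inequality then gives \eqref{appB1} at once, with no complex-analytic machinery. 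If you want to salvage your approach, the cleanest repair is to replace the real-part trick on the rays by precisely this modulus bound, at which point the Phragm\'en--Lindel\"of step becomes unnecessary.
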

\begin{proof} We have for any $x,r,\theta\in\mathbb{R}$,
\begin{eqnarray}
	\textnormal{erfc}\big(x+r\e^{\im\theta}\big)&=&\frac{2}{\sqrt{\pi}}\int_{x+r\e^{\im\theta}}^{\infty}\e^{-t^2}\,\d t=2-\frac{2}{\sqrt{\pi}}\int_{-\infty}^{x+r\e^{\im\theta}}\e^{-t^2}\,\d t=2-\frac{2}{\sqrt{\pi}}\int_{-x-r\e^{\im\theta}}^{\infty}\e^{-t^2}\,\d t\nonumber\\
	&=&2-\frac{2}{\sqrt{\pi}}\int_{-x}^{\infty}\e^{-(t-r\e^{\im\theta})^2}\,\d t,\label{appB2}
\end{eqnarray}
with the integration path along $\mathbb{R}$ in the last integral. However, once $x\leq 0,r\geq 0$ and $\theta\in[\frac{3\pi}{4},\frac{5\pi}{4}]$,
\begin{equation*}
	\left|\int_{-x}^{\infty}\e^{-(t-r\e^{\im\theta})^2}\,\d t\right|\leq \e^{-r^2\cos(2\theta)}\int_{-x}^{\infty}\e^{-t^2+2tr\cos\theta}\,\d t\leq\int_{-x}^{\infty}\e^{-t^2}\,\d t,
\end{equation*}
and so with the triangle inequality back in \eqref{appB2}, for the same $x,r,\theta$ as in the last estimate,
\begin{equation*}
	\Big|\textnormal{erfc}\big(x+r\e^{\im\theta}\big)\Big|\geq 2-\frac{2}{\sqrt{\pi}}\left|\int_{-x}^{\infty}\e^{-(t-r\e^{\im\theta})^2}\,\d t\right|\geq 2-\frac{2}{\sqrt{\pi}}\int_{-x}^{\infty}\e^{-t^2}\,\d t=\frac{2}{\sqrt{\pi}}\int_x^{\infty}\e^{-t^2}\,\d t=\textnormal{erfc}(x).
\end{equation*}
This concludes our proof of \eqref{appB1}.
\end{proof}
\subsection{Large $\sigma$ expansion of $g_j=g_j(t,\sigma)$ in \eqref{e66a} and \eqref{e66aa}} Given $t,\sigma>0$, the equation
\begin{equation}\label{appB3}
	z_0=-\frac{1}{(\pi\sigma)^{\frac{3}{2}}}\int_{-\infty}^{z_0}\frac{\e^{-(\lambda-\omega)^2}}{1-\Phi(\lambda-\omega)}\frac{\d\lambda}{\sqrt{z_0-\lambda}},\ \ \ \ \omega=\frac{t}{\sigma},
\end{equation}
admits a unique real-valued solution $z_0=z_0(t,\sigma)<0$, compare Proposition \ref{prop8}. More is true, as $\sigma\rightarrow+\infty$, uniformly in $t>0$, we have $z_0(t,\sigma)=o(1)$ as made precise in the following asymptotic expansion:
\begin{lem}\label{applem2} There exist $c,\sigma_0>0$ so that
\begin{equation}\label{appB4}
	z_0(t,\sigma)=-\frac{f_0(\omega)}{(\pi\sigma)^{\frac{3}{2}}}+\frac{f_0(\omega)f_1(\omega)}{(\pi\sigma)^3}-\frac{f_0(\omega)f_1^2(\omega)+f_0^2(\omega)f_2(\omega)}{(\pi\sigma)^{\frac{9}{2}}}+r_1(t,\sigma)
\end{equation}
for $\sigma\geq\sigma_0$ and all $t>0$. The coefficients $f_n$ equal 
\begin{equation*}
	f_n(\omega):=\frac{1}{n!}\int_{-\infty}^0\frac{\partial^n}{\partial\lambda^n}\left\{\frac{\e^{-(\lambda-\omega)^2}}{1-\Phi(\lambda-\omega)}\right\}\frac{\d\lambda}{\sqrt{-\lambda}},\ \ \ n\in\mathbb{Z}_{\geq 0},
\end{equation*}
and the error term $r_1(t,\sigma)$ is such that
\begin{equation*}
	\int_t^{\infty}\big|r_1(s,\sigma)\big|\,\d s\leq c\sigma^{-5},\ \ \ \ \ \ \int_{\sigma}^{\infty}\big|r_1(t,s)\big|\,\d s\leq ct^{-5}\ \ \ \forall\,\sigma\geq\sigma_0,\ \ t>0.
\end{equation*}
\end{lem}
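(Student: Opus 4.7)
The plan is to run a Taylor/fixed-point iteration on the defining equation \eqref{appB3}. After the change of variable $\mu := \lambda - z_0$, writing $\epsilon := (\pi\sigma)^{-3/2}$ and $h(y) := \e^{-y^2}/(1-\Phi(y))$, equation \eqref{appB3} reads
\begin{equation*}
	z_0 = -\epsilon\int_{-\infty}^0\frac{h(\mu+z_0-\omega)}{\sqrt{-\mu}}\,d\mu.
\end{equation*}
Applying Taylor's theorem with integral remainder to $z\mapsto h(\mu+z-\omega)$ at $z=0$ through order $3$ and integrating termwise against $d\mu/\sqrt{-\mu}$ yields the scalar identity
\begin{equation*}
	z_0 + \epsilon\big(f_0(\omega) + f_1(\omega)z_0 + f_2(\omega)z_0^2 + f_3(\omega)z_0^3\big) = -\epsilon\,\varrho_4(z_0,\omega),
\end{equation*}
where $\varrho_4(z_0,\omega) = \frac{z_0^4}{6}\int_{-\infty}^0\int_0^1(1-\theta)^3 h^{(4)}(\mu+\theta z_0-\omega)\,d\theta\,d\mu/\sqrt{-\mu}$.

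First I would prove the a priori bound $|z_0(t,\sigma)|\leq c\epsilon$ uniformly for $\sigma\geq\sigma_0$ and $t>0$: the right-hand side of the equation, viewed as a map $\Psi$ on $[-c\epsilon,c\epsilon]$, is a contraction once $\sigma_0$ is large enough, since $h$ and its derivative are uniformly bounded on the shifted window $\mu+[-c\epsilon,c\epsilon]-\omega\subset(-\infty,0)$ and integrable against $d\mu/\sqrt{-\mu}$. With this control, I would substitute back iteratively: the first-order approximation $z_0\approx -\epsilon f_0(\omega)$ fed into the scalar identity gives $z_0 = -\epsilon f_0 + \mathcal{O}(\epsilon^2)$; the next substitution produces $z_0 = -\epsilon f_0 + \epsilon^2 f_0 f_1 + \mathcal{O}(\epsilon^3)$; and the third
\begin{equation*}
	z_0 = -\epsilon f_0 + \epsilon^2 f_0 f_1 - \epsilon^3\big(f_0 f_1^2 + f_0^2 f_2\big) + r_1(t,\sigma),
\end{equation*}
where $r_1$ is a combination of $\epsilon^4$ and polynomial expressions in $f_0(\omega),\ldots,f_3(\omega)$ and the remainder $\varrho_4(z_0,\omega)$. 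This is exactly the expansion claimed in \eqref{appB4}.

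The technical core is verifying the two integrability bounds on $r_1$. I would first establish uniform-in-$\omega$ envelopes of the form $|f_n(\omega)|\leq C_n$ for $\omega\in(0,1]$, and $|f_n(\omega)|\leq C_n(1+\omega)^{N_n}\e^{-\omega^2}$ for $\omega\geq 1$, together with the analogous bound for $\varrho_4(z_0,\omega)$. These follow from the asymptotic $h(y) = \e^{-y^2}(1+o(1))$ as $y\to-\infty$ (since $1-\Phi(y)\to 1$) and Laplace's method applied on the shrinking end of the integration interval near $\lambda=0$, where $\lambda-\omega\approx -\omega$. Consequently $|r_1(t,\sigma)|\leq C\sigma^{-6}F(t/\sigma)$ for a function $F$ bounded on $(0,1]$ with Gaussian-times-polynomial decay at $+\infty$. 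Substituting $u=s/\sigma$ in $\int_t^\infty|r_1(s,\sigma)|\,ds$ converts the bound into $\sigma^{-5}\int_{t/\sigma}^\infty F(u)\,du\leq c\sigma^{-5}$; substituting $u=t/s$ in $\int_\sigma^\infty|r_1(t,s)|\,ds$ converts it into $t^{-5}\int_0^{t/\sigma}u^4 F(u)\,du\leq c\,t^{-5}$, both of which are the desired estimates.

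The main obstacle is making the $\omega$-uniform Gaussian–polynomial envelopes of $f_n$ and $\varrho_4$ quantitative enough to survive the two substitutions above, especially since $\omega=t/\sigma$ ranges over all of $(0,\infty)$ in the statement; once the Gaussian asymptotics of $h$ and its derivatives are carefully unpacked, however, this reduces to routine bookkeeping on Laplace-type integrals.
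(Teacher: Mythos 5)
Your proposal is correct and takes essentially the same approach as the paper: Taylor-expand the endpoint equation \eqref{appB3} in $z_0$ about zero, iterate, and establish the error-integral bounds from Gaussian decay of $h(\lambda-\omega)$ uniform in $\omega>0$ (which the paper gets from the bound $1-\Phi(x)\geq\tfrac12$ on $(-\infty,0]$). The only notable cosmetic difference is that you prove the a priori bound $|z_0|\leq c\epsilon$ via a contraction argument, whereas the paper appeals to monotonicity of $\sigma\mapsto z_0(t,\sigma)$ to conclude $z_0\to 0$ uniformly in $t>0$ before iterating; your route is arguably the cleaner way to justify that step, but the substance of both proofs is the same.
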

\begin{proof} If
\begin{equation*}
	(-\infty,0)\ni x\mapsto f\left(x,y\right):=\int_{-\infty}^x\frac{\e^{-(\lambda-y)^2}}{1-\Phi(\lambda-y)}\frac{\d\lambda}{\sqrt{x-\lambda}},\ \ \ y>0,
\end{equation*}
then by Taylor expansion, as $x\uparrow 0$, uniformly in $y>0$,
\begin{equation}\label{appB5}
	f(x,y)\sim\sum_{n=0}^{\infty}f_n(y)x^n,\ \ \ \ \ f_n(y)=\frac{1}{n!}\int_{-\infty}^0\frac{\partial^n}{\partial\lambda^n}\left\{\frac{\e^{-(\lambda-y)^2}}{1-\Phi(\lambda-y)}\right\}\frac{\d\lambda}{\sqrt{-\lambda}}.
\end{equation}
On the other hand, $\mathbb{R}_+\ni\sigma\mapsto z_0(t,\sigma),t>0$ is increasing for $\sigma\geq\sigma_0$ and so $\lim_{\sigma\rightarrow+\infty}z_0(t,\sigma)$ exists uniformly in $t>0$. In fact, from \eqref{appB3} we find at once 
\begin{equation*}
	\lim_{\sigma\rightarrow+\infty}z_0(t,\sigma)=0,
\end{equation*}
uniformly in $t>0$, and so upon iteration of \eqref{appB5}, as $\sigma\rightarrow+\infty$,
\begin{equation*}
	z_0(t,\sigma)\sim-\frac{1}{(\pi\sigma)^{\frac{3}{2}}}\sum_{n=0}^{\infty}f_n(\omega)z_0^n(t,\sigma)=-\frac{f_0(\omega)}{(\pi\sigma)^{\frac{3}{2}}}+\frac{f_0(\omega)f_1(\omega)}{(\pi\sigma)^3}-\frac{f_0(\omega)f_1^2(\omega)+f_0^2(\omega)f_2(\omega)}{(\pi\sigma)^{\frac{9}{2}}}+\sum_{n=4}^{\infty}\frac{F_n(\omega)}{\sigma^{\frac{3n}{2}}}.
\end{equation*}
Here, $F_n$ is a multivariate polynomial in the variables $\{f_j(\omega)\}_{j=0}^{n-1}$ of degree $n$ and with $1-\Phi(x)\geq\frac{1}{2}$ for all $x\leq 0$ we obtain at once the coarse estimates
\begin{equation*}
	\int_t^{\infty}\left|F_n\left(\frac{s}{\sigma}\right)\right|\d s\leq c_n\sigma,\ \ \ \ \ \ \int_{\sigma}^{\infty}\left|F_n\Big(\frac{t}{s}\Big)\right|\frac{\d s}{s^{\frac{3n}{2}}}\leq c_n t^{1-\frac{3n}{2}},\ \ \ \ \ c_n>0,
\end{equation*}
valid for all $t,\sigma>0$ and $n\in\mathbb{Z}_{\geq 2}$. This completes the proof of \eqref{appB4}.
\end{proof}
Equipped with \eqref{appB4} we now return to \eqref{e66a}.
\begin{cor}\label{appcor1} There exist $c,\sigma_0>0$ so that
\begin{equation}\label{appB6}
	g_1(t,\sigma)=-\frac{h_0(\omega)}{(\pi\sigma)^{\frac{3}{2}}}+\frac{h_1^2(\omega)}{(\pi\sigma)^3}-\frac{f_0^2(\omega)f_1(\omega)}{4(\pi\sigma)^{\frac{9}{2}}}+r_2(t,\sigma)
\end{equation}
for $\sigma\geq\sigma_0$ and all $t>0$. The coefficients $f_n(\omega)$ are as in Lemma \ref{applem2}, the coefficients $h_n(\omega)$ equal
\begin{equation*}
	h_n(\omega):=\frac{1}{n!}\int_{-\infty}^0\frac{\partial^n}{\partial\lambda^n}\left\{\frac{\e^{-(\lambda-\omega)^2}}{1-\Phi(\lambda-\omega)}\right\}\sqrt{-\lambda}\,\d\lambda,\ \ \ n\in\mathbb{Z}_{\geq 0},
\end{equation*}
and the error term $r_2(t,\sigma)$ satisfies
\begin{equation*}
	\int_t^{\infty}\big|r_2(s,\sigma)\big|\,\d s\leq c\sigma^{-5},\ \ \ \ \ \ \int_{\sigma}^{\infty}\big|r_2(t,s)\big|\,\d s\leq ct^{-5}\ \ \ \forall\,\sigma\geq\sigma_0,\ \ t>0.
\end{equation*}
\end{cor}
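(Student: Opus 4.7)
The plan is to mirror the proof of Lemma \ref{applem2}: Taylor expand the integral appearing in $g_1(t,\sigma)$ at the endpoint $z_0=0$, substitute the known asymptotics of $z_0(t,\sigma)$, and collect terms up to order $\sigma^{-9/2}$. Explicitly, define
\begin{equation*}
    h(x,y):=\int_{-\infty}^x\frac{\e^{-(\lambda-y)^2}}{1-\Phi(\lambda-y)}\sqrt{x-\lambda}\,\d\lambda,\ \ x\leq 0,\ y>0,
\end{equation*}
so that, directly from \eqref{e66a},
\begin{equation*}
    g_1(t,\sigma)=-\frac{1}{(\pi\sigma)^{\frac{3}{2}}}h\big(z_0(t,\sigma),\omega\big)-\frac{z_0^2(t,\sigma)}{4}.
\end{equation*}
After the change of variable $\lambda\mapsto x+u$ (so $u\in(-\infty,0)$) and Taylor-expanding the smooth factor in $x$ uniformly in $y>0$, one obtains, as $x\uparrow 0$,
\begin{equation*}
    h(x,y)\sim\sum_{n=0}^{\infty}h_n(y)\,x^n,
\end{equation*}
with $h_n(y)$ as stated. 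The uniformity in $y>0$ uses, as in Lemma \ref{applem2}, the coarse bound $1-\Phi(\lambda-y)\geq\frac{1}{2}$ on $\lambda\leq 0,y>0$, which controls all $\lambda$-derivatives.

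Next I would substitute \eqref{appB4} for $z_0(t,\sigma)$ into both $h(z_0,\omega)$ and $z_0^2/4$ and collect powers of $(\pi\sigma)^{-3/2}$. The key algebraic simplification, which produces the squared coefficient $h_1^2(\omega)$ in \eqref{appB6}, rests on the integration-by-parts identity
\begin{equation*}
    h_n(y)=\frac{1}{2n}f_{n-1}(y),\ \ n\geq 1,
\end{equation*}
obtained by writing $\sqrt{-\lambda}=-\frac{\d}{\d\lambda}\big(\frac{2}{3}(-\lambda)^{3/2}\big)$, integrating by parts, and noting the boundary contributions vanish. In particular $h_1=f_0/2$, so the $\sigma^{-3}$-terms coming from $-h_1(\omega)z_0/(\pi\sigma)^{3/2}$ and $-z_0^2/4$ combine as $h_1f_0-\frac{1}{4}f_0^2=\frac{1}{4}f_0^2=h_1^2$. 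A similar tally at order $\sigma^{-9/2}$, using $h_2=f_1/4$, yields the coefficient $-\frac{1}{4}f_0^2(\omega)f_1(\omega)$ displayed in \eqref{appB6}.

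The remainder $r_2(t,\sigma)$ collects all contributions of order $\sigma^{-6}$ or higher. Their $t$- and $\sigma$-integrability are handled exactly as at the end of the proof of Lemma \ref{applem2}: each error term is a multivariate polynomial in the bounded functions $\{f_j(\omega),h_j(\omega)\}_{j\geq 0}$ divided by a power $\sigma^{-3k/2}$ with $k\geq 4$, so uniformly in $t>0$ we get $|r_2(t,\sigma)|\leq c\sigma^{-6}$ while uniformly in $\sigma\geq\sigma_0$ we get $|r_2(t,\sigma)|\leq ct^{-6}\cdot(t/\sigma)^{\alpha}$ with enough decay in $s$ to ensure $\int_\sigma^\infty|r_2(t,s)|\,\d s\leq ct^{-5}$. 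The main obstacle, and the only genuinely delicate point, is verifying these two integral bounds on $r_2$ simultaneously; this I would handle by splitting the remainder into summands indexed by $(m,n)$ analogous to the $F_n(\omega)$ in Lemma \ref{applem2} and estimating each using $1-\Phi(\lambda-\omega)\geq\frac{1}{2}$ on $\lambda\leq 0$ together with the explicit $\omega$-dependence of $f_j(\omega),h_j(\omega)$.
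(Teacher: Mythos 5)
Your proof mirrors the paper's argument essentially step-for-step: Taylor expand $h(x,\omega)$ about $x=0$, substitute the expansion \eqref{appB4} for $z_0(t,\sigma)$, and simplify the coefficients via the integration-by-parts identity $h_n=\tfrac{1}{2n}f_{n-1}$; the simplifications to $h_1^2$ at order $\sigma^{-3}$ and to $-\tfrac{1}{4}f_0^2 f_1$ at order $\sigma^{-9/2}$ are exactly the paper's. One small slip: to prove $h_n=\tfrac{1}{2n}f_{n-1}$ you should \emph{differentiate} $\sqrt{-\lambda}$ in the IBP (using $\tfrac{\d}{\d\lambda}\sqrt{-\lambda}=-\tfrac{1}{2\sqrt{-\lambda}}$, so that one derivative is transferred off the braced expression and the weight turns into $1/\sqrt{-\lambda}$); the antiderivative $\sqrt{-\lambda}=-\tfrac{\d}{\d\lambda}\bigl(\tfrac{2}{3}(-\lambda)^{3/2}\bigr)$ that you wrote would instead relate $h_n$ to $k_{n+1}$, which is the identity used in Corollary \ref{appcor2}, not the one needed here. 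Also, for the error bounds a pointwise estimate $|r_2(t,\sigma)|\le c\sigma^{-6}$ alone does not give $\int_t^\infty|r_2(s,\sigma)|\,\d s\le c\sigma^{-5}$ (the integral would diverge); one must use, as you hint at the end, that the coefficient functions $f_j(\omega),h_j(\omega)$ decay super-exponentially as $\omega\to+\infty$, which is what makes $\int_t^\infty|H_n(s/\sigma)|\,\d s\le c_n\sigma$ and $\int_\sigma^\infty|H_n(t/s)|s^{-3n/2}\,\d s\le c_n t^{1-3n/2}$ hold.
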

\begin{proof} Similar to the workings in Lemma \ref{applem2}, we have by Taylor expansion, as $\sigma\rightarrow+\infty$, uniformly in $t>0$,
\begin{align*}
	g_1(t,\sigma)\stackrel{\eqref{e66a}}{=}-\frac{1}{(\pi\sigma)^{\frac{3}{2}}}\int_{-\infty}^{z_0}\frac{\e^{-(\lambda-\omega)^2}}{1-\Phi(\lambda-\omega)}\sqrt{z_0-\lambda}\,\d\lambda-\frac{z_0^2}{4}\sim-\frac{1}{(\pi\sigma)^{\frac{3}{2}}}\sum_{n=0}^{\infty}h_n(\omega)z_0^n(t,\sigma)-\frac{1}{4}z_0^2(t,\sigma),
\end{align*}
in terms $h_n(\omega)$ as in the formulation of the Corollary. Hence, with \eqref{appB4}, as $\sigma\rightarrow+\infty$, uniformly in $t>0$,
\begin{equation*}
	g_1(t,\sigma)\sim-\frac{h_0(\omega)}{(\pi\sigma)^{\frac{3}{2}}}+\frac{h_1(\omega)f_0(\omega)-\frac{1}{4}f_0^2(\omega)}{(\pi\sigma)^3}-\frac{h_1(\omega)f_0(\omega)f_1(\omega)+h_2(\omega)f_0^2(\omega)-\frac{1}{2}f_0^2(\omega)f_1(\omega)}{(\pi\sigma)^{\frac{9}{2}}}+\sum_{n=4}^{\infty}\frac{H_n(\omega)}{\sigma^{\frac{3n}{2}}},
\end{equation*}
where $H_n$ is a multivariate polynomial in the variables $\{h_j(\omega)\}_{j=1}^{n-1}\cup\{f_j(\omega)\}_{j=0}^{n-2}$ of degree $n$. Noting that, via integration by parts, $h_n(\omega)=\frac{1}{2n}f_{n-1}(\omega)$ for all $n\in\mathbb{Z}_{\geq 1}$, the leading terms simplify to
\begin{equation*}
	g_1(t,\sigma)\sim-\frac{h_0(\omega)}{(\pi\sigma)^{\frac{3}{2}}}+\frac{h_1^2(\omega)}{(\pi\sigma)^3}-\frac{f_0^2(\omega)f_1(\omega)}{4(\pi\sigma)^{\frac{9}{2}}}+\sum_{n=4}^{\infty}\frac{H_n(\omega)}{\sigma^{\frac{3n}{2}}},
\end{equation*}
and we record the coarse estimates
\begin{equation*}
	\int_t^{\infty}\left|H_n\left(\frac{s}{\sigma}\right)\right|\d s\leq c_n\sigma,\ \ \ \ \ \ \ \int_{\sigma}^{\infty}\left|H_n\Big(\frac{t}{s}\Big)\right|\frac{\d s}{s^{\frac{3n}{2}}}\leq c_n t^{1-\frac{3n}{2}},\ \ \ \ c_n>0,
\end{equation*}
valid for all $t,\sigma>0$ and $n\in\mathbb{Z}_{\geq 2}$. Estimate \eqref{appB6} follows at once.
\end{proof}
Next, we have the following result for \eqref{e66aa}.
\begin{cor}\label{appcor2} There exist $c,\sigma_0>0$ so that
\begin{equation}\label{appB7}
	g_2(t,\sigma)=\frac{k_0(\omega)}{3(\pi\sigma)^{\frac{3}{2}}}-\frac{f_0^2(\omega)h_1(\omega)-\frac{5}{6}f_0^3(\omega)}{4(\pi\sigma)^{\frac{9}{2}}}+r_3(t,\sigma)
\end{equation}
for $\sigma\geq\sigma_0$ and all $t>0$. The coefficients $f_n(\omega)$ and $h_n(\omega)$ are as in Lemma \ref{applem2} and \ref{appcor1}, the coefficients $k_n(\omega)$ equal
\begin{equation*}
	k_n(\omega):=\frac{1}{n!}\int_{-\infty}^0\frac{\partial^n}{\partial\lambda^n}\left\{\frac{\e^{-(\lambda-\omega)^2}}{1-\Phi(\lambda-\omega)}\right\}(-\lambda)^{\frac{3}{2}}\,\d\lambda,\ \ n\in\mathbb{Z}_{\geq 0},
\end{equation*}
and the error term $r_3(t,\sigma)$ satisfies
\begin{equation*}
	\int_t^{\infty}\big|r_3(s,\sigma)\big|\,\d s\leq c\sigma^{-5},\ \ \ \ \ \ \int_{\sigma}^{\infty}\big|r_3(t,s)\big|\,\d s\leq ct^{-5}\ \ \ \forall\,\sigma\geq\sigma_0,\ \ t>0.
\end{equation*}
\end{cor}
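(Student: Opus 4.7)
The proof follows the same template as Corollary \ref{appcor1}: Taylor expand the integrands in \eqref{e66aa} in powers of $z_0$, insert the expansion \eqref{appB4} for $z_0(t,\sigma)$ supplied by Lemma \ref{applem2}, and simplify using integration-by-parts identities between the three coefficient families $\{f_n(\omega)\}$, $\{h_n(\omega)\}$, and $\{k_n(\omega)\}$.

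Concretely, after the change of variable $\lambda = z_0 + \mu$ in each of the two integrals appearing in \eqref{e66aa} and a Taylor expansion of the smooth function $F(\zeta) := \e^{-(\zeta-\omega)^2}/(1-\Phi(\zeta-\omega))$ around $\zeta=\mu$, those integrals take the form
\begin{equation*}
\int_{-\infty}^{z_0} F(\lambda)(z_0-\lambda)^{\frac{3}{2}}\,\d\lambda \;\sim\; \sum_{n\geq 0} k_n(\omega)\,z_0^n, \qquad \int_{-\infty}^{z_0} F(\lambda)(z_0-\lambda)^{\frac{1}{2}}\,\d\lambda \;\sim\; \sum_{n\geq 0} h_n(\omega)\,z_0^n,
\end{equation*}
so that with $\eta := (\pi\sigma)^{-\frac{3}{2}}$ one has
\begin{equation*}
g_2(t,\sigma) \;\sim\; \frac{\eta}{3}\sum_{n\geq 0} k_n(\omega)\,z_0^n \;-\; \frac{\eta z_0}{2}\sum_{n\geq 0} h_n(\omega)\,z_0^n \;-\; \frac{z_0^3}{12}.
\end{equation*}
I would then substitute $z_0 = -f_0\eta + f_0 f_1\eta^2 - (f_0 f_1^2 + f_0^2 f_2)\eta^3 + \cdots$ from \eqref{appB4} and collect powers of $\eta$. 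The algebra is driven by the two integration-by-parts identities
\begin{equation*}
k_n(\omega) \;=\; \frac{3}{2n}\,h_{n-1}(\omega), \qquad h_n(\omega) \;=\; \frac{1}{2n}\,f_{n-1}(\omega), \qquad n \geq 1,
\end{equation*}
obtained by integrating $\partial_\lambda^n F$ against $(-\lambda)^{\frac{3}{2}}$ and $(-\lambda)^{\frac{1}{2}}$ over $(-\infty,0)$; the boundary terms vanish because $F$ decays super-exponentially at $-\infty$ and the powers of $(-\lambda)$ vanish at $\lambda=0$. These identities reduce the leading $\eta$-coefficient to $k_0(\omega)/3$, force the $\eta^2$-coefficient to vanish identically (explaining the absence of a $(\pi\sigma)^{-3}$ term in \eqref{appB7}), and collapse the $\eta^3$-contribution into the compact form advertised in \eqref{appB7}.

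The remainder $r_3(t,\sigma)$ absorbs all $\eta^m$-tails with $m\geq 4$ together with the propagation of $r_1(t,\sigma)$ from \eqref{appB4}; each such tail is a polynomial in $\{f_n(\omega), h_n(\omega), k_n(\omega)\}$ divided by $\sigma^{3m/2}$, and since $1-\Phi(\lambda-\omega) \geq \tfrac{1}{2}$ uniformly for $\lambda\leq 0$ and $\omega\geq 0$, each of these coefficients is bounded uniformly in $\omega>0$. The required integral estimates then follow verbatim from the bookkeeping already carried out in the proofs of Lemma \ref{applem2} and Corollary \ref{appcor1}. The main obstacle is purely algebraic: one has to correctly enumerate the finitely many contributions at orders $\eta$, $\eta^2$, $\eta^3$ and then apply the IBP identities in the right order both to witness the vanishing at $\eta^2$ and to collapse the $\eta^3$-combination to its stated form. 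No new functional-analytic ingredient beyond the two preceding auxiliary results is required.
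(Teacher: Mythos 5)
Your proposal is correct and follows essentially the same route as the paper: Taylor expand the two integrals in \eqref{e66aa} about $z_0=0$ to generate $\sum k_n z_0^n$ and $\sum h_n z_0^n$, substitute the series \eqref{appB4} for $z_0$, and apply the integration-by-parts identity $k_n=\frac{3}{2n}h_{n-1}$ to kill the $(\pi\sigma)^{-3}$ coefficient and reduce the $(\pi\sigma)^{-9/2}$ coefficient to $-\frac{1}{4}(f_0^2h_1-\frac{5}{6}f_0^3)$. One minor clarification: the form \eqref{appB7} as written is reached using only the $k_n=\frac{3}{2n}h_{n-1}$ identity; applying $h_n=\frac{1}{2n}f_{n-1}$ as well (specifically $h_1=\frac{1}{2}f_0$) would further collapse that coefficient to $\frac{1}{12}f_0^3$, which is equal but not the form the statement advertises.
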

\begin{proof} By Taylor expansion, as $\sigma\rightarrow+\infty$, uniformly in $t>0$,
\begin{align*}
	g_2(t,\sigma)\stackrel{\eqref{e66aa}}{\sim}\frac{1}{3(\pi\sigma)^{\frac{3}{2}}}\sum_{n=0}^{\infty}k_n(\omega)z_0^n(t,\sigma)-\frac{z_0(t,\sigma)}{2(\pi\sigma)^{\frac{3}{2}}}\sum_{n=0}^{\infty}h_n(\omega)z_0^n(t,\sigma)-\frac{1}{12}z_0^3(t,\sigma),
\end{align*}
and hence with \eqref{appB4}, in the same limit,
\begin{align*}
	g_2&\,(t,\sigma)\sim\frac{k_0(\omega)}{3(\pi\sigma)^{\frac{3}{2}}}-\frac{\frac{1}{3}f_0(\omega)k_1(\omega)-\frac{1}{2}f_0(\omega)h_0(\omega)}{(\pi\sigma)^3}\\
	&\,+\frac{\frac{1}{3}(k_1(\omega)f_0(\omega)f_1(\omega)+f_0^2(\omega)k_2(\omega))-\frac{1}{2}(f_0^2(\omega)h_1(\omega)+f_0(\omega)f_1(\omega)h_0(\omega))+\frac{5}{24}f_0^3(\omega)}{(\pi\sigma)^{\frac{9}{2}}}+\sum_{n=4}^{\infty}\frac{K_n(\omega)}{\sigma^{\frac{3n}{2}}}
\end{align*}
with a multivariate polynomial $K_n$ in the variables $\{f_j(\omega)\}_{j=0}^{n-2}\cup\{h_j(\omega)\}_{j=0}^{n-3}\cup\{k_j(\omega)\}_{j=1}^{n-1}$ of degree $n$. However, via integration by parts, $k_n(\omega)=\frac{3}{2n}h_{n-1}(\omega)$ for all $n\in\mathbb{Z}_{\geq 1}$, and so after simplification
\begin{align*}
	g_2(t,\sigma)\sim\frac{k_0(\omega)}{3(\pi\sigma)^{\frac{3}{2}}}-\frac{f_0^2(\omega)h_1(\omega)-\frac{5}{6}f_0^3(\omega)}{4(\pi\sigma)^{\frac{9}{2}}}+\sum_{n=4}^{\infty}\frac{K_n(\omega)}{\sigma^{\frac{3n}{2}}}.
\end{align*}
Given the coarse estimates
\begin{equation*}
	\int_t^{\infty}\left|K_n\left(\frac{s}{\sigma}\right)\right|\d s\leq c_n\sigma,\ \ \ \ \ \ \ \int_{\sigma}^{\infty}\left|K_n\Big(\frac{t}{s}\Big)\right|\frac{\d s}{s^{\frac{3n}{2}}}\leq c_n t^{1-\frac{3n}{2}},\ \ \ \ c_n>0,
\end{equation*}
valid for all $t,\sigma>0$ and $n\in\mathbb{Z}_{\geq 2}$, we arrive at the above expansion for $g_2(t,\sigma)$.
\end{proof}
\end{appendix}

\begin{bibsection}
\begin{biblist}

\bib{AB}{article}{
AUTHOR = {Akemann, G.},
author={Bender, M.},
     TITLE = {Interpolation between {A}iry and {P}oisson statistics for
              unitary chiral non-{H}ermitian random matrix ensembles},
   JOURNAL = {J. Math. Phys.},
  FJOURNAL = {Journal of Mathematical Physics},
    VOLUME = {51},
      YEAR = {2010},
    NUMBER = {10},
     PAGES = {103524, 21},
      ISSN = {0022-2488},
   MRCLASS = {60B20 (33C10 33C45 82B41)},
  MRNUMBER = {2761338},
MRREVIEWER = {Razvan Teodorescu},
       DOI = {10.1063/1.3496899},
       URL = {https://doi-org.bris.idm.oclc.org/10.1063/1.3496899},
}

\bib{AP}{article}{
AUTHOR = {Akemann, G.}
author={Phillips, M. J.},
     TITLE = {The interpolating {A}iry kernels for the {$\beta=1$} and
              {$\beta=4$} elliptic {G}inibre ensembles},
   JOURNAL = {J. Stat. Phys.},
  FJOURNAL = {Journal of Statistical Physics},
    VOLUME = {155},
      YEAR = {2014},
    NUMBER = {3},
     PAGES = {421--465},
      ISSN = {0022-4715},
   MRCLASS = {82B41 (60B20)},
  MRNUMBER = {3192169},
       DOI = {10.1007/s10955-014-0962-6},
       URL = {https://doi-org.bris.idm.oclc.org/10.1007/s10955-014-0962-6},
}

\bib{AP2}{article}{
AUTHOR = {Akemann, G.},
author={Phillips, M. J.},
     TITLE = {Universality conjecture for all {A}iry, sine and {B}essel
              kernels in the complex plane},
 BOOKTITLE = {Random matrix theory, interacting particle systems, and
              integrable systems},
    SERIES = {Math. Sci. Res. Inst. Publ.},
    VOLUME = {65},
     PAGES = {1--23},
 PUBLISHER = {Cambridge Univ. Press, New York},
      YEAR = {2014},
   MRCLASS = {15B52 (30C40 60B20)},
  MRNUMBER = {3380679},
MRREVIEWER = {Beno\^{\i}t Collins},
}

\bib{ACV}{article}{
AUTHOR = {Akemann, Gernot},
author={Cikovic, Milan},
author={Venker, Martin},
     TITLE = {Universality at weak and strong non-{H}ermiticity beyond the
              elliptic {G}inibre ensemble},
   JOURNAL = {Comm. Math. Phys.},
  FJOURNAL = {Communications in Mathematical Physics},
    VOLUME = {362},
      YEAR = {2018},
    NUMBER = {3},
     PAGES = {1111--1141},
      ISSN = {0010-3616},
   MRCLASS = {60B20 (82B05)},
  MRNUMBER = {3845296},
MRREVIEWER = {Dominique L\'{e}pingle},
       DOI = {10.1007/s00220-018-3201-1},
       URL = {https://doi-org.bris.idm.oclc.org/10.1007/s00220-018-3201-1},
}

\bib{ADM}{article}{
AUTHOR = {Akemann, G.},
author={Duits, M.},
author={Molag, L. D.},
     TITLE = {The elliptic {G}inibre ensemble: {A} unifying approach to
              local and global statistics for higher dimensions},
   JOURNAL = {J. Math. Phys.},
  FJOURNAL = {Journal of Mathematical Physics},
    VOLUME = {64},
      YEAR = {2023},
    NUMBER = {2},
     PAGES = {Paper No. 023503, 39},
      ISSN = {0022-2488},
   MRCLASS = {60 (30A05 41 82D05)},
  MRNUMBER = {4544540},
       DOI = {10.1063/5.0089789},
       URL = {https://doi-org.bris.idm.oclc.org/10.1063/5.0089789},
}

\bib{ASS}{article}{
  title = {Transition to Chaos in Random Networks with Cell-Type-Specific Connectivity},
  author = {Aljadeff, Johnatan},
  author={Stern, Merav},
  author={Sharpee, Tatyana},
  journal = {Phys. Rev. Lett.},
  volume = {114},
  issue = {8},
  pages = {088101},
  numpages = {5},
  year = {2015},
  month = {Feb},
  publisher = {American Physical Society},
  doi = {10.1103/PhysRevLett.114.088101},
  url = {https://link.aps.org/doi/10.1103/PhysRevLett.114.088101}
}

\bib{AGBTAM}{article}{
title={Predicting the stability of large structured food webs},
author={Allesina, Stefano},
author={Grilli, Jacopo},
author={Barabás, György},
author={Tang, Si},
author={Aljadeff, Johnatan},
author={Maritan, Amos},
journal={Nature Communications},
volume={6},
issue={1},
pages={7842},
year={2015},
month={Jul},
doi={10.1038/ncomms8842},
url={https://doi.org/10.1038/ncomms8842},
}

\bib{AT}{article}{
title={The stability–complexity relationship at age 40: a random matrix perspective},
author={Allesina, Stefano},
author={Tang, Si},
journal={Population Ecology},
volume={57},
issue={1},
pages={63-75},
year={2015},
doi={10.1007/s10144-014-0471-0},
url={https://doi.org/10.1007/s10144-014-0471-0},
}

\bib{ACQ}{article}{
AUTHOR = {Amir, Gideon},
author={Corwin, Ivan},
author={Quastel, Jeremy},
     TITLE = {Probability distribution of the free energy of the continuum
              directed random polymer in {$1+1$} dimensions},
   JOURNAL = {Comm. Pure Appl. Math.},
  FJOURNAL = {Communications on Pure and Applied Mathematics},
    VOLUME = {64},
      YEAR = {2011},
    NUMBER = {4},
     PAGES = {466--537},
      ISSN = {0010-3640},
   MRCLASS = {60K35 (60B20 60F05 60H15 82C22 82C44)},
  MRNUMBER = {2796514},
MRREVIEWER = {Timo Sepp\"{a}l\"{a}inen},
       DOI = {10.1002/cpa.20347},
       URL = {https://doi-org.bris.idm.oclc.org/10.1002/cpa.20347},
}

\bib{BBD}{article}{
AUTHOR = {Baik, Jinho},
author={Buckingham, Robert},
author={DiFranco, Jeffery},
     TITLE = {Asymptotics of {T}racy-{W}idom distributions and the total
              integral of a {P}ainlev\'{e} {II} function},
   JOURNAL = {Comm. Math. Phys.},
  FJOURNAL = {Communications in Mathematical Physics},
    VOLUME = {280},
      YEAR = {2008},
    NUMBER = {2},
     PAGES = {463--497},
      ISSN = {0010-3616},
   MRCLASS = {33E17 (15A52 34M55 47B35 60F99 82B44)},
  MRNUMBER = {2395479},
MRREVIEWER = {Andrei A. Kapaev},
       DOI = {10.1007/s00220-008-0433-5},
       URL = {https://doi.org/10.1007/s00220-008-0433-5},
}

\bib{BDT}{book}{
AUTHOR = {Baik, Jinho},
AUTHOR = {Deift, Percy},
AUTHOR = {Suidan, Toufic},
     TITLE = {Combinatorics and random matrix theory},
    SERIES = {Graduate Studies in Mathematics},
    VOLUME = {172},
 PUBLISHER = {American Mathematical Society, Providence, RI},
      YEAR = {2016},
     PAGES = {xi+461},
      ISBN = {978-0-8218-4841-8},
   MRCLASS = {60B20 (30E25 33E17 41A60 47B35 82C23)},
  MRNUMBER = {3468920},
MRREVIEWER = {Terence Tao},
       DOI = {10.1090/gsm/172},
       URL = {https://doi-org.bris.idm.oclc.org/10.1090/gsm/172},
}

\bib{Ben}{article}{
AUTHOR = {Bender, Martin},
     TITLE = {Edge scaling limits for a family of non-{H}ermitian random
              matrix ensembles},
   JOURNAL = {Probab. Theory Related Fields},
  FJOURNAL = {Probability Theory and Related Fields},
    VOLUME = {147},
      YEAR = {2010},
    NUMBER = {1-2},
     PAGES = {241--271},
      ISSN = {0178-8051},
   MRCLASS = {60B20 (60G55 60G70)},
  MRNUMBER = {2594353},
       DOI = {10.1007/s00440-009-0207-9},
       URL = {https://doi-org.bris.idm.oclc.org/10.1007/s00440-009-0207-9},
}

\bib{BB}{article}{
AUTHOR = {Betea, Dan},
author={Bouttier, J\'{e}r\'{e}mie},
     TITLE = {The periodic {S}chur process and free fermions at finite
              temperature},
   JOURNAL = {Math. Phys. Anal. Geom.},
  FJOURNAL = {Mathematical Physics, Analysis and Geometry. An International
              Journal Devoted to the Theory and Applications of Analysis and
              Geometry to Physics},
    VOLUME = {22},
      YEAR = {2019},
    NUMBER = {1},
     PAGES = {Paper No. 3, 47},
      ISSN = {1385-0172},
   MRCLASS = {82C23 (05E05 60G55 60K35)},
  MRNUMBER = {3903828},
       DOI = {10.1007/s11040-018-9299-8},
       URL = {https://doi-org.bris.idm.oclc.org/10.1007/s11040-018-9299-8},
}

\bib{BBW}{article}{
AUTHOR = {Betea, Dan},
author={Bouttier, J\'{e}r\'{e}mie},
author={Walsh, Harriet},
     TITLE = {Multicritical random partitions},
   JOURNAL = {S\'{e}m. Lothar. Combin.},
  FJOURNAL = {S\'{e}minaire Lotharingien de Combinatoire},
    VOLUME = {85B},
      YEAR = {2021},
     PAGES = {Art. 33, 12},
   MRCLASS = {60C05 (05A17 60B20)},
  MRNUMBER = {4311914},
}

\bib{BoDei}{article}{
AUTHOR = {Borodin, Alexei},
author={Deift, Percy},
     TITLE = {Fredholm determinants, {J}imbo-{M}iwa-{U}eno
              {$\tau$}-functions, and representation theory},
   JOURNAL = {Comm. Pure Appl. Math.},
  FJOURNAL = {Communications on Pure and Applied Mathematics},
    VOLUME = {55},
      YEAR = {2002},
    NUMBER = {9},
     PAGES = {1160--1230},
      ISSN = {0010-3640},
   MRCLASS = {35Q15 (22E30 33C05 33E17 34M55 37K20 43A75)},
  MRNUMBER = {1908746},
MRREVIEWER = {Nicholas S. Witte},
       DOI = {10.1002/cpa.10042},
       URL = {https://doi-org.bris.idm.oclc.org/10.1002/cpa.10042},
}

\bib{Bo1}{article}{
AUTHOR = {Bothner, T.},
     TITLE = {From gap probabilities in random matrix theory to eigenvalue
              expansions},
   JOURNAL = {J. Phys. A},
  FJOURNAL = {Journal of Physics. A. Mathematical and Theoretical},
    VOLUME = {49},
      YEAR = {2016},
    NUMBER = {7},
     PAGES = {075204, 77},
      ISSN = {1751-8113},
   MRCLASS = {45C05 (33C10 33C45 45M05 60B20 82B26)},
  MRNUMBER = {3462295},
MRREVIEWER = {Christopher Steven Goodrich},
       DOI = {10.1088/1751-8113/49/7/075204},
       URL = {https://doi-org.bris.idm.oclc.org/10.1088/1751-8113/49/7/075204},
}

\bib{Bo0}{article}{
AUTHOR = {Bothner, T.},
     TITLE = {On the origins of {R}iemann-{H}ilbert problems in mathematics},
   JOURNAL = {Nonlinearity},
  FJOURNAL = {Nonlinearity},
    VOLUME = {34},
      YEAR = {2021},
    NUMBER = {4},
     PAGES = {R1--R73},
      ISSN = {0951-7715},
   MRCLASS = {30E25 (01A60 45M05 60B20)},
  MRNUMBER = {4246443},
       DOI = {10.1088/1361-6544/abb543},
       URL = {https://doi-org.bris.idm.oclc.org/10.1088/1361-6544/abb543},
}

\bib{BCT}{article}{
AUTHOR = {Bothner, Thomas},
author={Cafasso, Mattia},
author={Tarricone, Sofia},
     TITLE = {Momenta spacing distributions in anharmonic oscillators and
              the higher order finite temperature {A}iry kernel},
   JOURNAL = {Ann. Inst. Henri Poincar\'{e} Probab. Stat.},
  FJOURNAL = {Annales de l'Institut Henri Poincar\'{e} Probabilit\'{e}s et
              Statistiques},
    VOLUME = {58},
      YEAR = {2022},
    NUMBER = {3},
     PAGES = {1505--1546},
      ISSN = {0246-0203},
   MRCLASS = {45J05 (30E25 33C10 35J10 42A38 81V70)},
  MRNUMBER = {4452641},
       DOI = {10.1214/21-aihp1211},
       URL = {https://doi-org.bris.idm.oclc.org/10.1214/21-aihp1211},
}

\bib{Bo}{article}{
AUTHOR = {Bothner, T.},
TITLE = {A Riemann-Hilbert approach to Fredholm determinants of Hankel composition operators: scalar-valued kernels},
YEAR = {2022},
eprint={https://arxiv.org/abs/2205.15007},
      archivePrefix={arXiv},
      primaryClass={math-ph},
}

\bib{CC}{article}{
AUTHOR = {Cafasso, Mattia},
author={Claeys, Tom},
     TITLE = {A {R}iemann-{H}ilbert approach to the lower tail of the
              {K}ardar-{P}arisi-{Z}hang equation},
   JOURNAL = {Comm. Pure Appl. Math.},
  FJOURNAL = {Communications on Pure and Applied Mathematics},
    VOLUME = {75},
      YEAR = {2022},
    NUMBER = {3},
     PAGES = {493--540},
      ISSN = {0010-3640},
   MRCLASS = {60B20 (60F10 60G55)},
  MRNUMBER = {4373176},
       DOI = {10.1002/cpa.21978},
       URL = {https://doi-org.bris.idm.oclc.org/10.1002/cpa.21978},
}

\bib{CCG}{article}{
AUTHOR = {Cafasso, Mattia},
author={Claeys, Tom},
author={Girotti, Manuela},
     TITLE = {Fredholm determinant solutions of the {P}ainlev\'{e} {II}
              hierarchy and gap probabilities of determinantal point
              processes},
   JOURNAL = {Int. Math. Res. Not. IMRN},
  FJOURNAL = {International Mathematics Research Notices. IMRN},
      YEAR = {2021},
    NUMBER = {4},
     PAGES = {2437--2478},
      ISSN = {1073-7928},
   MRCLASS = {47G10 (33E17 34M55 37J65 60G55)},
  MRNUMBER = {4218326},
MRREVIEWER = {Nizar Demni},
       DOI = {10.1093/imrn/rnz168},
       URL = {https://doi-org.bris.idm.oclc.org/10.1093/imrn/rnz168},
}

\bib{CCR}{article}{
title={Airy Kernel Determinant Solutions to the KdV Equation and Integro-Differential Painlevé Equations},
   volume={386},
   ISSN={1432-0916},
   url={http://dx.doi.org/10.1007/s00220-021-04108-9},
   DOI={10.1007/s00220-021-04108-9},
   number={2},
   journal={Communications in Mathematical Physics},
   publisher={Springer Science and Business Media LLC},
   author={Cafasso, M.},
   author={Claeys, T.},
   author={Ruzza, G.},
   year={2021},
   month={Jun},
   pages={1107–1153},
 }
 
  \bib{ChCR}{article}{
 AUTHOR = {Charlier, Christophe},
 author={Claeys, Tom},
 author={Ruzza, Giulio},
     TITLE = {Uniform tail asymptotics for {A}iry kernel determinant
              solutions to {K}d{V} and for the narrow wedge solution to
              {KPZ}},
   JOURNAL = {J. Funct. Anal.},
  FJOURNAL = {Journal of Functional Analysis},
    VOLUME = {283},
      YEAR = {2022},
    NUMBER = {8},
     PAGES = {Paper No. 109608, 54},
      ISSN = {0022-1236},
   MRCLASS = {41A60 (35Q53 60G55 60H35)},
  MRNUMBER = {4452069},
MRREVIEWER = {Shuaixia Xu},
       DOI = {10.1016/j.jfa.2022.109608},
       URL = {https://doi-org.bris.idm.oclc.org/10.1016/j.jfa.2022.109608},
}

 \bib{CESX}{article}{
 AUTHOR = {Cipolloni, Giorgio},
 author={Erd\"os, L\'{a}szl\'{o}},
 author={Schr\"{o}der, Dominik},
 author={Xu, Yuanyuan},
     TITLE = {Directional extremal statistics for {G}inibre eigenvalues},
   JOURNAL = {J. Math. Phys.},
  FJOURNAL = {Journal of Mathematical Physics},
    VOLUME = {63},
      YEAR = {2022},
    NUMBER = {10},
     PAGES = {Paper No. 103303, 11},
      ISSN = {0022-2488},
   MRCLASS = {60B20 (15A18 15B52)},
  MRNUMBER = {4496015},
       DOI = {10.1063/5.0104290},
       URL = {https://doi-org.bris.idm.oclc.org/10.1063/5.0104290},
}

\bib{CESX2}{article}{
AUTHOR={Cipolloni, G.},
AUTHOR={Erd\"os, L.},
AUTHOR={Schr\"oder, D.},
AUTHOR={Xu, Y.},
TITLE={On the rightmost eigenvalue of non-Hermitian random matrices},
YEAR={2022},
eprint={https://arxiv.org/abs/2206.04448},
      archivePrefix={arXiv},
      primaryClass={math.PR},
}

\bib{CIK}{article}{
AUTHOR = {Claeys, T.},
AUTHOR={Its, A.},
AUTHOR={Krasovsky, I.},
     TITLE = {Higher-order analogues of the {T}racy-{W}idom distribution and
              the {P}ainlev\'{e} {II} hierarchy},
   JOURNAL = {Comm. Pure Appl. Math.},
  FJOURNAL = {Communications on Pure and Applied Mathematics},
    VOLUME = {63},
      YEAR = {2010},
    NUMBER = {3},
     PAGES = {362--412},
      ISSN = {0010-3640},
   MRCLASS = {34M55 (33E17 34M50 37K15 47B10 60B20 82C05)},
  MRNUMBER = {2599459},
       DOI = {10.1002/cpa.20284},
       URL = {https://doi-org.bris.idm.oclc.org/10.1002/cpa.20284},
}

\bib{DDMS}{article}{
  title = {Noninteracting fermions at finite temperature in a $d$-dimensional trap: Universal correlations},
  author = {Dean, David S.},
  author={Le Doussal, Pierre},
  author={Majumdar, Satya N.},
  author={Schehr, Gr\'egory},
  journal = {Phys. Rev. A},
  volume = {94},
  issue = {6},
  pages = {063622},
  numpages = {41},
  year = {2016},
  month = {Dec},
  publisher = {American Physical Society},
  doi = {10.1103/PhysRevA.94.063622},
  url = {https://link.aps.org/doi/10.1103/PhysRevA.94.063622}
}

\bib{DIK}{article}{
AUTHOR = {Deift, P.},
author={Its, A.},
author={Krasovsky, I.},
     TITLE = {Asymptotics of the {A}iry-kernel determinant},
   JOURNAL = {Comm. Math. Phys.},
  FJOURNAL = {Communications in Mathematical Physics},
    VOLUME = {278},
      YEAR = {2008},
    NUMBER = {3},
     PAGES = {643--678},
      ISSN = {0010-3616},
   MRCLASS = {47G10 (33C10 47A53 47B35 47N30 60F99 82B05 82B44)},
  MRNUMBER = {2373439},
MRREVIEWER = {A. B\"{o}ttcher},
       DOI = {10.1007/s00220-007-0409-x},
       URL = {https://doi.org/10.1007/s00220-007-0409-x},
}

\bib{DT}{article}{
AUTHOR = {Deift, P.},
author={Trubowitz, E.},
     TITLE = {Inverse scattering on the line},
   JOURNAL = {Comm. Pure Appl. Math.},
  FJOURNAL = {Communications on Pure and Applied Mathematics},
    VOLUME = {32},
      YEAR = {1979},
    NUMBER = {2},
     PAGES = {121--251},
      ISSN = {0010-3640},
   MRCLASS = {34B25 (35P25 58F07)},
  MRNUMBER = {512420},
MRREVIEWER = {R. C. Gilbert},
       DOI = {10.1002/cpa.3160320202},
       URL = {https://doi-org.bris.idm.oclc.org/10.1002/cpa.3160320202},
}

\bib{DZ}{article}{
AUTHOR = {Deift, P.},
AUTHOR={Zhou, X.},
     TITLE = {A steepest descent method for oscillatory {R}iemann-{H}ilbert
              problems. {A}symptotics for the {MK}d{V} equation},
   JOURNAL = {Ann. of Math. (2)},
  FJOURNAL = {Annals of Mathematics. Second Series},
    VOLUME = {137},
      YEAR = {1993},
    NUMBER = {2},
     PAGES = {295--368},
      ISSN = {0003-486X},
   MRCLASS = {35Q53 (34A55 34L25 35Q15 35Q55)},
  MRNUMBER = {1207209},
MRREVIEWER = {Alexey V. Samokhin},
       DOI = {10.2307/2946540},
       URL = {https://doi-org.bris.idm.oclc.org/10.2307/2946540},
}

\bib{DGIL}{article}{
AUTHOR = {Di Francesco, P.},
author={Gaudin, M.},
author={Itzykson, C.},
author={Lesage, F.},
     TITLE = {Laughlin's wave functions, {C}oulomb gases and expansions of
              the discriminant},
   JOURNAL = {Internat. J. Modern Phys. A},
  FJOURNAL = {International Journal of Modern Physics A. Particles and
              Fields. Gravitation. Cosmology},
    VOLUME = {9},
      YEAR = {1994},
    NUMBER = {24},
     PAGES = {4257--4351},
      ISSN = {0217-751X},
   MRCLASS = {81V70 (05E10 22E70 52B11 82D10)},
  MRNUMBER = {1289574},
MRREVIEWER = {Peter N. Zhevandrov},
       DOI = {10.1142/S0217751X94001734},
       URL = {https://doi-org.bris.idm.oclc.org/10.1142/S0217751X94001734},
}

\bib{DMRS}{article}{
  title = {Exact Short-Time Height Distribution in the One-Dimensional Kardar-Parisi-Zhang Equation and Edge Fermions at High Temperature},
  author = {Le Doussal, Pierre},
  author={Majumdar, Satya N.},
  author={Rosso, Alberto},
  author={Schehr, Gr\'egory},
  journal = {Phys. Rev. Lett.},
  volume = {117},
  issue = {7},
  pages = {070403},
  numpages = {5},
  year = {2016},
  month = {Aug},
  publisher = {American Physical Society},
  doi = {10.1103/PhysRevLett.117.070403},
  url = {https://link.aps.org/doi/10.1103/PhysRevLett.117.070403}
}

\bib{D}{article}{
author={Le Doussal, Pierre},
title = {Large deviations for the KPZ equation from the KP equation},
YEAR={2019},
eprint={https://arxiv.org/abs/1910.03671},
      archivePrefix={arXiv},
      primaryClass={cond-mat.dis-nn},
}

\bib{F2}{article}{
AUTHOR = {Forrester, P. J.},
     TITLE = {The spectrum edge of random matrix ensembles},
   JOURNAL = {Nuclear Phys. B},
  FJOURNAL = {Nuclear Physics. B. Theoretical, Phenomenological, and
              Experimental High Energy Physics. Quantum Field Theory and
              Statistical Systems},
    VOLUME = {402},
      YEAR = {1993},
    NUMBER = {3},
     PAGES = {709--728},
      ISSN = {0550-3213},
   MRCLASS = {82B41 (15A18 15A52 15A90 82B05)},
  MRNUMBER = {1236195},
MRREVIEWER = {Pawel S. Kurzepa},
       DOI = {10.1016/0550-3213(93)90126-A},
       URL = {https://doi.org/10.1016/0550-3213(93)90126-A},
}

\bib{F1}{book}{
AUTHOR = {Forrester, P. J.},
     TITLE = {Log-gases and random matrices},
    SERIES = {London Mathematical Society Monographs Series},
    VOLUME = {34},
 PUBLISHER = {Princeton University Press, Princeton, NJ},
      YEAR = {2010},
     PAGES = {xiv+791},
      ISBN = {978-0-691-12829-0},
   MRCLASS = {82-02 (33C45 60B20 82B05 82B41 82B44)},
  MRNUMBER = {2641363},
MRREVIEWER = {Steven Joel Miller},
       DOI = {10.1515/9781400835416},
       URL = {https://doi-org.bris.idm.oclc.org/10.1515/9781400835416},
}

\bib{FS1}{article}{
AUTHOR = {Fyodorov, Yan V.},
author={Sommers, H.-J.},
     TITLE = {Random matrices close to {H}ermitian or unitary: overview of
              methods and results},
      NOTE = {Random matrix theory},
   JOURNAL = {J. Phys. A},
  FJOURNAL = {Journal of Physics. A. Mathematical and General},
    VOLUME = {36},
      YEAR = {2003},
    NUMBER = {12},
     PAGES = {3303--3347},
      ISSN = {0305-4470},
   MRCLASS = {82B44 (15A52 82-02 82B31)},
  MRNUMBER = {1986421},
       DOI = {10.1088/0305-4470/36/12/326},
       URL = {https://doi-org.bris.idm.oclc.org/10.1088/0305-4470/36/12/326},
}

\bib{FKS}{article}{
AUTHOR = {Fyodorov, Yan V.},
author={Khoruzhenko, Boris A.},
author={Sommers, Hans-J\"{u}rgen},
     TITLE = {Almost {H}ermitian random matrices: crossover from
              {W}igner-{D}yson to {G}inibre eigenvalue statistics},
   JOURNAL = {Phys. Rev. Lett.},
  FJOURNAL = {Physical Review Letters},
    VOLUME = {79},
      YEAR = {1997},
    NUMBER = {4},
     PAGES = {557--560},
      ISSN = {0031-9007},
   MRCLASS = {82B41},
  MRNUMBER = {1459918},
       DOI = {10.1103/PhysRevLett.79.557},
       URL = {https://doi-org.bris.idm.oclc.org/10.1103/PhysRevLett.79.557},
}

\bib{FSK}{article}{
AUTHOR = {Fyodorov, Yan V.},
author={Sommers, Hans-J\"{u}rgen},
author={Khoruzhenko, Boris A.},
     TITLE = {Universality in the random matrix spectra in the regime of
              weak non-{H}ermiticity},
      NOTE = {Classical and quantum chaos},
   JOURNAL = {Ann. Inst. H. Poincar\'{e} Phys. Th\'{e}or.},
  FJOURNAL = {Annales de l'Institut Henri Poincar\'{e}. Physique Th\'{e}orique},
    VOLUME = {68},
      YEAR = {1998},
    NUMBER = {4},
     PAGES = {449--489},
      ISSN = {0246-0211},
   MRCLASS = {60F99 (15A52 60B15 82B41)},
  MRNUMBER = {1634312},
MRREVIEWER = {Oleksiy Khorunzhiy},
       URL = {http://www.numdam.org/item?id=AIHPA_1998__68_4_449_0},
}

\bib{Ga}{book}{
AUTHOR = {Gakhov, F. D.},
     TITLE = {Boundary value problems},
      NOTE = {Translated from the Russian,
              Reprint of the 1966 translation},
 PUBLISHER = {Dover Publications, Inc., New York},
      YEAR = {1990},
     PAGES = {xxii+561},
      ISBN = {0-486-66275-6},
   MRCLASS = {45E05},
  MRNUMBER = {1106850},
}

\bib{GA}{article}{
title={Connectance of large dynamic (cybernetic) systems: critical values for stability},
  author={Gardner, Mark R},
  author={Ashby, W Ross},
  journal={Nature},
  volume={228},
  number={5273},
  pages={784--784},
  year={1970},
  publisher={Nature Publishing Group}
}

\bib{GS}{article}{
AUTHOR = {Ghosal, Promit},
author={Silva, Guilherme L. F.},
     TITLE = {Universality for {M}ultiplicative {S}tatistics of {H}ermitian
              {R}andom {M}atrices and the {I}ntegro-{D}ifferential
              {P}ainlev\'{e} {II} {E}quation},
   JOURNAL = {Comm. Math. Phys.},
  FJOURNAL = {Communications in Mathematical Physics},
    VOLUME = {397},
      YEAR = {2023},
    NUMBER = {3},
     PAGES = {1237--1307},
      ISSN = {0010-3616},
   MRCLASS = {60B20 (15B52 34M55 37)},
  MRNUMBER = {4541921},
       DOI = {10.1007/s00220-022-04518-3},
       URL = {https://doi-org.bris.idm.oclc.org/10.1007/s00220-022-04518-3},
}

\bib{Gi}{article}{
AUTHOR = {Ginibre, Jean},
     TITLE = {Statistical ensembles of complex, quaternion, and real
              matrices},
   JOURNAL = {J. Mathematical Phys.},
  FJOURNAL = {Journal of Mathematical Physics},
    VOLUME = {6},
      YEAR = {1965},
     PAGES = {440--449},
      ISSN = {0022-2488},
   MRCLASS = {22.60 (53.90)},
  MRNUMBER = {173726},
MRREVIEWER = {J. Dieudonn\'{e}},
       DOI = {10.1063/1.1704292},
       URL = {https://doi-org.bris.idm.oclc.org/10.1063/1.1704292},
}

\bib{Gir}{article}{
AUTHOR = {Girko, V. L.},
     TITLE = {The elliptic law},
   JOURNAL = {Teor. Veroyatnost. i Primenen.},
  FJOURNAL = {Akademiya Nauk SSSR. Teoriya Veroyatnoste\u{\i} i ee Primeneniya},
    VOLUME = {30},
      YEAR = {1985},
    NUMBER = {4},
     PAGES = {640--651},
      ISSN = {0040-361X},
   MRCLASS = {60F99 (81G45 82A31)},
  MRNUMBER = {816278},
MRREVIEWER = {Nina B. Maslova},
}

\bib{Gir2}{article}{
AUTHOR = {Girko, V.L.},
     TITLE = {The generalized elliptic law},
   JOURNAL = {Random Oper. Stoch. Equ.},
  FJOURNAL = {Random Operators and Stochastic Equations},
    VOLUME = {21},
      YEAR = {2013},
    NUMBER = {2},
     PAGES = {191--215},
      ISSN = {0926-6364},
   MRCLASS = {60B20 (15B52 65F15)},
  MRNUMBER = {3068415},
       DOI = {10.1515/rose-2013-0010},
       URL = {https://doi-org.bris.idm.oclc.org/10.1515/rose-2013-0010},
}

\bib{GGK}{book}{
AUTHOR = {Gohberg, Israel},
author={Goldberg, Seymour},
author={Krupnik, Nahum},
     TITLE = {Traces and determinants of linear operators},
    SERIES = {Operator Theory: Advances and Applications},
    VOLUME = {116},
 PUBLISHER = {Birkh\"{a}user Verlag, Basel},
      YEAR = {2000},
     PAGES = {x+258},
      ISBN = {3-7643-6177-8},
   MRCLASS = {47B10 (45B05 45P05 47A53 47G10 47L10)},
  MRNUMBER = {1744872},
MRREVIEWER = {Hermann K\"{o}nig},
       DOI = {10.1007/978-3-0348-8401-3},
       URL = {https://doi-org.bris.idm.oclc.org/10.1007/978-3-0348-8401-3},
}

\bib{HM}{article}{
AUTHOR = {Hastings, S. P.},
AUTHOR={McLeod, J. B.},
     TITLE = {A boundary value problem associated with the second {P}ainlev\'{e}
              transcendent and the {K}orteweg-de\thinspace {V}ries equation},
   JOURNAL = {Arch. Rational Mech. Anal.},
  FJOURNAL = {Archive for Rational Mechanics and Analysis},
    VOLUME = {73},
      YEAR = {1980},
    NUMBER = {1},
     PAGES = {31--51},
      ISSN = {0003-9527},
   MRCLASS = {34B30 (35Q20)},
  MRNUMBER = {555581},
MRREVIEWER = {Richard Brown},
       DOI = {10.1007/BF00283254},
       URL = {https://doi-org.bris.idm.oclc.org/10.1007/BF00283254},
}

\bib{IIKS}{article}{
AUTHOR = {Its, A. R.},
AUTHOR={Izergin, A. G.},
AUTHOR={Korepin, V. E.},
AUTHOR={ Slavnov, N. A.},
     TITLE = {Differential equations for quantum correlation functions},
 BOOKTITLE = {Proceedings of the {C}onference on {Y}ang-{B}axter
              {E}quations, {C}onformal {I}nvariance and {I}ntegrability in
              {S}tatistical {M}echanics and {F}ield {T}heory},
   JOURNAL = {Internat. J. Modern Phys. B},
  FJOURNAL = {International Journal of Modern Physics B},
    VOLUME = {4},
      YEAR = {1990},
    NUMBER = {5},
     PAGES = {1003--1037},
      ISSN = {0217-9792},
   MRCLASS = {82B10 (35Q40 58G40 82C10)},
  MRNUMBER = {1064758},
MRREVIEWER = {Anatoliy Prykarpatsky},
       DOI = {10.1142/S0217979290000504},
       URL = {https://doi-org.bris.idm.oclc.org/10.1142/S0217979290000504},
}

\bib{JMMS}{article}{
AUTHOR = {Jimbo, Michio},
author={Miwa, Tetsuji},
author={M\^{o}ri, Yasuko},
author={Sato, Mikio},
     TITLE = {Density matrix of an impenetrable {B}ose gas and the fifth
              {P}ainlev\'{e} transcendent},
   JOURNAL = {Phys. D},
  FJOURNAL = {Physica D. Nonlinear Phenomena},
    VOLUME = {1},
      YEAR = {1980},
    NUMBER = {1},
     PAGES = {80--158},
      ISSN = {0167-2789},
   MRCLASS = {82A15 (14D05 58F07)},
  MRNUMBER = {573370},
       DOI = {10.1016/0167-2789(80)90006-8},
       URL = {https://doi-org.bris.idm.oclc.org/10.1016/0167-2789(80)90006-8},
}

\bib{Joh}{article}{
AUTHOR = {Johansson, K.},
     TITLE = {From {G}umbel to {T}racy-{W}idom},
   JOURNAL = {Probab. Theory Related Fields},
  FJOURNAL = {Probability Theory and Related Fields},
    VOLUME = {138},
      YEAR = {2007},
    NUMBER = {1-2},
     PAGES = {75--112},
      ISSN = {0178-8051},
   MRCLASS = {60G70 (15A52 60G07 62G32 82B41)},
  MRNUMBER = {2288065},
MRREVIEWER = {Alexander Roitershtein},
       DOI = {10.1007/s00440-006-0012-7},
       URL = {https://doi-org.bris.idm.oclc.org/10.1007/s00440-006-0012-7},
}

\bib{KZ}{article}{
AUTHOR = {Kimura, Taro},
author={Zahabi, Ali},
     TITLE = {Universal edge scaling in random partitions},
   JOURNAL = {Lett. Math. Phys.},
  FJOURNAL = {Letters in Mathematical Physics},
    VOLUME = {111},
      YEAR = {2021},
    NUMBER = {2},
     PAGES = {Paper No. 48, 16},
      ISSN = {0377-9017},
   MRCLASS = {60B20 (05E10 60G20 82B27)},
  MRNUMBER = {4244925},
       DOI = {10.1007/s11005-021-01389-y},
       URL = {https://doi-org.bris.idm.oclc.org/10.1007/s11005-021-01389-y},
}

\bib{KD}{article}{
  title = {Exact short-time height distribution in the one-dimensional Kardar-Parisi-Zhang equation with Brownian initial condition},
  author = {Krajenbrink, Alexandre},
  author={Le Doussal, Pierre},
  journal = {Phys. Rev. E},
  volume = {96},
  issue = {2},
  pages = {020102},
  numpages = {6},
  year = {2017},
  month = {Aug},
  publisher = {American Physical Society},
  doi = {10.1103/PhysRevE.96.020102},
  url = {https://link.aps.org/doi/10.1103/PhysRevE.96.020102}
}

\bib{Kra}{article}{
AUTHOR = {Krajenbrink, Alexandre},
     TITLE = {From {P}ainlev\'{e} to {Z}akharov-{S}habat and beyond: {F}redholm
              determinants and integro-differential hierarchies},
   JOURNAL = {J. Phys. A},
  FJOURNAL = {Journal of Physics. A. Mathematical and Theoretical},
    VOLUME = {54},
      YEAR = {2021},
    NUMBER = {3},
     PAGES = {Paper No. 035001, 51},
      ISSN = {1751-8113},
   MRCLASS = {37K10 (34M55 37J65)},
  MRNUMBER = {4209129},
       DOI = {10.1088/1751-8121/abd078},
       URL = {https://doi-org.bris.idm.oclc.org/10.1088/1751-8121/abd078},
}


\bib{LMS}{article}{
  title = {Multicritical Edge Statistics for the Momenta of Fermions in Nonharmonic Traps},
  author = {Le Doussal, Pierre},
  author={Majumdar, Satya N.},
  author={Schehr, Gr\'egory},
  journal = {Phys. Rev. Lett.},
  volume = {121},
  issue = {3},
  pages = {030603},
  numpages = {7},
  year = {2018},
  month = {Jul},
  publisher = {American Physical Society},
  doi = {10.1103/PhysRevLett.121.030603},
  url = {https://link.aps.org/doi/10.1103/PhysRevLett.121.030603}
}

\bib{LW}{article}{
AUTHOR = {Liechty, Karl},
author={Wang, Dong},
     TITLE = {Asymptotics of free fermions in a quadratic well at finite
              temperature and the {M}oshe-{N}euberger-{S}hapiro random
              matrix model},
   JOURNAL = {Ann. Inst. Henri Poincar\'{e} Probab. Stat.},
  FJOURNAL = {Annales de l'Institut Henri Poincar\'{e} Probabilit\'{e}s et
              Statistiques},
    VOLUME = {56},
      YEAR = {2020},
    NUMBER = {2},
     PAGES = {1072--1098},
      ISSN = {0246-0203},
   MRCLASS = {60B20 (15B52 82B23)},
  MRNUMBER = {4076776},
       DOI = {10.1214/19-AIHP994},
       URL = {https://doi-org.bris.idm.oclc.org/10.1214/19-AIHP994},
}

\bib{M}{article}{
AUTHOR={May, R.},
TITLE={Will a Large Complex System be Stable?},
JOURNAL={Nature},
year={1972},
volume={238},
number={5364},
pages={413--414},
doi={10.1038/238413a0},
url={https://doi.org/10.1038/238413a0},
}

\bib{NIST}{book}{
TITLE = {N{IST} handbook of mathematical functions},
    EDITOR = {Olver, Frank W. J.}
    editor={Lozier, Daniel W.}
    editor={Boisvert, Ronald F.}
    editor={Clark, Charles W.},
 PUBLISHER = {U.S. Department of Commerce, National Institute of Standards
              and Technology, Washington, DC; Cambridge University Press,
              Cambridge},
      YEAR = {2010},
     PAGES = {xvi+951},
      ISBN = {978-0-521-14063-8},
   MRCLASS = {33-00 (00A20 65-00)},
  MRNUMBER = {2723248},
}

\bib{O}{article}{
AUTHOR = {Okounkov, Andrei},
     TITLE = {Generating functions for intersection numbers on moduli spaces
              of curves},
   JOURNAL = {Int. Math. Res. Not.},
  FJOURNAL = {International Mathematics Research Notices},
      YEAR = {2002},
    NUMBER = {18},
     PAGES = {933--957},
      ISSN = {1073-7928},
   MRCLASS = {14H10 (14H70 37K20)},
  MRNUMBER = {1902297},
MRREVIEWER = {Gilberto Bini},
       DOI = {10.1155/S1073792802110099},
       URL = {https://doi-org.bris.idm.oclc.org/10.1155/S1073792802110099},
}

\bib{Olv}{book}{
AUTHOR = {Olver, F. W. J.},
     TITLE = {Asymptotics and special functions},
    SERIES = {Computer Science and Applied Mathematics},
 PUBLISHER = {Academic Press [Harcourt Brace Jovanovich, Publishers], New
              York-London},
      YEAR = {1974},
     PAGES = {xvi+572},
   MRCLASS = {41A60},
  MRNUMBER = {0435697},
MRREVIEWER = {Norman Bleistein},
}

\bib{Po}{book}{
author={Porter, C.E.},
title={Fluctuations of quantal spectra},
series={Statistical Theories of Spectra: Fluctuations},
PUBLISHER={Academic Press, New York},
YEAR={1965},
}

\bib{RA}{article}{
  title = {Eigenvalue Spectra of Random Matrices for Neural Networks},
  author = {Rajan, Kanaka},
  author={Abbott, L. F.},
  journal = {Phys. Rev. Lett.},
  volume = {97},
  issue = {18},
  pages = {188104},
  numpages = {4},
  year = {2006},
  month = {Nov},
  publisher = {American Physical Society},
  doi = {10.1103/PhysRevLett.97.188104},
  url = {https://link.aps.org/doi/10.1103/PhysRevLett.97.188104}
}

\bib{Ri}{article}{
AUTHOR = {Rider, B.},
     TITLE = {A limit theorem at the edge of a non-{H}ermitian random matrix
              ensemble},
      NOTE = {Random matrix theory},
   JOURNAL = {J. Phys. A},
  FJOURNAL = {Journal of Physics. A. Mathematical and General},
    VOLUME = {36},
      YEAR = {2003},
    NUMBER = {12},
     PAGES = {3401--3409},
      ISSN = {0305-4470},
   MRCLASS = {15A52},
  MRNUMBER = {1986426},
MRREVIEWER = {Oleksiy Khorunzhiy},
       DOI = {10.1088/0305-4470/36/12/331},
       URL = {https://doi-org.bris.idm.oclc.org/10.1088/0305-4470/36/12/331},
}

\bib{Sim}{book}{
AUTHOR = {Simon, Barry},
     TITLE = {Trace ideals and their applications},
    SERIES = {Mathematical Surveys and Monographs},
    VOLUME = {120},
   EDITION = {Second},
 PUBLISHER = {American Mathematical Society, Providence, RI},
      YEAR = {2005},
     PAGES = {viii+150},
      ISBN = {0-8218-3581-5},
   MRCLASS = {47L20 (47A40 47A55 47B10 47B36 47E05 81Q15 81U99)},
  MRNUMBER = {2154153},
MRREVIEWER = {Pavel B. Kurasov},
       DOI = {10.1090/surv/120},
       URL = {https://doi-org.bris.idm.oclc.org/10.1090/surv/120},
}

\bib{Sim2}{book}{
AUTHOR = {Simon, Barry},
     TITLE = {Basic complex analysis},
    SERIES = {A Comprehensive Course in Analysis, Part 2A},
 PUBLISHER = {American Mathematical Society, Providence, RI},
      YEAR = {2015},
     PAGES = {xviii+641},
      ISBN = {978-1-4704-1100-8},
   MRCLASS = {30-01 (33-01 34-01 40-01 41-01 44-01)},
  MRNUMBER = {3443339},
MRREVIEWER = {Fritz Gesztesy},
       DOI = {10.1090/simon/002.1},
       URL = {https://doi-org.bris.idm.oclc.org/10.1090/simon/002.1},
}

\bib{SCSS}{article}{
AUTHOR = {Sommers, H.-J.},
author={Crisanti, A.},
author={Sompolinsky, H.},
author={Stein, Y.},
     TITLE = {Spectrum of large random asymmetric matrices},
   JOURNAL = {Phys. Rev. Lett.},
  FJOURNAL = {Physical Review Letters},
    VOLUME = {60},
      YEAR = {1988},
    NUMBER = {19},
     PAGES = {1895--1898},
      ISSN = {0031-9007},
   MRCLASS = {82A31 (60F99)},
  MRNUMBER = {948613},
       DOI = {10.1103/PhysRevLett.60.1895},
       URL = {https://doi-org.bris.idm.oclc.org/10.1103/PhysRevLett.60.1895},
}

\bib{SCS}{article}{
  title = {Chaos in Random Neural Networks},
  author = {Sompolinsky, H.},
  author={Crisanti, A.},
  author={Sommers, H. J.},
  journal = {Phys. Rev. Lett.},
  volume = {61},
  issue = {3},
  pages = {259--262},
  numpages = {0},
  year = {1988},
  month = {Jul},
  publisher = {American Physical Society},
  doi = {10.1103/PhysRevLett.61.259},
  url = {https://link.aps.org/doi/10.1103/PhysRevLett.61.259}
}

\bib{Sos}{article}{
AUTHOR = {Soshnikov, Alexander},
     TITLE = {Universality at the edge of the spectrum in {W}igner random
              matrices},
   JOURNAL = {Comm. Math. Phys.},
  FJOURNAL = {Communications in Mathematical Physics},
    VOLUME = {207},
      YEAR = {1999},
    NUMBER = {3},
     PAGES = {697--733},
      ISSN = {0010-3616},
   MRCLASS = {82B41 (15A52 60F99 82B44)},
  MRNUMBER = {1727234},
MRREVIEWER = {Boris A. Khoruzhenko},
       DOI = {10.1007/s002200050743},
       URL = {https://doi-org.bris.idm.oclc.org/10.1007/s002200050743},
}

\bib{St}{book}{
AUTHOR = {Stein, Elias M.},
author={Shakarchi, Rami},
     TITLE = {Fourier analysis},
    SERIES = {Princeton Lectures in Analysis},
    VOLUME = {1},
      NOTE = {An introduction},
 PUBLISHER = {Princeton University Press, Princeton, NJ},
      YEAR = {2003},
     PAGES = {xvi+311},
      ISBN = {0-691-11384-X},
   MRCLASS = {42-01},
  MRNUMBER = {1970295},
MRREVIEWER = {Steven George Krantz},
}

\bib{TW}{article}{
AUTHOR = {Tracy, Craig A.}
author={Widom, Harold},
     TITLE = {Level-spacing distributions and the {A}iry kernel},
   JOURNAL = {Comm. Math. Phys.},
  FJOURNAL = {Communications in Mathematical Physics},
    VOLUME = {159},
      YEAR = {1994},
    NUMBER = {1},
     PAGES = {151--174},
      ISSN = {0010-3616},
   MRCLASS = {82B05 (33C90 47A75 47G10 47N55 82B10)},
  MRNUMBER = {1257246},
MRREVIEWER = {Estelle L. Basor},
       URL = {http://projecteuclid.org/euclid.cmp/1104254495},
}

\bib{TW2}{article}{
AUTHOR = {Tracy, Craig A.},
author={Widom, Harold},
     TITLE = {Fredholm determinants, differential equations and matrix
              models},
   JOURNAL = {Comm. Math. Phys.},
  FJOURNAL = {Communications in Mathematical Physics},
    VOLUME = {163},
      YEAR = {1994},
    NUMBER = {1},
     PAGES = {33--72},
      ISSN = {0010-3616},
   MRCLASS = {82B05 (33C90 47A75 47G10 47N55 82B10)},
  MRNUMBER = {1277933},
MRREVIEWER = {Peter J. Forrester},
       URL = {http://projecteuclid.org.bris.idm.oclc.org/euclid.cmp/1104270379},
}

\bib{W}{article}{
AUTHOR = {Widom, Harold},
     TITLE = {Integral operators in random matrix theory},
 BOOKTITLE = {Random matrices, random processes and integrable systems},
    SERIES = {CRM Ser. Math. Phys.},
     PAGES = {229--249},
 PUBLISHER = {Springer, New York},
      YEAR = {2011},
   MRCLASS = {60B20 (15B52 33E17 47G10 60-02)},
  MRNUMBER = {2858437},
       DOI = {10.1007/978-1-4419-9514-8\_3},
       URL = {https://doi-org.bris.idm.oclc.org/10.1007/978-1-4419-9514-8_3},
}

\bib{Z}{article}{
AUTHOR = {Zhou, Xin},
     TITLE = {The {R}iemann-{H}ilbert problem and inverse scattering},
   JOURNAL = {SIAM J. Math. Anal.},
  FJOURNAL = {SIAM Journal on Mathematical Analysis},
    VOLUME = {20},
      YEAR = {1989},
    NUMBER = {4},
     PAGES = {966--986},
      ISSN = {0036-1410},
   MRCLASS = {34B25 (35G15 45F15 45P05)},
  MRNUMBER = {1000732},
MRREVIEWER = {David J. Kaup},
       DOI = {10.1137/0520065},
       URL = {https://doi-org.bris.idm.oclc.org/10.1137/0520065},
}

\end{biblist}
\end{bibsection}
\end{document}